\newcolumntype{P}[1]{>{\centering\arraybackslash}p{#1}}
\newcommand{\A}{\ensuremath{\mathcal{A}}}
\newcommand{\B}{\ensuremath{\mathcal{B}}}
\newcommand{\C}{\ensuremath{\mathcal{C}}}
\newcommand{\D}{\ensuremath{\mathcal{D}}}
\newcommand{\F}{\ensuremath{\mathcal{F}}}
\newcommand{\T}{\ensuremath{\mathcal{T}}}
\newcommand{\s}{\ensuremath{\sigma}}
\renewcommand{\S}{\ensuremath{\mathcal{S}}}
\renewcommand{\P}{\ensuremath{\mathcal{P}}}
\newcommand{\vars}{\textit{vars}}
\newcommand{\ax}{\textit{Ax}}
\newcommand{\wit}{\textit{wit}}
\newcommand{\overarrow}{\overrightarrow}
\newcommand{\dash}{\vdash}
\newcommand{\tdash}{\vdash_{\T}}
\newcommand{\tmodels}{\models_{\T}}
\newcommand{\smooth}{\textbf{SM}}
\newcommand{\stainf}{\textbf{SI}}
\newcommand{\convex}{\textbf{CV}}
\newcommand{\finwit}{\textbf{FW}}
\newcommand{\strfinwit}{\textbf{SW}}
\newcommand{\mc}{\textit{mincard}}
\newcommand{\TSM}{\T^{s}_{f}}
\newcommand{\Tgeqn}{\T_{\geq n}}
\newcommand{\TsM}{\T_{f}}
\newcommand{\Tinfty}{\T_{\infty}}
\newcommand{\Teven}{\T_{\mathit{even}}^{\infty}}
\newcommand{\Tninfty}{\T_{n, \infty}}
\newcommand{\Tone}{\T_{\leq 1}}
\newcommand{\Tleqn}{\T_{\leq n}}
\newcommand{\Tneqodd}{\T^{\neq}_{\mathit{odd}}}
\newcommand{\Tmn}{\T_{m,n}}
\newcommand{\Tneqoneinfty}{\T^{\neq}_{1,\infty}}
\newcommand{\Tneqtwoinfty}{\T^{\neq}_{2,\infty}}
\newcommand{\psiv}{\psi_{\vee}}
\newcommand{\addf}[1]{(#1)_{s}}
\newcommand{\adds}[1]{(#1)^{2}}
\newcommand{\addnc}[1]{(#1)_{\vee}}
\newcommand{\Taddf}{\addf{\T}}
\newcommand{\Tadds}{\adds{\T}}
\newcommand{\Taddnc}{\addnc{\T}}
\newcommand{\Ttwo}{\T_{2,3}}
\newcommand{\Tonetwo}{\T_{1}^{\infty}}
\newcommand{\Ttwotwo}{\T_{2}^{\infty}}
\newcommand{\Toddtwo}{\T_{1}^{odd}}
\newcommand{\subs}[1]{#1_{1}}
\newcommand{\subsf}[2]{#1_{#2}}
\newcommand{\plusf}[1]{s(#1)}
\newcommand{\subf}[1]{\texttt{0}(#1)}
\newcommand{\subff}[1]{#1_{\texttt{0}}}
\newcommand{\subncf}[2]{#1^{\dag}_{#2}}
\newcommand{\dagg}[1]{#1^{\dag}}
\newcommand\eq{\scaleobj{0.7}{=}}
\newcommand\diff{\scaleobj{0.7}{\neq}}
\newcommand{\set}[1]{\left\{{#1}\right\}}
\newcommand{\Ap}{\ensuremath{\mathcal{A}^{\prime}}}
\newcommand{\Bp}{\ensuremath{\mathcal{B}^{\prime}}}
\newcommand{\Exists}[1]{\exists\,#1.\:}
\newcommand{\Forall}[1]{\forall\,#1.\:}
\newcommand{\stafin}{\textbf{SF}}
\newcommand{\finmodpro}{\textbf{FM}}
\newcommand{\TBB}{\T_{\varsigma}}
\newcommand{\bb}{\varsigma}
\newcommand{\TBBtwo}{\T_{\varsigma}^{\infty}}
\newcommand{\TBBn}{\T_{n}^{\varsigma}}
\newcommand{\Tmninfty}{\T_{m,n}^{\infty}}
\newcommand{\Tupinfty}{\T^{\infty}}
\newcommand{\TsBB}{\T_{\bb}^{s}}
\newcommand{\qq}{\bb^{-1}}
\newcommand{\TtwoBB}{\T^{=}_{\bb}}
\newcommand{\TmnBB}{\T_{m,n}^{\bb}}
\newcommand{\TneqBB}{\T_{\bb}^{\neq}}
\newcommand{\TBBtwotwo}{\T_{\bb}^{2}}
\renewcommand{\t}{\tau}
\newcommand{\Sigmastwo}{\Sigmas^{2}}
\newcommand{\Sigmatwo}{\Sigma_{2}}
\newcommand{\Sigmas}{\Sigma_{s}}
\newcommand{\Sigmaone}{\Sigma_{1}}
\newcommand{\TssBB}{\T_{\bb}^{\vee}}
\newcommand{\TstwoBB}{\T_{\bb\vee}^{=}}
\newcommand{\TBBone}{\T_{1}^{\bb}}
\newcommand{\TneqBBone}{\T_{\bb,1}^{\neq}}
\newcommand{\Ttwocube}{\Ttwo^{3}}
\newcommand{\TBBtwocube}{\TBB^{\infty,3}}
\newcommand{\Tsupinfty}{\T^{\infty}_{\neq}}
\newcommand{\TsBBtwo}{\T_{\varsigma\neq}^{\infty}}
\newcommand{\Tneg}{\T_{\neg}}
\title{Combining Finite Combination Properties: \\ Finite Models and Busy Beavers}
\author{Guilherme V. Toledo$^{\text{1}}$
\and Yoni Zohar$^{\text{1}}$
\and Clark Barrett$^{\text{2}}$
}\institute{
$^{\text{1}}$Bar-Ilan University
\enskip
$^{\text{2}}$Stanford University
}
\begin{document}


\setcounter{page}{1}     

\maketitle

\begin{abstract}
This work is a part of an ongoing effort to understand the relationships
between properties used in theory combination.
%
We here focus on including two properties that are related to shiny theories: 
the finite model property and stable finiteness.
For any combination of properties, 
we consider the question of whether there exists a theory that exhibits it.
When there is, we provide an example with the simplest possible signature.
One particular class of interest includes theories with  the finite model property that are not
finitely witnessable.
To construct such theories, we utilize the Busy Beaver function.%
%
%
\footnote{Funded in part by NSF-BSF grant numbers 2110397 (NSF) and 2020704 (BSF) and ISF grant number 619/21. 
}
\end{abstract}

\section{Introduction}
The story of this paper begins with~\cite{DBLP:journals/jar/ShengZRLFB22}, 
where it was shown that the theory of algebraic datatypes,
useful for modeling data structures like lists and trees,
can be combined with any other theory, using the polite combination method~\cite{ranise:inria-00000570}.
This combination method
offers a way to combine decisions procedures of two theories
into a decision procedure for the combined theory, with different assumptions than those
of the earlier Nelson-Oppen approach~\cite{NelsonOppen}.
In particular,
it was proven that the theory admits a technical property concerning cardinalities
of models, called
{\em strong politeness}~\cite{JB10-TR}.
It was noted in~\cite{DBLP:journals/jar/ShengZRLFB22} that proving strong politeness 
for this theory
seemed much harder than proving {\em politeness}, a similar but simpler property.
Therefore, the proof was split into three steps:
$(i)$~a class of theories was identified in which politeness and strong politeness coincide;
$(ii)$~the theory of algebraic datatypes was shown to be in this class; and
$(iii)$~this theory was proven to be polite.
This proof technique raised the following question:
    {{\bf does politeness imply strong politeness?}} 
An affirmative answer to this question 
would simplify strong politeness proofs that follow such steps, 
as only the last step would be needed.
Unfortunately, the answer to this question was shown in \cite{SZRRBT-21} to be negative, in its most general form.
However, an affirmative answer was given for theories over
one-sorted 
empty signatures, where politeness and strong politeness do coincide.

Seeing that relationships between model-theoretic properties of theories (like politeness and strong politeness)
are non-trivial, and can have a big impact on proofs in the field of theory combination,
we have recently initiated a more general research plan:
to systematically determine the relationships between
 model-theoretic properties that relate to theory combination.
An analysis of such properties can, for example, simplify proofs, in cases where
a property follows from a combination of other properties.

In the first stage of this plan~\cite{BarTolZoh},
we studied the relationships between 
all properties that relate to either polite or Nelson-Oppen combination, namely:
stable infiniteness, smoothness, finite witnessability, strong finite witnessability,
and convexity.
The first two properties relate to the ability to enlarge  cardinalities of models,
while the next two require a computable {\em witness} function that  restricts the models of a formula based on its variables.
The last property relies on the ability to deduce an equality from a disjunction of equalities.
The result of \cite{BarTolZoh} was a comprehensive table: nearly every combination of these properties (e.g., theories that are smooth and stably infinite but do not admit the other properties) was either proved to be infeasible, or an example for it was given.

In this paper we continue with this plan
by adding two properties:
the finite model property and stable finiteness, both
related to shiny theories~\cite{TinZar-RR-04}.
The former requires finite models for satisfiable formulas, and the latter enforces bounds on them.

Of course, the theories from \cite{BarTolZoh} can be reused.
For these, one only needs to determine if they admit the finite model property and/or stable finiteness.
The results and examples from \cite{BarTolZoh} are, however, not enough.
Given that the number of considered combinations is doubled with the addition of each property,
 new theories need to be introduced in order to exemplify the new possibilities,
 and new impossible combinations can be found.
Hence, in this paper we provide several impossibility results for the aforementioned properties,
as well as examples of theories for possible combinations.
The overall result is a new table which extends that of \cite{BarTolZoh} with 
two new columns corresponding to the finite model property 
and stable finiteness.\footnote{While we use several results from \cite{BarTolZoh}, we do not assume here any familiarity with that paper. All required results are mentioned here explicitly.}


%
The most interesting combinations that we study are theories
that admit the finite model property but not finite witnessability. %
While both properties deal with finite models, the latter has a computable element to it,
namely the witness function.
In separating these properties, we found it useful to define theories that are based on the {\em Busy Beaver} function,
a well known function from computability theory, that is not only non-computable,
but also grows eventually faster than any computable function.

{\bf Outline:} 
\Cref{Preliminary notions} reviews many-sorted logics and theory combination properties. \Cref{Finitemodelpropertyand stablefiniteness} identifies combinations that are contradictory; 
\Cref{sec:posresults} constructs the extended table of combinations, and describes the newly introduced theories.
\Cref{conclusion} gives final remarks and future directions this work can take.


\section{Preliminary Notions}\label{Preliminary notions}
\subsection{Many-sorted Logic}
A {\em many-sorted signature} $\Sigma$ is a triple $(\S_{\Sigma}, \F_{\Sigma}, \P_{\Sigma})$ where: $\S_{\Sigma}$ is a countable set of {\em sorts}; $\F_{\Sigma}$ is a countable set of function symbols; and $\P_{\Sigma}$ is a countable set of predicate symbols containing, for each $\sigma\in \S_{\Sigma}$, an equality $=_{\sigma}$.
When $\sigma$ is clear from the context, we write $=$.
Every function symbol has an {\em arity} of the form $\sigma_{1}\times\cdots\times\sigma_{n}\rightarrow \sigma$, and every predicate symbol one of the form $\sigma_{1}\times\cdots\times \sigma_{n}$, where $\sigma_{1}, \ldots, \sigma_{n}, \sigma\in \S_{\Sigma}$; 
equalities $=_{\s}$ have arity $\sigma\times\sigma$.

A signature that has no functions and only the equalities as predicates is called {\em empty}. 
Many-sorted signatures $\Sigma$ where $\S_{\Sigma}$ has only one element are called {\em one-sorted}. 

For any sort in $\S_{\Sigma}$ we assume a countably infinite set of variables, and distinct sorts have disjoint sets of variables; we then define first-order terms, formulas, 
and literals in the usual way. 
The set of free variables of sort $\s$ in a formula $\varphi$ is denoted by $\vars_{\s}(\varphi)$, while $\vars(\varphi)$ will denote $\bigcup_{\s\in\S_{\Sigma}}\vars_{\s}(\varphi)$.

$\Sigma$-Structures $\mathbb{A}$ are defined as usual, by interpreting sorts (denoted by $\s^{\mathbb{A}}$), functions ($f^{\mathbb{A}}$) and predicate symbols ($P^{\mathbb{A}}$),
with the restrictions that equality symbols are interpreted as identities. 
A $\Sigma$-interpretation $\A$ is an extension of a $\Sigma$-structure $\mathbb{A}$ with interpretations to variables.
If $\mathbb{A}$ is the underlying $\Sigma$-structure of a $\Sigma$-interpretation $\A$,
we say that $\A$ is an interpretation on $\mathbb{A}$. For simplicity, and because the use of structures is sparse in this paper, we will usually denote both structures and interpretations by using the same font, $\A$, $\B$ and so on.
$\alpha^{\A}$ is the value taken by a $\Sigma$-term $\alpha$ in a $\Sigma$-interpretation $\A$, and if $\Gamma$ is a set of terms, we simply write $\Gamma^{\A}$ for $\{\alpha^{\A} : \alpha\in \Gamma\}$. 

We write $\A\vDash\varphi$ if the $\Sigma$-interpretation $\A$ 
satisfies the $\Sigma$-formula $\varphi$; $\varphi$ is then said to be {\em satisfiable} if it is 
satisfied by some interpretation $\A$. 
The formulas found in \Cref{card-formulas} will be useful in the sequel. A $\Sigma$-interpretation $\A$: satisfies 
$\psi^{\sigma}_{\geq n}$ iff $|\sigma^{\A}|\geq n$; satisfies $\psi^{\sigma}_{\leq n}$ iff $|\sigma^{\A}|\leq n$; and satisfies $\psi^{\sigma}_{= n}$ iff $|\sigma^{\A}|=n$. For simplicity, when dealing with one-sorted signatures, we may drop the sort $\s$ from the cardinality formulas.

\begin{figure}[htp]
\begin{mdframed}
\[\psi^{\sigma}_{\geq n}=\Exists{\overarrow{x}} \bigwedge_{1\leq i<j\leq n}\neg(x_{i}=x_{j})
\quad\psi^{\sigma}_{\leq n}=\Exists{\overarrow{x}}\Forall{y} \bigvee_{i=1}^{n}y=x_{i}
\quad\psi^{\sigma}_{= n}=\psi^{\sigma}_{\geq n}\wedge \psi^{\sigma}_{\leq n}
\]
\end{mdframed}
\caption{Cardinality Formulas. $\overarrow{x}$ stands for $x_{1},\ldots,x_{n}$, all variables of sort $\s$.}
\label{card-formulas}
\end{figure}

A $\Sigma$-\emph{theory} $\T$ is a class of all $\Sigma$-interpretations (called $\T$-interpretations) that satisfy some set $\ax(\T)$ of closed formulas called the \emph{axiomatization} of $\T$; the structures underlying these interpretations will be called the \emph{models} of $\T$.

A formula is $\T$\emph{-satisfiable} if it is satisfied by some $\T$-interpretation and, analogously, a set of formulas is $\T$-satisfiable if there is a $\T$-interpretation that satisfies all of them simultaneously. Two formulas are $\T$\emph{-equivalent} when a $\T$-interpretation satisfies the first iff it satisfies the second. 
We write $\tmodels\varphi$, and say that $\varphi$ is \emph{$\T$-valid} if $\A\vDash\varphi$ for all $\T$-interpretations $\A$.

\subsection{Theory Combination Properties}
Let $\Sigma$ be a signature,
$\T$ a $\Sigma$-theory
and $\S\subseteq \S_{\Sigma}$.
We define several properties
$\T$ may have with respect to $S$.

\smallskip
\noindent
{\bf Convexity, Stable Infiniteness, and Smoothness }
$\T$ is \emph{convex} with respect to $S$ if 
for any conjunction of $\Sigma$-literals $\phi$ and any finite set of variables $\{u_{1}, v_{1},\ldots , u_{n}, v_{n}\}$ of sorts in $S$ with $\tmodels\phi\rightarrow \bigvee_{i=1}^{n}u_{i}=v_{i}$, one has $\tmodels\phi\rightarrow u_{i}=v_{i}$
for some $i$.
 $\T$ is \emph{stably infinite} with respect to $S$ if for every $\T$-satisfiable quantifier-free $\Sigma$-formula there is a $\T$-interpretation $\A$ satisfying it such that $|\sigma^{\A}|$ is infinite for each $\sigma\in S$.
 $\T$ is \emph{smooth} with respect to $S$ if for every quantifier-free formula, $\T$-interpretation $\A$ that satisfies it, and function $\kappa$ from $S$ to the class of cardinals such that 
$\kappa(\s)\geq|\s^{\A}|$ for each $\s\in S$, there is a $\T$-interpretation $\B$ that satisfies it with $|\sigma^{\B}|=\kappa(\sigma)$ for each $\sigma\in S$.

\smallskip
\noindent
{\bf (Strong) Finite witnessability}
For finite sets of variables $V_{\sigma}$ of sort $\sigma$ for each $\s\in S$, and equivalence relations $E_{\sigma}$ on $V_{\sigma}$, the arrangement on $V=\bigcup_{\s\in S}V_{\sigma}$ induced by $E=\bigcup_{\s\in S}E_{\sigma}$, denoted by $\delta_{V}$ or $\delta_{V}^{E}$, is the formula
$\delta_{V}=\bigwedge_{\sigma\in S}\big[\bigwedge_{xE_{\sigma}y}(x=y)\wedge\bigwedge_{x\overline{E_{\sigma}}y}\neg(x=y)\big]$,
where $\overline{E_{\sigma}}$ denotes the complement of the equivalence relation $E_{\sigma}$.

$\T$ is \emph{finitely witnessable} with respect to $S$ when there exists a computable function $\wit$, 
called a \emph{witness}, from the quantifier-free $\Sigma$-formulas to themselves that satisfies, 
for every $\phi$:
$(i)$~$\phi$ and $\exists\, \overarrow{w}.\:\wit(\phi)$ are $\T$-equivalent, for $\overarrow{w}=\vars(\wit(\phi))\setminus\vars(\phi)$; and
$(ii)$~if $\wit(\phi)$ is $\T$-satisfiable, there exists a $\T$-interpretation $\A$ satisfying $\wit(\phi)$ such that 
$\sigma^{\A}=\vars_{\sigma}(\wit(\phi))^{\A}$
for each $\sigma\in S$.

\emph{Strong finite witnessability} is defined similarly to finite witnessability,
replacing $(ii)$ by:
$(ii)'$~given a finite set of variables $V$ and an arrangement $\delta_{V}$ on $V$, if $\wit(\phi)\wedge\delta_{V}$ is $\T$-satisfiable, there exists a $\T$-interpretation $\A$ that satisfies  $\wit(\phi)\wedge\delta_{V}$ with
$\sigma^{\A}=\vars_{\sigma}(\wit(\phi)\wedge\delta_{V}\big)^{\A}$
for all $\sigma\in S$.
If $\T$ is smooth and (strongly) finitely witnessable with respect to $\S$,
then it is \emph{(strongly) polite} with respect to $\S$. 


\smallskip
\noindent
{\bf Finite Model Property and Stable Finiteness}
    $\T$ has the \emph{finite model property} with respect to $S$ if for every quantifier-free $\T$-satisfiable $\Sigma$-formula, there exists a $\T$-interpretation $\A$ that satisfies it with $|\s^{\A}|$ finite for each $\s\in S$.
$\T$ is \emph{stably finite} with respect to $S$ if, for every quantifier-free $\Sigma$-formula and $\T$-interpretation $\A$ that satisfies it, there exists a $\T$-interpretation $\B$ that satisfies it with: $|\s^{\B}|$ finite for each $\s\in S$; and $|\s^{\B}|\leq |\s^{\A}|$ for each $\s\in S$.
%
%
%
%
%
%
%
%
Clearly, stable finiteness implies the finite model property:
\begin{theorem}
\label{SMimpliesSI}
\label{SFWimpliesFW}
\label{FWimpliesSfinite}
\label{SF=>FMP}
    If  $\T$ is stably finite w.r.t. $S$, then it  has the finite model property w.r.t. $S$.
    \end{theorem}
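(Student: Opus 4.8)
The plan is to prove this directly by unfolding the two definitions; no machinery beyond the statements given in the excerpt is needed. The key observation is that both properties are existential over $\T$-interpretations satisfying a fixed quantifier-free formula, and stable finiteness supplies exactly the interpretation the finite model property asks for.

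First I would fix an arbitrary quantifier-free $\Sigma$-formula $\varphi$ that is $\T$-satisfiable, and invoke the definition of $\T$-satisfiability to obtain a $\T$-interpretation $\A$ with $\A\vDash\varphi$. Next I would apply stable finiteness of $\T$ with respect to $S$ to the pair consisting of $\varphi$ and this $\A$: this yields a $\T$-interpretation $\B$ with $\B\vDash\varphi$ such that $|\s^{\B}|$ is finite for every $\s\in S$ (the additional conclusion $|\s^{\B}|\leq|\s^{\A}|$ is available but not needed here). Finally I would note that $\B$ is precisely a $\T$-interpretation satisfying $\varphi$ with $|\s^{\B}|$ finite for each $\s\in S$, which is exactly what the finite model property with respect to $S$ requires; since $\varphi$ was an arbitrary $\T$-satisfiable quantifier-free formula, $\T$ has the finite model property with respect to $S$.

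There is essentially no obstacle: the only thing to be careful about is the quantifier structure of the definitions — stable finiteness quantifies universally over \emph{satisfying interpretations} $\A$, so one must first produce \emph{some} satisfying interpretation before applying it, which is guaranteed by $\T$-satisfiability. I expect the author's proof to be a one- or two-sentence version of exactly this argument.
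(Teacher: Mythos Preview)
Your proof is correct and is exactly the direct unfolding of definitions that the paper has in mind; in fact the paper does not even spell out a proof, merely prefacing the theorem with ``Clearly, stable finiteness implies the finite model property.'' Your careful remark about first producing a satisfying interpretation before invoking stable finiteness is the only subtlety, and you handle it correctly.
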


We shall write
$\stainf$ for stably infinite;
$\smooth$ for smooth;
$\finwit$ ($\strfinwit$) for (strong) finitely witnessable;
$\convex$ for convex;
$\finmodpro$ for the finite model property;
and $\stafin$ for stably finite.

\section{Relationships between model-theoretic properties}\label{Finitemodelpropertyand stablefiniteness}
In this section we study the connections between  finiteness properties related to theory combination:
the finite model property, stable finiteness, finite witnessability, and strong finite witnessability.
We show how these properties are related to one another.
In \Cref{sec:gensig}, we provide general results that hold for all signatures.
Then, in
\Cref{sec:emptysig}, we focus on empty signatures, in which we are able to find more connections.

\subsection{General Signatures}
Finite witnessability, as well as its strong variant, were introduced in the context of polite theory combination.
In contrast, the study of shiny theories utilizes the notions of the finite model property, as well as stable finiteness.
It was shown in \cite{Casal2018} that for theories with a decidable quantifier-free satisfiability problem,
shiny theories and strongly polite theories are one and the same. This already showed some connections between the aforementioned
finiteness properties. 
However, that analysis also relied on smoothness, the decidability of the quantifier-free satisfiability problem of the studied theories,
as well as the computability of the $\mc$ function, the function that computes the minimal sizes of domains in models of a given formula in these theories.

Here we focus purely on the finiteness properties, and show that even without any other assumptions, they are closely related.
Considering finite witnessability and the finite model property, notice that any witness ensures that some formulas
always have finite models.
Using the equivalence of the existential closure of such formulas to the formulas that are given to the witness,
one gets the following result, according to which finite witnessability implies the finite model property.

\label{sec:gensig}
\begin{restatable}{theorem}{FWimpliesFMP}\label{FW=>FMP}
    Any $\Sigma$-theory $\T$ finitely witnessable with respect to $S\subseteq \S_{\Sigma}$ also has the finite model property with respect to $S$.
\end{restatable}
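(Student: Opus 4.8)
The plan is to feed a given $\T$-satisfiable formula through the witness function and then simply read off a finite model from clause $(ii)$ of the definition of finite witnessability.

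First I would fix an arbitrary quantifier-free $\T$-satisfiable $\Sigma$-formula $\phi$ and a witness $\wit$ for $\T$ with respect to $S$. By clause $(i)$, $\phi$ and $\Exists{\overarrow{w}}\wit(\phi)$ are $\T$-equivalent for $\overarrow{w}=\vars(\wit(\phi))\setminus\vars(\phi)$, so since $\phi$ is $\T$-satisfiable, $\Exists{\overarrow{w}}\wit(\phi)$ is $\T$-satisfiable as well. The next step is to pass from this to $\T$-satisfiability of $\wit(\phi)$ itself: a $\T$-interpretation satisfying $\Exists{\overarrow{w}}\wit(\phi)$ can be altered on the variables $\overarrow{w}$ so as to satisfy $\wit(\phi)$, and the altered interpretation is still a $\T$-interpretation because $\ax(\T)$ consists of closed formulas, hence is insensitive to the values assigned to variables.

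Having established that $\wit(\phi)$ is $\T$-satisfiable, I would invoke clause $(ii)$ to obtain a $\T$-interpretation $\A$ with $\A\vDash\wit(\phi)$ and $\sigma^{\A}=\vars_{\sigma}(\wit(\phi))^{\A}$ for every $\sigma\in S$. Since $\wit(\phi)$ is a single formula, each $\vars_{\sigma}(\wit(\phi))$ is a finite set of variables, so each $\sigma^{\A}$ is the image of a finite set and is therefore finite. Finally, $\A\vDash\wit(\phi)$ implies $\A\vDash\Exists{\overarrow{w}}\wit(\phi)$, and the $\T$-equivalence from clause $(i)$ then gives $\A\vDash\phi$; thus $\A$ witnesses the finite model property for $\phi$, and since $\phi$ was arbitrary we are done.

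The argument is essentially bookkeeping, and the only point requiring a little care is the step from $\T$-satisfiability of the existential closure $\Exists{\overarrow{w}}\wit(\phi)$ back to $\T$-satisfiability of the quantifier-free formula $\wit(\phi)$, where one must recall that theories are axiomatized by closed formulas so their interpretations do not depend on variable assignments. It is worth noting that computability of $\wit$ is never used — only its existence together with properties $(i)$ and $(ii)$ — so the same proof applies to any witness, computable or not.
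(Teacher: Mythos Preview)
Your proof is correct and follows essentially the same route as the paper's own argument: use clause $(i)$ to pass from $\T$-satisfiability of $\phi$ to that of $\Exists{\overarrow{w}}\wit(\phi)$, strip the existential by reassigning the fresh variables, apply clause $(ii)$ to obtain a $\T$-interpretation whose $S$-domains are images of finite variable sets, and then recover satisfaction of $\phi$ via clause $(i)$ again. Your explicit remark that the modified interpretation remains a $\T$-interpretation because $\ax(\T)$ consists of closed formulas, and your observation that computability of $\wit$ plays no role, are both accurate and make the argument slightly more self-contained than the paper's version.
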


Strong finite witnessability is a stronger property than finite witnessability, obtained by requiring finite models in the presence of arrangements.
This requirement allows one to conclude stable finiteness for it, as the finer control on cardinalities that is required
for stable finiteness can be achieved with the aid of arrangements. 
The following result is proved in Lemma 3.6 of \cite{Casal2018}, although under the assumption that the theory is smooth, something that is not actually used in their proof.

\begin{restatable}{theorem}{SFWimpliesSF}\label{SFW=>SF}
    Any $\Sigma$-theory $\T$ strongly finitely witnessable with respect to $S\subseteq \S_{\Sigma}$ is also stably finite with respect to $S$.
\end{restatable}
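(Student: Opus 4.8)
The plan is to take the strong witness function $\wit$ that exists by hypothesis and show it already certifies stable finiteness. So let $\phi$ be a quantifier-free $\Sigma$-formula and let $\A$ be a $\T$-interpretation satisfying $\phi$; I must produce a $\T$-interpretation $\B$ satisfying $\phi$ with $|\s^{\B}|$ finite and $|\s^{\B}|\le|\s^{\A}|$ for every $\s\in S$. First, since $\phi$ and $\exists\,\overarrow{w}.\:\wit(\phi)$ are $\T$-equivalent (clause $(i)$), from $\A\vDash\phi$ I get a $\T$-interpretation $\A'$, agreeing with $\A$ on the underlying structure but possibly reassigning the fresh variables $\overarrow{w}=\vars(\wit(\phi))\setminus\vars(\phi)$, such that $\A'\vDash\wit(\phi)$. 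Note $\s^{\A'}=\s^{\A}$ for all $\s$, so cardinalities are unchanged.

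Next comes the key move: use $\A'$ to choose the right arrangement. Set $V=\vars(\wit(\phi))$ and, for each $\s\in S$, let $V_{\s}=\vars_{\s}(\wit(\phi))$; define the equivalence relation $E_{\s}$ on $V_{\s}$ by $x\,E_{\s}\,y$ iff $x^{\A'}=y^{\A'}$. Let $\delta_{V}=\delta_{V}^{E}$ be the induced arrangement. By construction $\A'\vDash\delta_{V}$ (it equates exactly the variables it must and separates the rest), so $\A'\vDash\wit(\phi)\wedge\delta_{V}$, and in particular $\wit(\phi)\wedge\delta_{V}$ is $\T$-satisfiable. Now apply clause $(ii)'$ of strong finite witnessability: there is a $\T$-interpretation $\B$ satisfying $\wit(\phi)\wedge\delta_{V}$ with $\s^{\B}=\vars_{\s}(\wit(\phi)\wedge\delta_{V})^{\B}=V_{\s}^{\B}$ for every $\s\in S$. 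Since $V_{\s}$ is finite, $\s^{\B}$ is finite. Moreover $|\s^{\B}|=|V_{\s}^{\B}|$, and because $\B\vDash\delta_{V}$, two variables of $V_{\s}$ are interpreted equally in $\B$ exactly when they are $E_{\s}$-related, i.e. exactly when they are interpreted equally in $\A'$; hence the map sending $x^{\A'}$ to $x^{\B}$ (for $x\in V_{\s}$) is a well-defined surjection from $V_{\s}^{\A'}$ onto $V_{\s}^{\B}$, giving $|\s^{\B}|=|V_{\s}^{\B}|\le|V_{\s}^{\A'}|\le|\s^{\A'}|=|\s^{\A}|$. Finally, $\B\vDash\wit(\phi)$ implies $\B\vDash\exists\,\overarrow{w}.\:\wit(\phi)$, so by clause $(i)$ again $\B\vDash\phi$ (after possibly forgetting the values of $\overarrow{w}$, which does not change any $\s^{\B}$). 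This $\B$ witnesses stable finiteness for $\phi$.

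The only real subtlety — and the step I would be most careful about — is making sure the arrangement $\delta_{V}$ is chosen so that $\A'$ itself satisfies it, since $(ii)'$ only yields a finite model when $\wit(\phi)\wedge\delta_{V}$ is $\T$-satisfiable; defining $E_{\s}$ directly from the equalities holding in $\A'$ is exactly what guarantees this, and it is also what forces the domain of $\B$ to be no larger than that of $\A'$. A secondary point to state cleanly is the interplay between clause $(i)$ and the fresh variables $\overarrow{w}$: one must observe that passing between $\phi$ and $\wit(\phi)$ only ever modifies the interpretation of variables not occurring in $\phi$, and never the interpretation of any sort, so no cardinality is disturbed in either direction. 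Everything else is bookkeeping. Note that smoothness is nowhere used, as claimed.
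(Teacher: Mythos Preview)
Your proof is correct and follows essentially the same approach as the paper: use clause $(i)$ to pass from $\A$ to an $\A'$ satisfying $\wit(\phi)$, read off the arrangement $\delta_{V}$ from the equalities holding in $\A'$, and then invoke clause $(ii)'$ to obtain the finite $\B$, with the cardinality bound coming from the fact that $\B$ and $\A'$ agree on which variables in $V$ are identified. The only cosmetic difference is that the paper takes $V$ to be the variables of sort in $S$ occurring in $\wit(\phi)$ \emph{or} $\phi$, while you take $V=\vars(\wit(\phi))$; this is immaterial, since any variable of $\phi$ not appearing in $\wit(\phi)$ cannot affect the truth of $\phi$ by the $\T$-equivalence in clause $(i)$.
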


Clearly, stable finiteness implies the finite model property (\Cref{SF=>FMP}).
The converse does not generally hold, as we will see in \Cref{sec:positive}.
However, when these properties are considered with respect to a single sort,
they actually coincide:

\begin{restatable}{theorem}{OSplusFMPimplySF}\label{OS+FMP=>SF}
    If a $\Sigma$-theory $\T$ has the finite model property with respect to a set of sorts $S$ with $|S|=1$, then $\T$ is also stably finite with respect to $S$.
\end{restatable}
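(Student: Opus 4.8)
The plan is to show that, under the single-sort assumption, the finite model property already yields the control on cardinalities required by stable finiteness. Let $S = \{\sigma\}$, and let $\varphi$ be a quantifier-free $\Sigma$-formula together with a $\T$-interpretation $\A$ satisfying it. We must produce a $\T$-interpretation $\B$ satisfying $\varphi$ with $|\sigma^{\B}|$ finite and $|\sigma^{\B}| \leq |\sigma^{\A}|$. If $|\sigma^{\A}|$ is already finite, take $\B = \A$ and we are done. So the interesting case is $|\sigma^{\A}|$ infinite, where the second inequality becomes vacuous and we only need \emph{some} $\T$-interpretation satisfying $\varphi$ with $\sigma$ interpreted by a finite set. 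But that is precisely what the finite model property hands us: since $\varphi$ is $\T$-satisfiable (witnessed by $\A$) and quantifier-free, \finmodpro{} with respect to $S$ gives a $\T$-interpretation $\B$ satisfying $\varphi$ with $|\sigma^{\B}|$ finite. Since $|\sigma^{\A}|$ is infinite, $|\sigma^{\B}| \leq |\sigma^{\A}|$ holds trivially.

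So the only real content is the case where $|\sigma^{\A}|$ is finite, and there the witness is $\A$ itself — there is essentially nothing to prove beyond this case split. First I would state the case split on whether $|\sigma^{\A}|$ is finite; second, handle the finite case by taking $\B = \A$; third, handle the infinite case by invoking \Cref{SF=>FMP}'s hypothesis — namely the finite model property of $\T$ with respect to $S$ — applied to $\varphi$, and observe that both required conditions on $\B$ hold (finiteness by the finite model property, and the cardinality bound because $|\sigma^{\A}|$ is infinite). This is the full argument.

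I do not anticipate a genuine obstacle here: the statement is essentially an observation that with one sort there is no tension between ``finite'' and ``no larger than a given model'', since any model at all dominates a given \emph{infinite} model and any given \emph{finite} model is already a valid witness for itself. The one thing to be careful about is that the definition of stable finiteness quantifies over an interpretation $\A$ that already satisfies $\varphi$, so $\varphi$ is automatically $\T$-satisfiable and the finite model property is applicable; I would make this explicit so the reader sees that no extra assumption is smuggled in. It is worth remarking — as the surrounding text already does — that this equivalence is special to $|S| = 1$: with two or more sorts the finite model property may shrink one sort while blowing up another, so the pointwise bound $|\sigma^{\B}| \leq |\sigma^{\A}|$ can fail, which is exactly why the converse direction is false in general (cf.\ \Cref{sec:positive}).
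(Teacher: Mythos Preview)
Your proof is correct and follows essentially the same approach as the paper: both arguments observe that if $\A$ is already finite it witnesses stable finiteness for itself, and otherwise any finite model supplied by the finite model property is automatically no larger than the infinite $\A$. The only cosmetic difference is that the paper first applies the finite model property and then case-splits on whether $|\sigma^{\B}|\leq|\sigma^{\A}|$, whereas you case-split on finiteness of $|\sigma^{\A}|$ up front; the content is identical.
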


\Cref{FW=>FMP,SFW=>SF} are visualized in the Venn diagram of \Cref{venn-all},
where, for example, theories that are strongly finitely witnessable are clearly inside the intersection of finitely witnessable theories
and stably finite theories. 

When only one sort is considered, the picture is much simpler, and is described in \Cref{venn-one}.
There, the finite model property and stable finiteness populate the same region, as ensured by \Cref{OS+FMP=>SF}.
Notice that the results depicted in \Cref{venn-one} hold for one-sorted and many-sorted signatures. The key thing is that the properties are all w.r.t. one of the sorts.

\begin{figure}[t]\label{venndiagrams}
\centering
\begin{minipage}{0.49\textwidth}
\centering
    \begin{tikzpicture}[scale=.7]
\def\firstcircle{(0,0) coordinate (a) circle (2.5cm)}
\def\secondcircle{(0,0) coordinate (b)  circle (0.7cm)}
\def\thirdcircle{(-0.5,0) coordinate (c) circle (1.5cm)}
\def\fourthcircle{(0.5,0) coordinate (d)  circle (1.5cm)}
\draw \firstcircle;
\draw \secondcircle;
\draw \thirdcircle;
\draw \fourthcircle;
\node[label={$\strfinwit$}] (B) at (0,-0.3) {};
\node[label={$\finwit$}] (B) at (1.5,-0.3) {};
\node[label={$\stafin$}] (B) at (-1.5,-0.3) {};
\node[label={$\finmodpro$}] (B) at (0,1.5) {};
    \end{tikzpicture}
    \caption{Finiteness properties:  general case.}\label{venn-all}
    \end{minipage}   
    ~
    \begin{minipage}{0.49\textwidth}
\centering
    \begin{tikzpicture}[scale=.7]
\def\firstcircle{(0,0) coordinate (a) circle (2.5cm)}
\def\secondcircle{(0,0) coordinate (b)  circle (0.7cm)}
\def\fourthcircle{(0,0) coordinate (d)  circle (1.5cm)}
\draw \firstcircle;
\draw \secondcircle;
\draw \fourthcircle;
\node[label={$\strfinwit$}] (B) at (0,-0.3) {};
\node[label={$\finwit$}] (B) at (0,0.6) {};
\node[label={$\finmodpro$ \& $\stafin$}] (B) at (0,1.5) {};
    \end{tikzpicture}
    \caption{Finiteness properties w.r.t. one sort.}\label{venn-one}
    \end{minipage}

\end{figure}

\subsection{Empty Signatures}
\label{sec:emptysig}

\Cref{venn-all,venn-one} show a complete picture of the relationships between the properties studied in this section, for  arbitrary signatures.
However, when this generality is relaxed, several other connections appear.
For this section, we require that the signatures are empty, and that they have a finite set of sorts.
We further require that the properties in question hold for the entire set of sorts, not for any subset of it.

\Cref{tab:signatures} defines the 5 signatures that will be used in the examples found in \Cref{sec:posresults}, and that will also appear in some of the results shown below:
the empty signatures $\Sigmaone$, $\Sigmatwo$ and $\Sigma_{3}$, with sets of sorts $\{\s\}$, $\{\s, \s_{2}\}$ and $\{\s, \s_{2}, \s_{3}\}$, respectively; and the signatures $\Sigmas$ and $\Sigmastwo$ with one function $s$ of arity $\s\rightarrow\s$, and sets of sorts $\{\s\}$ and $\{\s, \s_{2}\}$, respectively. Notice these are the simplest possible signatures when we order those by establishing: first, that the signature with fewer sorts is simpler; and second, that if two signatures have the same number of sorts, the one with fewer function symbols is simpler. We are free not to consider predicates, as they are at least as expressive as functions themselves; furthermore, we do not consider the problem of defining which of two signatures with the same numbers of sorts and function symbols is simpler, choosing rather to add only functions from a sort to itself.

\begin{table}[htp]
    \centering
    \renewcommand{\arraystretch}{1.4}
    \begin{tabular}{c|c|c}
    Signature & Sorts & Function Symbols \\\hline
         $\Sigma_1$ & $\set{\sigma}$ & $\emptyset$ \\
         $\Sigma_2$ & $\set{\sigma,\sigma_2}$ & $\emptyset$ \\
         $\Sigma_3$ & $\set{\sigma,\sigma_2, \s_{3}}$ & $\emptyset$ \\
         $\Sigma_s$ & $\set{\sigma}$ & $\set{s: \sigma\rightarrow\sigma}$ \\
         $\Sigma_s^2$ & $\set{\sigma,\sigma_2}$ & $\set{s: \sigma\rightarrow\sigma}$ 
    \end{tabular}
    \renewcommand{\arraystretch}{1}
    \vspace{2em}
    \caption{Signatures that will be used throughout the paper.}
    \label{tab:signatures}
\end{table}

First, in such a setting, we have that the finite model property implies finite witnessability, in the presence of smoothness.

\begin{restatable}{theorem}{OSESSMFMPimplyFW}\label{OS+ES+SM+FMP=>FW}
    If $\Sigma$ is an empty signature with a finite set of sorts $\S_{\Sigma}$, and the $\Sigma$-theory $\T$ has the finite model property and is smooth with respect to $\S_{\Sigma}$, then $\T$ is also finitely witnessable with respect to $\S_{\Sigma}$.
\end{restatable}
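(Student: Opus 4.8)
We need to construct a witness function. In an empty signature with a finite set of sorts, a quantifier-free formula is (up to normalization) a Boolean combination of equalities between variables. The key point is that in an empty signature, the only "content" of an interpretation is the cardinalities of the sorts together with a partition of the variables into equivalence classes per sort.

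So here is the plan. Given a quantifier-free formula $\phi$ over the empty signature $\Sigma$ with $\S_\Sigma = \{\sigma_1, \dots, \sigma_k\}$, I would let $\wit(\phi) = \phi$ — that is, try the identity witness. Condition $(i)$ is then trivial since $\vars(\wit(\phi)) \setminus \vars(\phi) = \emptyset$ and $\phi \equiv \exists \emptyset.\, \phi$. The work is in condition $(ii)$: if $\phi$ is $\T$-satisfiable, I need a $\T$-interpretation $\A$ satisfying $\phi$ with $\sigma^\A = \vars_\sigma(\phi)^\A$ for every $\sigma \in \S_\Sigma$. Since $\phi$ is $\T$-satisfiable, by the finite model property there is a $\T$-interpretation $\B$ satisfying $\phi$ with every $\sigma^\B$ finite. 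Now I want to "shrink" $\B$ down so that every element of every sort is named by a variable of $\phi$. The obstacle is that $\B$ might have elements of $\sigma$ that are not the value of any variable in $\vars_\sigma(\phi)$, and naively deleting them need not preserve being a $\T$-interpretation. But here smoothness does not directly help (it goes the wrong way, enlarging), so instead I would argue via substructures: in an empty signature, any nonempty subset of each $\sigma^\B$ that still contains $\vars_\sigma(\phi)^\B$ carries a $\Sigma$-substructure, and whether it remains a $\T$-interpretation and still satisfies $\phi$ depends only on the cardinalities of the sorts. The subtlety — and this is where I expect the main difficulty — is that there is a minimum cardinality $n_\sigma = |\vars_\sigma(\phi)^\B|$ below which we cannot go, but $\T$ might only have models where $\sigma^\B$ is strictly larger than $n_\sigma$. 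This is precisely where smoothness enters, but in reverse: I would instead start from the finite model $\B$, and use smoothness to first enlarge $\B$ to a model whose cardinalities match, and then re-add exactly enough fresh witness variables.

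The cleaner formulation: since $\T$ is smooth, from the finite model $\B$ I can obtain, for any function $\kappa$ with $\kappa(\sigma) \geq |\sigma^\B|$, a $\T$-interpretation of those cardinalities satisfying $\phi$. I would pick $\kappa(\sigma) = \max(|\sigma^\B|, n_\sigma')$ where $n_\sigma'$ is chosen large enough, and then define $\wit(\phi)$ to be $\phi$ together with a conjunction of fresh variables $w^\sigma_1, \dots, w^\sigma_{m_\sigma}$ of sort $\sigma$ (appearing in trivial literals like $w^\sigma_i = w^\sigma_i$, or more carefully in $\bigwedge_i \neg(w^\sigma_i = w^\sigma_j)$ type clauses is wrong since it over-constrains; better to not constrain them at all) so that $\vars_\sigma(\wit(\phi))$ has enough variables to name every element of the smallest $\T$-model of $\phi$. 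But $m_\sigma$ must be computed uniformly, i.e., $\wit$ must be computable — so I cannot use the actual minimal model size (that would require deciding $\T$-satisfiability and computing $\mc$, which the theorem pointedly avoids). The resolution is that in an empty signature the minimal $\T$-model size of $\phi$ is bounded by a computable function of $\phi$: indeed it is bounded by the number of variables occurring in $\phi$ plus a constant depending only on $\T$ — no, even that constant is not computable in general.

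Let me reconsider the real argument. I suspect the intended proof is: in an empty signature, define $\wit(\phi) = \phi \wedge \bigwedge_{\sigma} \bigwedge_{i=1}^{|\vars(\phi)|+1}(w^\sigma_i = w^\sigma_i)$ — no. Actually the correct insight is that smoothness plus FMP lets us realize $\phi$ in a model whose sort sizes are exactly $|\vars_\sigma(\phi)|$ whenever $|\vars_\sigma(\phi)|$ is at least the minimal $\T$-model size; and when it is smaller, we add that many fresh unconstrained witness variables to pad up to $|\vars_\sigma(\phi)|$-many... but this is circular unless the minimal size is at most $|\vars_\sigma(\phi)|$ to begin with. I would therefore set $\wit(\phi) = \phi$ and prove $(ii)$ directly: given $\T$-satisfiable $\phi$, take a finite $\T$-model $\B$ (by FMP); if some $\sigma^\B \supsetneq \vars_\sigma(\phi)^\B$, we want to cut down. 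I would use the fact that whether a $\T$-interpretation satisfying $\phi$ with prescribed finite sort-cardinalities exists depends only on those cardinalities (empty signature), that FMP gives us one set of achievable cardinalities, and that we may decrease any $\sigma^\B$ down to $|\vars_\sigma(\phi)^\B|$ provided $\T$ still has a model of those cardinalities — and if not, we add fresh witness variables. To make $\wit$ computable and handle the last case, I would set $\wit(\phi) = \phi \wedge \bigwedge_\sigma \bigwedge_{i=1}^{n} \top$-style padding with $n = |\vars(\phi)|$; since adding $n$ fresh variables of each sort, combined with smoothness upward from the finite model and the ability to merge fresh variables arbitrarily, lets us hit every sort-cardinality in the interval $[\,|\vars_\sigma(\phi)^\B|,\ |\vars_\sigma(\wit(\phi))|\,]$ that is realized by some $\T$-model — and one checks this interval is nonempty and contains a realized value because $\B$ itself (suitably re-shrunk using the fresh variables as names) works. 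The genuinely hard step, and the one I would spend the most care on, is arguing that the finite model $\B$ can be both enlarged (via smoothness) and re-described using only variable names so that the number of needed witness variables is a computable function of $\phi$ alone; I believe this goes through because, in the empty signature, one can take $\wit(\phi) = \phi$ and exploit that any $\T$-model of $\phi$ contains a $\T$-model of $\phi$ of size exactly $\max_\sigma |\vars_\sigma(\phi)^{\B}|$... I would need to verify that closing-off downward is possible here, which reduces to: the set of sort-size-vectors of $\T$-models of $\phi$ is "upward closed above its finite minimal elements" by smoothness, and FMP guarantees a finite one exists, so pick a minimal finite one $\vec{m}$; then $m_\sigma \leq |\vars_\sigma(\phi)|$ must be shown (otherwise a diagonal/compactness contradiction with finite witnessability of related theories), giving enough room. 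This last inequality $m_\sigma \le |\vars_\sigma(\phi)|$, or the construction of enough padding variables, is the crux.
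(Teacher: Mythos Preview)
Your proposal circles around the right idea but dismisses it for the wrong reason, and the final claimed inequality is false. You correctly identify that one should pad $\phi$ with fresh unconstrained variables of each sort, and you even write ``bounded by the number of variables occurring in $\phi$ plus a constant depending only on $\T$'' --- but then reject this with ``even that constant is not computable in general.'' This is the crux of your error: the constant does not need to be computable \emph{from the axiomatization of $\T$}. The definition of finite witnessability only requires that a computable witness function \emph{exist}; a function that appends a fixed, hardcoded number $m_i$ of fresh variables of each sort $\sigma_i$ is trivially computable, regardless of how those constants were obtained.

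The paper's proof does exactly this. By the finite model property, the set of finite cardinality vectors of $\T$-models is nonempty; by Dickson's lemma it has finitely many minimal elements $(m_{1,j},\ldots,m_{n,j})$ for $j=1,\ldots,m$; set $m_i=\max_j m_{i,j}$ and define $\wit(\phi)=\phi\wedge\delta$ where $\delta$ introduces $m_i$ fresh variables of sort $\sigma_i$ via tautological equalities. For condition $(ii)$, given $\A\models\wit(\phi)$, one builds $\B$ with $\sigma_i^{\B}=\vars_{\sigma_i}(\phi)^{\A}\cup A_i$ where $A_i$ pads up to at least $m_i$ elements; smoothness over an empty signature guarantees that any structure with $|\sigma_i^{\B}|\ge m_{i,j}$ for some $j$ is a $\T$-model. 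Your final attempted inequality $m_\sigma\le|\vars_\sigma(\phi)|$ is simply false (e.g.\ $\T_{\ge 5}$ with $\phi$ having a single variable), and the ``diagonal/compactness contradiction'' you gesture at does not exist.
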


Next, we show that stable finiteness and smoothness together, imply strong finite witnessability.

\begin{restatable}{theorem}{OSESSMSFimpliesSFW}\label{OS+ES+SM+SF=>SFW}
    If $\Sigma$ is an empty signature with a finite set of sorts $\S_{\Sigma}$, 
    and the $\Sigma$-theory $\T$ is stably finite and smooth with respect to $\S_{\Sigma}$, then $\T$ is also strongly finitely witnessable with respect to $\S_{\Sigma}$.
\end{restatable}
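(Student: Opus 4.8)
The plan is to construct, from an assumed witness for finite witnessability (which stable finiteness does not immediately give, but which we obtain via \Cref{OS+ES+SM+FMP=>FW} since stable finiteness implies the finite model property by \Cref{SF=>FMP}, and we are assuming smoothness), a new witness that behaves correctly in the presence of arrangements. So the first step is to invoke \Cref{SF=>FMP} and \Cref{OS+ES+SM+FMP=>FW} to get a witness $\wit$ for $\T$ satisfying conditions $(i)$ and $(ii)$. The goal is then to upgrade $(ii)$ to $(ii)'$, i.e.\ to show that the finer cardinality control afforded by stable finiteness is enough to handle conjunctions $\wit(\phi)\wedge\delta_V$ for arbitrary arrangements $\delta_V$.

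Since the signature is empty, a quantifier-free formula and an arrangement are, up to $\T$-equivalence, just information about equalities and disequalities among finitely many variables plus cardinality constraints. I would define the new witness $\wit'(\phi)$ by first applying $\wit$ and then, crucially, padding: introduce enough fresh variables so that, together with any arrangement, the set of variable-classes can be realized at exactly the cardinality of a small model. Concretely, given $\wit(\phi)\wedge\delta_V$ that is $\T$-satisfiable in some interpretation $\A$, stable finiteness hands us a $\T$-interpretation $\B$ satisfying it with $|\sigma^{\B}|$ finite and $|\sigma^{\B}|\le|\sigma^{\A}|$ for every $\sigma$. The issue is that $\B$ need not satisfy $\sigma^{\B} = \vars_\sigma(\wit(\phi)\wedge\delta_V)^{\B}$ — there may be "extra" domain elements not named by variables. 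To fix this, the fresh variables added by $\wit'$ must be numerous enough to name these extra elements, but their presence must not destroy $\T$-satisfiability, which is where smoothness comes in: starting from a small model of $\wit(\phi)\wedge\delta_V$, smoothness lets us realize the domain at whatever (finite) cardinality is needed so that every element is the image of one of the designated variables, for every sort simultaneously (using $|\S_\Sigma|$ finite). One then extends $\delta_V$ to an arrangement on the enlarged variable set consistent with that model.

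The key technical point is bounding, uniformly and computably, how many fresh variables to add. Because the signature is empty, the only $\T$-relevant content of $\wit(\phi)$ is captured by finitely many "minimal model sizes" — but these need not be computable in general. The way around this, following the style of \Cref{OS+ES+SM+FMP=>FW}, is that $\wit'$ need not know the minimal sizes: it suffices to add, for each sort $\sigma$, a number of fresh variables equal to (a computable bound in terms of) the number of variables already occurring, and let condition $(ii)'$ be witnessed by choosing an appropriate arrangement on them. Given $\wit(\phi)\wedge\delta_V$ is $\T$-satisfiable, apply stable finiteness to get a finite model, then smoothness to inflate or the arrangement to collapse the fresh variables so that the final interpretation has each sort's domain exactly equal to the set of values of the named variables. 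Condition $(i)$ for $\wit'$ is immediate since the fresh variables are existentially quantified away and $\delta_V$-type information is not part of $\wit'(\phi)$ itself.

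The main obstacle I anticipate is the interaction between the arrangement $\delta_V$ supplied adversarially in $(ii)'$ and the fresh variables: one must show that whatever equalities/disequalities $\delta_V$ imposes, there is still a $\T$-interpretation where the domain is exactly the named values, with the right (finite, bounded) cardinality. This requires carefully combining three moves — stable finiteness to get finiteness and the $|\sigma^{\B}|\le|\sigma^{\A}|$ bound, smoothness to climb back up to a cardinality matching a chosen arrangement on the fresh variables, and a counting argument showing the number of fresh variables added by $\wit'$ is always sufficient — and checking that all of this can be done while respecting $\delta_V$ on the original variables $V$. The emptiness of the signature and finiteness of $\S_\Sigma$ are what make the counting argument clean, since there are no function symbols forcing additional elements into the domain and only finitely many sorts to control at once.
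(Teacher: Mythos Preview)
Your plan has a genuine gap at the core step. You propose to define $\wit'(\phi)$ by taking the finite-witnessability witness and ``padding'' with fresh variables, then argue that for any arrangement $\delta_V$ you can realize the domain as exactly the variable values. But padding with fresh variables governed only by a tautology does not survive an adversarial $\delta_V$: since $V$ is quantified over \emph{all} finite variable sets, $V$ may contain every fresh variable you added, and $\delta_V$ may force them all equal. In that case $\wit'(\phi)\wedge\delta_V$ remains $\T$-satisfiable (nothing in your $\wit'$ prevents the collapse), yet the set of variable values in each sort can be strictly smaller than every minimal $\T$-model, so no $\T$-interpretation with $\sigma^{\B}=\vars_\sigma(\wit'(\phi)\wedge\delta_V)^{\B}$ exists. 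Your remark about ``choosing an appropriate arrangement on them'' is where the misunderstanding shows: the arrangement is supplied, not chosen, and stable finiteness and smoothness cannot rescue you once $\delta_V$ has already collapsed the fresh variables.

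The paper's proof avoids this by building the distinctness requirement directly into the witness. Using Dickson's lemma (\Cref{verytechnicallemma}) on the set of finite $\T$-model cardinality tuples (nonempty by stable finiteness), one extracts finitely many minimal tuples $(m_{1,j},\ldots,m_{n,j})$ and sets
\[
\wit(\phi)=\phi\wedge\bigvee_{j}\bigwedge_{i}\bigwedge_{k<l\le m_{i,j}}\neg(x^{i}_k=x^{i}_l),
\]
a fixed formula conjoined to $\phi$. Now any $\delta_V$ that over-collapses the fresh variables makes $\wit(\phi)\wedge\delta_V$ unsatisfiable, so $(ii)'$ holds vacuously; and when it is satisfiable, one disjunct guarantees enough distinct values in each sort to dominate a minimal model, so restricting to variable values and invoking smoothness yields a $\T$-interpretation. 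The detour through \Cref{OS+ES+SM+FMP=>FW} is unnecessary, and the number of fresh variables depends on the fixed minimal-model data, not on $|\vars(\phi)|$.
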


\begin{figure}[t]
    \centering
    \begin{tikzpicture}[scale=1]
\def\firstrectangle{(-3.8,-1.4) rectangle (0.5, 0.5)}
\def\secondrectangle{(-2.5,-0.5) rectangle (0.5, 0.5)}
\def\thirdrectangle{(-0.4,0) rectangle (1.7, 1)}
\draw \firstrectangle;
\draw \secondrectangle;
\draw \thirdrectangle;
\node[label={$\smooth$}] (B) at (1.1,0) {};
\node[label={$\finwit$ ($\strfinwit$)}] (B) at (-1.4,-0.7) {};
\node[label={$\finmodpro$ ($\stafin$)}] (B) at (-2.3,-1.5) {};
    \end{tikzpicture}
    \caption{Interplay between $\smooth$, $\finwit$ ($\strfinwit$) and $\finmodpro$ ($\stafin$) w.r.t. $\S_{\Sigma}$ in an empty signature.}
    \label{diagramforsmoothplusfiniteproperty}
\end{figure}

While \Cref{FW=>FMP} and \Cref{SFW=>SF} establish certain unconditional relations between finite witnessability and the finite model property, and strong finite witnessability and stable finiteness, the converses shown to hold in \Cref{OS+ES+SM+FMP=>FW} and \Cref{OS+ES+SM+SF=>SFW} demand smoothness and that the properties hold with respect to the entire set of sorts. In that case, the situation can be represented by the diagram found in \Cref{diagramforsmoothplusfiniteproperty}, showing clearly that a smooth theory that also has the finite model property (respectively, is stably finite), cannot not be finitely witnessable (strongly finitely witnessable).

Lastly, regarding the empty signatures $\Sigmaone$, $\Sigmatwo$ and $\Sigma_{3}$, the following theorem shows that
$\Sigma_{3}$ is sometimes necessary.

\begin{restatable}{theorem}{threesorted}\label{threesorted}
There are no $\Sigmaone$ or $\Sigmatwo$-theories $\T$ that are, simultaneously, neither stably infinite nor stably finite, but are convex and have the finite model property, with respect to the entire set of their sorts.
\end{restatable}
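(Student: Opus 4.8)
The plan is to treat the two signatures separately. For $\Sigmaone$ the statement is immediate from \Cref{OS+FMP=>SF}: a one-sorted theory with the finite model property with respect to its unique sort is automatically stably finite with respect to it, so no $\Sigmaone$-theory can have the finite model property while failing to be stably finite, regardless of convexity or stable infiniteness. The real content is the $\Sigmatwo$ case, which I would attack combinatorially. Since $\Sigmatwo$ is empty, a $\Sigmatwo$-theory $\T$ is completely determined by the set $D = \{(|\sigma^{\A}|,|\sigma_{2}^{\A}|) : \A\text{ is a model of }\T\} \subseteq (\mathbb{N}^{+}\cup\{\infty\})^{2}$, and this $D$ is \emph{topologically closed} once $\mathbb{N}^{+}\cup\{\infty\}$ is given the topology in which every finite cardinal is isolated and $\infty$ is the unique limit point: each closed $\Sigmatwo$-formula is equivalent to a Boolean combination of the cardinality sentences $\psi^{\sigma}_{\geq k}$ and $\psi^{\sigma_{2}}_{\geq k}$, each of which defines a clopen subset of $(\mathbb{N}^{+}\cup\{\infty\})^{2}$, and $D$ is the intersection of the (clopen, hence closed) sets defined by the axioms of $\T$.

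Under this dictionary I would record, all with respect to $\{\sigma,\sigma_{2}\}$: a quantifier-free formula is $\T$-satisfiable iff $D$ meets some finite union of ``corners'' $\{a\geq k,\ b\geq l\}$ with $k,l$ finite; $\T$ has the finite model property iff whenever $D$ meets such a corner it meets it at a point with both coordinates finite; $\T$ is not stably infinite iff $D\neq\emptyset$ and $(\infty,\infty)\notin D$; and $\T$ is not stably finite iff there are finite $p,q$ and a point $(a^{*},b^{*})\in D$ with $a^{*}\geq p$ and $b^{*}\geq q$ such that $D$ contains no finite-coordinate point $(a',b')$ with $p\leq a'\leq a^{*}$ and $q\leq b'\leq b^{*}$. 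A one-line argument (otherwise $(a^{*},b^{*})$ itself would be such a point) shows that $a^{*}$ or $b^{*}$ must be $\infty$; combined with $(\infty,\infty)\notin D$, and swapping the two sorts if necessary, I may assume $a^{*}=\infty$ and $b^{*}$ finite.

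The core of the argument is a dichotomy coming from convexity. For finite $p\geq 1$ set $\hat{s}_{p}=\sup\{b : (a,b)\in D,\ a\geq p\}$. I would apply convexity to the arrangement-style conjunction stating that $x_{1},\dots,x_{p}$ are pairwise distinct and $y_{1},\dots,y_{\hat{s}_{p}}$ are pairwise distinct, together with the disjunction $\bigvee_{i\leq p}(x_{i}=x_{p+1})\vee\bigvee_{j\leq\hat{s}_{p}}(y_{j}=y_{\hat{s}_{p}+1})$. This disjunction is $\T$-entailed by the conjunction because no model of $\T$ can simultaneously contain $p+1$ pairwise-distinct elements of sort $\sigma$ and $\hat{s}_{p}+1$ pairwise-distinct elements of sort $\sigma_{2}$ (by the definition of $\hat{s}_{p}$), whereas when $\hat{s}_{p}\geq 2$ an extremal model attaining $\hat{s}_{p}$ can be rearranged to falsify any single disjunct; convexity therefore forces $\hat{s}_{p}\in\{1,\infty\}$ for every finite $p$ (the case $p=1$ is handled by the analogous purely $\sigma_{2}$ formula).

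Finally I would assemble the contradiction. Applying the finite model property to the corner $\{a\geq p',\ b\geq q\}$ for arbitrarily large finite $p'$ — a corner containing $(\infty,b^{*})\in D$ — yields a finite-coordinate point $(a_{1},b_{1})\in D$ with $a_{1}\geq p'$ and $b_{1}\geq q$; the failure of stable finiteness rules out $q\leq b_{1}\leq b^{*}$, so $b_{1}>b^{*}$ and in particular $b_{1}\geq 2$. Hence $\hat{s}_{a_{1}}\geq 2$ for arbitrarily large finite $a_{1}$, so $\hat{s}_{a_{1}}=\infty$ by the dichotomy, and since $\hat{s}_{p}$ is non-increasing in $p$ we obtain $\hat{s}_{p}=\infty$ for all finite $p$. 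But then for every $N$ the equality $\hat{s}_{N}=\infty$ supplies a point of $D$ inside $[N,\infty]^{2}$, which is different from $(\infty,\infty)$ since $(\infty,\infty)\notin D$; thus $(\infty,\infty)$ is a limit point of $D$, and as $D$ is closed we get $(\infty,\infty)\in D$ — contradicting non-stable-infiniteness. I expect the convexity dichotomy to be the delicate part: choosing the witness formulas so that the full disjunction is $\T$-valid while no single disjunct is, and dealing with the small-cardinality degeneracies (the case $p=1$, and the possibility that $\hat{s}_{p}$ is realized only by a model with an infinite $\sigma$-domain). The remainder is bookkeeping — the faithful translation of the four properties into conditions on $D$, and the closedness of $D$, which is exactly what licenses the concluding limit-point step.
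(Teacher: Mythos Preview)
Your argument is correct and takes a genuinely different route from the paper's. For $\Sigmaone$ you invoke \Cref{OS+FMP=>SF} directly, which is cleaner than the paper's detour through \Cref{OS+ES+-SI=>FW}. For $\Sigmatwo$, the paper proceeds by a short direct case analysis: compactness (from the failure of stable infiniteness) produces a bound $M$ such that no model has both domains of size $\geq M$; one then splits on whether one of the two sorts is globally bounded, and in each branch writes down an explicit pigeonhole disjunction over $M{+}1$ variables (of one sort, or of both) that is $\T$-valid while no single disjunct is, contradicting convexity directly. Your approach instead isolates a reusable structural lemma---the dichotomy $\hat s_p\in\{1,\infty\}$ for convex empty two-sorted theories---and then uses non-stable-finiteness together with the finite model property to force $\hat s_p=\infty$ for all finite $p$, finishing via closedness of $D$ to place $(\infty,\infty)$ in $D$. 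The paper's version is shorter and needs no topological machinery; your dichotomy lemma and the closedness observation give a more conceptual picture and would adapt more smoothly to additional sorts. One small point worth making explicit in a full write-up: for $p\geq 2$ the finite supremum $\hat s_p$ is actually attained (the candidate $b$-values are positive integers), which is what lets you falsify each individual disjunct $x_i=x_{p+1}$ in a single model with $a\geq p$ and $b=\hat s_p$.
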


Hence, to exhibit such theories, one has to consider three-sorted theories.

\section{A taxonomy of examples}
\label{sec:posresults}

\begin{table}[htp]
\renewcommand{\arraystretch}{1.2}
\centering
\begin{tabular}{|P{0.5cm}|P{0.5cm}|P{0.5cm}|P{0.5cm}|P{0.5cm}|P{0.5cm}|P{0.5cm}||P{1.8cm}P{1.8cm}P{1.8cm}P{1.8cm}|P{0.5cm}|}
\hline
\multicolumn{7}{|c|}{} & \multicolumn{2}{c|}{Empty} & \multicolumn{2}{c|}{Non-empty} & \\
\hline
$\stainf$ & $\smooth$ & $\finwit$ & $\strfinwit$ & $\convex$ & $\finmodpro$ & $\stafin$ & \multicolumn{1}{c|}{One-sorted} & \multicolumn{1}{c|}{Many-sorted} &\multicolumn{1}{c|}{One-sorted} & \multicolumn{1}{c|}{Many-sorted}& $N^{\underline{o}}$\\\hline

\multirow{24}{*}{$T$ } & \multirow{12}{*}{$T$ }&\multirow{6}{*}{$T$ }&\multirow{2}{*}{$T$ }&\multirow{1}{*}{$T$}& \multirow{1}{*}{$T$} & $T$ &\multicolumn{1}{c|}{$\Tgeqn$}&\multicolumn{1}{c|}{$\adds{\Tgeqn}$}& \multicolumn{1}{c|}{$\addf{\Tgeqn}$}&\multicolumn{1}{c|}{$\addf{\adds{\Tgeqn}}$}& 1\\\hhline{~~~~--------}
&&&&\multirow{1}{*}{$F$ }& \multirow{1}{*}{$T$} & $T$ &\multicolumn{2}{c|}{\cite{BarTolZoh}\cellcolor{red!15}}& \multicolumn{1}{c|}{$\addnc{\Tgeqn}$} &\multicolumn{1}{c|}{$\addnc{\adds{\Tgeqn}}$}& 2\\\hhline{~~~---------}
&&&\multirow{4}{*}{$F$ }&\multirow{2}{*}{$T$}&\multirow{2}{*}{$T$}& $T$&\multicolumn{2}{c|}{\cellcolor{red!15}\Cref{OS+ES+SM+SF=>SFW}}&\multicolumn{1}{c|}{$\TsM$}&\multicolumn{1}{c|}{$\addf{\TsM}$}& 3\\\hhline{~~~~~~------}
&&&&&&$F$&\multicolumn{1}{c|}{\cellcolor{red!15}\Cref{OS+FMP=>SF}}&\multicolumn{1}{c|}{$\Ttwo$}&\multicolumn{1}{c|}{\cellcolor{red!15}\Cref{OS+FMP=>SF}}&\multicolumn{1}{c|}{$\addf{\Ttwo}$}&4\\\hhline{~~~~--------}

&&&&\multirow{2}{*}{$F$ }&\multirow{2}{*}{$T$}&$T$&\multicolumn{2}{c|}{\cellcolor{red!15}}& \multicolumn{1}{c|}{$\TSM$}&\multicolumn{1}{c|}{$\adds{\TSM}$}& 5\\\hhline{~~~~~~-*{1}{>{\arrayrulecolor{red!15}}|--}*{1}{>{\arrayrulecolor{black}}|-}--}
&&&&&&$F$&\multicolumn{2}{c|}{\multirow{-2}{*}{\cite{BarTolZoh}\cellcolor{red!15}}}&\multicolumn{1}{c|}{\cellcolor{red!15}\Cref{OS+FMP=>SF}}&\multicolumn{1}{c|}{$\addnc{\Ttwo}$}&6\\\hhline{~~----------}

&&\multirow{6}{*}{$F$ }&\multirow{6}{*}{$F$ }&\multirow{3}{*}{$T$ }& \multirow{2}{*}{$T$}&$T$&\multicolumn{2}{c|}{\cellcolor{red!15}}&\multicolumn{1}{c|}{$\TsBB$}&\multicolumn{1}{c|}{$\TtwoBB$}&7\\\hhline{~~~~~~-*{1}{>{\arrayrulecolor{red!15}}|--}*{1}{>{\arrayrulecolor{black}}|-}--}
&&&&&&$F$&\multicolumn{2}{c|}{\multirow{-2}{*}{\Cref{OS+ES+SM+FMP=>FW}\cellcolor{red!15}}}&\multicolumn{1}{c|}{\cellcolor{red!15}\Cref{OS+FMP=>SF}}&\multicolumn{1}{c|}{$\TBBtwotwo$}&8\\\hhline{~~~~~-------}
&&&&&$F$&$F$&\multicolumn{1}{c|}{$\Tinfty$}&\multicolumn{1}{c|}{$\adds{\Tinfty}$}&\multicolumn{1}{c|}{$\addf{\Tinfty}$}&\multicolumn{1}{c|}{$\addf{\adds{\Tinfty}}$}&9\\\hhline{~~~~--------}

&&&&\multirow{3}{*}{$F$}&\multirow{2}{*}{$T$}&$T$&\multicolumn{2}{c|}{\cellcolor{red!15}}&\multicolumn{1}{c|}{$\TssBB$}&\multicolumn{1}{c|}{$\adds{\TssBB}$}& 10\\\hhline{~~~~~~-*{1}{>{\arrayrulecolor{red!15}}|--}*{1}{>{\arrayrulecolor{black}}|-}--}
&&&&&&$F$&\multicolumn{2}{c|}{\cellcolor{red!15}}&\multicolumn{1}{c|}{\cellcolor{red!15}\Cref{OS+FMP=>SF}}&\multicolumn{1}{c|}{$\TstwoBB$}&11\\\hhline{~~~~~--*{1}{>{\arrayrulecolor{red!15}}|--}*{1}{>{\arrayrulecolor{black}}|-}--}
&&&&&$F$&$F$&\multicolumn{2}{c|}{\multirow{-3}{*}{\cite{BarTolZoh}\cellcolor{red!15}}}&\multicolumn{1}{c|}{$\addnc{\Tinfty}$}&\multicolumn{1}{c|}{$\addnc{\adds{\Tinfty}}$}&12\\\hhline{~-----------}

&\multirow{12}{*}{$F$ }&\multirow{6}{*}{$T$ }&\multirow{2}{*}{$T$ }&\multirow{1}{*}{$T$ }&\multirow{1}{*}{$T$ }&$T$&&&&& 13\\\hhline{~~~~---~~~~-}

&&&&\multirow{1}{*}{$F$ }&\multirow{1}{*}{$T$}&$T$&\multicolumn{4}{c|}{\multirow{-2}{*}{\textcolor{red}{Unicorn}}}&14\\\hhline{~~~---------}

&&&\multirow{4}{*}{$F$ }&\multirow{2}{*}{$T$ }&\multirow{2}{*}{$T$}&$T$&\multicolumn{1}{c|}{$\Teven$}&\multicolumn{1}{c|}{$\adds{\Teven}$}&\multicolumn{1}{c|}{$\addf{\Teven}$}&\multicolumn{1}{c|}{$\addf{\adds{\Teven}}$}& 15\\\hhline{~~~~~~------}
&&&&&&$F$&\multicolumn{1}{c|}{\cellcolor{red!15}\Cref{OS+FMP=>SF}}&\multicolumn{1}{c|}{$\Tupinfty$}&\multicolumn{1}{c|}{\cellcolor{red!15}\Cref{OS+FMP=>SF}}&\multicolumn{1}{c|}{$\addf{\Tupinfty}$}&16\\\hhline{~~~~--------}

&&&&\multirow{2}{*}{$F$ }&\multirow{2}{*}{$T$}&$T$&\multicolumn{2}{c|}{\cellcolor{red!15}}&\multicolumn{1}{c|}{$\addnc{\Teven}$}&\multicolumn{1}{c|}{$\addnc{\adds{\Teven}}$}& 17\\\hhline{~~~~~~-*{1}{>{\arrayrulecolor{red!15}}|--}*{1}{>{\arrayrulecolor{black}}|-}--}
&&&&&&$F$&\multicolumn{2}{c|}{\multirow{-2}{*}{\cite{BarTolZoh}\cellcolor{red!15}}}&\multicolumn{1}{c|}{\cellcolor{red!15}\Cref{OS+FMP=>SF}}&\multicolumn{1}{c|}{$\addnc{\Tupinfty}$}&18\\\hhline{~~----------}

&&\multirow{6}{*}{$F$ }&\multirow{6}{*}{$F$ }&\multirow{3}{*}{$T$ }&\multirow{2}{*}{$T$}&$T$&\multicolumn{1}{c|}{$\TBB$}&\multicolumn{1}{c|}{$\adds{\TBB}$}&\multicolumn{1}{c|}{$\addf{\TBB}$}&\multicolumn{1}{c|}{$\addf{\adds{\TBB}}$}&19\\\hhline{~~~~~~------}
&&&&&&$F$&\multicolumn{1}{c|}{\cellcolor{red!15}\Cref{OS+FMP=>SF}}&\multicolumn{1}{c|}{$\TBBtwo$}&\multicolumn{1}{c|}{\cellcolor{red!15}\Cref{OS+FMP=>SF}}&\multicolumn{1}{c|}{$\addf{\TBBtwo}$}&20\\\hhline{~~~~~-------}
&&&&&$F$&$F$&\multicolumn{1}{c|}{$\Tninfty$}&\multicolumn{1}{c|}{$\adds{\Tninfty}$}&\multicolumn{1}{c|}{$\addf{\Tninfty}$}&\multicolumn{1}{c|}{$\addf{\adds{\Tninfty}}$}&21\\\hhline{~~~~--------}

&&&&\multirow{3}{*}{$F$ }&\multirow{2}{*}{$T$}&$T$&\multicolumn{2}{c|}{\cellcolor{red!15}}&\multicolumn{1}{c|}{$\addnc{\TBB}$}&\multicolumn{1}{c|}{$\addnc{\adds{\TBB}}$}& 22\\\hhline{~~~~~~-*{1}{>{\arrayrulecolor{red!15}}|--}*{1}{>{\arrayrulecolor{black}}|-}--} 
&&&&&&$F$&\multicolumn{2}{c|}{\cellcolor{red!15}}&\multicolumn{1}{c|}{\cellcolor{red!15}\Cref{OS+FMP=>SF}}&\multicolumn{1}{c|}{$\addnc{\TBBtwo}$}&23\\\hhline{~~~~~--*{1}{>{\arrayrulecolor{red!15}}|--}*{1}{>{\arrayrulecolor{black}}|-}--}
&&&&&$F$&$F$&\multicolumn{2}{c|}{\multirow{-3}{*}{\cite{BarTolZoh}\cellcolor{red!15}}}&\multicolumn{1}{c|}{$\addnc{\Tninfty}$}&\multicolumn{1}{c|}{$\addnc{\adds{\Tninfty}}$}&24\\\hline

\multirow{12}{*}{$F$ }&\multirow{12}{*}{$F$ }&\multirow{6}{*}{$T$ }&\multirow{2}{*}{$T$ }&\multirow{1}{*}{$T$ }&\multirow{1}{*}{$T$ }&$T$&\multicolumn{1}{c|}{$\Tone$}&\multicolumn{1}{c|}{$\adds{\Tone}$}&\multicolumn{1}{c|}{$\addf{\Tone}$}&\multicolumn{1}{c|}{$\addf{\adds{\Tone}}$}& 25\\\hhline{~~~~--------}

&&&&\multirow{1}{*}{$F$ }&\multirow{1}{*}{$T$ }&$T$&\multicolumn{1}{c|}{$\Tleqn$}&\multicolumn{1}{c|}{$\adds{\Tleqn}$}&\multicolumn{1}{c|}{$\addf{\Tleqn}$}&\multicolumn{1}{c|}{$\addf{\adds{\Tleqn}}$}& 26\\\hhline{~~~---------}

&&&\multirow{4}{*}{$F$ }&\multirow{2}{*}{$T$ }&\multirow{2}{*}{$T$ }&$T$&\multicolumn{1}{c|}{\cellcolor{red!15}\cite{BarTolZoh}}&\multicolumn{1}{c|}{$\Toddtwo$}&\multicolumn{1}{c|}{$\Tneqodd$}& \multicolumn{1}{c|}{$\addf{\Toddtwo}$}&27\\\hhline{~~~~~~------}

&&&&&&$F$&\multicolumn{1}{c|}{\cellcolor{red!15}\Cref{OS+FMP=>SF}}&\multicolumn{1}{c|}{$\Ttwocube$}&\multicolumn{1}{c|}{\cellcolor{red!15}\Cref{OS+FMP=>SF}}&\multicolumn{1}{c|}{$\Tsupinfty$}&28\\\hhline{~~~~--------}

&&&&\multirow{2}{*}{$F$ }&\multirow{2}{*}{$T$ }&$T$& \multicolumn{1}{c|}{$\Tmn$}&\multicolumn{1}{c|}{$\adds{\Tmn}$}&\multicolumn{1}{c|}{$\addf{\Tmn}$}&\multicolumn{1}{c|}{$\addf{\adds{\Tmn}}$}& 29\\\hhline{~~~~~~------}

&&&&&&$F$&\multicolumn{1}{c|}{\cellcolor{red!15}\Cref{OS+FMP=>SF}}&\multicolumn{1}{c|}{$\Tmninfty$}&\multicolumn{1}{c|}{\cellcolor{red!15}\Cref{OS+FMP=>SF}}&\multicolumn{1}{c|}{$\addf{\Tmninfty}$}&30\\\hhline{~~----------}

&&\multirow{6}{*}{$F$ }&\multirow{6}{*}{$F$ }&\multirow{3}{*}{$T$ }&\multirow{2}{*}{$T$ }&$T$&\multicolumn{1}{c|}{\cellcolor{red!15}}&\multicolumn{1}{c|}{$\TBBone$}&\multicolumn{1}{c|}{$\TneqBBone$}&\multicolumn{1}{c|}{$\adds{\TBBone}$}& 31\\\hhline{~~~~~~-*{1}{>{\arrayrulecolor{red!15}}|-}*{1}{>{\arrayrulecolor{black}}|-}---}

&&&&&&$F$&\multicolumn{1}{c|}{\cellcolor{red!15}}&\multicolumn{1}{c|}{$\TBBtwocube$}&\multicolumn{1}{c|}{\cellcolor{red!15}\Cref{OS+FMP=>SF}}&\multicolumn{1}{c|}{$\TsBBtwo$}&32\\\hhline{~~~~~--*{1}{>{\arrayrulecolor{red!15}}|-}*{1}{>{\arrayrulecolor{black}}|-}---}
&&&&&$F$&$F$&\multicolumn{1}{c|}{\cellcolor{red!15}}&\multicolumn{1}{c|}{$\Tonetwo$}&\multicolumn{1}{c|}{$\Tneqoneinfty$}&\multicolumn{1}{c|}{$\addf{\Tonetwo}$}&33\\\hhline{~~~~---*{1}{>{\arrayrulecolor{red!15}}|-}*{1}{>{\arrayrulecolor{black}}|-}---}

&&&&\multirow{3}{*}{$F$ }&\multirow{2}{*}{$T$ }&$T$&\multicolumn{1}{c|}{\multirow{1}{*}{\cellcolor{red!15}}}&\multicolumn{1}{c|}{$\TBBn$} &\multicolumn{1}{c|}{$\TneqBB$} &\multicolumn{1}{c|}{$\addf{\TBBn}$}& 34\\\hhline{~~~~~~-*{1}{>{\arrayrulecolor{red!15}}|-}*{1}{>{\arrayrulecolor{black}}|-}---}

&&&&&&$F$&\multicolumn{1}{c|}{\multirow{1}{*}{\cellcolor{red!15}}}&\multicolumn{1}{c|}{$\TmnBB$}&\multicolumn{1}{c|}{\multirow{1}{*}{\cellcolor{red!15}\Cref{OS+FMP=>SF}}}&\multicolumn{1}{c|}{$\addf{\TmnBB}$}&35\\\hhline{~~~~~--*{1}{>{\arrayrulecolor{red!15}}|-}*{1}{>{\arrayrulecolor{black}}|-}---}
&&&&&$F$&$F$&\multicolumn{1}{c|}{\multirow{-6}{*}{\cellcolor{red!15}\cite{BarTolZoh}}}&\multicolumn{1}{c|}{$\Ttwotwo$}&\multicolumn{1}{c|}{$\Tneqtwoinfty$}&\multicolumn{1}{c|}{$\addf{\Ttwotwo}$}&36\\\hline
\end{tabular}
\renewcommand{\arraystretch}{1}
\vspace{2em}
\caption[Caption for LOF]{Summary of all possible combinations of theory properties. 
Red cells represent impossible combinations. In lines $26$ and $34$, $n>1$; in lines $29$, $30$ and $35$, $m>1$, $n>1$ and $|m-n|>1$.}
\label{tab-summary}
\end{table}

\label{sec:positive}
In \cite{BarTolZoh}, we have created a table, in which for every possible combinations of properties from 
$\{$ $\stainf$, $\smooth$, $\finwit$, $\strfinwit$, $\convex$ $\}$ we either gave 
an example of a theory in this combination, or proved a theorem
that shows there is no such example,
with the exception of theories that are stably infinite and strongly finitely witnessable but not smooth.
Such theories, referred to in \cite{BarTolZoh} as {\em Unicorn Theories}
(due to our conjecture that they do not exist) were left for future work,
and are still left for future work, as the focus of the current paper
is the integration of finiteness properties,
namely $\finmodpro$ and $\stafin$ to the table.

And indeed, the goal of this section is to add two columns to the table 
from \cite{BarTolZoh}:
one for the finite model property and one for stable finiteness.
The extended table is \Cref{tab-summary}.
We do not assume familiarity with \cite{BarTolZoh}, and describe the entire resulting table
(though focusing on the new results).

This section is structured as follows:
In  \Cref{sec:thetable} we describe the structure of \Cref{tab-summary}.
In \Cref{sec:oldtheories,sec:simpletheories,sec:theoriesBB} we provide details about the axiomatizations of theories that populate it.
Finally, in \Cref{sec:theoryops}, we reuse operators from \cite{BarTolZoh}, prove that they
preserve the finite model property and stable finiteness, and show how they are used in order to generate more theories for \Cref{tab-summary}.

\subsection{The Table}
\label{sec:thetable}
The columns left to the vertical double-line of \Cref{tab-summary} correspond to possible combinations of properties.
In them, $T$ means that the property holds, while $F$ means that it does not.
The first $5$ columns correspond to properties already studied in \cite{BarTolZoh}, and the next two columns correspond to $\finmodpro$ and $\stafin$.
The columns right to the vertical double-line correspond to possible signatures:
empty or non-empty, and one-sorted or many-sorted.
White cells correspond to cases where a theory with the 
combination of properties induced by the row exists in a signature that is induced by the column.
In such a case, the name of the theory is written.
The theories themselves are defined in \Cref{tab-theories-tab-old,tab-theories-sigma-22,tab-theories-sigma-s}, axiomatically.
Red cells correspond to the cases where there is no such theory.
In such a case, the theorem that excludes this possibility is written.
If that theorem is from \cite{BarTolZoh}, we simply write \cite{BarTolZoh}.

\begin{example}
    Line $1$ of \Cref{tab-summary} corresponds to theories
    that admit all studied properties.
We see that there is such a theory in each of the studied types of signatures
(e.g., for the empty one-sorted signature,
the theory $\Tgeqn$ exhibits all properties).
In contrast, line $3$ corresponds to theories
that admit all properties but strong finite witnessability.
We see that such theories exist in non-empty signatures, but not in empty signatures.
This is thanks to \Cref{OS+ES+SM+SF=>SFW}.
\end{example}

\Cref{Finitemodelpropertyand stablefiniteness}, as well as results from \cite{BarTolZoh},
make some potential rows of \Cref{tab-summary} completely red.
To allow this table to fit a single page, we chose to erase such rows.
For example, by \Cref{SMimpliesSI}, there are no theories that are 
stably finite but do not have the finite model property, in any signature.
Thus, no rows that represent such theories appear in the table.

In the remainder of this section, we describe the various theories that populate the cells of the table.
Fortunately, all theories from \cite{BarTolZoh} can be reused
to exhibit also the new properties $\stafin$ and $\finmodpro$, or their negations.
These are described in \Cref{sec:oldtheories}.
However, the theories from \cite{BarTolZoh} alone are not enough.
Hence we introduce several new theories in
\Cref{sec:simpletheories,sec:theoriesBB}.
Some of them are relatively simple, and are described in \Cref{sec:simpletheories}.
Most of them, however, are more complex, and rely on the Busy Beaver function
from theoretical computer science. We discuss these theories in \Cref{sec:theoriesBB}.

\subsection{Theories from \cite{BarTolZoh}}
\label{sec:oldtheories}

\begin{figure}[t]
\centering
\begin{minipage}[t]{0.39\textwidth}
\renewcommand{\arraystretch}{1.4}
\centering
\begin{tabular}{c|c|c}
Name & Sig. & Axiomatization\\\hline
$\Tgeqn$ & $\Sigmaone$ & $\{\psi_{\geq n}\}$\\
$\Teven$ & $\Sigmaone$ & $\{\neg\psi_{=2k+1} :  k\in\mathbb{N}\}$\\
$\Tinfty$ & $\Sigmaone$ & $\{\psi_{\geq k} :  k\in\mathbb{N}\}$\\
$\Tninfty$ & $\Sigmaone$ & $\{\psi_{=n}\vee\psi_{\geq k} : k\in\mathbb{N}\}$\\
$\Tleqn$ &  $\Sigmaone$ & $\{\psi_{\leq n}\}$\\
$\Tmn$ & $\Sigmaone$ & $\{\psi_{=m}\vee\psi_{=n}\}$\\
\end{tabular} 
\end{minipage}
\begin{minipage}[t]{0.59\textwidth}
\renewcommand{\arraystretch}{1.4}
\centering
\begin{tabular}{c|c|c}
Name & Sig. & Axiomatization\\\hline

$\Ttwo$ & $\Sigmatwo$ & $\{(\psi^{\s}_{=2}\wedge\psi^{\s_{2}}_{\geq k})\vee(\psi^{\s}_{\geq 3}\wedge \psi^{\s_{2}}_{\geq 3}) : k\in\mathbb{N}\}$\\
$\Ttwotwo$ & $\Sigmatwo$ & $\{\psi^{\s}_{=2}\}\cup\{\psi^{\s_{2}}_{\geq k} : k\in\mathbb{N}\}$\\
$\Toddtwo$  & $\Sigmatwo$ & $\{\psi^{\s}_{=1}\}\cup\{\neg\psi^{\s_{2}}_{=2k} : k\in\mathbb{N}\}$\\
$\Tonetwo$  & $\Sigmatwo$ & $\{\psi^{\s}_{=1}\}\cup\{\psi^{\s_{2}}_{\geq k} : k\in\mathbb{N}\}$\\
\end{tabular}
\end{minipage}

\vspace{5mm}

\renewcommand{\arraystretch}{1.4}
\centering
\begin{tabular}{c|c|c}
Name & Sig. & Axiomatization\\\hline
$\TsM$ & $\Sigmas$ & $\{[\psi^{=}_{\geq f_{1}(k)}\wedge \psi^{\neq}_{\geq f_{0}(k)}]\vee\bigvee_{i=1}^{k}[\psi^{=}_{=f_{1}(i)}\wedge \psi^{\neq}_{=f_{0}(i)}]: k\in\mathbb{N}\setminus\{0\}\}$\\
$\TSM$ & $\Sigmas$ &  $\ax(\TsM)\cup\{\psiv\}$\\
$\Tneqtwoinfty$ & $\Sigmas$ & $\{[\psi_{=2}\wedge\Forall {x}p(x)]\vee [\psi_{\geq k}\wedge \Forall{x}\neg p(x)] : k\in\mathbb{N}\}$\\
$\Tneqodd$ & $\Sigmas$ & $\{\psi_{=1}\vee[\neg\psi_{=2k}\wedge\Forall{x}\neg p(x)] : k\in\mathbb{N}\}$\\
$\Tneqoneinfty$  & $\Sigmas$ & $\{ \psi_{=1}\vee[\psi_{\geq k}\wedge\Forall{x}\neg p(x)] : k\in\mathbb{N}\}$\\

\end{tabular}
\renewcommand{\arraystretch}{1}
\vspace{2em}
  \captionof{figure}{Theories for \Cref{tab-summary} that were studied in \cite{BarTolZoh}; $p(x)$ stands for $s(x)=x$.
In $\TsM$, $f$ is any non-computable function from the positive integers to $\{0,1\}$, such that for every $k\geq 0$,
$f$ maps half of the numbers between $0$ and $2^k$ to $1$, and the other half to $0$.
In \cite{BarTolZoh}, such a function was proven
to exist. 
}
  \label{tab-theories-tab-old}
  \label{tab-theories-venn-old}
\end{figure}

\newcommand{\distinct}[1]{\delta_{{#1}}}
\begin{figure}[t]
\begin{mdframed}
\[\small \psi^{\eq}_{\geq n}=\exists\, \overarrow{x}.\:\bigwedge_{i=1}^{n} p(x_i)\wedge \distinct{n} ~~~~~~~\small \psi^{\eq}_{=n}=\exists\, \overarrow{x}.\:[\bigwedge_{i=1}^{n}p(x_i)\wedge \distinct{n}\wedge\forall\, x.\:[p(x)\rightarrow\bigvee_{i=1}^{n}x=x_{i}]]\]
\[\small \psi^{\diff}_{\geq n}=\exists\, \overarrow{x}.\:\bigwedge_{i=1}^{n}\neg p(x_i)\wedge \distinct{n} ~~~~~~~
\small \psi^{\diff}_{=n}=\exists\, \overarrow{x}.\:[\bigwedge_{i=1}^{n}\neg p(x_i)\wedge \distinct{n}\wedge\forall\, x.\:[\neg p(x)\rightarrow\bigvee_{i=1}^{n}x=x_{i}]]\]
\[\psiv=\forall\, x.\:\big[(s(s(x))=x)\vee(s(s(x))=s(x))\big]\]
\end{mdframed}
\caption{Formulas for $\Sigmas$-theories.
$\overarrow{x}$ stands for $x_1,\ldots, x_n$.
$\distinct{n}$ stands for $\bigwedge_{1\leq i<j\leq n}\neg(x_{i}=x_{j})$, and
$p(x)$ stands for $s(x)=x$.
}
\label{fig-card-s}
\end{figure}

For completeness, we include in \Cref{tab-theories-tab-old} 
the axiomatizations of
all theories from
\cite{BarTolZoh} that are used in \Cref{tab-summary} 
(\Cref{fig-card-s} includes the definitions of formulas that are abbreviated in \Cref{tab-theories-tab-old},
such as $\psi^{\eq}_{\geq n}$ from the definition of $\TsM$).
For lack of space, however, we refrain from elaborating on these theories,
and refer the reader to their detailed description in \cite{BarTolZoh}.
For the theories of \Cref{tab-theories-tab-old}, whether they admit the properties
from $\{\stainf,\smooth,\finwit,\strfinwit,\convex\}$ or not was already established in \cite{BarTolZoh}.
For each of them, here, we also check and prove whether they admit the 
new properties $\finmodpro$ and $\stafin$.

For example,
for each $n$, $\Tgeqn$ consists of all $\Sigmaone$-structures that have
at least $n$ elements.
This theory was shown in \cite{BarTolZoh} to be strongly finitely witnessable, and so by \Cref{SFW=>SF} it is also
stably finite.
Then, by \Cref{SF=>FMP}, it also admits the finite model property.

It is worth mentioning that
$\Ttwo$ was first introduced in \cite{Casal2018}, in the context of shiny theories, where it was shown to have the finite model property, while
not being stably finite.
An alternative proof of this fact goes as follows:
it was proven in \cite{SZRRBT-21} that $\Ttwo$ is:
$(i)$~finitely witnessable;
$(ii)$~not strongly finitely witnessable; and
$(iii)$~smooth.
By \Cref{FW=>FMP} and $(i)$, it also has the finite model property.
But since it is over an empty signature, by $(ii)$, $(iii)$ and \Cref{OS+ES+SM+SF=>SFW}, 
we have that it cannot be
stably finite.

\subsection{New Theories: The Simple Cases}
\label{sec:simpletheories}

\newcommand{\diag}[3]{diag^{{{#1},{#2}}}{({#3})}}
\newcommand{\bbdiag}[3]{diag_{\bb}^{{{#1},{#2}}}{({#3})}}

\begin{figure}[t]
  \centering
  \renewcommand{\arraystretch}{1.4}
\centering
\begin{tabular}{c|c|c}
Name & Signature & Axiomatization\\
\hline
$\Tupinfty$ & $\Sigmatwo$ & $\{(\psi^{\s}_{=1}\wedge\psi^{\s_{2}}_{\geq k})\vee \diag{\s}{\s_{2}}{k+2}: k\in\mathbb{N}\}$\\
$\Tmninfty$ & $\Sigmatwo$ & $\{\psi^{\s}_{=\max\{m,n\}}\vee(\psi_{=\min\{m,n\}}^{\s}\wedge\psi^{\s_{2}}_{\geq k}) : k\in\mathbb{N}\}$\\\hline
$\Tsupinfty$& $\Sigmastwo$& $\{(\psi^{\s}_{=1}\wedge\psi^{\s_{2}}_{\geq k})\vee(\diag{\s}{\s_{2}}{k+2}\wedge\Forall{x}\neg p(x): k\in\mathbb{N}\}$\\\hline
$\Ttwocube$ &$\Sigma_{3}$ & $\{\psi^{\s_{3}}_{=1}\}\cup\{(\psi^{\s}_{=2}\wedge\psi^{\s_{2}}_{\geq k})\vee(\psi^{\s}_{\geq 3}\wedge \psi^{\s_{2}}_{\geq 3}) : k\in\mathbb{N}\}$\\
\end{tabular}
\renewcommand{\arraystretch}{1}
\vspace{2em}
  \captionof{figure}{Simple theories for \Cref{tab-summary}. $\diag{\s}{\s_{2}}{k+2}$, for any $k\in\mathbb{N}$, stands for the formula $(\psi^{\s}_{\geq k+2}\wedge\psi^{\s_{2}}_{\geq k+2})\vee\bigvee_{i=2}^{k+2}(\psi^{\s}_{=i}\wedge\psi^{\s_{2}}_{=i})$, and $p(x)$ stands for $s(x)=x$.}
  \label{tab-theories-sigma-22}
\end{figure}

While the theories from \Cref{tab-theories-tab-old}
suffice to populate many cells of \Cref{tab-summary},
they are not enough.
Hence we describe new theories, 
not taken from \cite{BarTolZoh}.
The simplest theories that we have added can be found in
\Cref{tab-theories-sigma-22}, and are described below.

$\Tupinfty$ is a theory with three distinct groups of models: 
its first group consists of models $\A$ that have $|\s^{\A}|=1$ and $\s_{2}^{\A}$ infinite; 
its second group, of models $\A$ 
where both $\s^{\A}$ and $\s_{2}^{\A}$ are infinite; 
and its third group, of models $\A$ where $|\s^{\A}|=|\s_{2}^{\A}|$ is any value $k\geq 2$. 
In its axiomatization, one finds the formula $\diag{\s}{\s_{2}}{k+2}$, equal to $(\psi^{\s}_{\geq k+2}\wedge\psi^{\s_{2}}_{\geq k+2})\vee\bigvee_{i=2}^{k+2}(\psi^{\s}_{=i}\wedge\psi^{\s_{2}}_{=i})$ for $k\in\mathbb{N}$: that formula characterizes the models $\A$ of $\Tupinfty$ that lie in the diagonal, that is, where $|\s^{\A}|=|\s_{2}^{\A}|$ (and this value is greater than $1$).

$\Tmninfty$ is a theory that depends on two distinct positive integers $m$ and $n$, and without loss of generality let us suppose $m>n$, when the theory has two types of models $\A$: in the first, $|\s^{\A}|$ equals $m$, while $\s_{2}^{\A}$ can be anything; in the second, $|\s^{\A}|$ equals $n$, and then $\s_{2}^{\A}$ must be infinite. 

The models $\A$ of the $\Sigmastwo$-theory $\Tsupinfty$ have either: $|\s^{\A}|=1$, $|\s_{2}^{\A}|\geq\omega$ and $s^{\A}$ the identity function; both $\s^{\A}$ and $\s_{2}^{\A}$ infinite, and $s^{\A}$ with no fixed points; or $|\s^{\A}|=|\s_{2}^{\A}|$ equal to any number in $\mathbb{N}\setminus\{0,1\}$, and again $s^{\A}$ with no fixed points.  

Finally,  $\Ttwocube$ is made up of just the models $\A$ of $\Ttwo$ (see \Cref{tab-theories-tab-old}) with an extra domain associated to the new sort $\s_{3}$ such that $|\s_{3}^{\A}|=1$.

~~~~

\subsection{New Theories: The Busy Beaver}
\label{sec:theoriesBB}

\begin{figure}[t]
\centering
\renewcommand{\arraystretch}{1.4}
\centering
\begin{tabular}{c|c|c}
Name & Signature & Axiomatization\\
\hline
$\TBB$ & $\Sigmaone$ & $\{\psi_{\geq \bb(k+2)}\vee\bigvee_{i=2}^{k+2}\psi_{=\bb(i)}: k\in\mathbb{N}\}$\\\hline
$\TBBtwo$ & $\Sigmatwo$ & $\{(\psi^{\s}_{=1}\wedge\psi^{\s_{2}}_{\geq k})\vee
\bbdiag{\s}{\s_{2}}{k+2} : k\in\mathbb{N}\}$\\
$\TBBn$ & $\Sigmatwo$ & $\{\psi^{\s}_{=n}\}\cup\{\psi^{\s_{2}}_{\geq \bb(k+2)}\vee\bigvee_{i=2}^{k+2}\psi^{\s_{2}}_{=\bb(i)}: k\in\mathbb{N}\}$\\
$\TmnBB$ & $\Sigmatwo$ & $\{(\psi^{\s}_{n}\wedge\psi^{\s_{2}}_{\geq k})\vee(\psi^{\s}_{m}\wedge\psi^{\s_{2}}_{\geq \bb(k+1)})\vee\bigvee_{i=1}^{k+1}(\psi^{\s}_{m}\wedge\psi^{\s_{2}}_{=\bb(i)}) : k\in\mathbb{N}\}$\\\hline
$\TsBB$ &$\Sigmas$& $\{(\psi_{\geq k+1}\wedge\psi^{=}_{\geq \qq(k+1)})\vee\bigvee_{i=1}^{k+1}(\psi_{=i}\wedge\psi^{=}_{=\qq(i)}) : k\in\mathbb{N}\}$\\
$\TneqBB$ &$\Sigmas$& $\{(\psi_{=2}\wedge\Forall{x}\neg p(x))\vee((\psi_{\geq \bb(k+2)}\vee\bigvee_{i=2}^{k+2}\psi_{=\bb(i)})\wedge\Forall{x}p(x)) : k\in\mathbb{N}\}$\\
$\TneqBBone$ &$\Sigmas$& $\{\psi_{=1}\vee((\psi_{\geq \bb(k+2)}\vee\bigvee_{i=2}^{k+2}\psi_{=\bb(i)})\wedge\Forall{x}\neg p(x)) : k\in\mathbb{N}\}$\\
$\TssBB$ &$\Sigmas$& $\{\psiv\}\cup\{(\psi_{\geq k+1}\wedge\psi^{=}_{\geq \qq(k+1)})\vee\bigvee_{i=1}^{k+1}(\psi_{=i}\wedge\psi^{=}_{=\qq(i)}) : k\in\mathbb{N}\}$\\\hline
$\TtwoBB$ &$\Sigmastwo$& $\{\psi^{=}_{\geq k+2}\rightarrow\psi^{\s_{2}}_{\geq \bb(k+2)} : k\in\mathbb{N}\}$\\
$\TBBtwotwo$ &$\Sigmastwo$& $\{(\psi^{\s}_{=1}\wedge\psi^{\s_{2}}_{\geq k})\vee(\psi^{=}_{\geq k+2}\rightarrow\psi^{\s_{2}}_{\geq \bb(k+2)}) : k\in\mathbb{N}\}$\\
$\TstwoBB$ &$\Sigmastwo$& $\{\psiv\}\cup\{(\psi^{\s}_{=1}\wedge\psi^{\s_{2}}_{\geq k})\vee(\psi^{=}_{\geq k+2}\rightarrow\psi^{\s_{2}}_{\geq \bb(k+2)}) : k\in\mathbb{N}\}$\\
$\TsBBtwo$ &$\Sigmastwo$ & $\{(\psi^{\s}_{=1}\wedge\psi^{\s_{2}}_{\geq k})\vee
(\bbdiag{\s}{\s_{2}}{k+2}\wedge\Forall{x}\neg p(x)) : k\in\mathbb{N}\}$\\\hline
$\TBBtwocube$ &$\Sigma_{3}$ & $\{\psi^{\s_{3}}_{=1}\}\cup\{(\psi^{\s}_{=1}\wedge\psi^{\s_{2}}_{\geq k})\vee
\bbdiag{\s}{\s_{2}}{k+2} : k\in\mathbb{N}\}$\\
\end{tabular}
\renewcommand{\arraystretch}{1}
\vspace{2em}
  \captionof{figure}{Busy Beaver Theories for \Cref{tab-summary}. $\bbdiag{\s}{\s_{2}}{k+2}$ stands, for each $k\in\mathbb{N}$, for $(\psi^{\s}_{\geq\bb(k+2)}\wedge\psi^{\s_{2}}_{\geq\bb(k+2)})
\vee\bigvee_{i=2}^{k+2}(\psi^{\s}_{=\bb(i)}\wedge\psi^{\s_{2}}_{=\bb(i)})$, and $p(x)$ for $s(x)=x$; in $\TmnBB$, we assume w.l.g. $m\geq n$.
}
  \label{tab-theories-sigma-s}
\end{figure}

So far we have seen that the theories from \cite{BarTolZoh}, together 
with a small set of simple new theories, 
can already get us quite far in filling \Cref{tab-summary}.
However, for several combinations, it seems that more complex theories are needed.
For this purpose, we utilize the well-known Busy Beaver function,
and define various theories based on it.
In this section, we describe these theories.
First, in \Cref{sec:onbb}, we review the Busy Beaver function, and
explain why it is useful in our context.
Then, in \Cref{sec:bbtheories-sig1,sec:bbtheories-sig2,sec:bbtheories-sigs,sec:bbtheories-sigstwo,sec:bbtheories-sig3}, we describe the theories that make use of 
it, separated according to their signatures.


\subsubsection{On The Busy Beaver Function}
\label{sec:onbb}
The Busy Beaver function, here denoted $\bb$, is an old acquaintance of theoretical computer scientists: essentially, given any $n\in\mathbb{N}$, $\bb(n)$ 
is the maximum number of $1$'s a Turing machine 
with at most $n$ states can write to it's tape
when it halts,
if the tape is initialized to be all $0$'s. Somewhat confusingly, any Turing machine that achieves that number is also called a Busy Beaver. 

It is possible to prove that $\bb(n)\in\mathbb{N}$ for any $n\in\mathbb{N}$ (see \cite{Rado}), and so we may write $\bb:\mathbb{N}\rightarrow\mathbb{N}$; furthermore, $\bb$ is increasing. But the very desirable property of $\bb$ is that it is not only increasing, but actually very rapidly increasing. 

More formally, Rad{\'o} proved, in the seminal paper \cite{Rado}, that $\bb$ grows asymptotically faster than any computable function (being, therefore, non-computable). 
That is, for every computable function $f:\mathbb{N}\rightarrow\mathbb{N}$, there exists $N\in\mathbb{N}$ such that $\bb(n)>f(n)$ for all $n\geq N$. Despite that, the Busy Beaver starts somewhat slowly: $\bb(0)=0$, $\bb(1)=1$, $\bb(2)=4$, $\bb(3)=6$ and $\bb(4)=13$; the exact value of $\bb(5)$ (and actually $\bb(n)$ for any $n\geq 5$) is not known, but is at least $4098$ (\cite{Marxen}).

The fact that $\bb$ grows eventually faster than any computable function is a great property to have 
when constructing
theories that admit the finite model property, while not being finitely witnessable.
Roughly speaking, if the cardinalities of models of a theory are related to $\bb$, this guarantees that it has models of sufficiently large finite size, while not being finitely witnessable since its models grow too fast: by carefully choosing formulas $\phi_{n}$ 
that hold only in the "$n$-th model" of the theory 
(when ordered by cardinality), the number of variables of $\wit(\phi_{n})$ offers an upper bound to $\bb(n)$ and is therefore not computable, leading to a contradiction with the fact that $\wit$ is supposed to be computable. Notice that, despite the dependency of our theories on the Busy Beaver, the function is not actually part of their signatures.

Now we present the theories that are based on $\bb$.
These theories are axiomatized in \Cref{tab-theories-sigma-s}.


\subsubsection{A $\Sigmaone$-Theory}
\label{sec:bbtheories-sig1}
The most basic Busy Beaver theory is $\TBB$. 
This
is the $\Sigmaone$-theory whose models have cardinality $\bb(k)$, for some $k\geq 2$, or are infinite: that is, $\TBB$ has models with $4$ elements, $6$, $13$ and so on.
This theory forms the basis to all other theories of this section, that are designed to admit various properties from \Cref{tab-summary}.

By itself, $\TBB$ has the finite model property while not being (strongly) finitely witnessable. 
It was in fact constructed precisely to exhibit this.
As it turns out, it is also
not smooth, but does satisfy all other properties.
To populate other rows in the table that correspond to theories
with other combinations of properties,
more theories are needed, with richer signatures.


\subsubsection{$\Sigmatwo$-Theories}
\label{sec:bbtheories-sig2}
To fill the rows that correspond to other combinations, 
we introduce several
$\Sigmatwo$ theories.

The $\Sigmatwo$-theory $\TBBtwo$
is more complex. 
It has, essentially, three classes of models: the first is made up of structures $\A$ where $|\s^{\A}|=1$ and $\s_{2}^{\A}$ is infinite; the second, of structures where both $\s^{\A}$ and $\s_{2}^{\A}$ are infinite; and the third, of structures where $|\s^{\A}|=|\s_{2}^{\A}|$ is a finite value that equals $\bb(k)$, for some $k\geq 2$. The formula $\bbdiag{\s}{\s_{2}}{k+2}$, for $k\geq 2$, in the axiomatization equals $(\psi^{\s}_{\geq\bb(k+2)}\wedge\psi^{\s_{2}}_{\geq\bb(k+2)})
\vee\bigvee_{i=2}^{k+2}(\psi^{\s}_{=\bb(i)}\wedge\psi^{\s_{2}}_{=\bb(i)})$ and is similar to $\diag{\s}{\s_{2}}{k+2}$ from $\Tupinfty$, characterizing the models $\A$ where $|\s^{\A}|=|\s_{2}^{\A}|$ and that value, if it is not infinite, equals $\bb(k+2)$. 
%

For each $n>0$, $\TBBn$ has as interpretations those $\A$ with $|\s^{\A}|=n$, and $|\s_{2}^{\A}|$either infinite or equal to $\bb(k)$, for some $k\geq 2$ (so $(|\s^{\A}|, |\s_{2}^{\A}|)$ may equal $(n,4)$, $(n,6)$, $(n,13)$ and so on).

$\TmnBB$ is a $\Sigmatwo$-theory that can be seen as some sort of combination of $\Tmninfty$ and $\TBBn$, dependent on two distinct positive integers $m$ and $n$.
Consider the case where the former is the greater of the two (the other cases are similar).
In this case, we may divide its interpretations $\A$ into three classes: those with $|\s^{\A}|=n$ and $\s_{2}^{\A}$ infinite; those with $|\s^{\A}|=m$ and $\s_{2}^{\A}$ infinite; and those with $|\s^{\A}|=m$ and $|\s_{2}^{\A}|$ equal to some $\bb(k)$, for $k\geq 2$.


\subsubsection{$\Sigmas$-Theories}
\label{sec:bbtheories-sigs}
For some lines of \Cref{tab-summary}, e.g. line $7$,
empty signatures are not enough for presenting examples.
Hence we also introduce $\Sigmas$-theories.

We start with $\TsBB$, which
is, arguably, the most confusing theory we here define: we are forced to appeal not only to the special cardinality formulas found in \Cref{fig-card-s}, but also to the function $\qq$, which is a left inverse of $\bb$. More formally, $\qq:\mathbb{N}\rightarrow\mathbb{N}$ is the only function such that $\qq(k)=\min\{l : \bb(l+1)>\bb(k)\}$: so $\qq(0)=0$, $\qq(1)=\qq(2)=\qq(3)=1$, $\qq(4)=\qq(5)=2$, $\qq(6)=\cdots=\qq(12)=3$, $\qq(13)=\cdots=\qq(4097)=4$, and further values of $\qq$ are currently unknown.
From the definition of $\qq$, we have that $\bb(\qq(k))\leq k$ and $\qq(\bb(k))=k$.
$\qq$ is not computable given that, since $\qq(k)=\min\{l : \bb(l+1)>\bb(k)\}$ by definition, $\qq(k+1)\neq\qq(k)$ iff $k+1$ is a value of $\bb$: so, an algorithm to compute the values of $\bb$ could be obtained by simply computing the values of $\qq$ and checking where there is a change. 

$\TsBB$ is then the $\Sigmas$-theory with models $\A$ with any cardinality $k+1\geq 1$, such that $s^{\A}(a)=a$ holds for precisely $\qq(k+1)$ elements of $\A$, and so $s^{\A}(a)\neq a$ holds for $k+1-\qq(k+1)$ elements, being the function $k\mapsto k+1-\qq(k+1)$ itself non-decreasing, given that $\qq(k+1)$ can equal either $\qq(k)$ or $\qq(k)+1$. 

\begin{example}    
We mention some $\TsBB$-structures as examples:
a structure $\A$ with  $|\s^{\A}|=1$ and $s^{\A}$ the identity; 
a structure $\B$ with $|\s^{\B}|=2$ and $s^{\A}$ a constant function; 
a structure $\C$ with $|\s^{\C}|=3$ (say $\s^{\C}=\{a, b, c\}$) and $s^{\C}$ the identity for only one of these elements (e.g., $s^{\C}$ can be a constant function, but now there are further possibilities such as $s^{\C}(a)=s^{\C}(b)=a$ and $s^{\C}(c)=b$);  and
a structure $\D$ with $|\s^{\D}|=4$ (say $\s^{\D}=\{a, b, c, d\}$) and $s^{\D}$ the identity for only two of these elements (e.g., $s^{\D}(a)=s^{\D}(b)=s^{\D}(c)=a$ and $s^{\D}(d)=d$); 
\end{example}

Next, we continue to describe other $\Sigmas$ theories.

    $\TneqBB$ has essentially two classes of models $\A$: those with $|\s^{\A}|=2$ and $s^{\A}$ never the identity;
    and those with $|\s^{\A}|$ equal to $\bb(k)$ or infinite, for some $k\geq 2$, and $s^{\A}$ the identity.

    $\TneqBBone$ is very similar to $\TneqBB$: the difference lies on where $s$ will be the identity: while in $\TneqBB$ the function $s$ is the identity for all interpretations $\A$ with $|\s^{\A}|>2$, $s$ in $\TneqBBone$ is the identity only for the interpretations $\A$ with $|\s^{\A}|=1$. So, in $\TneqBBone$, we have a model $\A$ with $|\s^{\A}|=1$ and $s^{\A}$ the identity, and then models $\A$ with $|\s^{\A}|=\bb(k)$ for some $k\geq 2$ or infinite, and $s^{\A}(a)$ anything but $a$.

    The $\Sigmas$-theory $\TssBB$ is then just $\TsBB$, satisfying in addition the formula $\psiv$ (see \Cref{fig-card-s}). 
    It has models $\A$ of any finite cardinality $k+1$, as long as $\qq(k+1)$ of these elements $a$ satisfy $s^{\A}(a)=a$,
    or infinite cardinalities, as long as the number of elements $a$ satisfying $s^{\A}(a)=a$ is infinite; additionally, $s^{\A}(s^{\A}(a))$ must always equal either $s^{\A}(a)$ or $a$ itself.


\subsubsection{$\Sigmastwo$-Theories}
\label{sec:bbtheories-sigstwo}
Now for theories in a many-sorted non-empty signature.

    The $\Sigmastwo$-theory $\TtwoBB$ appears simple, but is actually quite tricky: starting by the easy case, if $\s^{\A}$ has infinitely many elements $a$ satisfying $s^{\A}(a)=a$, $\s_{2}^{\A}$ is also infinite. If, however, the number of elements $a\in\s^{\A}$ satisfying $s^{\A}(a)=a$ is finite (notice that, even if this is the case, $\s^{\A}$ may still be infinite) and equal to some $k+2$, then $\s_{2}^{\A}$ has at least $\bb(k+2)$ elements. So, to give a better example, suppose $\s^{\A}$ has $2$ elements satisfying $s^{\A}(a)=a$: then $\s_{2}^{\A}$ has at least $\bb(2)=4$ elements, but may have any cardinality up to, and including, infinite ones; notice that in this example $\s^{\A}$ may be infinite as well, as long as only two of the elements satisfy $s^{\A}(a)=a$.

$\TBBtwotwo$ is the same as $\TtwoBB$, but with extra models $\A$ where $|\s^{\A}|=1$ and $|\s_{2}^{\A}|\geq\omega$ (of course, then we have that $s^{\A}$ is the identity).

$\TstwoBB$ is then the same as $\TBBtwotwo$, with the added validity of the formula $\psiv$; So the models of $\TstwoBB$ are just models of $\TBBtwotwo$ satisfying that $s^{\A}(s^{\A}(a))$ equals either $s^{\A}(a)$ or $a$ itself.

$\TsBBtwo$ is just the $\Sigmatwo$-theory $\TBBtwo$ with the added function $s$ such that, if $|\s^{\A}|=1$, $s^{\A}$ is the identity; and if $|\s^{\A}|>1$, $s^{\A}(a)$ is anything but $a$. 


\subsubsection{A $\Sigma_{3}$-Theory}
\label{sec:bbtheories-sig3}
Finally, $\TBBtwocube$ is obtained by adding a sort with a single element to the $\Sigmatwo$-theory $\TBBtwo$,
similarly to the definition of $\Ttwocube$, that was based on the $\Sigmatwo$-theory $\Ttwo$ (see \Cref{sec:simpletheories}).

\subsection{Theory Operators}
\label{sec:theoryops}
There are two types of theories in \Cref{tab-summary}:
The first consists of base theories, such as $\Tgeqn$,
that 
are axiomatized in \Cref{tab-theories-tab-old,tab-theories-sigma-22,tab-theories-sigma-s}.
The second is obtained from the first, by applying several operators on theories.
For example, the theories $\adds{\Tgeqn}$, $\addf{\Tgeqn}$, $\addf{\adds{\Tgeqn}}$,
are all obtained from the base theory $\Tgeqn$.
So far we have only described the theories of the first type.
In this section we explain the theories of the second type.

The operators that are used in \Cref{tab-summary} were defined in \cite{BarTolZoh}, in order to be able to systematically
generate examples in various signatures.
For example, if $\T$ is a $\Sigma_{1}$-theory, then 
$\Tadds$ is a $\Sigma_{2}$-theory with the same axiomatization as $\T$,
that is, the second sort is completely free and is not axiomatized in any way.
For completeness sake, we include the definitions of these operators here:

\begin{definition}[Theory Operators from \cite{BarTolZoh}]~
\label{def:operators}
\begin{enumerate}
\item 
If $\T$ is a $\Sigmaone$-theory, then $\Tadds$ is the $\Sigma_{2}$-theory axiomatized by $\ax(\T)$.
\item Let $\Sigma_{n}$ be an empty signature with
sorts $S=\{\s_{1}, \ldots, \s_{n}\}$, 
and let $\T$ be a $\Sigma_{n}$-theory. 
The signature $\Sigma^{n}_{s}$ has sorts $S$ and a single unary function symbol
$s$ of arity $\s_{1}\rightarrow\s_{1}$, and
$\Taddf$ is the $\Sigma^{n}_{s}$-theory axiomatized by $\ax(\T)\cup\{\forall\, x.\:[s(x)=x]\}$, where $x$ is a variable of sort $\s_{1}$.
\item Let $\T$ be a theory over an empty signature with sorts 
 $S=\{\s_{1}, \ldots, \s_{n}\}$.
 Then 
 $\Taddnc$ is the
$\Sigma^{n}_{s}$-theory axiomatized by $\ax(\T)\cup\{\psiv\}$ (see \Cref{fig-card-s}.    
\end{enumerate}
\end{definition}

It was proven in \cite{BarTolZoh} that these operators preserve 
the properties $\stainf$, $\smooth$, $\finwit$, $\strfinwit$, $\convex$,
and the lack of them.
Here we prove that the same holds for
$\finmodpro$ and $\stafin$ as well.

\begin{restatable}{theorem}{addsresult}
\label{thm:addsresult}
Let $\T$ be a $\Sigmaone$-theory. Then: $\T$ is $\finmodpro$, or $\stafin$,  
w.r.t. $\{\s\}$ if and only if $\Tadds$ is, respectively, $\finmodpro$, or $\stafin$ w.r.t. $\{\s, \s_{2}\}$.
\end{restatable}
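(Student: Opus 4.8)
The plan is to prove both directions by passing between $\Sigma_1$-interpretations and $\Sigma_2$-interpretations, exploiting the fact that $\Tadds$ has exactly the same axiomatization as $\T$, and that $\ax(\T)$ mentions only the sort $\s$. The key observation is: if $\A$ is a $\Sigma_2$-interpretation, then its reduct to $\Sigma_1$ (forgetting $\s_2^{\A}$) is a $\T$-interpretation iff $\A$ is a $\Tadds$-interpretation, since satisfaction of $\ax(\T)$ depends only on the $\s$-part; and conversely any $\Sigma_1$-interpretation $\B$ can be expanded to a $\Sigma_2$-interpretation by choosing $\s_2^{\B}$ to be any nonempty set of any desired cardinality. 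The same correspondence works at the level of quantifier-free formulas: a quantifier-free $\Sigma_1$-formula $\varphi$ is also a quantifier-free $\Sigma_2$-formula, and $\A \vDash \varphi$ depends only on the $\s$-reduct of $\A$.

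First I would prove the ($\Rightarrow$) direction for $\finmodpro$. Let $\varphi$ be a quantifier-free $\Sigma_2$-formula that is $\Tadds$-satisfiable, say by $\A$. Here I need to be slightly careful: $\varphi$ may contain free variables of sort $\s_2$. Since $\Sigma_2$ is empty, $\varphi$ is a Boolean combination of equalities between variables of the two sorts, so it splits (up to the interpretation) into an $\s$-part and an $\s_2$-part; alternatively, observe directly that from $\A$ I can first shrink $\s_2^{\A}$: the $\s_2$-part of $\varphi$ only constrains finitely many $\s_2$-variables, so there is a finite subset of $\s_2^{\A}$ whose induced equality pattern on those variables agrees with $\A$, giving a $\Tadds$-interpretation $\A'$ with $\s_2^{\A'}$ finite and $\A' \vDash \varphi$, $\s^{\A'} = \s^{\A}$. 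Now take the $\Sigma_1$-reduct $\B$ of $\A'$; it is a $\T$-interpretation satisfying the $\Sigma_1$-formula $\varphi^{\s}$ obtained by... actually, more simply: I apply the finite model property of $\T$ not to $\varphi$ itself but I use the following reduction — replace every $\s_2$-variable in $\varphi$ by a fresh $\s$-variable? No, that changes sorts. Instead, the clean route: since $\Tadds$ is finitely witnessable / has FMP is what we want to conclude about $\Tadds$, and here we are assuming it for $\T$, I apply FMP of $\T$ to the $\Sigma_1$-reduct formula. Concretely, write $\varphi$ as $\psi \wedge \chi$ where (after moving to the interpretation's equality type) $\psi$ collects the $\s$-literals and $\chi$ the $\s_2$-literals; $\psi$ is $\T$-satisfiable, so $\T$ has a finite model $\C$ of $\psi$; expand $\C$ to a $\Sigma_2$-interpretation by giving $\s_2$ the finite domain $\s_2^{\A'}$ with the variable assignment from $\A'$, yielding a $\Tadds$-interpretation with both sorts finite satisfying $\varphi$. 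The ($\Leftarrow$) direction is symmetric and easier: given a $\T$-satisfiable quantifier-free $\Sigma_1$-formula $\varphi$, view it as a $\Sigma_2$-formula, which is then $\Tadds$-satisfiable (expand any model by a one-element $\s_2$-domain); apply FMP of $\Tadds$ to get a $\Tadds$-interpretation with $\s^{\A}$ finite, and take its $\Sigma_1$-reduct.

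For $\stafin$ the argument is the same but I must also track the cardinality bounds $|\s^{\B}| \le |\s^{\A}|$. In the ($\Rightarrow$) direction: given a $\Sigma_2$-interpretation $\A$ with $\A \vDash \varphi$, split off the $\s$-part $\psi$; $\A$'s $\s$-reduct is a $\T$-interpretation of $\psi$, so by stable finiteness of $\T$ there is a $\T$-interpretation $\C$ with $\C \vDash \psi$, $\s^{\C}$ finite, $|\s^{\C}| \le |\s^{\A}|$; expand $\C$ back to a $\Sigma_2$-interpretation reusing $\s_2^{\A}$ and its variable assignment — wait, I also need $\s_2$ finite; so first shrink $\s_2^{\A}$ to a finite domain of size $\le |\s_2^{\A}|$ realizing the same $\s_2$-equality type (always possible since finitely many $\s_2$-variables occur), then expand $\C$ with that. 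This yields a $\Tadds$-interpretation $\B$ with $\B \vDash \varphi$, both sorts finite, and $|\s^{\B}| \le |\s^{\A}|$, $|\s_2^{\B}| \le |\s_2^{\A}|$. The ($\Leftarrow$) direction: given a $\Sigma_1$-interpretation $\A$ with $\A \vDash \varphi$, expand to a $\Sigma_2$-interpretation $\A^+$ with $|\s_2^{\A^+}| = 1$; by stable finiteness of $\Tadds$ get $\B$ with $\B \vDash \varphi$, both sorts finite, $|\s^{\B}| \le |\s^{\A^+}| = |\s^{\A}|$; take the $\Sigma_1$-reduct.

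I expect the main obstacle to be purely bookkeeping: justifying the "split $\varphi$ into an $\s$-part and an $\s_2$-part" step cleanly, and handling free variables of sort $\s_2$ that may not be constrained at all (so one must still ensure $\s_2$ gets a nonempty, and in the stably-finite case suitably small, domain). This is routine given that $\Sigma_2$ is empty — every atom is an equality between two variables of the same sort, so the truth value of $\varphi$ under an interpretation depends only on the two equality types, one per sort — but it should be stated carefully. No genuinely hard idea is needed; the content is the observation that an empty second sort is "transparent" with respect to all the finiteness properties because it is entirely unconstrained by $\ax(\T)$.
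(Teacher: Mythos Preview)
Your approach is correct and essentially the same as the paper's: pass between $\Sigma_1$- and $\Sigma_2$-interpretations via reducts and expansions, using that $\ax(\T)$ mentions only $\s$ and that the extra sort is unconstrained. Two minor differences are worth noting. First, the paper formalizes your ``splitting'' step cleanly via the construction $\subsf{\phi}{\A}$: given a $\Sigma_2$-interpretation $\A$, replace each $\s_2$-atom in $\phi$ by a tautology or contradiction according to its truth value in $\A$, obtaining a $\Sigma_1$-formula that the $\s$-reduct of $\A$ satisfies and whose models can be re-expanded with $\A$'s $\s_2$-data. Your phrasing ``write $\varphi$ as $\psi\wedge\chi$'' is not literally right for arbitrary Boolean combinations, but your closing remark (that truth depends only on the two per-sort equality types) is exactly the content needed, and working with the full arrangement on the free variables as you suggest is an equally valid fix. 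Second, the paper economizes by invoking \Cref{OS+FMP=>SF} (for one sort, $\finmodpro\Leftrightarrow\stafin$) so that on the $\T$ side it only ever needs the $\stafin$ argument: the $(\Rightarrow)$ $\finmodpro$ case is reduced to $(\Rightarrow)$ $\stafin$, and $(\Leftarrow)$ $\stafin$ is reduced to $(\Leftarrow)$ $\finmodpro$. Your direct treatment of all four implications also works and avoids that dependency.
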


\begin{restatable}{theorem}{addfresult}
\label{thm:addfresult}
If $\T$ is a theory over an empty signature $\Sigma_{n}$ with sorts $S=\{\s_{1}, \ldots, \s_{n}\}$, then: $\T$
is $\finmodpro$, or $\stafin$, w.r.t. $S$ if and only if $\Taddf$ is, respectively, $\finmodpro$, or $\stafin$, w.r.t. $S$.
\end{restatable}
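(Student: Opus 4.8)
The plan is to analyze the relationship between $\T$-interpretations and $\Taddf$-interpretations and observe that it is essentially a bijection once we account for the fact that $s$ is forced to be the identity. Recall that $\Taddf$ is the $\Sigma^{n}_{s}$-theory with axiomatization $\ax(\T)\cup\{\forall\, x.\:[s(x)=x]\}$. The key observation is that, since $\Sigma_{n}$ is empty, a $\Sigma^{n}_{s}$-structure is exactly a $\Sigma_{n}$-structure together with a choice of interpretation for $s$; and the extra axiom $\forall\, x.\:[s(x)=x]$ pins that choice down uniquely to the identity function on $\s_{1}$. Hence there is a bijection $\A \leftrightarrow \A^{\ast}$ between $\T$-interpretations $\A$ and $\Taddf$-interpretations $\A^{\ast}$, which leaves all sort domains unchanged: $\s^{\A} = \s^{\A^{\ast}}$ for every $\s \in S$. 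First I would make this correspondence precise, including how it acts on $\Sigma_{n}$-formulas versus $\Sigma^{n}_{s}$-formulas.

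The subtlety is that $\finmodpro$ and $\stafin$ quantify over \emph{quantifier-free $\Sigma^{n}_{s}$-formulas}, which may mention $s$, whereas on the $\T$ side we only have $\Sigma_{n}$-formulas. So the next step is to show that every quantifier-free $\Sigma^{n}_{s}$-formula $\varphi$ is $\Taddf$-equivalent to a quantifier-free $\Sigma_{n}$-formula $\widehat{\varphi}$: simply rewrite each term by repeatedly replacing $s(t)$ with $t$ (valid since $\tmodels[\Taddf] s(x)=x$), collapsing every term to a variable, and then $\varphi$ becomes a Boolean combination of equalities between variables, i.e. a $\Sigma_{n}$-formula. Crucially, an interpretation $\A^{\ast}$ satisfies $\varphi$ iff the corresponding $\A$ satisfies $\widehat{\varphi}$, and this holds with the \emph{same domains}. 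With this normal-form lemma in hand, both directions of both equivalences follow by routine transfer:

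For the $\finmodpro$ direction: if $\T$ is $\finmodpro$ w.r.t.\ $S$ and $\varphi$ is $\Taddf$-satisfiable, take the corresponding $\Sigma_{n}$-formula $\widehat{\varphi}$, which is then $\T$-satisfiable; get a $\T$-interpretation with all domains in $S$ finite; map it back to a $\Taddf$-interpretation satisfying $\varphi$ with the same (finite) domains. The converse is symmetric, using that every $\Sigma_{n}$-formula is trivially also a $\Sigma^{n}_{s}$-formula and that a $\T$-interpretation arises from some $\Taddf$-interpretation by forgetting $s$. For the $\stafin$ direction the argument is identical, except that we additionally track the domain-size inequalities $|\s^{\B}| \leq |\s^{\A}|$, which are preserved verbatim under the domain-preserving bijection; one starts from a $\Taddf$-interpretation $\A^{\ast}$ satisfying $\varphi$, passes to $\A$ satisfying $\widehat{\varphi}$, applies stable finiteness of $\T$ to obtain $\B$ with finite domains bounded by those of $\A$, and transports $\B$ back to $\B^{\ast} \models \varphi$ with the same finiteness and bounds.

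The only real obstacle is the normal-form lemma — making sure the term-collapsing rewrite and the resulting satisfaction-preservation statement are stated cleanly, and handling the edge case where a term $s(x)$ for $x$ of a sort other than $\s_{1}$ does not occur at all (since $s$ has arity $\s_{1}\rightarrow\s_{1}$, only terms of sort $\s_{1}$ are affected, so this is automatic). Everything else is bookkeeping with the domain-preserving bijection. I would present the normal-form lemma first as a short separate claim, then dispatch the four implications ($\Rightarrow$ and $\Leftarrow$ for each of $\finmodpro$ and $\stafin$) in a couple of lines each.
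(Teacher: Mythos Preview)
Your proposal is correct and follows essentially the same approach as the paper: the paper's Lemma~\ref{defining Ts from T} provides exactly your bijection $\A \leftrightarrow \A^{\ast}$ (there denoted $\plusf{\cdot}$ and $\subf{\cdot}$) and your normal-form map $\varphi \mapsto \widehat{\varphi}$ (there denoted $\varphi \mapsto \subff{\varphi}$), together with the satisfaction-transfer statement, and the four implications are then dispatched precisely as you outline.
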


\begin{restatable}{theorem}{addncresult}
\label{thm:addncresult}
If $\T$ is a theory over an empty signature $\Sigma_{n}$ with sorts $S=\{\s_{1}, \ldots, \s_{n}\}$, then: $\T$
is $\finmodpro$, or $\stafin$, w.r.t. $S$ if and only if $\Taddnc$ is, respectively, $\finmodpro$, or $\stafin$, w.r.t. $S$.
\end{restatable}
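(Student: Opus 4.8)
The plan is to argue both directions by translating models of $\T$ into models of $\Taddnc$ and vice versa, exploiting the fact that the only difference between the two theories is the presence of the unary function symbol $s$ on the first sort $\s_1$, constrained exactly by the axiom $\psiv = \forall\, x.\:[(s(s(x))=x)\vee(s(s(x))=s(x))]$. The key observation is that $\psiv$ imposes no constraint whatsoever on the cardinalities of the sorts: given any set $D$ (the would-be interpretation of $\s_1$), there always exists a function $g\colon D\to D$ satisfying the property $g(g(a))\in\{a,g(a)\}$ for all $a$, the simplest choice being the identity $g=\mathrm{id}_D$. Conversely, forgetting $s$ from a $\Taddnc$-interpretation yields a $\T$-interpretation with the same domains, since $\ax(\T)\subseteq\ax(\Taddnc)$ as sets of $\Sigma_n$-formulas and those formulas do not mention $s$.

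First I would set up the two translations precisely. Given a $\Sigma_n$-theory interpretation $\A\models\T$, define a $\Sigma^n_s$-interpretation $\A'$ with the same sort interpretations, the same (empty set of) function and predicate interpretations inherited from $\A$, and $s^{\A'}=\mathrm{id}_{\s_1^{\A'}}$; the identity function satisfies $\psiv$, so $\A'\models\Taddnc$, and $|\s^{\A'}|=|\s^{\A}|$ for every sort. In the other direction, given $\B\models\Taddnc$, let $\B^-$ be its $\Sigma_n$-reduct; since $\B$ satisfies $\ax(\T)$ and these are $\Sigma_n$-formulas, $\B^-\models\T$, again with identical domain cardinalities. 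One also needs that these maps behave well with respect to satisfaction of quantifier-free formulas: a quantifier-free $\Sigma_n$-formula $\varphi$ (which is the only kind relevant on the $\T$ side, as $\Sigma_n$ is empty) is satisfied by $\A$ iff it is satisfied by $\A'$, since the variable assignments and the interpretations of $=$ coincide; and a quantifier-free $\Sigma^n_s$-formula $\varphi$ satisfied by $\B$ — here we must be a bit careful, as now I would instead run the argument for $\finmodpro$ and $\stafin$ directly rather than sort-by-sort. Concretely, for $\finmodpro$: if $\Taddnc$ has the finite model property and $\varphi$ is a $\T$-satisfiable quantifier-free $\Sigma_n$-formula, then $\varphi$ is $\Taddnc$-satisfiable (viewing it as a $\Sigma^n_s$-formula), so it has a $\Taddnc$-model with all sorts finite, whose reduct is the desired finite $\T$-model; conversely, if $\T$ has the finite model property and $\varphi$ is a $\Taddnc$-satisfiable quantifier-free $\Sigma^n_s$-formula, take a $\Taddnc$-model $\B$ of $\varphi$, pass to a finite $\T$-model of $\varphi$'s $\Sigma_n$-content — but here is the subtlety: $\varphi$ may genuinely mention $s$, so I cannot simply drop it.

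The main obstacle, then, is exactly this last point: handling quantifier-free $\Sigma^n_s$-formulas $\varphi$ that contain the function symbol $s$ when transferring from $\T$ to $\Taddnc$. The resolution I would use is the standard one employed throughout \cite{BarTolZoh}: such a $\varphi$ can be "purified" relative to a witness for the behaviour of $s$, but more simply, since $s^{\A'}$ is forced to be the identity in the constructed models, any term $s(t)$ is equal to $t$, so over the identity-interpretation models $\varphi$ is $\T$-equivalent to the $\Sigma_n$-formula $\varphi^*$ obtained by replacing every subterm $s(t)$ by $t$ (iterating until no $s$ remains). Then $\varphi$ is satisfiable in some $\Taddnc$-model iff — after possibly noting that $\psiv$ permits non-identity interpretations too — $\varphi$ is satisfiable; but for the finite-model transfer it suffices that \emph{if} $\varphi$ is $\Taddnc$-satisfiable then it is satisfiable in an identity-model, which fails in general. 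I would therefore structure the real proof around the tighter correspondence already established in \cite{BarTolZoh} for the other five properties, namely that $\Taddnc$-satisfiability of $\varphi$ reduces to $\T$-satisfiability of an associated $\Sigma_n$-formula together with a cardinality-side-condition, and simply observe that the same reduction preserves finiteness and non-increase of domain sizes; concretely I expect the cleanest route is to cite the model-construction lemmas from \cite{BarTolZoh} that underlie their preservation proof for $\finwit$/$\strfinwit$, since those lemmas produce, for each $\Taddnc$-model of $\varphi$, a $\T$-model of a related formula with the same or smaller domains (and vice versa), which immediately yields both $\finmodpro$ and $\stafin$ transfer. The bulk of the writing is bookkeeping; the one genuinely delicate step is ensuring the domain-size inequalities needed for $\stafin$ are preserved in both directions, which follows because all the translations above are cardinality-preserving on every sort.
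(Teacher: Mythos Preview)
Your treatment of the easy direction ($\Taddnc\Rightarrow\T$) is correct and matches the paper: expand a $\T$-model by interpreting $s$ as the identity (the paper's $\plusf{\cdot}$) and, conversely, take the $\Sigma_n$-reduct (the paper's $\subf{\cdot}$); both moves preserve domain sizes exactly, which gives $\finmodpro$ and $\stafin$ immediately.

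For the hard direction ($\T\Rightarrow\Taddnc$) you correctly spot the obstacle --- a $\Taddnc$-satisfiable quantifier-free $\Sigma^n_s$-formula $\varphi$ may genuinely use $s$, and its satisfying model need not interpret $s$ as the identity --- but you do not resolve it; you defer to unspecified ``model-construction lemmas from \cite{BarTolZoh}''. That is where the actual content of the proof lives, and your proposal is missing the key idea. Concretely: given a $\Taddnc$-interpretation $\A\models\varphi$, the paper introduces fresh $\Sigma_n$-variables $y_{i,j}$ for each term $s^j(z_i)$ with $0\le j\le M_i+2$ (where $M_i$ is the maximal $s$-depth on $z_i$ in $\varphi$), replaces each atom $s^j(z_i)=s^q(z_p)$ by $y_{i,j}=y_{p,q}$, and conjoins the arrangement $\delta_V$ of the $y_{i,j}$ induced by $\A$. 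This produces a $\Sigma_n$-formula $\subncf{\varphi}{\A}$ satisfied in (a variant of) $\A$; applying $\finmodpro$/$\stafin$ of $\T$ yields a $\T$-model $\C$ with the same-or-smaller domains. One then \emph{reconstructs} $s$ on $\C$ by setting $s^{\B}(y_{i,j}^{\C})=y_{i,j+1}^{\C}$ and $s^{\B}=\mathrm{id}$ elsewhere. The point you are missing is why this is well-defined and why $\B\models\psiv$: it is exactly because $\A\models\psiv$, so the arrangement forces each $y_{i,j+2}$ to coincide with $y_{i,j}$ or $y_{i,j+1}$, closing the orbit and making the definition consistent. Your ``replace $s(t)$ by $t$'' idea fails (as you note), and nothing in your proposal replaces it; the final sentence that ``all the translations above are cardinality-preserving'' is true only once this construction is supplied, and you have not supplied it.
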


Thus, in various cases, theories need not
be invented from scratch, but can be generated from other 
theories.
For example, the theory $\Tgeqn$ exhibits all studied properties,
but is defined in a one-sorted signature. 
Using the operators, we obtain variants of this theory
in all signature types, namely
$\adds{\Tgeqn}$ for empty many-sorted signatures,
$\addf{\Tgeqn}$ for non-empty one-sorted signatures, and $\addf{\adds{\Tgeqn}}$ for non-empty many-sorted signatures.
The properties of the theories generated using these operators are guaranteed by \Cref{thm:addsresult,thm:addfresult}, as well as the corresponding results from \cite{BarTolZoh}. 

In two cases of theories defined using the Busy Beaver function, $\TssBB$ and $\TstwoBB$, we cannot obtain them by relying on \Cref{thm:addncresult} from, respectively, $\TsBB$ and $\TtwoBB$, since the signatures of the latter theories are not empty.
Curiously,
adding $\psiv$ to their axiomatizations still has the desirable outcome, but we prove this separately, without relying on \Cref{thm:addncresult}.
Extending \Cref{thm:addncresult} to non-empty signatures is left for future work.

The number of combinations of properties that we consider,
together with the possible types of the signatures,
adds up to $2^{9}=512$.
Our negative results from \Cref{Finitemodelpropertyand stablefiniteness} guarantee that only $\sim$15\% of the actual table can be filled with examples.
The remaining $\sim$85\% are either colored in red or are excluded from the table 
for space considerations.
As for the examples that can be given, 
notice that there are in total an astonishing number of $78$ theories in our table.
But, thanks to the theory operators of \Cref{def:operators}, only $33$ of them ($\sim$42\%) had to be 
concretely axiomatized in \Cref{tab-theories-tab-old,tab-theories-sigma-22,tab-theories-sigma-s}. The remaining
$45$ theories were defined using the operators.

\section{Conclusion}\label{conclusion}

We examined, in addition to all properties considered in \cite{BarTolZoh}, the finite model property, and stable finiteness. 
Interesting restrictions for the combinations involving these properties were established.
We also found interesting theories to fill in our table of combinations, most prominently those involving the Busy Beaver function as well as its inverse. 

One possible direction this research could take is reasonably clear: considering the computability of the $\mc$ function, what will, most probably, double the number of theories to be taken into consideration. Further interesting properties that could be considered include the decidability of the theory's axiomatization, or even its finiteness, and the satisfiability problem of the theory with respect to quantifier-free formulas.

Second, some of the negative results in \cite{BarTolZoh} and in the present paper only hold with respect to the entire set of sorts $\S_{\Sigma}$. We plan to study if they hold also with respect to proper subsets of sorts, and if they do not, to provide counterexamples to those generalizations. 


\printbibliography

@TECHREPORT{TinZar-RR-04,
  author =	 {Cesare Tinelli and Calogero Zarba},
  title =	 {Combining decision procedures for theories in sorted logics},
  institution =	 {Department of Computer Science, The University of Iowa},
  number = 	 {04-01},
  month =	 feb,
  year =	 2004,
}

@techreport{JB10-TR,
   author = {Dejan Jovanovi{\'c} and Clark Barrett},
   title = {Polite Theories Revisited},
   institution = {Depatrment of Computer Science, New York University},
   number = {TR2010-922},
   month = jan,
   year = {2010}
}

@article{DBLP:journals/jar/ShengZRLFB22,
  author    = {Ying Sheng and
               Yoni Zohar and
               Christophe Ringeissen and
               Jane Lange and
               Pascal Fontaine and
               Clark W. Barrett},
  title     = {Polite Combination of Algebraic Datatypes},
  journal   = {J. Autom. Reason.},
  volume    = {66},
  number    = {3},
  pages     = {331--355},
  year      = {2022}
}

@InProceedings{SZRRBT-21,
author="Sheng, Ying
and Zohar, Yoni
and Ringeissen, Christophe
and Reynolds, Andrew
and Barrett, Clark
and Tinelli, Cesare",
editor="Platzer, Andr{\'e}
and Sutcliffe, Geoff",
title="Politeness and Stable Infiniteness: Stronger Together",
booktitle="Automated Deduction -- CADE 28",
year="2021",
publisher="Springer International Publishing",
address="Cham",
pages="148--165",
isbn="978-3-030-79876-5"
}

@inproceedings{ranise:inria-00000570,
  TITLE = {{Combining data structures with nonstably infinite theories using many-sorted logic}},
  AUTHOR = {Ranise, Silvio and Ringeissen, Christophe and Zarba, Calogero G.},
  BOOKTITLE = {{5th International Workshop on Frontiers of Combining Systems - FroCoS'05}},
  ADDRESS = {Vienna/Austria},
  EDITOR = {Bernard Gramlich},
  PUBLISHER = {{Springer}},
  SERIES = {Lecture Notes in Artificial Intelligence},
  VOLUME = {3717},
  PAGES = {48--64},
  YEAR = {2005},
  MONTH = Sep,
  DOI = {10.1007/11559306},
  HAL_ID = {inria-00000570},
  HAL_VERSION = {v1},
}

@article{NelsonOppen,
author = {Nelson, Greg and Oppen, Derek C.},
title = {Simplification by Cooperating Decision Procedures},
year = {1979},
issue_date = {Oct. 1979},
publisher = {Association for Computing Machinery},
address = {New York, NY, USA},
volume = {1},
number = {2},
issn = {0164-0925},
doi = {10.1145/357073.357079},
journal = {ACM Trans. Program. Lang. Syst.},
pages = {245–257},
numpages = {13}
}

@Article{Casal2018,
author={Casal, Filipe
and Rasga, Jo{\~a}o},
title={Many-Sorted Equivalence of Shiny and Strongly Polite Theories},
journal={Journal of Automated Reasoning},
year={2018},
day={01},
volume={60},
number={2},
pages={221-236},
issn={1573-0670},
doi={10.1007/s10817-017-9411-y}
}

@incollection{Monzano93,
    author="Mar{\'i}a Monzano",
    title="Introduction to Many-sorted Logic",
    booktitle="Many-sorted Logic and its Applications",
    Publisher="Wiley",
    year="1993",
    editor="K. Meinke and J. V. Tucker",
    series="Wiley professional computing"
}

@book{BradleyManna2007,
  doi = {10.1007/978-3-540-74113-8},
  year = {2007},
  publisher = {Springer Berlin Heidelberg},
  title = {The Calculus of Computation},
  author={Aaron R. Bradley and Zohar Manna}
}

@misc{BarTolZoh,
      title={Combining Combination Properties: An Analysis of Stable Infiniteness, Convexity, and Politeness}, 
      author={Guilherme Vicentin de Toledo and Yoni Zohar and Clark Barrett},
      year={2023},
      eprint={2305.02384},
      archivePrefix={arXiv},
      primaryClass={cs.LO},
      note={Accepted to CADE 2023.}
}

@ARTICLE{Rado,
  author={Rad{\'o}, T.},
  journal={The Bell System Technical Journal}, 
  title={On non-computable functions}, 
  year={1962},
  volume={41},
  number={3},
  pages={877-884},
  doi={10.1002/j.1538-7305.1962.tb00480.x}}

@article{Marxen,
  title={Attacking the Busy Beaver 5},
  author={Heiner Marxen and J{\"u}rgen Buntrock},
  journal={Bull. EATCS},
  year={1990},
  volume={40},
  pages={247-251}
}


\newgeometry{
a4paper,
total={160mm,220mm},
left=25mm,
top=25mm,
}
\appendix

\section{Preliminaries of the Appendix}

The following are basic results in model theory of many-sorted logic that may be found in \cite{Monzano93}.

\begin{lemma}[\cite{Monzano93}]\label{Compactness}
Let $\Sigma$ be a signature; a set of $\Sigma$-formulas $\Gamma$ is satisfiable if and only if each of its finite subsets $\Gamma_{0}$ is satisfiable.
\end{lemma}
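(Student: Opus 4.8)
The statement to prove is the compactness theorem for many-sorted first-order logic: a set $\Gamma$ of $\Sigma$-formulas is satisfiable iff every finite subset $\Gamma_0$ is satisfiable. Since this is stated as a known result attributed to \cite{Monzano93}, the cleanest route in an appendix is either to cite it directly or to reduce it to the single-sorted compactness theorem, which is standard. I would take the reduction approach: encode a many-sorted signature $\Sigma$ into a single-sorted signature $\Sigma^\flat$ by introducing, for each sort $\sigma \in \S_\Sigma$, a fresh unary predicate symbol $U_\sigma$ (``being of sort $\sigma$''); each function symbol $f : \sigma_1 \times \cdots \times \sigma_n \to \sigma$ becomes an $n$-ary function symbol of $\Sigma^\flat$, and similarly for predicate symbols. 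One then relativizes every $\Sigma$-formula $\varphi$ to a $\Sigma^\flat$-formula $\varphi^\flat$ by guarding quantifiers: $\forall x{:}\sigma.\,\psi$ becomes $\forall x.\,(U_\sigma(x) \to \psi^\flat)$ and $\exists x{:}\sigma.\,\psi$ becomes $\exists x.\,(U_\sigma(x) \wedge \psi^\flat)$.

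The key steps, in order, would be: (1) Define $\Sigma^\flat$ and the translation $(-)^\flat$ on terms and formulas precisely, being careful about the fact that distinct sorts have disjoint variable sets — in $\Sigma^\flat$ all variables live in one sort, so one fixes an injection assigning to each many-sorted variable a single-sorted one. (2) Write down the ``well-sortedness'' axiom set $\Delta_\Sigma$: nonemptiness $\exists x.\,U_\sigma(x)$ for each $\sigma$, disjointness/exhaustiveness of the $U_\sigma$ as desired, and closure axioms $\forall \vec{x}.\,(\bigwedge_i U_{\sigma_i}(x_i) \to U_\sigma(f(\vec x)))$ for each function symbol, plus the usual equality axioms. (3) Prove the correspondence lemma: a $\Sigma$-structure $\A$ satisfies $\varphi$ iff the $\Sigma^\flat$-structure $\A^\flat$ (with universe $\bigsqcup_\sigma \sigma^\A$ and $U_\sigma$ interpreted as $\sigma^\A$) satisfies $\varphi^\flat$; and conversely every model of $\Delta_\Sigma$ is (up to iso, after discarding junk elements outside all $U_\sigma$) of the form $\A^\flat$. (4) Apply single-sorted compactness: if every finite $\Gamma_0 \subseteq \Gamma$ is $\Sigma$-satisfiable, then every finite subset of $\Delta_\Sigma \cup \Gamma^\flat$ is $\Sigma^\flat$-satisfiable (here one uses that $\Delta_\Sigma$ may be infinite, so the finite subset touches only finitely many $\Delta_\Sigma$ axioms, which are witnessed by any model of the corresponding finite fragment of $\Gamma$ — one must be slightly careful and in fact take all of $\Delta_\Sigma$ restricted to the finitely many symbols appearing, which is fine since each such fragment is satisfiable). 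Hence $\Delta_\Sigma \cup \Gamma^\flat$ has a model, which yields a $\Sigma$-model of $\Gamma$.

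The main obstacle, and the part deserving the most care, is the bookkeeping in step (3)–(4) around infinite signatures and the empty-sort issue: many-sorted logic as set up here allows a countably infinite set of sorts, so $\Delta_\Sigma$ is genuinely infinite, and one must confirm that the finite-satisfiability hypothesis on $\Gamma$ transfers correctly to finite-satisfiability of $\Delta_\Sigma \cup \Gamma^\flat$ — in particular that a finite fragment of $\Gamma^\flat$ together with the relevant closure/nonemptiness axioms is satisfiable, which follows because the relevant axioms only constrain the finitely many sorts and symbols occurring, and any $\Sigma$-model of the corresponding finite fragment of $\Gamma$ supplies one. Everything else (the translation, the equivalence $\A \models \varphi \iff \A^\flat \models \varphi^\flat$) is a routine structural induction. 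Given that the paper cites \cite{Monzano93} for this, I would in practice simply state the reduction sketch and defer the full verification to that reference.
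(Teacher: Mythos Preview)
The paper does not prove this lemma at all: it is stated in the appendix preliminaries as a basic result of many-sorted model theory and simply attributed to \cite{Monzano93}, with no argument given. Your reduction-to-single-sorted-logic sketch is the standard way to establish many-sorted compactness and is correct (with the caveats you already flag about handling infinitely many sorts in $\Delta_\Sigma$), so you are doing strictly more than the paper does; in this context a one-line citation, as the paper gives, is all that is expected.
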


\begin{lemma}[\cite{Monzano93}]
\label{LowenheimSkolemDownwards}
Let $\Sigma$ be a signature and $\Gamma$ a set of $\Sigma$-formulas: if $\Gamma$ is satisfiable, there exists an interpretation $\A$ which satisfies $\Gamma$ and where $\s^{\A}$ is countable whenever it is infinite.
\end{lemma}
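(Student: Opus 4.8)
The plan is to construct a countable elementary substructure of a given model, following the classical Tarski--Vaught argument adapted to the many-sorted setting. The key fact that makes everything work is that $\Sigma$ is countable, so the set of all $\Sigma$-formulas is countable; in particular $\Gamma$, together with the set of all variables occurring freely in its members, is countable, and this is what ultimately keeps the construction countable throughout.

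First I would use the hypothesis that $\Gamma$ is satisfiable to fix an interpretation $\M$ with $\M\vDash\Gamma$. I then assemble a countable seed $X_{\s}\subseteq\s^{\M}$ for each sort $\s\in\S_{\Sigma}$: I place into $X_{\s}$ the value $x^{\M}$ of every free variable $x$ of sort $\s$ occurring in some formula of $\Gamma$, and, whenever $\s^{\M}\neq\emptyset$, at least one further arbitrary element, so that the seed of each nonempty sort is itself nonempty. Each $X_{\s}$ is countable because there are only countably many such free variables.

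Next comes the closure step. I would enlarge the seed in $\omega$ stages so that the final collection passes the Tarski--Vaught test: at each stage, for every formula $\Exists{y}\varphi$ with $y$ of some sort $\tau$, and for every assignment of the remaining free variables of $\varphi$ to elements already chosen, if $\M$ satisfies $\Exists{y}\varphi$ under that assignment, I add to $X_{\tau}$ a single witness $b\in\tau^{\M}$ realizing $\varphi$. Taking the special case $\varphi\equiv(y=f(\overarrow{x}))$ simultaneously forces closure under every function symbol $f$, so that the resulting sort domains carry a genuine $\Sigma$-substructure $\A$ of $\M$. Since at each stage there are only countably many formulas and countably many tuples drawn from a countable set, every $X_{\s}$ stays countable, and the union over all stages is countable for each sort; a finite sort of $\M$ contributes only a finite subdomain, so $\s^{\A}$ is finite when $\s^{\M}$ is finite and countable otherwise. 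The Tarski--Vaught criterion then yields $\A\preceq\M$, and elementarity, applied with the variable assignment inherited from $\M$ (all relevant free-variable values were placed in the seed), transfers satisfaction of each formula of $\Gamma$ from $\M$ to $\A$, giving $\A\vDash\Gamma$ with every infinite sort countable.

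The main obstacle I anticipate is not a deep idea but the many-sorted bookkeeping: one must run the closure uniformly across all sorts at once, since a witness of one sort can be demanded by a formula whose other free variables range over several different sorts; one must check closure under every function symbol so that $\A$ is a well-defined substructure; and one must respect the convention that nonempty sorts stay nonempty, all while certifying at each stage that countability is preserved. An alternative route would be to Skolemize and take a Skolem hull of the seed, which hides the staged closure inside the Skolem functions but requires setting up many-sorted Skolemization first; I would regard the Tarski--Vaught version as the more self-contained of the two.
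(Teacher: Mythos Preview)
The paper does not give its own proof of this lemma: it is stated in the appendix as one of the ``basic results in model theory of many-sorted logic that may be found in \cite{Monzano93}'' and is simply cited without argument. Your Tarski--Vaught construction is a correct and standard proof of the many-sorted downward L\"owenheim--Skolem theorem; the countability of $\Sigma$ (and hence of the set of all $\Sigma$-formulas and of the free variables appearing in $\Gamma$) is exactly what keeps each stage countable, and placing the $\M$-values of the free variables of $\Gamma$ in the seed is what lets you transfer satisfaction of open formulas to the elementary substructure. The only small point you leave implicit is how to assign values in $\A$ to variables that do not occur free in any member of $\Gamma$; since those variables are irrelevant to $\A\vDash\Gamma$, any choice from the appropriate (nonempty) sort domain suffices.
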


The two following theorems are results from \cite{BarTolZoh}.

\begin{restatable}{theorem}{sietac}
\label{SI empty theories are convex}
Any theory over an empty signature that is stably infinite w.r.t. the set of all of its sorts is convex w.r.t. any set of sorts.
\end{restatable}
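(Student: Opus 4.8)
The plan is to argue directly by contradiction. Fix an arbitrary set of sorts $S\subseteq\S_{\Sigma}$, let $\phi$ be a conjunction of $\Sigma$-literals, and let $u_{1},v_{1},\ldots,u_{n},v_{n}$ be variables of sorts in $S$ with $\tmodels\phi\rightarrow\bigvee_{i=1}^{n}u_{i}=v_{i}$; assume, for contradiction, that $\not\tmodels\phi\rightarrow u_{i}=v_{i}$ for every $i$. First I would discard the trivial case: if $\phi$ is not $\T$-satisfiable then $\tmodels\phi\rightarrow u_{i}=v_{i}$ for all $i$ and convexity holds, so we may assume $\phi$ is $\T$-satisfiable; similarly, if some $u_{i}$ and $v_{i}$ are the same variable we are already done, so we may assume they are syntactically distinct.

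The key structural observation is that, since $\Sigma$ is empty, $\phi$ is a conjunction of equalities and disequalities between variables only. Let $\sim$ be the equivalence relation, on the finitely many variables occurring in $\phi$ together with $u_{1},v_{1},\ldots,u_{n},v_{n}$, that is generated by the equalities appearing in $\phi$. Since $\phi\wedge\neg(u_{i}=v_{i})$ is $\T$-satisfiable (this is precisely what $\not\tmodels\phi\rightarrow u_{i}=v_{i}$ says), it is in particular satisfiable, so $u_{i}$ and $v_{i}$ cannot be $\sim$-related, as otherwise a chain of equality literals of $\phi$ would force $u_{i}=v_{i}$ in every interpretation of $\phi$. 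For the same reason, since $\phi$ itself is satisfiable, each disequality literal of $\phi$ joins variables lying in distinct $\sim$-classes.

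Next I would invoke stable infiniteness with respect to $\S_{\Sigma}$: applied to the $\T$-satisfiable quantifier-free formula $\phi$, it produces a $\T$-interpretation $\B$ with $\sigma^{\B}$ infinite for every $\sigma\in\S_{\Sigma}$. Now form a new interpretation $\B'$ with the same underlying structure as $\B$ but a modified assignment of variables: for each sort $\sigma$ pick an injection of the finitely many $\sim$-classes of sort $\sigma$ into the infinite set $\sigma^{\B}$, and give every variable the element assigned to its class. Because $\ax(\T)$ consists of sentences, whose truth depends only on the underlying structure, $\B'$ is still a $\T$-interpretation; and by the previous paragraph $\B'$ satisfies every literal of $\phi$ (equalities hold within classes, disequalities across them) as well as every $\neg(u_{i}=v_{i})$. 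Hence $\B'\models_{\T}\phi\wedge\bigwedge_{i=1}^{n}\neg(u_{i}=v_{i})$, contradicting $\tmodels\phi\rightarrow\bigvee_{i=1}^{n}u_{i}=v_{i}$. Since $S$ was arbitrary and stable infiniteness was used only with respect to the full set of sorts, $\T$ is convex with respect to every set of sorts.

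The main thing to get right, and the reason emptiness of the signature is essential, is that $\phi$ may then only constrain the equality pattern among variables, so the ``free'' reassignment of variables above cannot violate any function- or predicate-literal (in a non-empty signature a literal such as $s(x)=y$ would obstruct this). By contrast, the fact that reassigning variables preserves being a $\T$-interpretation is a general feature of sentential axiomatizations and is not special to empty signatures. The only remaining bookkeeping is to note that the number of $\sim$-classes is finite, so the required injections exist, and that every disequality in play, both those already in $\phi$ and the added $\neg(u_{i}=v_{i})$, runs between distinct $\sim$-classes.
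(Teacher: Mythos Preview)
The paper does not actually prove this theorem; it is restated in the appendix as a result imported from \cite{BarTolZoh}, so there is no in-paper argument to compare against. Your proof is correct and is the natural direct argument: over an empty signature a cube records only an equality pattern among variables, stable infiniteness with respect to all sorts supplies a $\T$-model with every domain infinite, and any finite pattern of $\sim$-classes can then be injectively realized in that model by reassigning variables (which preserves being a $\T$-interpretation since $\ax(\T)$ consists of sentences). All the bookkeeping---that disequalities in $\phi$ and the added $\neg(u_i=v_i)$ straddle distinct $\sim$-classes, and that classes are sort-compatible---is handled correctly.
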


\begin{restatable}{theorem}{osessifw}
\label{OS+ES+-SI=>FW}
Every one-sorted, non-stably-infinite theory $\T$ with an empty signature is finitely witnessable w.r.t. its only sort.
\end{restatable}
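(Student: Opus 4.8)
**Proof plan for \Cref{OS+ES+-SI=>FW} (every one-sorted, non-stably-infinite theory over an empty signature is finitely witnessable w.r.t. its only sort).**

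The plan is to exploit the extreme rigidity of one-sorted empty signatures together with the failure of stable infiniteness. First I would recall that in an empty one-sorted signature, a $\Sigma$-structure is completely determined (up to isomorphism) by the cardinality of its single domain $\sigma$, and that satisfiability of a quantifier-free formula $\phi$ depends only on this cardinality; moreover a quantifier-free $\phi$ in variables $x_1,\dots,x_m$ is $\T$-satisfiable precisely when there is some arrangement $\delta$ over $\{x_1,\dots,x_m\}$ such that $\phi \wedge \delta$ is propositionally consistent and $\T$ has a model whose domain size is at least the number of $\delta$-classes (and, when $\phi$ forbids larger sizes implicitly — which it cannot in an empty signature — exactly that size). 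The crucial consequence of \emph{not} being stably infinite is that there is some $\T$-satisfiable quantifier-free formula $\phi_0$ all of whose $\T$-models are finite; by inspecting the finitely many arrangements of its variables and using that $\T$-satisfiability is cardinality-determined, one extracts a \textbf{finite bound} $N$ with the property that every quantifier-free $\T$-satisfiable formula has a $\T$-model of size at most $N$. (If no such uniform $N$ existed, one could, by compactness, build a $\T$-model of $\phi_0$ of every finite size and then an infinite one, contradicting the choice of $\phi_0$.)

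Given such an $N$, I would define the witness function $\wit$ as follows: on input $\phi$ with variables $V_\phi$, let $\wit(\phi) = \phi$ if $|V_\phi| \ge N$, and otherwise let $\wit(\phi) = \phi \wedge \bigwedge_{i} (x_i = x_i)$ augmented with $N - |V_\phi|$ fresh auxiliary variables $w_1,\dots,w_{N-|V_\phi|}$ appearing only in trivial literals such as $w_j = w_j$ (so that their presence does not constrain $\phi$ but does make $N$ variables available). This $\wit$ is clearly computable. Condition $(i)$ — that $\phi$ and $\exists \overarrow{w}.\:\wit(\phi)$ are $\T$-equivalent — is immediate since the added conjuncts are valid and the $w_j$ are fresh. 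For condition $(ii)$, suppose $\wit(\phi)$ is $\T$-satisfiable; then $\phi$ is $\T$-satisfiable, so by the bound there is a $\T$-model $\A$ with $|\sigma^\A| \le N$; since $\wit(\phi)$ now supplies at least $N$ variables, I can extend/rename the assignment so that the values of $\vars_\sigma(\wit(\phi))$ cover all of $\sigma^\A$ (mapping surplus variables onto already-used elements and, if $|\sigma^\A|$ exceeds the number of distinct values forced by $\phi$, using the auxiliary $w_j$'s to name the remaining elements). This gives $\sigma^\A = \vars_\sigma(\wit(\phi))^\A$, as required.

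The main obstacle I anticipate is the extraction of the uniform finite bound $N$ and the careful bookkeeping in the surjectivity step of condition $(ii)$: one must be sure that the number of fresh variables added is genuinely enough to name every element of the (bounded) model regardless of which arrangement on $V_\phi$ the model realizes, and that adding them does not accidentally make $\wit(\phi)$ $\T$-unsatisfiable when $\phi$ was satisfiable — which is why the auxiliary variables must occur only in tautological literals. A secondary subtlety is handling the degenerate case $|V_\phi| \ge N$ separately (there $\wit(\phi)=\phi$ already has enough variables, and one picks the $\T$-model of size at most $N$ and maps the $\ge N$ variables onto it surjectively). Everything else is routine model theory of empty signatures, and compactness (\Cref{Compactness}) together with the downward Löwenheim–Skolem statement (\Cref{LowenheimSkolemDownwards}) is exactly what justifies that the absence of a uniform bound would force stable infiniteness.
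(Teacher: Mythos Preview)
The paper does not itself prove this theorem; it is merely restated from \cite{BarTolZoh} without proof, so there is no in-paper argument to compare against. Your overall strategy --- use non-stable-infiniteness together with compactness to extract a uniform finite bound $N$ on all $\T$-model sizes, then pad $\phi$ with fresh tautologically-constrained variables so that any model of size at most $N$ can be covered --- is the natural one and is essentially correct.

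There is, however, a genuine gap in your case split. When $|V_\phi| \ge N$ you set $\wit(\phi) = \phi$ and assert that one ``maps the $\ge N$ variables onto [the model] surjectively.'' But the variables of $\phi$ are not free to be reassigned: $\phi$ itself constrains them. Concretely, suppose every $\T$-model has exactly $5$ elements (so $N = 5$) and take $\phi = \bigwedge_{i=1}^{9}(x_i = x_{i+1})$, which has $10 \ge N$ variables yet forces them all equal. Then $\wit(\phi) = \phi$, and in any $\T$-interpretation $\A$ satisfying $\phi$ one has $|\vars(\phi)^{\A}| = 1 \ne 5 = |\sigma^{\A}|$, so condition~$(ii)$ fails outright. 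The repair is simple: drop the case distinction and \emph{always} adjoin $N$ fresh variables $w_1,\dots,w_N$ via $\bigwedge_{j}(w_j = w_j)$, regardless of $|V_\phi|$. These $w_j$ are genuinely unconstrained, so once a $\T$-model $\A$ (necessarily of size $\le N$) satisfying $\phi$ is fixed, the $w_j$ can be assigned to cover all of $\sigma^{\A}$, and the rest of your argument goes through unchanged.
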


We shall also make use of the following fact, found as a corollary of exercise $10.5$ in Section $10$ of \cite{BradleyManna2007}.

\begin{theorem}\label{uninterpretedfunctionsis convex}
    If $\T$ is a theory over a one-sorted signature with only one unary function, and if $\T$ is axiomatized by the empty set (that is, its functions are \emph{uninterpreted}), then $\T$ is convex.
\end{theorem}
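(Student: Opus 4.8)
The plan is to prove convexity directly by building, for each $\T$-satisfiable conjunction of literals $\phi$, a canonical model of $\phi$ in which only the equalities forced by $\phi$ hold, and then noting that any disjunction of equalities true in that model must already have a valid disjunct. (One could instead invoke the classical fact that Horn theories are convex --- the empty axiomatization being vacuously Horn --- or cite Bradley--Manna directly, but the construction below is the content of either route.)

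Concretely: if $\phi$ is $\T$-unsatisfiable then $\tmodels\phi\rightarrow u_i=v_i$ holds vacuously for every $i$, so assume $\phi$ is $\T$-satisfiable. Let $X$ be the finite set of variables of the single sort $\s$ occurring in $\phi$ or among $u_1,v_1,\dots,u_n,v_n$, and let $T_X=\{\,f^k(x) : x\in X,\ k\geq 0\,\}$ be the term algebra over $X$ built from the unique unary symbol $f$ (with $f$ sending $f^k(x)$ to $f^{k+1}(x)$). Let $\sim$ be the least congruence on $T_X$ containing every pair $(s,t)$ with ``$s=t$'' a conjunct of $\phi$ --- i.e., the congruence closure of the positive part of $\phi$. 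The crucial observation is an initiality argument: for any $\T$-interpretation $\A\vDash\phi$, the map $f^k(x)\mapsto(f^{\A})^k(x^{\A})$ is a homomorphism $T_X\to\A$ whose kernel is a congruence containing all positive conjuncts of $\phi$, hence contains $\sim$.

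Two consequences follow. First, no disequality of $\phi$ is collapsed by $\sim$: if ``$s'\neq t'$'' is a conjunct of $\phi$, then in a satisfying $\A$ we have $s'^{\A}\neq t'^{\A}$, so $(s',t')$ is not in the kernel of $T_X\to\A$, hence not in $\sim$. Thus the quotient $\mathcal{M}=T_X/{\sim}$ with valuation $x\mapsto[x]$ is a $\T$-interpretation (trivially, since $\ax(\T)=\emptyset$) satisfying $\phi$. Second, for terms $s,t$ over $X$ we get $\tmodels\phi\rightarrow s=t$ iff $[s]=[t]$ in $\mathcal{M}$ --- ``only if'' because $\mathcal{M}\vDash\phi$, and ``if'' because the homomorphism sends $[s],[t]$ to $s^{\A},t^{\A}$ in every $\A\vDash\phi$. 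Hence if $\tmodels\phi\rightarrow\bigvee_{i=1}^n u_i=v_i$ then $\mathcal{M}\vDash\bigvee_i u_i=v_i$, so $[u_i]=[v_i]$ for some $i$, giving $\tmodels\phi\rightarrow u_i=v_i$; this is exactly convexity. The only fussy part is the congruence-closure bookkeeping (well-definedness of $\sim$, that $f$ descends to the quotient so $\mathcal{M}$ is a genuine $\Sigma$-structure, and that the homomorphism argument runs uniformly over iterated $f$'s), but it is entirely routine; nothing here uses that the function is unary, so the same proof gives convexity for any family of uninterpreted function symbols.
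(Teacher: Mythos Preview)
Your argument is correct: the congruence-closure/term-model construction is the standard way to show that the quantifier-free theory of equality with uninterpreted functions is convex, and your bookkeeping (building $T_X/{\sim}$, the homomorphism-into-any-model observation, and the freeness consequence that only forced equalities hold in $\mathcal{M}$) is sound. The remark at the end that nothing depends on unarity is also right.

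By way of comparison, the paper does not actually prove this theorem at all: it simply records the statement and cites it as a corollary of Exercise~10.5 in Section~10 of Bradley--Manna. You anticipated this option yourself (``or cite Bradley--Manna directly''), so there is no disagreement---your write-up is just the unfolded content of that citation. What your version buys is self-containment and the explicit canonical-model witness $\mathcal{M}$, which makes the convexity mechanism transparent; what the paper's route buys is brevity, since for its purposes the result is a black box used only inside the proofs of \Cref{TsBB is CV} and \Cref{TtwoBB is CV}.
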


Given a function symbol $s$ of arity $\s\rightarrow\s$ and a variable $x$ of sort $\s$, it is useful to define, recursively, the term $s^{k}(x)$ (for $k\in\mathbb{N}$) as follows: $s^{0}(x)=x$, and $s^{k+1}(x)=s(s^{k}(x))$.

\section{Proof of \Cref{FW=>FMP}}

\FWimpliesFMP*

\begin{proof}
    Let $\wit$ be the witness for $\T$ with respect to $S$, $\phi$ be a quantifier-free $\Sigma$-formula, and $\A$ a $\T$-interpretation that satisfies $\phi$:  $\wit$ is a witness, and 
    therefore $\phi$ and $\Exists{\overarrow{x}}\wit(\phi)$ (for $\overarrow{x}=\vars(\wit(\phi))\setminus\vars(\phi)$) must be $\T$-equivalent, and so we must have that $\A$ satisfies $\Exists{\overarrow{x}}\wit(\phi)$; and then there must exist a $\T$-interpretation $\Ap$, different from $\A$ at most on $\overarrow{x}$, that satisfies $\wit(\phi)$. Using again that $\wit$ is a witness, there is a $\T$-interpretation $\B$, with $\s^{\B}=\vars_{\s}(\wit(\phi))^{\B}$ for each $\s\in S$, that satisfies $\phi$. But, because $\vars_{\s}(\wit(\phi))$, and thus $\vars_{\s}(\wit(\phi))^{\B}$, must be finite, we get $|\s^{\B}|$ is finite for each $\s\in S$, proving $\T$ has the finite model property with respect to $S$.
\end{proof}

\section{Proof of \Cref{SFW=>SF}}

\SFWimpliesSF*

\begin{proof}
    Let $\wit$ be the strong witness of $\T$ with respect to $S$, $\phi$ be a quantifier-free $\Sigma$-formula, and $\A$ a $\T$-interpretation that satisfies $\phi$: we have that $\A$ satisfies $\Exists{\overarrow{x}}\wit(\phi)$, for $\overarrow{x}=\vars(\wit(\phi))\setminus\vars(\phi)$; so there must exist a $\T$-interpretation $\Ap$, differing from $\A$ at most on $\overarrow{x}$, that satisfies $\wit(\phi)$. Let $V$ be the set of variables of sort in $S$ on $\wit(\phi)$ or $\phi$, and $\delta_{V}$ be the arrangement on $V$ such that $x$ is related to $y$ iff $x^{\Ap}=y^{\Ap}$, which is clearly satisfied by $\Ap$; by strong finite-witnessability, 
    there must exist a $\T$-interpretation $\B$ that satisfies $\wit(\phi)\wedge\delta_{V}$ with $\s^{\B}=\vars_{\s}(\wit(\phi)\wedge\delta_{V})^{\B}$
    for each $\s\in S$. Because $\B$ satisfies $\wit(\phi)$, it also satisfies $\Exists{\overarrow{x}}\wit(\phi)$ and $\phi$; and $|\s^{\B}|$ is finite for each $\s\in S$, given $\vars_{\s}(\wit(\phi)\wedge\delta_{V})$, and thus $\vars_{\s}(\wit(\phi)\wedge\delta_{V})^{\B}$, are finite; finally, because $\B$ satisfies $x=y$ (for $x$ and $y$ variables of $\wit(\phi)$ or $\phi$) iff $\Ap$ satisfies $x=y$, we have that $|\vars_{\s}(\wit(\phi)\wedge\delta_{V})^{\B}|=|\vars_{\s}(\wit(\phi)\wedge\delta_{V})^{\Ap}|$, and so
    \[|\s^{\B}|=|\vars_{\s}(\wit(\phi)\wedge\delta_{V})^{\B}|=|\vars_{\s}(\wit(\phi)\wedge\delta_{V})^{\Ap}|\leq|\s^{\Ap}|=|\s^{\A}|,\]
    proving $\T$ is stably finite with respect to $S$.    
\end{proof}

\section{Proof of \Cref{OS+FMP=>SF}}

\OSplusFMPimplySF*
\begin{proof}
    Let $\phi$ be a quantifier-free $\Sigma$-formula, and $\A$ a $\T$-interpretation that satisfies $\phi$: because $\T$ has the finite model property, there is a $\T$-interpretation $\B$ that satisfies $\phi$ with $|\s^{\B}|$ finite. There are then two cases to consider: if $|\s^{\B}|\leq|\s^{\A}|$, we already have that $\T$ is stably finite; if $|\s^{\B}|>|\s^{\A}|$, then $\A$ is a $\T$-interpretation that satisfies $\phi$ with $|\s^{\A}|$ finite (and this value is equal to or less than $|\s^{\A}|$), so $\T$ is, again, stably finite.
\end{proof}

\section{Proof of \Cref{OS+ES+SM+FMP=>FW}}

\begin{lemma}\label{verytechnicallemma}
    Let $n$ be a natural number, and consider any subset $A$ of $\mathbb{N}^{n}$ equipped with the order such that $(p_{1},\ldots,p_{n})\leq (q_{1}, \ldots, q_{n})$ iff $p_{i}\leq q_{i}$ for all $1\leq i\leq n$: then $A$ possesses at most a finite number of minimal elements under this order.
\end{lemma}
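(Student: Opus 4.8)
\textbf{Proof plan for \Cref{verytechnicallemma}.}
The plan is to prove this by induction on $n$, which is essentially Dickson's Lemma (every subset of $\mathbb{N}^n$ has finitely many minimal elements under the product order). The base case $n=1$ is immediate: any nonempty subset of $\mathbb{N}$ has a least element, which is its unique minimal element (and the empty set has none). For the inductive step, I would assume the claim holds for $n$ and prove it for $n+1$. First I would pick, if $A\subseteq\mathbb{N}^{n+1}$ is nonempty, some element $a=(a_1,\ldots,a_{n+1})\in A$; every minimal element $b$ of $A$ must satisfy $b\not\geq a$ unless $b=a$, so for every minimal element $b\neq a$ there is some coordinate $i$ with $b_i<a_i$.

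The key step is then to partition the minimal elements according to which coordinate ``witnesses'' the failure and what value it takes there: for each coordinate $i\in\{1,\ldots,n+1\}$ and each value $v<a_i$, consider the set $M_{i,v}$ of minimal elements $b$ of $A$ with $b_i=v$. There are only finitely many such pairs $(i,v)$ (at most $\sum_{i=1}^{n+1}a_i$ of them), so it suffices to show each $M_{i,v}$ is finite. Fixing $i$ and $v$, I would project $M_{i,v}$ onto the remaining $n$ coordinates; the image lies in $\mathbb{N}^n$, and one checks that a minimal element of $A$ with $i$-th coordinate equal to $v$ projects to a minimal element of the projection of $\{b\in A : b_i=v\}$ — because if one projected point were strictly below another, the corresponding points of $A$ (both having $i$-th coordinate $v$) would be comparable, contradicting minimality. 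By the induction hypothesis that projection has finitely many minimal elements, and the projection restricted to $M_{i,v}$ is injective (two distinct minimal elements of $A$ agreeing on all coordinates including the $i$-th would be equal), so $M_{i,v}$ is finite. Collecting $a$ together with the finitely many finite sets $M_{i,v}$ shows $A$ has finitely many minimal elements.

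The main obstacle, I expect, is getting the projection argument exactly right: one must be careful that minimality in $A$ genuinely transfers to minimality in the relevant projected set, and that the bookkeeping of which points are being compared keeps their $i$-th coordinate pinned at $v$ so that comparability in the other $n$ coordinates really does imply comparability in $\mathbb{N}^{n+1}$. An alternative, perhaps cleaner, route would be to argue by contradiction using the infinite Ramsey-type / well-quasi-order phrasing: if $A$ had infinitely many minimal elements, they would form an infinite antichain, but a standard argument (repeatedly extracting a subsequence that is monotone nondecreasing in each coordinate, coordinate by coordinate) produces two comparable elements among any infinite subset of $\mathbb{N}^{n}$, contradicting antichain-ness. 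I would likely present whichever of these two is shorter given the surrounding notation, but both reduce the difficulty to the same combinatorial core.
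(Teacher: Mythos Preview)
Your proposal is correct and follows essentially the same inductive approach as the paper: fix a reference point in $A$, observe that every other minimal element must undercut it in some coordinate, stratify by the pair (coordinate, value), and apply the induction hypothesis to each slice $\{b\in A : b_i=v\}$ viewed as a subset of $\mathbb{N}^{n}$. The only cosmetic difference is that the paper fixes a \emph{minimal} element of $A$ as the reference point whereas you take an arbitrary one; your choice works just as well and in fact spares you from first arguing that a minimal element exists.
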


\begin{proof}
    We will prove the result by induction on $n$, the result being obviously true if $n=0$. So assume the result is true for $n$, and we shall prove it also holds for $n+1$. We start by fixing a minimal element $(p_{1}, \ldots , p_{n}, p_{n+1})$ of $A$ (if there are none, we are done): for any other minimal element $(q_{1}, \ldots , q_{n}, q_{n+1})$, we have that there must exist distinct $1\leq i,j\leq n+1$ such that $p_{i}>q_{i}$, and $p_{j}<q_{j}$; this way, $(q_{1}, \ldots , q_{n}, q_{n+1})$ is a minimal element of
    \[A^{i}_{q_{i}}=\{(r_{1}, \ldots , r_{n}, r_{n+1}) \in A : r_{i}=q_{i}\}\] 
    with the order induced by the order on $A$. By induction hypothesis, given this set may be simply considered as a subset of $\mathbb{N}^{n}$ if we omit the coordinate $i$, the set of minimal elements of $A^{i}_{q_{i}}$ is finite. So every minimal element of $A$ is either $(p_{1}, \ldots, p_{n}, p_{n+1})$, or an element of some $\min(A^{i}_{q_{i}})$, the set of minimal elements of $A^{i}_{q_{i}}$: there are $n+1$ possible values for $i$ (\textit{i.e.} $1$ through $n+1$), and $p_{i}-1$ values for $q_{i}$ (namely $1$ through $p_{i}-1$, given that $q_{i}<p_{i}$); so we can bound the number of minimal elements of $A$ by the sum 
    \[1+\sum_{i=1}^{m}\sum_{q_{i}=1}^{p_{i}-1}|\min(A^{i}_{q_{i}})|,\]
    which is finite, since it is a finite sum of finite numbers $|\min(A^{i}_{q_{i}})|$. The result is then true for $n+1$.
    \end{proof}

\OSESSMFMPimplyFW*

\begin{proof}
Suppose $\S_{\Sigma}=\{\s_{1},\ldots,\s_{n}\}$.
    Consider the set
    \[A=\{(|\s_{1}^{\A}|, \ldots , |\s_{n}^{\A}|) : \text{$\A$ is a $\T$-interpretation}\}\cap\mathbb{N}^{n},\]
    which is not empty because $\T$ has the finite model property: by \Cref{verytechnicallemma} there exist $\T$-interpretations $\A_{1}$ through $\A_{m}$ that correspond to the minimal elements of $A$ (notice we are using that two $\Sigma$-structures $\A$ and $\B$ are isomorphic iff $|\s_{i}^{\A}|=|\s_{i}^{\B}|$ for each $1\leq i\leq n$, given the signature is empty; this is necessary as otherwise one would need to consider $A$ as a multiset to account for non-isomorphic models $\A$ and $\B$ that still have minimal cardinalities $(|\s_{1}^{\A}\, \ldots, |\s_{n}^{\A}|)=(|\s_{1}^{\B}|, \ldots, |\s_{n}^{\B}|)$, and in this case \Cref{verytechnicallemma} does not necessarily hold); so we define $m_{i,j}=|\s_{i}^{\A_{j}}|$ and $m_{i}=\max\{m_{i,j} : 1\leq j\leq m\}$. Once we fix a quantifier-free $\Sigma$-formula $\phi$, take fresh variables $\{x_{1}^{i}, \ldots, x_{m_{i}}^{i}\}$ of sort $\s_{i}$, and consider the witness
    \[\wit(\phi)=\phi\wedge\delta,\quad\text{where}\quad\delta=\bigwedge_{i=1}^{n}\bigwedge_{1\leq j<k\leq m_{i}}(x^{i}_{j}=x^{i}_{k}).\]
    To prove $\phi$ and $\Exists{\overarrow{x}}\wit(\phi)$ are $\T$-equivalent is easy: start with a $\T$-interpretation $\A$ that satisfies $\phi$, and since $\delta$ is satisfied by all $\T$-interpretations by design, $\A$ satisfies $\phi\wedge\delta$, and thus $\Exists{\overarrow{x}}\wit(\phi)$. Conversely, if the $\T$-interpretation $\A$ satisfies $\Exists{\overarrow{x}}\wit(\phi)$, given that none of the variables in $\overarrow{x}$ occur in $\phi$, $\A$ satisfies $\phi\wedge\Exists{\overarrow{x}}\delta$, and thus $\phi$. 

    Now, suppose the $\T$-interpretation $\A$ satisfies $\wit(\phi)$: we then take sets $A_{i}$, for each $1\leq i\leq n$, disjoint from each other and $\s_{i}^{\A}$, with
    \[\max\{0,\quad m_{i}-|\vars_{\s_{i}}(\phi)^{\A}|\}\]
    elements. Finally, we define an interpretation $\B$ by making: $\s_{i}^{\B}=\vars_{\s_{i}}(\phi)^{\A}\cup A_{i}$ (which has at least $m_{i}$ elements for every sort $\s_{i}$, what makes of $\B$ a $\T$-interpretation, given this theory is smooth on an empty signature). And, for the evaluation of the variables, we demand that: $x^{\B}=x^{\A}$ for all variables $x$ in $\phi$; the map taking $x_{j}^{i}$ in $\{x^{i}_{1}, \ldots , x^{i}_{m_{i}}\}$ to $(x^{i}_{j})^{\B}$ in $ A_{i}$ is a surjection for each $1\leq i\leq n$ (what is possible by definition of $A_{i}$); and for other variables $x$, $x^{\B}$ is defined arbitrarily. Then it is true that: $\B$ satisfies $\phi$, since the truth-value of $\phi$, as a formula on the empty signature, is determined by the truth-value of its atomic subformulas $x=y$, and $x^{\B}=y^{\B}$ iff $x^{\A}=y^{\A}$ (since $x^{\B}=x^{\A}$ and $y^{\B}=y^{\A}$); $\B$ satisfies $\wit(\phi)$, since $\delta$ is $\T$-valid and $\B$ is a $\T$-interpretation (given $\T$ is smooth); and $\s^{\B}=\vars_{\s}(\wit(\phi))^{\B}$ for each $\s\in \S_{\Sigma}$, by definition of $\B$.
\end{proof}


\section{Proof of \Cref{OS+ES+SM+SF=>SFW}}

\OSESSMSFimpliesSFW*

\begin{proof}
    Suppose $\S_{\Sigma}=\{\s_{1},\ldots,\s_{n}\}$.
    Consider the set (also used in \Cref{OS+ES+SM+FMP=>FW})
    \[A=\{(|\s_{1}^{\A}|, \ldots , |\s_{n}^{\A}|) : \text{$\A$ is a $\T$-interpretation}\}\cap\mathbb{N}^{n}:\]
    $A$ is not empty because $\T$ is stably finite, and by \Cref{verytechnicallemma} there exist $\T$-interpretations $\A_{1}$ through $\A_{m}$ such that $(|\s_{1}^{\A_{1}}|, \ldots , |\s_{n}^{\A_{n}}|)$ through $(|\s_{1}^{\A_{m}}|, \ldots , |\s_{n}^{\A_{m}}|)$ are the minimal elements of $A$ (under the order described in \Cref{verytechnicallemma}). We then take $m_{i,j}=|\s_{i}^{\A_{j}}|$ and $m_{i}=\max\{|\s_{i}^{\A_{j}}| : 1\leq j\leq m\}$. Given a quantifier-free $\Sigma$-formula $\phi$, we take fresh variables $x_{1}^{i}$ through $x_{m_{i}}^{i}$ of sort $\s_{i}$, and consider the strong witness
    \[\wit(\phi)=\phi\wedge\delta,\quad\text{where}\quad\delta=\bigvee_{j=1}^{m}\bigwedge_{i=1}^{n}\bigwedge_{1\leq k<l\leq m_{i,j}}\neg(x^{i}_{k}=x^{i}_{l}).\]
    If a $\T$-interpretation $\A$ satisfies $\phi$, take the set of variables $V$ of $\phi$, and the arrangement $\delta_{V}$ on $V$ such that $x$ is related to $y$ iff $x^{\A}=y^{\A}$, and it is clear that $\A$ satisfies $\delta_{V}$: we can find a $\T$-interpretation $\B$ that satisfies $\delta_{V}$ with $|\s_{i}^{\B}|$ finite and satisfying $|\s_{i}^{\B}|\leq|\s_{i}^{\A}|$ for each $1\leq i\leq n$. Then, we can construct a $\T$-interpretation $\Bp$, that differs from $\B$ at most on $\overarrow{x}=\vars(\wit(\phi))\setminus\vars(\phi)=\vars(\delta)$, where $\delta$ is satisfied (since we must have $(|\s_{1}^{\Bp}|, \ldots , |\s_{n}^{\Bp}|)\geq (|\s_{1}^{\A_{j}}|, \ldots , |\s_{n}^{\A_{j}}|)$ for some $1\leq j\leq m$), and thus $\Bp$ satisfies $\delta$ and $\delta_{V}$, the last formula given $\B$ satisfies $\delta_{V}$ and $\Bp$ agrees with $\B$ on the values assigned to the variables in $\delta_{V}$. Finally, using the smoothness of $\T$ and the fact that we are dealing with an empty signature, we can find a $\T$-interpretation $\C$ that satisfies $\delta_{V}$ and $\delta$ whose underlying structure is isomorphic to that of $\A$ (that is, $|\s_{i}^{\C}|=|\s_{i}^{\A}|$ for each $1\leq i\leq n$); defining the $\T$-interpretation $\Ap$, that differs from $\A$ at most on the values assigned to $\overarrow{x}$, such that $(x^{i}_{k})^{\Ap}$ equals $(x^{i}_{l})^{\Ap}$ iff $(x^{i}_{k})^{\C}$ equals $(x^{i}_{l})^{\C}$, we get that $\Ap$ satisfies $\phi\wedge\delta$, and so $\A$ satisfies $\Exists{\overarrow{x}}\wit(\phi)$. Conversely, if $\A$ satisfies $\Exists{\overarrow{x}}\wit(\phi)$, again given the fact $\phi$ has none of the variables in $\overarrow{x}$, $\A$ satisfies $\phi\wedge\Exists{\overarrow{x}}\delta$ and thus $\phi$, proving $\phi$ and $\Exists{\overarrow{x}}\wit(\phi)$ are $\T$-equivalent. 

    Now, let $V$ be a set of variables with arrangement $\delta_{V}$, and suppose the $\T$-interpretation $\A$ satisfies $\wit(\phi)\wedge\delta_{V}$. We define a new $\T$-interpretation $\B$ by making: $\s_{i}^{\B}=\vars_{\s_{i}}(\wit(\phi)\wedge\delta_{V})^{\A}$ (which has at least $m_{i,j}$ elements, $(x^{i}_{1})^{\A}$ through $(x^{i}_{m_{i,j}})^{\A}$, for every sort $\s_{i}$ and $j$ an index such that $\A$ satisfies $\bigwedge_{i=1}^{n}\bigwedge_{1\leq k<l\leq m_{i,j}}\neg(x^{i}_{k}=x^{i}_{l})$, what makes of $\B$ indeed a $\T$-interpretation, given this theory is smooth on an empty signature); $x^{\B}=x^{\A}$ for all variables $x$ in $\wit(\phi)\wedge\delta_{V}$; and arbitrarily for other variables. Then it is true that $\B$ satisfies $\wit(\phi)\wedge\delta_{V}$, and has $\s^{\B}=\vars_{\s}(\wit(\phi)\wedge\delta_{V})^{\B}$ for each $\s\in \S_{\Sigma}$.
\end{proof}

\section{Proof of \Cref{threesorted}}

\threesorted*

\begin{proof} 
    If $\T$ is a $\Sigmaone$-theory, we just need to apply \Cref{OS+ES+-SI=>FW}.

    Assume then $\T$ is a $\Sigmatwo$-theory that has the finite model property with respect to $\{\s, \s_{2}\}$, but is neither stably infinite nor stably finite with respect to that set: we will prove 
    that then $\T$ is not convex. Because $\T$ is not stably infinite, \Cref{Compactness} guarantees that there exists an $M\in\mathbb{N}\setminus\{0\}$ such that there are no $\T$-interpretations $\A$ with $|\s^{\A}|, |\s_{2}^{\A}|\geq M$. 
    
     If $\T$ has no interpretations $\A$ with $|\s^{\A}|> M$ (the case with no interpretations where $|\s_{2}^{\A}|> M$ is analogous), 
     because $\T$ is not stably finite there must exist a $\T$-interpretation $\A_{0}$ with $|\s_{2}^{\A_{0}}|\geq\omega$ (and 
     $|\s^{\A_{0}}|\leq M$) such that there are no $\T$-interpretations $\A$ with $|\s^{\A}|\leq |\s^{\A_{0}}|$ and $|\s_{2}^{\A}|$ 
     finite. But, because $\T$ has the finite model property, there must exist a $\T$-interpretation $\A_{1}$ with finite domains that satisfies $\neg(u=v)$, for $u$ and $v$ 
     variables of sort $\s_{2}$ (since this formula is satisfied by 
     $\A_{0}$): of course, then $|\s_{2}^{\A_{1}}|\geq 2$ and $|\s^{\A_{1}}|>|\s^{\A_{0}}|\geq 1$ (by the definition of $\A_{0}$). Therefore, $|\s^{\A_{1}}|\geq 2$, and we can prove $\T$ is not convex:
     take variables $x_{1}$ through $x_{M+1}$ of sort $\s$, and it is clear that $\dash_{\T}\bigvee_{i=1}^{M}\bigvee_{j=i+1}^{M+1}x_{i}=x_{j}$, but we 
     cannot have $\dash_{\T}x_{i}=x_{j}$ for any pair $1\leq i<j\leq M+1$ since there is a $\T$-
     interpretation ($\A_{1}$) with at least two elements in its domain of sort $\s$.

     So, for $\T$ to be convex, we must have two $\T$-interpretations $\A_{0}$ and $\A_{1}$ with $|\s^{\A_{0}}|>M$ and $|\s_{2}^{\A_{1}}|>M$, but again we reach a contradiction, since then 
     $|\s^{\A_{0}}|>1$ and $|\s_{2}^{\A_{1}}|>1$. Indeed, take variables $x_{1}$ through $x_{M+1}$ of sort $\s$, and $u_{1}$ through $u_{M+1}$ of sort $\s_{2}$, and we have that 
     \[\tdash\bigvee_{i=1}^{M}\bigvee_{j=i+1}^{M+1}x_{i}=x_{j}\vee\bigvee_{p=1}^{M}\bigvee_{q=p+1}^{M+1}u_{p}=u_{q},\]
     but neither $\tdash x_{i}=x_{j}$ nor $\tdash u_{p}=u_{q}$ for pairs $1\leq i<j\leq M+1$ or $1\leq p<q\leq M+1$: in the first case, 
     because there exists a $\T$-interpretation with at least two elements in its domain of sort $\s$ ($\A_{0}$); in the second because 
     there exists a $\T$-interpretation with at least two elements in its domain of sort $\s_{2}$ ($\A_{1}$).
\end{proof}

\section{Proof of \Cref{thm:addsresult}}

 \Cref{technical result on adding a sort} is proven in \cite{BarTolZoh} and only restated here, mostly for context. In what follows, given a $\Sigmatwo$-interpretation $\A$, $\subs{\A}$ is the $\Sigmaone$-interpretation with $\s^{\subs{\A}}=\s^{\A}$; and, given a quantifier-free $\Sigmatwo$-formula $\phi$ and a $\Sigmatwo$-interpretation $\A$, $\subsf{\phi}{\A}$ is the $\Sigmaone$-formula where we replace every atomic subformula $u=v$ of $\phi$ with variables of sort $\s_{2}$ for a $\Sigmaone$-tautology (such as $x=x$, for $x$ of sort $\s$) if $\A$ satisfies $u=v$, and otherwise for a $\Sigmaone$-contradiction (such as $\neg(x=x)$).

\begin{lemma}\label{technical result on adding a sort}
Take a $\Sigmatwo$-interpretation $\A$. It is then true that, for any $\Sigmaone$-formula $\varphi$, $\subs{\A}$ satisfies $\varphi$ if, and only if, $\A$ satisfies $\varphi$.
\end{lemma}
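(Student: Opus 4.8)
The plan is to prove the statement for all $\Sigmatwo$-interpretations $\A$ at once, by structural induction on the $\Sigmaone$-formula $\varphi$ (the interpretation must be left general, since it changes in the quantifier step). I would first record two facts that are used throughout: since $\Sigmaone$ is the empty one-sorted signature with sole sort $\s$, the formula $\varphi$ mentions only $\s$, variables of sort $\s$, and $=_{\s}$, and—there being no function symbols—every $\Sigmaone$-term occurring in $\varphi$ is just a variable of sort $\s$; and, by the definition of $\subs{\A}$, we have $\s^{\subs{\A}}=\s^{\A}$, every variable of sort $\s$ receives the same value in $\subs{\A}$ and in $\A$, and $=_{\s}$ is interpreted as the identity on this common domain in both.

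For the base case, an atomic $\Sigmaone$-formula has the form $x=_{\s}y$ with $x,y$ variables of sort $\s$, and then $\subs{\A}\vDash x=y$ iff $x^{\subs{\A}}=y^{\subs{\A}}$ iff $x^{\A}=y^{\A}$ iff $\A\vDash x=y$, by the facts just recorded. The Boolean cases are immediate: satisfaction of $\neg\psi$, $\psi_{1}\wedge\psi_{2}$, $\psi_{1}\vee\psi_{2}$, or $\psi_{1}\rightarrow\psi_{2}$ in either interpretation depends only on the satisfaction of the immediate subformulas there, so the induction hypothesis closes them. For $\varphi=\exists\,x.\:\psi$ (the case $\varphi=\forall\,x.\:\psi$ being dual), note that $x$ is necessarily of sort $\s$; write $\A'_{a}$ for the $\Sigmatwo$-interpretation agreeing with $\A$ except that it assigns the element $a$ to $x$. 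One checks directly that $\subs{(\A'_{a})}$ is exactly the $\Sigmaone$-interpretation obtained from $\subs{\A}$ by reassigning $x$ to $a$. Hence $\subs{\A}\vDash\exists\,x.\:\psi$ iff there is $a\in\s^{\subs{\A}}=\s^{\A}$ with $\subs{(\A'_{a})}\vDash\psi$, which by the induction hypothesis applied to $\psi$ and the interpretation $\A'_{a}$ holds iff there is such an $a$ with $\A'_{a}\vDash\psi$, i.e. iff $\A\vDash\exists\,x.\:\psi$.

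I do not anticipate any genuine obstacle: morally this is just the observation that the truth of a formula over a sub-signature is insensitive to the part of the structure interpreting symbols outside that sub-signature. The only points that deserve explicit mention are that the emptiness of $\Sigmaone$ removes the need to treat function symbols in the term case, and that reassigning a sort-$\s$ variable commutes with the passage from $\A$ to $\subs{\A}$—which is precisely what lets the induction hypothesis fire in the quantifier step, and why that hypothesis must be phrased for arbitrary $\Sigmatwo$-interpretations rather than for a single fixed $\A$.
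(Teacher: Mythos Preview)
Your structural-induction argument is correct and is the standard way to establish this reduct lemma. Note, however, that the paper does not supply its own proof of this statement: it explicitly says the lemma ``is proven in \cite{BarTolZoh} and only restated here, mostly for context,'' so there is no in-paper argument to compare yours against. Your approach is exactly what one would expect for such a result, and the two points you single out---that emptiness of $\Sigmaone$ makes every term a variable, and that reassigning a sort-$\s$ variable commutes with passing to $\subs{\A}$---are precisely the ones that need checking.
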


\addsresult*

\begin{proof}
    \begin{enumerate}
        \item If $\T$ has the finite model property, given it is one-sorted, we may use \Cref{OS+FMP=>SF} to prove $\T$ is also stably finite. So we may prove that this, in turn, implies $\Tadds$ is stably finite, and therefore has the finite model property.

        \item Suppose now $\T$ is stably finite, let $\phi$ be a quantifier-free $\Sigmatwo$-theory, and let $\A$ be a $\Tadds$-interpretation that satisfies $\phi$. The facts $\subs{\A}$ satisfies the (quantifier-free) formula $\subsf{\phi}{\A}$, and $\T$ is stably finite, imply there exists a $\T$-interpretation $\C$ that satisfies $\subsf{\phi}{\A}$ with $|\s^{\C}|$ finite and satisfying $|\s^{\C}|\leq |\s^{\subs{\A}}|=|\s^{\A}|$. We then define the $\Tadds$-interpretation $\B$ with $\subs{\B}=\C$ and: $\s_{2}^{\B}=\vars_{\s_{2}}(\phi)^{\A}$ (which is a finite set that satisfies $|\s_{2}^{\B}|\leq|\s_{2}^{\A}|$); and $x^{\B}=x^{\A}$ for every variable $x\in\vars_{\s_{2}}(\phi)$ (and arbitrarily otherwise). We then have that, for every atomic subformula $x=y$ of $\phi$ with $x$ and $y$ of sort $\s_{2}$, $\B$ satisfies $x=y$ iff $\A$ satisfies $x=y$, meaning $\subsf{\phi}{\B}=\subsf{\phi}{\A}$ and, therefore, that $\B$ is a $\Tadds$-interpretation that satisfies $\phi$ with $|\s^{\B}|\leq|\s^{\A}|$, $|\s_{2}^{\B}|\leq|\s_{2}^{\A}|$, and both finite.
    \end{enumerate}
    \begin{enumerate}
        \item Assume now $\Tadds$ has the finite model property, let $\phi$ be a quantifier-free $\Sigmaone$-formula, and $\C$ be a $\T$-interpretation that satisfies $\phi$. Then $\A$ such that $\subs{\A}=\C$ and $|\s_{2}^{\A}|=1$ satisfies $\phi$, meaning there exists a $\T$-interpretation $\B$ that satisfies $\phi$ with both $|\s^{\B}|$ and $|\s_{2}^{\B}|$ finite. Then $\subs{\B}$ is a $\T$-interpretation that satisfies $\phi$, with $|\s^{\subs{\B}}|=|\s^{\B}|$ finite.
        
        \item Assume now $\Tadds$ is stably finite, and thus has the finite model property: as we proved above, this implies $\T$ has the finite model property; but, since $\T$ is one sorted, \Cref{OS+FMP=>SF} tells us that $\T$ must also be stably finite.
    \end{enumerate}
\end{proof}

\section{Proof of \Cref{thm:addfresult}}

\Cref{defining Ts from T} was proven in \cite{BarTolZoh}, and we restate it here as it defines some of the symbols we use.

\begin{lemma}\label{defining Ts from T}
Let $\Sigma_{n}$ be the empty signature with $n$ sorts $S=\{\s_{1}, \ldots, \s_{n}\}$, and $\Sigma^{n}_{s}$ the signature with sorts $S$ and only one function symbol $s$ of arity $\s_{1}\rightarrow\s_{1}$. Given a $\Sigma_{n}$-interpretation $\C$, we define a $\Sigma^{n}_{s}$-interpretation $\plusf{\C}$ by making:
\begin{enumerate}
\item $\s^{\plusf{\C}}=\s^{\C}$ for each $\s\in S$;
\item $s^{\plusf{\C}}(a)=a$ for all $a\in \s_{1}^{\C}$; 
\item and $x^{\plusf{\C}}=x^{\C}$ for every variable $x$.
\end{enumerate}
Reciprocally, given any $\Sigma^{n}_{s}$-interpretation $\A$, we may consider the $\Sigma_{n}$-interpretation $\subf{\A}$ with:
\begin{enumerate}
\item $\s^{\subf{\A}}=\s^{\A}$ for each $\s\in S$;
\item and $x^{\subf{\A}}=x^{\A}$, for every variable $x$.
\end{enumerate}
Finally, given a $\Sigma^{n}_{s}$-formula $\varphi$, we repeatedly replace each occurrence of $s(x)$ in $\varphi$ by $x$ until we obtain a $\Sigma_{n}$-formula
$\subff{\varphi}$. Then, it is true that a $\Sigma^{n}_{s}$-interpretation $\A$ that satisfies $\forall\, x.\:[s(x)=x]$ (where $x$ is of sort $\s_{1}$) also satisfies $\varphi$ iff $\subf{\A}$ satisfies $\subff{\varphi}$; of course, given that for any $\Sigma_{n}$-interpretation $\C$, $\subf{\plusf{\C}}=\C$, $\C$ satisfies a $\Sigma_{n}$-formula $\varphi$ iff $\plusf{\C}$ satisfies $\varphi$ (since $\subff{\varphi}=\varphi$).
\end{lemma}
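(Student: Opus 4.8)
The plan is to reduce the whole equivalence to a single observation about how terms evaluate under the hypothesis that $s$ is interpreted as the identity, and then to lift that observation to arbitrary formulas by structural induction. The first thing I would record is a purely syntactic fact about the signature: since $\Sigma^{n}_{s}$ has the single function symbol $s:\s_{1}\rightarrow\s_{1}$, every term of sort $\s_{1}$ has the shape $s^{k}(x)$ for a variable $x$ of sort $\s_{1}$ and some $k\in\mathbb{N}$ (using the recursively defined iterates $s^{k}(x)$), while every term of any other sort is simply a variable. Hence the rewriting that produces $\subff{\varphi}$ — repeatedly replacing $s(t)$ by $t$ — terminates and returns, on each such term, its underlying variable, so that $\subff{s^{k}(x)}=x$ and $\subff{y}=y$; this makes $\subff{\cdot}$ well-defined on terms, and then on formulas by acting on the atoms and commuting with connectives and quantifiers (in particular $\subff{\Forall{y}\psi}=\Forall{y}\subff{\psi}$).

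Next I would prove the key term lemma: if $\A$ is a $\Sigma^{n}_{s}$-interpretation with $\A\models\Forall{x}[s(x)=x]$, i.e.\ $s^{\A}$ is the identity on $\s_{1}^{\A}$, then $t^{\A}=(\subff{t})^{\subf{\A}}$ for every term $t$. For a variable of any sort this is immediate, since $\subf{\A}$ keeps the same sort domains and variable assignments as $\A$, so $x^{\A}=x^{\subf{\A}}$. For $t=s^{k}(x)$ I induct on $k$: the base case $k=0$ is the variable case, and for the step, $(s^{k+1}(x))^{\A}=s^{\A}\big((s^{k}(x))^{\A}\big)=(s^{k}(x))^{\A}$ because $s^{\A}$ is the identity, so the value is unchanged and the induction hypothesis gives $(s^{k}(x))^{\A}=x^{\subf{\A}}=(\subff{s^{k+1}(x)})^{\subf{\A}}$.

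With the term lemma in hand, the main equivalence follows by structural induction on $\varphi$ (for $\A$ satisfying $\Forall{x}[s(x)=x]$). The atomic case is exactly the term lemma: as the only predicates are equalities, $\A\models t_{1}=t_{2}$ iff $t_{1}^{\A}=t_{2}^{\A}$ iff $(\subff{t_{1}})^{\subf{\A}}=(\subff{t_{2}})^{\subf{\A}}$ iff $\subf{\A}\models\subff{(t_{1}=t_{2})}$. The Boolean connectives commute with both $\subff{\cdot}$ and satisfaction and are routine. For a quantifier, say $\varphi=\Forall{y}\psi$, the one point to check is that the two sides still align under variant assignments: any $y$-variant $\A'$ of $\A$ still has $s^{\A'}=s^{\A}=\mathrm{id}$ (reassigning a variable does not touch the interpretation of $s$), and its reduct $\subf{\A'}$ is precisely the corresponding $y$-variant of $\subf{\A}$, so applying the induction hypothesis to $\psi$ at every variant yields the equivalence. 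The ``of course'' rider is then immediate: by construction $\subf{\plusf{\C}}=\C$ (the domains and variable assignments agree and $\subf{\cdot}$ merely forgets the interpretation of $s$), and $\plusf{\C}\models\Forall{x}[s(x)=x]$ since $s^{\plusf{\C}}$ was defined to be the identity; as a $\Sigma_{n}$-formula $\varphi$ contains no $s$ we have $\subff{\varphi}=\varphi$, so the main equivalence with $\A=\plusf{\C}$ gives $\plusf{\C}\models\varphi$ iff $\subf{\plusf{\C}}\models\subff{\varphi}$, i.e.\ iff $\C\models\varphi$.

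I expect the only real friction to be bookkeeping in the quantifier step — making precise that $\subf{\cdot}$ commutes with passing to variant assignments and that the ``$s$ is the identity'' hypothesis is preserved along variants. Everything else, including the termination and unambiguity of the $\subff{\cdot}$ rewriting, is forced by the fact that the sole function symbol is a unary endomorphism of $\s_{1}$, so the proof is essentially a direct double induction.
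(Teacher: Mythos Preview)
Your proof is correct and follows the natural approach: a term lemma establishing that under the identity interpretation of $s$ every term evaluates to the same value as its $\subff{\cdot}$-reduct, followed by a routine structural induction on formulas. The paper itself does not prove this lemma; it is restated from \cite{BarTolZoh} purely to fix notation, so there is no paper proof to compare against. Your argument is exactly the standard one and would serve as a self-contained proof here.
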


\addfresult*

\begin{proof}
    \begin{enumerate}
        \item If $\T$ has the finite model property, let us take a quantifier-free $\Sigma^{s}_{n}$-formula $\phi$, and a $\Taddf$-interpretation $\A$ that satisfies $\phi$. We then have that $\subf{\A}$ satisfies $\subff{\phi}$, and so there exists a $\T$-interpretation $\C$, with $|\s^{\A}|$ finite for each $\s\in S$, that satisfies $\subff{\phi}$. Of course, then $\plusf{\A}$ is a $\Taddf$-interpretation, finite in every one of its domains, that satisfies $\phi$.

        \item Suppose $\T$ is stably finite, let $\phi$ be a quantifier-free $\Sigma^{s}_{n}$-formula, and $\A$ a $\Taddf$-interpretation that satisfies $\phi$, meaning $\subf{\A}$ satisfies $\subff{\phi}$. Because $\subff{\phi}$ is quantifier-free, and $\T$ is stably finite, there exists a $\T$-interpretation $\C$ that satisfies $\subff{\phi}$ with $|\s^{\C}|\leq |\s^{\subf{\A}}|=|\s^{\A}|$ finite for each $\s\in S$. And then $\plusf{\C}$ is a $\Taddf$-interpretation that satisfies $\phi$ (since $\subf{\plusf{\C}}=\C$), with $|\s^{\plusf{\C}}|=|\s^{\C}|\leq |\s^{\A}|$ finite for each $\s\in S$.
    \end{enumerate}
    \begin{enumerate}
        \item Suppose $\Taddf$ has the finite model property, let $\phi$ be a quantifier-free $\Sigma_{n}$-formula, and $\C$ a $\T$-interpretation that satisfies $\phi$. Then $\plusf{\C}$ satisfies $\phi$, and since $\Taddf$ has the finite model property there exists a $\Taddf$-interpretation $\A$ that satisfies $\phi$ with $|\s^{\A}|$ finite for each $\s\in S$. Of course, $\subf{\A}$ is then a $\T$-interpretation that satisfies $\phi$ with $|\s^{\subf{\A}}|=|\s^{\A}|$ finite for each $\s\in S$.
        
        \item Assume now $\Taddf$ is stably finite, let $\phi$ be a quantifier-free $\Sigma_{n}$-formula, and $\C$ be a $\T$-interpretation that satisfies $\phi$: $\plusf{\C}$ is then a $\Taddf$-interpretation that satisfies $\phi$, and there must then exist a $\Taddf$-interpretation $\A$, with $|\s^{\A}|$ finite and $|\s^{\A}|\leq |\s^{\plusf{\C}}|$ for each $\s\in S$, that satisfies $\phi$; of course, $\subf{\A}$ is then a $\T$-interpretation that satisfies $\phi$ with $|\s^{\subf{\A}}|$ finite and $|\s^{\subf{\A}}|\leq |\s^{\C}|$ for each $\s\in S$.
    \end{enumerate}
\end{proof}

\section{Proof of \Cref{thm:addncresult}}

We will use some of the symbols found in \Cref{defining Ts from T}. Moreover, given a $\Sigma_{s}^{n}$ formula $\phi$ and a $\Sigma_{s}^{n}$-interpretation $\A$ that satisfies $\phi$, we define the auxiliary formula $\subncf{\phi}{\A}$. To do that, let: $\vars_{\s_{1}}(\phi)=\{z_{1}, \ldots, z_{m}\}$; $M_{i}$ be the maximum of $j$ such that $s^{j}(z_{i})$ appears in $\phi$; and $V=\{y_{i,j} : 1\leq i\leq m, 0\leq j\leq M_{i}+2\}$ be a set of fresh variables of sort $\s_{1}$.
\begin{enumerate}
\item $\dagg{\phi}$ is defined by replacing each atomic subformula $s^{j}(z_{i})=s^{q}(z_{p})$ of $\phi$ by $y_{i,j}=y_{p,q}$;
\item $\A^{\prime}$ is the $\T$-interpretation that differs from $\C$ at most on $V$, where $y^{\A^{\prime}}_{i,j}=(s^{\A})^{j}(z_{i}^{\A})$;
\item $\delta_{V}$ is the arrangement induced on $V$ by making $x$ related to $y$ iff $x^{\A^{\prime}}=y^{\A^{\prime}}$.
\end{enumerate}
We then make $\subncf{\phi}{\A}=\dagg{\phi}\wedge\delta_{V}$.

\addncresult*

\begin{proof} 

\begin{enumerate}
    \item Suppose $\T$ has the finite model property, let $\phi$ be a quantifier-free $\Sigma_{s}^{n}$-formula, and $\A$ be a $\Taddnc$-interpretation that satisfies $\phi$: there is then an interpretation $\A^{\prime}$ differing from $\A$ at most on the value assigned to variables (thus making of $\A^{\prime}$ a $\Taddnc$-interpretation) that satisfies the $\Sigma_{n}$ formula $\subncf{\phi}{\A}$. Then $\subf{\A^{\prime}}$ is a $\T$-interpretation that satisfies $\subncf{\phi}{\A}$ and, by assumption on $\T$, there exists a $\T$-interpretation $\C$ that satisfies $\subncf{\phi}{\A}$ with $\s_{k}^{\C}$ finite for each $1\leq k\leq n$.

    Now, we define an interpretation $\B$. First, we make $\s_{k}^{\B}=\s_{k}^{\C}$ for all $1\leq k\leq n$, so $\B$ has all domains finite. Then, $s^{\B}(y_{i,j}^{\C})=y_{i,j+1}^{\C}$ for each $1\leq i\leq m$ and $0\leq j\leq M_{i}+1$ (notice that that also defines $s^{\B}(y_{i,M_{i}+2}^{\C})$, since $y_{i,M_{i}+2}$ must equal either $y_{i,M_{i}}$ or $y_{i,M_{i}+1}$ in $\C$), and $s^{\B}(a)=a$ for each $a\in\s_{1}^{\C}\setminus\vars_{\s_{1}}(\subncf{\phi}{\A})^{\C}$ (so that $\B$ always satisfies $\psiv$, given $\A$ satisfies that formula and therefore $y_{i,j+2}$ equals either $y_{i,j}$ or $y_{i,j+1}$ in $\A^{\prime}$, and thus in $\C$): this way, $\B$ is then a $\T$-interpretation that satisfies $\psiv$, or in other words, a $\Taddnc$-interpretation. Finally, we make $z_{i}=y_{i,0}^{\C}$, and arbitrary for other variables: this way $\B$ satisfies $\phi$, and we have proven $\Taddnc$ has the finite model property.
    
    \item The proof is the same as the one above if $\T$ is stably finite: let $\phi$ be a quantifier-free $\Sigma_{s}^{n}$-formula, $\A$ be a $\Taddnc$-interpretation that satisfies $\phi$, and $\A^{\prime}$ be a $\Taddnc$-interpretation differing from $\A$ at most on the value assigned to variables  that satisfies the $\Sigma_{n}$ formula $\subncf{\phi}{\A}$. 
    
    $\subf{\A^{\prime}}$ is then a $\T$-interpretation that satisfies $\subncf{\phi}{\A}$ and, by the fact $\T$ is assumed to be stably finite, there exists a $\T$-interpretation $\C$ that satisfies $\subncf{\phi}{\A}$ with all domains finite, and $|\s_{k}^{\C}|\leq |\s_{k}^{\subf{\A^{\prime}}}|=|\s_{k}^{\A}|$ for each $1\leq k\leq n$. We then carefully define an interpretation $\B$ with the same domains as $\C$ that turns out to be a $\Taddnc$-interpretation that satisfies $\phi$, and since $|\s_{k}^{\B}|=|\s_{k}^{\C}|\leq |\s_{k}^{\A}|$ we are done.
\end{enumerate}

Now we deal with the cases where $\Taddnc$ has the desired properties, which is much easier.

\begin{enumerate}
    \item If $\Taddnc$ has the finite model property, let $\phi$ be a $\Sigma_{n}$-formula and $\C$ be a $\T$-interpretation that satisfies $\phi$: $\plusf{\C}$ is then a $\Taddnc$-interpretation that satisfies $\phi$ by \Cref{defining Ts from T}; so there must exist a $\Taddnc$-interpretation $\A$, with all domains finite, that satisfies $\phi$. Again by \Cref{defining Ts from T}, we have $\subf{\A}$ is a $\T$-interpretation that satisfies $\phi$, and its domains are finite, so we are done.
    \item Finally, suppose $\Taddnc$ is stably finite, and let $\phi$ be a $\Sigma_{n}$-formula and $\C$ be a $\T$-interpretation that satisfies $\phi$. By \Cref{defining Ts from T}, $\plusf{\C}$ is a $\Taddnc$-interpretation that satisfies $\phi$, and given $\Taddnc$ is stably finite, there must exist a $\Taddnc$-interpretation $\A$, with all domains finite, that satisfies $\phi$ and $|\s_{k}^{\A}|\leq |\s_{k}^{\plusf{\C}}|=|\s_{k}^{\C}|$ for each $1\leq k\leq n$. Using \Cref{defining Ts from T}, $\subf{\A}$ is a $\T$-interpretation that satisfies $\phi$, its domains are finite, and $|\s_{k}^{\subf{\A}}|=|\s_{k}^{\A}|\leq|\s_{k}^{\C}|$ for each $1\leq k\leq n$, meaning the proof is done.
\end{enumerate}

\end{proof}

\section{Proofs for Theories from \Cref{tab-theories-tab-old}}

For the theories of \Cref{tab-theories-tab-old},
most properties were proven in \cite{BarTolZoh}.
Here we only consider the new properties:
stable finiteness and the finite model property.

\subsection{$\Tgeqn$}

\begin{lemma}
    The $\Sigmaone$-theory $\Tgeqn$ is stably finite, and thus has the finite model property, with respect to its only sort, for every $n\in\mathbb{N}\setminus\{0\}$.
\end{lemma}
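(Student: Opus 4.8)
The plan is to prove stable finiteness directly, and then obtain the finite model property for free from \Cref{SF=>FMP}. Recall that, for each $n\in\mathbb{N}\setminus\{0\}$, $\Tgeqn$ is the class of all $\Sigmaone$-interpretations whose domain has at least $n$ elements, and that $\Sigmaone$ is empty; hence every quantifier-free $\Sigmaone$-formula is a Boolean combination of equalities between variables, and its truth value in an interpretation depends only on which of these equalities hold.

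First I would fix a quantifier-free $\Sigmaone$-formula $\phi$ and a $\Tgeqn$-interpretation $\A$ satisfying it, and set $k=\max\{n,|\vars(\phi)^{\A}|\}$. Since $\vars(\phi)$ is finite, so is $k$; moreover $k\leq|\s^{\A}|$, because $|\s^{\A}|\geq n$ (as $\A\vDash\psi_{\geq n}$) and $|\s^{\A}|\geq|\vars(\phi)^{\A}|$ trivially. Next I would construct the witnessing interpretation $\B$: let $\s^{\B}$ be any set with $\vars(\phi)^{\A}\subseteq\s^{\B}$ and $|\s^{\B}|=k$ (possible since $k\geq|\vars(\phi)^{\A}|$), put $x^{\B}=x^{\A}$ for every $x\in\vars(\phi)$, and define $x^{\B}$ arbitrarily for the remaining variables. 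Then $|\s^{\B}|=k\geq n$, so $\B$ is a $\Tgeqn$-interpretation; and since the signature is empty and $x^{\B}=y^{\B}$ iff $x^{\A}=y^{\A}$ for all $x,y\in\vars(\phi)$, we get $\B\vDash\phi$. Finally $|\s^{\B}|=k\leq|\s^{\A}|$ and $k$ is finite, which is exactly what stable finiteness with respect to $\{\s\}$ requires. Applying \Cref{SF=>FMP} then yields the finite model property. This argument is uniform in $n$.

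There is essentially no obstacle here; the only point requiring a moment of care is the observation that, over an empty signature, satisfaction of a quantifier-free formula is invariant under any change of interpretation that preserves the equality pattern on its variables, which is what licenses shrinking an infinite model down to the finite interpretation $\B$. As an alternative route, one could avoid the explicit construction altogether: $\Tgeqn$ was shown in \cite{BarTolZoh} to be strongly finitely witnessable, hence stably finite by \Cref{SFW=>SF}, and hence has the finite model property by \Cref{SF=>FMP}.
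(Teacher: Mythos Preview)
Your proof is correct. The direct construction works: the key facts are that $k$ is finite, that $k\leq|\s^{\A}|$ (since both $n$ and $|\vars(\phi)^{\A}|$ are bounded by $|\s^{\A}|$), and that over an empty signature the satisfaction of a quantifier-free formula depends only on the equality pattern among its variables.

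The paper, however, does not carry out this direct argument. Its proof is exactly the ``alternative route'' you mention at the end: it invokes the result from \cite{BarTolZoh} that $\Tgeqn$ is strongly finitely witnessable, applies \Cref{SFW=>SF} to obtain stable finiteness, and then \Cref{SF=>FMP} for the finite model property. So your primary argument is a genuinely different, more elementary proof: it is self-contained and avoids both the machinery of (strong) witnesses and any appeal to prior work. The paper's approach, by contrast, is a two-line reduction that buys brevity at the cost of relying on a heavier imported fact. Since you already identified the paper's route as an alternative, you have in effect covered both approaches.
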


\begin{proof}
From \cite{BarTolZoh}, $\Tgeqn$ is strongly finitely witnessable, and from \Cref{SFW=>SF} we get that $\Tgeqn$ is stably finite, and thus has the finite model property by \Cref{SF=>FMP}.
\end{proof}

\subsection{$\Teven$}

\begin{lemma}
    The $\Sigmaone$-theory $\Teven$ is stably finite, and thus has the finite model property, with respect to its only sort.
\end{lemma}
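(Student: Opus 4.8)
The plan is to prove stable finiteness directly and then read off the finite model property from $\Cref{SF=>FMP}$. Recall first that a $\Sigma_{1}$-structure is a model of $\Teven$ exactly when its domain has even finite cardinality or is infinite, since $\ax(\Teven)=\{\neg\psi_{=2k+1}:k\in\mathbb{N}\}$ rules out precisely the odd finite cardinalities. So the task is: given a quantifier-free $\Sigma_{1}$-formula $\phi$ and a $\Teven$-interpretation $\A$ satisfying it, produce a $\Teven$-interpretation $\B$ satisfying $\phi$ with $|\s^{\B}|$ finite and $|\s^{\B}|\leq|\s^{\A}|$.

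First I would split on whether $|\s^{\A}|$ is finite. If it is, then $\B=\A$ already works, since $\A$ itself is a finite $\Teven$-model of $\phi$. The interesting case is $|\s^{\A}|$ infinite, where the bound $|\s^{\B}|\leq|\s^{\A}|$ is vacuous and I only need \emph{some} finite even model of $\phi$. Here I would use the fact (also used in the appendix proof of $\Cref{OS+ES+SM+FMP=>FW}$) that over an empty signature a quantifier-free formula is a Boolean combination of equalities between variables, so its truth value in any interpretation is determined solely by the partition of $\vars(\phi)$ into equality classes. Letting $m$ be the number of such classes induced by $\A$ (so $m\leq|\vars(\phi)|<\omega$), any structure of cardinality at least $m$ admits a variable assignment realizing the same partition, hence satisfying $\phi$. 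I would then choose $N$ to be the least even integer with $N\geq\max\{2,m\}$, let $\B$ be a structure with $|\s^{\B}|=N$ together with such an assignment; then $\B$ is a $\Teven$-interpretation because $N$ is even, it satisfies $\phi$, and $N$ is finite, as required.

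Having established stable finiteness with respect to the only sort, the finite model property follows immediately from $\Cref{SF=>FMP}$. I do not expect any genuine obstacle: the only points needing mild care are choosing the padding cardinality $N$ to be simultaneously even and large enough to host all equality classes of the free variables, and invoking that quantifier-free satisfaction over an empty signature is invariant under replacing the structure by any other structure inducing the same arrangement on $\vars(\phi)$ (so that shrinking the infinite domain of $\A$ down to a finite even size never destroys satisfaction). As an alternative, one could instead recall from \cite{BarTolZoh} that $\Teven$ is finitely witnessable, apply $\Cref{FW=>FMP}$ to obtain the finite model property, and then $\Cref{OS+FMP=>SF}$ (valid since $\Teven$ is one-sorted) to recover stable finiteness; but the direct argument above is self-contained and matches the pattern used for $\Tgeqn$.
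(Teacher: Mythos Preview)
Your direct argument is correct: over the empty one-sorted signature, the truth of a quantifier-free $\phi$ depends only on the arrangement induced on $\vars(\phi)$, so padding up to the least even $N\geq\max\{2,m\}$ produces a finite $\Teven$-model of $\phi$, and the finite case is trivial. Stable finiteness then yields the finite model property via \Cref{SF=>FMP}.

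The paper, however, does \emph{not} argue directly. Its proof is precisely the alternative you sketch at the end: it cites \cite{BarTolZoh} for finite witnessability of $\Teven$, applies \Cref{FW=>FMP} to get the finite model property, and then invokes \Cref{OS+FMP=>SF} (using that $\Teven$ is one-sorted) to conclude stable finiteness. So your ``alternative'' is in fact the paper's actual route, and your main argument is the genuinely different one. Your version is more self-contained (it does not rely on the finite witnessability result from \cite{BarTolZoh}), while the paper's version is shorter because it leverages machinery already established. One small inaccuracy: you say your direct construction ``matches the pattern used for $\Tgeqn$,'' but the paper's proof for $\Tgeqn$ is also indirect (it goes $\strfinwit\Rightarrow\stafin\Rightarrow\finmodpro$ via \Cref{SFW=>SF} and \Cref{SF=>FMP}), not a hands-on model construction.
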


\begin{proof}
From \cite{BarTolZoh}, $\Teven$ is finitely witnessable, and from \Cref{FW=>FMP} we get that $\Teven$ has the finite model property; by \Cref{OS+FMP=>SF}, and the fact $\Teven$ is a $\Sigmaone$-theory, we get it is also stably finite.
\end{proof}

\subsection{$\Tinfty$}

\begin{lemma}
    The $\Sigmaone$-theory $\Tinfty$ does not have the finite model property, and thus is not stably finite, with respect to $\{\s\}$.
\end{lemma}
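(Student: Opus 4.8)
The plan is to exhibit a single quantifier-free formula that is $\Tinfty$-satisfiable but has no finite model, and then to deduce the failure of stable finiteness from \Cref{SF=>FMP}.

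First I would recall that, by its axiomatization $\ax(\Tinfty)=\{\psi_{\geq k} : k\in\mathbb{N}\}$, a $\Sigmaone$-interpretation $\A$ is a $\Tinfty$-interpretation if and only if it satisfies $\psi_{\geq k}$ for every $k$, which holds exactly when $\s^{\A}$ is infinite. In particular $\Tinfty$ is nonempty: any interpretation whose domain is countably infinite (say $\mathbb{N}$) is a $\Tinfty$-interpretation.

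Next I would take $\phi$ to be a quantifier-free $\Sigmaone$-tautology, e.g. $\phi := (x=x)$ for a variable $x$ of sort $\s$. This $\phi$ is $\Tinfty$-satisfiable, being satisfied by every $\Tinfty$-interpretation. However, by the observation above, every $\Tinfty$-interpretation $\A$ that satisfies $\phi$ has $|\s^{\A}|$ infinite; hence there is no $\Tinfty$-interpretation with $|\s^{\A}|$ finite that satisfies $\phi$. Therefore $\Tinfty$ does not have the finite model property with respect to $\{\s\}$.

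Finally, since \Cref{SF=>FMP} states that stable finiteness implies the finite model property, the contrapositive yields that $\Tinfty$ is not stably finite with respect to $\{\s\}$. There is essentially no obstacle here; the only point worth making explicit is that the formula witnessing the failure of the finite model property may be taken trivial, because it is the theory itself, and not any particular formula, that forces infinite domains.
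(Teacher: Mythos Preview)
Your proof is correct and takes essentially the same approach as the paper: both observe that $\Tinfty$ has models but no finite ones, so the finite model property fails (the paper simply calls this ``obvious''), and the failure of stable finiteness follows from \Cref{SF=>FMP}. Your version just makes the witnessing formula and the contrapositive step explicit.
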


\begin{proof}
    Obvious, since $\Tinfty$ has models, but none of them are finite.
\end{proof}

\subsection{$\Tninfty$}

\begin{lemma}
    The $\Sigmaone$-theory $\Tninfty$ does not have the finite model property, and thus is not stably finite, with respect to $\{\s\}$.
\end{lemma}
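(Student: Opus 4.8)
The plan is to exhibit a quantifier-free $\Sigmaone$-formula that is $\Tninfty$-satisfiable but has no finite model. First I would record the obvious characterization of the models of $\Tninfty$: since $\ax(\Tninfty)=\{\psi_{=n}\vee\psi_{\geq k} : k\in\mathbb{N}\}$, a structure $\A$ is a $\Tninfty$-model iff for every $k$ either $|\s^{\A}|=n$ or $|\s^{\A}|\geq k$; that is, iff $|\s^{\A}|=n$ or $|\s^{\A}|$ is infinite. In particular $\Tninfty$ does have models, e.g.\ any countably infinite $\Sigmaone$-structure.

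Next I would take fresh variables $x_{1},\ldots,x_{n+1}$ of sort $\s$ and consider the quantifier-free formula $\phi=\bigwedge_{1\leq i<j\leq n+1}\neg(x_{i}=x_{j})$ (the matrix of $\psi_{\geq n+1}$). On the one hand $\phi$ is $\Tninfty$-satisfiable: pick any infinite $\Tninfty$-model and interpret $x_{1},\ldots,x_{n+1}$ as $n+1$ pairwise distinct elements. On the other hand, any interpretation satisfying $\phi$ has $|\s^{\A}|\geq n+1>n$, so by the characterization above it must be infinite. Hence $\phi$ is a $\Tninfty$-satisfiable quantifier-free formula with no finite model, which is exactly the failure of the finite model property with respect to $\{\s\}$.

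Finally, since stable finiteness implies the finite model property (\Cref{SF=>FMP}), the failure of the latter yields the failure of the former, so $\Tninfty$ is not stably finite with respect to $\{\s\}$ either. There is no real obstacle here; the only point requiring a moment's care is to use the quantifier-free matrix $\phi$ rather than the (quantified) cardinality formula $\psi_{\geq n+1}$, so that the counterexample is legitimately quantifier-free as demanded by the definitions of the two properties.
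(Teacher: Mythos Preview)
Your proof is correct and is essentially identical to the paper's own argument: both use the quantifier-free formula $\bigwedge_{1\leq i<j\leq n+1}\neg(x_{i}=x_{j})$, observe it is satisfied in any infinite $\Tninfty$-model but forces more than $n$ elements and hence cannot be satisfied in any finite $\Tninfty$-model. Your write-up is in fact slightly more careful, spelling out the characterization of $\Tninfty$-models and the point about needing the quantifier-free matrix rather than $\psi_{\geq n+1}$.
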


\begin{proof}
    Consider the formula
    $\bigwedge_{1\leq i<j\leq n+1}\neg(x_{i}=x_{j})$.
    it is satisfied by any infinite model of $\Tninfty$, but it cannot be satisfied by any finite $\Tninfty$-interpretation $\A$, since necessarily $|\s^{\A}|=n$.
\end{proof}

\subsection{$\Tleqn$}

\begin{lemma}
    The $\Sigmaone$-theory $\Tleqn$ is stably finite, and thus has the finite model property, with respect to $\{\s\}$.
\end{lemma}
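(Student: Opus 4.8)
The plan is to observe that $\Tleqn$ is essentially trivial from the point of view of finiteness: its axiom $\psi_{\leq n}$ forces \emph{every} $\Tleqn$-interpretation to have a domain of size at most $n$. So I would first record this: if $\A$ is any $\Tleqn$-interpretation, then $\A \vDash \psi_{\leq n}$, and hence $|\sigma^{\A}| \leq n$, which is finite.

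With that in hand, stable finiteness is immediate. Given a quantifier-free $\Sigma_1$-formula $\phi$ and a $\Tleqn$-interpretation $\A$ satisfying $\phi$, simply take $\B = \A$. Then $\B$ is a $\Tleqn$-interpretation satisfying $\phi$, $|\sigma^{\B}| = |\sigma^{\A}| \leq n$ is finite, and trivially $|\sigma^{\B}| \leq |\sigma^{\A}|$. This witnesses stable finiteness of $\Tleqn$ with respect to $\{\sigma\}$. The finite model property then follows directly from \Cref{SF=>FMP}.

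There is essentially no obstacle here; the only thing to be careful about is to state the argument in terms of the formal definitions given in the paper (in particular, that the chosen $\B$ genuinely satisfies both clauses of stable finiteness). Alternatively, one could route the proof through results already available: \cite{BarTolZoh} shows $\Tleqn$ is (strongly) finitely witnessable, so \Cref{SFW=>SF} gives stable finiteness, or one could combine \Cref{FW=>FMP} with \Cref{OS+FMP=>SF} since $\Tleqn$ is one-sorted. I would mention this only in passing, since the direct argument via "all models are finite of size $\leq n$" is the cleanest.
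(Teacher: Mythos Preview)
Your proposal is correct and matches the paper's proof essentially verbatim: the paper simply notes that any $\Tleqn$-interpretation $\A$ satisfying $\phi$ is already finite with $|\s^{\A}|\leq|\s^{\A}|$, i.e., it takes $\B=\A$. The alternative routes you mention are fine but unnecessary here.
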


\begin{proof}
    Let $\phi$ be a quantifier-free $\Sigmaone$-formula, and $\A$ a $\Tleqn$-interpretation that satisfies $\phi$: then $\A$ is already a finite $\Tleqn$-interpretation that satisfies $\phi$ with $|\s^{\A}|\leq|\s^{\A}|$.
\end{proof}

\subsection{$\Tmn$}

\begin{lemma}
    The $\Sigmaone$-theory $\Tmn$ is stably finite, and thus has the finite model property, with respect to $\{\s\}$.
\end{lemma}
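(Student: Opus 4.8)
The plan is to observe that every model of $\Tmn$ is already finite, so no modification of the given interpretation is needed. Recall that $\ax(\Tmn)=\{\psi_{=m}\vee\psi_{=n}\}$; hence every $\Tmn$-interpretation $\A$ satisfies either $\psi^{\s}_{=m}$ or $\psi^{\s}_{=n}$, so $|\s^{\A}|\in\{m,n\}$, which in particular is finite.

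With that in hand, stable finiteness is immediate. Given a quantifier-free $\Sigmaone$-formula $\phi$ and a $\Tmn$-interpretation $\A$ that satisfies $\phi$, I would simply take $\B=\A$: it is a $\Tmn$-interpretation satisfying $\phi$, $|\s^{\B}|$ is finite by the observation above, and trivially $|\s^{\B}|=|\s^{\A}|\leq|\s^{\A}|$. This verifies the definition of stable finiteness with respect to $\{\s\}$.

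Finally, the finite model property follows from stable finiteness by \Cref{SF=>FMP} (alternatively, one could note directly that any $\Tmn$-interpretation satisfying $\phi$ is finite, or invoke \Cref{OS+FMP=>SF} together with the finite model property). This argument is essentially the same as the one used for $\Tleqn$, exploiting that the axiomatization forces all models into a fixed finite cardinality. There is no genuine obstacle here: the only thing to check is that the axioms indeed bound the cardinality, which is immediate from the cardinality formulas in \Cref{card-formulas}.
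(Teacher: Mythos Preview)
Your proof is correct and follows essentially the same approach as the paper: both observe that every $\Tmn$-interpretation already has finite domain (of cardinality $m$ or $n$), so taking $\B=\A$ immediately verifies stable finiteness. The paper's proof is just a terser statement of exactly this argument.
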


\begin{proof}
    Let $\phi$ be a quantifier-free $\Sigmaone$-formula, and $\A$ a $\Tmn$-interpretation that satisfies $\phi$: of course $\A$ is already a finite $\Tmn$-interpretation that satisfies $\phi$.
\end{proof}

\subsection{$\Ttwo$}

\begin{lemma}
    The $\Sigmatwo$-theory $\Ttwo$ has the finite model property with respect to $\{\s\}$.
\end{lemma}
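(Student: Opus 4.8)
The plan is to give a direct model-surgery argument. First I would record the shape of $\Ttwo$-models implied by the axiomatization $\ax(\Ttwo)=\{(\psi^{\s}_{=2}\wedge\psi^{\s_{2}}_{\geq k})\vee(\psi^{\s}_{\geq 3}\wedge \psi^{\s_{2}}_{\geq 3}) : k\in\mathbb{N}\}$: for every $k$ a $\Ttwo$-interpretation $\B$ must satisfy the displayed disjunction, so $|\s^{\B}|\geq 2$, and moreover $|\s_{2}^{\B}|\geq 3$ always (if $|\s^{\B}|=2$ the first disjunct forces $|\s_{2}^{\B}|\geq k$ for all $k$, hence $\s_{2}^{\B}$ infinite; if $|\s^{\B}|\geq 3$ the second disjunct forces $|\s_{2}^{\B}|\geq 3$). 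The only two facts I actually need are: every $\Ttwo$-model has $|\s_{2}^{\B}|\geq 3$; and, conversely, any $\Sigmatwo$-structure $\A$ with $|\s^{\A}|\geq 3$ and $|\s_{2}^{\A}|\geq 3$ satisfies $\psi^{\s}_{\geq 3}\wedge\psi^{\s_{2}}_{\geq 3}$ and is therefore automatically a $\Ttwo$-model.

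Next, given a $\Ttwo$-satisfiable quantifier-free $\Sigmatwo$-formula $\phi$, fix a $\Ttwo$-interpretation $\B$ with $\B\vDash\phi$. Let $E$ be the equivalence relation on $\vars_{\s}(\phi)$ defined by $x\,E\,y$ iff $x^{\B}=y^{\B}$, let $n$ be the number of $E$-classes, and set $N=\max\{3,n\}$. I would then build $\A$ as follows: let $\s^{\A}$ be any set with $|\s^{\A}|=N$; let $\s_{2}^{\A}=\s_{2}^{\B}$; let $x^{\A}=x^{\B}$ for every $x\in\vars_{\s_{2}}(\phi)$; fix an injection from the $E$-classes into $\s^{\A}$ and let $x^{\A}$ be the image of the class of $x$ for every $x\in\vars_{\s}(\phi)$; and assign all remaining variables arbitrarily in the appropriate domain.

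Then I would check three things. (i) $\A$ is a $\Ttwo$-interpretation: $|\s^{\A}|=N\geq 3$ and $|\s_{2}^{\A}|=|\s_{2}^{\B}|\geq 3$, so $\A$ satisfies $\psi^{\s}_{\geq 3}\wedge\psi^{\s_{2}}_{\geq 3}$ and hence every axiom of $\Ttwo$. (ii) $\A\vDash\phi$: since $\Sigmatwo$ is empty, the truth value of a quantifier-free formula depends only on which pairs of its variables are assigned equal elements; by construction $x^{\A}=y^{\A}$ iff $x^{\B}=y^{\B}$ for all same-sort $x,y\in\vars(\phi)$ (the $\s$-case is exactly the choice of the injection, the $\s_{2}$-case is immediate), and equalities between variables of different sorts do not occur, so $\A$ and $\B$ agree on all atomic subformulas of $\phi$ and thus on $\phi$. (iii) $|\s^{\A}|=N$ is finite. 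Hence $\Ttwo$ has the finite model property with respect to $\{\s\}$.

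There is essentially no hard step: the one thing to notice is that $\s_{2}$ need not be shrunk at all (it may legitimately stay infinite, e.g.\ when $|\s^{\B}|=2$), so it suffices to push every model into the ``$|\s|\geq 3$ and $|\s_{2}|\geq 3$'' regime while leaving $\s_{2}^{\B}$ untouched, which makes the re-interpretation a $\Ttwo$-model for free. (Alternatively, one could skip the construction and instead invoke that $\Ttwo$ is finitely witnessable by~\cite{SZRRBT-21} together with \Cref{FW=>FMP}, but the argument above is self-contained.)
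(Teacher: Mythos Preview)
Your proof is correct and self-contained, but the paper takes exactly the shortcut you mention parenthetically at the end: it simply cites that $\Ttwo$ is finitely witnessable from \cite{SZRRBT-21} and applies \Cref{FW=>FMP}, nothing more. Your direct model-surgery argument trades the one-line citation for an explicit construction, making transparent why the $\s$-domain can always be cut to finite size (one simply pushes every model into the $|\s|\geq 3$, $|\s_{2}|\geq 3$ regime and leaves $\s_{2}$ untouched). One small observation: the paper's route, via finite witnessability with respect to both sorts, in fact yields the finite model property with respect to $\{\s,\s_{2}\}$, which is what line~4 of \Cref{tab-summary} records; your construction, as you explicitly note, leaves $\s_{2}^{\A}$ possibly infinite and so proves only the lemma as literally stated (w.r.t.\ $\{\s\}$ alone).
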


\begin{proof}
From \cite{SZRRBT-21}, $\Ttwo$ is finitely witnessable; and from \Cref{FW=>FMP}, we get that it has also the finite model property.

\end{proof}

\begin{lemma}
    The $\Sigmatwo$-theory $\Ttwo$ is not stably finite with respect to $\{\s\}$.
\end{lemma}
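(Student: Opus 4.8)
The plan is to read the model class of $\Ttwo$ straight off its axiomatization and then exhibit one interpretation that cannot be shrunk. Taking the axiom $(\psi^{\s}_{=2}\wedge\psi^{\s_{2}}_{\geq k})\vee(\psi^{\s}_{\geq 3}\wedge\psi^{\s_{2}}_{\geq 3})$ for all $k\in\mathbb{N}$ simultaneously, I would show that every $\Ttwo$-interpretation $\A$ is of exactly one of two kinds: either $|\s^{\A}|=2$, in which case the right disjunct fails and so $|\s_{2}^{\A}|\geq k$ must hold for every $k$, forcing $\s_{2}^{\A}$ to be infinite; or $|\s^{\A}|\geq 3$, and then $|\s_{2}^{\A}|\geq 3$. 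In particular no model has $|\s^{\A}|\leq 1$, and the value $|\s^{\A}|=2$ is available only together with an infinite $\s_{2}^{\A}$.

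I would flag one caveat before constructing the witness. Since $\Ttwo$ has the finite model property and the set has a single element, \Cref{OS+FMP=>SF} already yields stable finiteness for $\s$ taken in isolation; hence the obstruction cannot be detected by the cardinality of $\s$ alone, and the reducing interpretation must be required to keep \emph{every} sort finite, $\s_{2}$ included. Accordingly, the witness below produces the failure precisely by blocking the simultaneous finiteness of $\s$ and $\s_{2}$.

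For the construction I would pick the witnessing interpretation $\A$ with $|\s^{\A}|=2$ and $\s_{2}^{\A}$ infinite, and take $\phi$ to be any quantifier-free formula that $\A$ satisfies; for concreteness $\phi=\neg(u=v)$ with $u,v$ of sort $\s_{2}$, which merely records that $\s_{2}$ is nontrivial (a $\Ttwo$-valid formula would work equally well). Suppose, for contradiction, that there is a $\Ttwo$-interpretation $\B$ satisfying $\phi$ with all domains finite and dominated by those of $\A$; then in particular $|\s^{\B}|\leq|\s^{\A}|=2$. By the model analysis $|\s^{\B}|\neq 1$ and $|\s^{\B}|<3$, so $|\s^{\B}|=2$, which places $\B$ in the first family and forces $\s_{2}^{\B}$ to be infinite, contradicting the finiteness of $\s_{2}^{\B}$. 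Hence no such $\B$ exists, and $\Ttwo$ fails to be stably finite.

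The main obstacle is the model-classification step: one must argue carefully, from the infinite axiom scheme, that the two families above are exhaustive, chiefly that $|\s^{\A}|=2$ genuinely forces $\s_{2}^{\A}$ to be infinite and that no model has a singleton $\s$-domain. Once that is settled, the choice of $\phi$ (any formula satisfied by the chosen $\A$) and the derivation of the contradiction from $|\s^{\B}|\leq 2$ are routine.
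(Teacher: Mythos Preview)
Your diagnosis of the typo is right: with the set literally equal to $\{\sigma\}$, \Cref{OS+FMP=>SF} together with the finite model property would make $\Ttwo$ stably finite, so the intended set is $\{\sigma,\sigma_2\}$. This is also what the paper's own proof implicitly uses, since it invokes \Cref{OS+ES+SM+SF=>SFW}, a result stated only for the full set of sorts $\S_{\Sigma}$. Your direct counterexample---the model with $|\sigma^{\A}|=2$ and $\sigma_2^{\A}$ infinite, which admits no $\Ttwo$-interpretation $\B$ with $|\sigma^{\B}|\leq 2$ and both domains finite---is correct for that reading. One small phrasing issue: rather than saying ``the reducing interpretation must be required to keep every sort finite'', which sounds like a reinterpretation of the definition, it is cleaner to say that the set of sorts in the statement should be $\{\sigma,\sigma_2\}$; the definition of stable finiteness is fixed and only constrains the sorts in $S$.

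The paper's route is genuinely different and indirect: it cites from \cite{SZRRBT-21} that $\Ttwo$ is smooth but not strongly finitely witnessable with respect to $\{\sigma,\sigma_2\}$, and then applies the contrapositive of \Cref{OS+ES+SM+SF=>SFW} (in an empty signature, smooth and stably finite together imply strongly finitely witnessable) to conclude that $\Ttwo$ cannot be stably finite. Your approach is more elementary and self-contained, requiring only the model classification read off the axiom scheme; the paper's approach instead leverages the general theorems developed earlier and previously established properties of $\Ttwo$, avoiding any direct model analysis but at the cost of a longer chain of dependencies.
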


\begin{proof}
    From \cite{SZRRBT-21}, $\Ttwo$ is smooth but not strongly finitely witnessable with respect to $\{\s, \s_{2}\}$: since the signature is empty, by \Cref{OS+ES+SM+SF=>SFW}, we conclude it cannot be stably finite.
\end{proof}

\subsection{$\Ttwotwo$}

\begin{lemma}
    The $\Sigmatwo$-theory $\Ttwotwo$ does not have the finite model property, and thus is not stably finite, with respect to $\{\s,\s_{2}\}$.
\end{lemma}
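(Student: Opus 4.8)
The plan is to exploit the fact that every model of $\Ttwotwo$ is forced to be infinite in the sort $\s_{2}$, so that even a trivially satisfiable formula witnesses the failure of the finite model property with respect to $\{\s,\s_{2}\}$. First I would recall that $\ax(\Ttwotwo)=\{\psi^{\s}_{=2}\}\cup\{\psi^{\s_{2}}_{\geq k}:k\in\mathbb{N}\}$, and note that $\Ttwotwo$ is a nonempty class: the $\Sigmatwo$-interpretation $\A$ with $|\s^{\A}|=2$ and $\s_{2}^{\A}=\mathbb{N}$, variables interpreted arbitrarily, satisfies all the axioms and is thus a $\Ttwotwo$-interpretation. This check is needed so that the failure of the finite model property is non-vacuous.

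Next I would observe that \emph{every} $\Ttwotwo$-interpretation $\A$ satisfies $\psi^{\s_{2}}_{\geq k}$ for all $k\in\mathbb{N}$, hence $|\s_{2}^{\A}|\geq k$ for every $k$, so $\s_{2}^{\A}$ is infinite. Consequently, pick any $\Ttwotwo$-satisfiable quantifier-free $\Sigmatwo$-formula — for concreteness $x=x$, for $x$ a variable of sort $\s$, which is satisfied by the interpretation $\A$ above. This formula is $\Ttwotwo$-satisfiable, yet no $\Ttwotwo$-interpretation satisfying it (indeed, no $\Ttwotwo$-interpretation at all) has $|\s_{2}^{\A}|$ finite. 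Therefore $\Ttwotwo$ does not have the finite model property with respect to $\{\s,\s_{2}\}$.

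Finally, the ``thus'' part follows immediately from \Cref{SF=>FMP}: stable finiteness with respect to a set of sorts implies the finite model property with respect to that set, so since $\Ttwotwo$ lacks the latter, it cannot be stably finite with respect to $\{\s,\s_{2}\}$.

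I do not expect any genuine obstacle here; the argument is essentially a one-line observation. The only point deserving a moment's care is verifying that $\Ttwotwo$ is nonempty, so that the conclusion is not a degenerate consequence of the theory having no models.
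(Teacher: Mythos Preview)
Your proof is correct and follows essentially the same approach as the paper's: the paper's proof simply notes that $\Ttwotwo$ is not contradictory but has no interpretations $\A$ with $|\s_{2}^{\A}|$ finite, which is exactly your argument spelled out in more detail.
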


\begin{proof}
    Obvious, given $\Ttwotwo$ is not contradictory, but also has no interpretations $\A$ with $|\s_{2}^{\A}|$ finite.
\end{proof}

\subsection{$\Toddtwo$}

\begin{lemma}
    The $\Sigmatwo$-theory $\Toddtwo$ is stably finite, and thus has the finite model property, with respect to $\{\s,\s_{2}\}$.
\end{lemma}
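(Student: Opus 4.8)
The plan is to prove stable finiteness directly from the simple shape of the models of $\Toddtwo$, and then get the finite model property for free via \Cref{SF=>FMP}. First I would recall the models: every $\Toddtwo$-interpretation $\A$ satisfies $|\s^{\A}|=1$ (from the axiom $\psi^{\s}_{=1}$) and $|\s_{2}^{\A}|\neq 2k$ for all $k\in\mathbb{N}$ (from the axioms $\neg\psi^{\s_{2}}_{=2k}$), so $|\s_{2}^{\A}|$ is either odd or infinite. Given a quantifier-free $\Sigmatwo$-formula $\phi$ and a $\Toddtwo$-interpretation $\A$ with $\A\vDash\phi$, if $\s_{2}^{\A}$ is already finite I would simply take $\B=\A$, which is finite in both sorts (recall $|\s^{\A}|=1$) with $|\s^{\B}|\leq|\s^{\A}|$ and $|\s_{2}^{\B}|\leq|\s_{2}^{\A}|$, and we are done.

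The interesting case is $|\s_{2}^{\A}|$ infinite. Here I would collapse $\s_{2}^{\A}$ to a small finite set that still contains the values of all $\s_{2}$-variables of $\phi$. Concretely, let $V=\vars_{\s_{2}}(\phi)^{\A}$, which is finite since $\phi$ has finitely many variables, and pick a finite set $D$ with $V\subseteq D$ and $|D|$ odd and positive: take $D=V$ when $|V|$ is odd and nonzero, and $D=V\cup\{\star\}$ for a fresh element $\star$ otherwise. Then define the $\Sigmatwo$-interpretation $\B$ by $\s^{\B}=\s^{\A}$, $\s_{2}^{\B}=D$, $x^{\B}=x^{\A}$ for every variable $x$ occurring in $\phi$ (this is legitimate since variables of sort $\s$ land in $\s^{\B}=\s^{\A}$ and variables of sort $\s_{2}$ in $V\subseteq D=\s_{2}^{\B}$), and $x^{\B}$ arbitrary for the remaining variables.

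The three checks I would then carry out, in order, are: $(i)$ $\B$ is a $\Toddtwo$-interpretation, since $|\s^{\B}|=|\s^{\A}|=1$ and $|\s_{2}^{\B}|=|D|$ is odd, hence not of the form $2k$, so $\B\vDash\psi^{\s}_{=1}$ and $\B\vDash\neg\psi^{\s_{2}}_{=2k}$ for all $k\in\mathbb{N}$; $(ii)$ $\B\vDash\phi$, because $\Sigmatwo$ is an empty signature, so the truth value of $\phi$ in any interpretation depends only on which atomic subformulas $x=y$ (with $x,y$ variables of $\phi$ of the same sort) hold, and by construction $x^{\B}=y^{\B}$ iff $x^{\A}=y^{\A}$, whence $\B\vDash\phi$ iff $\A\vDash\phi$; and $(iii)$ the cardinality bounds $|\s^{\B}|=1=|\s^{\A}|$ and $|\s_{2}^{\B}|=|D|$ finite with $|D|\leq|\s_{2}^{\A}|$. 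This gives stable finiteness with respect to $\{\s,\s_{2}\}$, and \Cref{SF=>FMP} then yields the finite model property.

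I do not expect a genuine obstacle here; the only point needing a little care is the parity adjustment forcing $|\s_{2}^{\B}|$ to be odd, so that $\B$ really satisfies the "odd" axioms — and that is harmless precisely because in the infinite case there is always room to add one extra element without violating $|\s_{2}^{\B}|\leq|\s_{2}^{\A}|$.
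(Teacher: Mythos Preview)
Your proof is correct and follows essentially the same approach as the paper: collapse $\s_{2}^{\A}$ to $\vars_{\s_{2}}(\phi)^{\A}$, adjoining one fresh element if needed to make the cardinality odd, and keep $\s^{\B}=\s^{\A}$. The only cosmetic difference is that you split into the finite and infinite cases (taking $\B=\A$ in the former), whereas the paper treats both uniformly and observes that even when $|\s_{2}^{\A}|$ is finite odd, the possibly-incremented value $N$ still satisfies $N\leq|\s_{2}^{\A}|$.
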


\begin{proof}
    Take a quantifier-free $\Sigmatwo$-formula $\phi$ and a $\Toddtwo$-interpretation $\A$ that satisfies $\phi$: then, let $n=|\vars_{\s_{2}}(\phi)^{\A}|$, and $N=n$ if $n$ is 
    odd, and $N=n+1$ if $n$ is even. Take a set $A$ with $N-n$ elements disjoint from the domains of $\A$ (notice $A$ may be empty), and define the $\Toddtwo$-interpretation $\B$ such that: $\s^{\B}=\s^{\A}$ (and so 
    $|\s^{\B}|=1$); $\s^{\B}=\vars_{\s_{2}}(\phi)^{\A}\cup A$ (which has necessarily $N$ elements, what makes 
    of $\B$ indeed a $\Toddtwo$-interpretation); $x^{\B}=x^{\A}$ for all variables $x$ in $\phi$; and arbitrarily otherwise. Obviously $\B$ satisfies $\phi$, is finite in both 
    domains and has the property that $|\s^{\B}|\leq |\s^{\A}|$ and $|\s_{2}^{\B}|\leq |\s_{2}^{\A}|$ (notice that, even if $N=n+1$, since
    $|\s_{2}^{\A}|$ must be an odd, or infinite, number we have $|\s_{2}^{\A}|\geq N$).
\end{proof}

\subsection{$\Tonetwo$}

\begin{lemma}
    The $\Sigmatwo$-theory $\Tonetwo$ does not have the finite model property, and thus is not stably finite, with respect to $\{\s,\s_{2}\}$.
\end{lemma}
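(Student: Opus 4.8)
The plan is essentially to observe that $\Tonetwo$ has models, but none of them is finite in the sort $\s_2$. First I would verify that $\Tonetwo$ is not contradictory by exhibiting a concrete model: the $\Sigmatwo$-structure $\A$ with $|\s^{\A}| = 1$ and $\s_2^{\A}$ countably infinite satisfies $\psi^{\s}_{=1}$ as well as $\psi^{\s_2}_{\geq k}$ for every $k\in\mathbb{N}$, and hence is a $\Tonetwo$-interpretation.

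Next I would take a quantifier-free $\Sigmatwo$-formula that is $\Tonetwo$-satisfiable; the simplest choice is a validity such as $x=x$ for $x$ a variable of sort $\s$, which the model $\A$ above satisfies. The key point is that no $\Tonetwo$-interpretation can have $|\s_2^{\A}|$ finite, since a finite structure cannot satisfy $\psi^{\s_2}_{\geq k}$ for all $k$ (and by the axiomatization every model of $\Tonetwo$ must). Therefore there is no $\Tonetwo$-interpretation satisfying our formula with all its domains finite, which witnesses that $\Tonetwo$ does not have the finite model property with respect to $\{\s,\s_2\}$.

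Finally, since stable finiteness implies the finite model property by \Cref{SF=>FMP}, the failure of the latter immediately yields the failure of the former. I do not expect any genuine obstacle here: the argument is the same ``obvious'' pattern used for $\Ttwotwo$. The only points that need a moment's care are checking that the witnessing formula is quantifier-free and $\Tonetwo$-satisfiable, and that $\Tonetwo$ is indeed non-contradictory, both of which are immediate from the axiomatization.
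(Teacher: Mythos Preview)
Your proposal is correct and follows essentially the same approach as the paper: both argue that $\Tonetwo$ has models but none with $\s_{2}^{\A}$ finite, so the finite model property fails (and hence stable finiteness fails by \Cref{SF=>FMP}). Your version is simply a more explicit rendering of what the paper dismisses as ``obvious.''
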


\begin{proof}
    Obvious, since $\Tonetwo$ has models, and yet has no models $\A$ where $\s_{2}^{\A}$ is finite.
\end{proof}

\subsection{$\TsM$}

\begin{lemma}
    The $\Sigmas$-theory $\TsM$ is stably finite, and thus has the finite model property, with respect to $\{\s\}$.
\end{lemma}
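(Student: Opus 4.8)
The plan is to obtain this lemma the same way the preceding ones (e.g.\ $\Teven$) were obtained, namely by reducing it to facts that are already available: the finite witnessability of $\TsM$, which was established in \cite{BarTolZoh}, together with the general implications proved earlier in the paper. No new surgery on models is needed.

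Concretely, the argument has two steps. \emph{Step 1:} recall from \cite{BarTolZoh} that $\TsM$ is finitely witnessable with respect to its only sort $\s$ (this is exactly one of the properties transferred from that paper for the theories of the figure in which $\TsM$ appears). By \Cref{FW=>FMP}, this immediately yields that $\TsM$ has the finite model property with respect to $\{\s\}$. \emph{Step 2:} observe that $\Sigmas$ is a one-sorted signature, so $\{\s\}$ is a set of sorts of cardinality $1$; hence \Cref{OS+FMP=>SF} applies and gives that $\TsM$ is stably finite with respect to $\{\s\}$. The clause ``and thus has the finite model property'' in the statement is then simply \Cref{SF=>FMP} (or, equivalently, the finite model property already produced in Step 1).

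Since both ingredients are off-the-shelf, there is essentially no hard step here; the only points that need care are verifying that finite witnessability of $\TsM$ is indeed among the facts imported from \cite{BarTolZoh}, and noting that the upgrade from the finite model property to stable finiteness is legitimate \emph{because} we work with respect to a single sort, as \Cref{OS+FMP=>SF} requires $|S|=1$. If one prefers a self-contained proof that does not invoke \cite{BarTolZoh}, one can instead argue directly: if $\A$ is an infinite $\TsM$-interpretation satisfying a quantifier-free $\phi$, collapse the finitely many values of subterms of $\phi$ in $\A$ to a finite substructure, redefining $s$ on the ``maximal'' such values to be the identity (so that every subterm of $\phi$ keeps its value and $\phi$ stays true), and then pad this finite structure with sufficiently many fixed points and sufficiently many non-fixed points of $s$ to reach one of the admissible finite cardinalities of $\TsM$; if $\A$ is already finite, take $\B=\A$. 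This also shows stable finiteness directly, but the two-line route via \Cref{FW=>FMP} and \Cref{OS+FMP=>SF} is preferable.
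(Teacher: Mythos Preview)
Your proposal is correct and follows exactly the same route as the paper: invoke the finite witnessability of $\TsM$ from \cite{BarTolZoh}, apply \Cref{FW=>FMP} to get the finite model property, and then use \Cref{OS+FMP=>SF} (since $\Sigmas$ has a single sort) to conclude stable finiteness. The additional self-contained argument you sketch is unnecessary but harmless.
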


\begin{proof}
    $\TsM$ is finitely witnessable with respect to $\{\s\}$ (proven in \cite{BarTolZoh}), and thus by \Cref{FW=>FMP} $\TsM$ has the finite model property; and, since $\Sigmas$ is a one-sorted signature, by \Cref{OS+FMP=>SF} we obtain $\TsM$ is also stably finite.
\end{proof}

\subsection{$\TSM$}

\begin{lemma}
    The $\Sigmas$-theory $\Tneqodd$ is stably finite, and thus has the finite model property, with respect to $\{\s\}$.
\end{lemma}

\begin{proof}
    Because of \Cref{OS+FMP=>SF}, we just need to prove $\TSM$ has the finite model property; and because of \Cref{FW=>FMP}, we just use the fact that $\TSM$ is finitely witnessable, as proved in \cite{BarTolZoh}.
\end{proof}

\subsection{$\Tneqtwoinfty$}

\begin{lemma}
    The $\Sigmas$-theory $\Tneqtwoinfty$ does not have the finite model property, and thus is not stably finite, with respect to $\{\s\}$.
\end{lemma}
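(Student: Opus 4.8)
The plan is to first pin down the models of $\Tneqtwoinfty$, and then exhibit a single quantifier-free literal that is $\Tneqtwoinfty$-satisfiable but holds in no finite $\Tneqtwoinfty$-model. Recall that $\Tneqtwoinfty$ is the $\Sigmas$-theory axiomatized by $\{[\psi_{=2}\wedge\Forall{x}(s(x)=x)]\vee[\psi_{\geq k}\wedge\Forall{x}\neg(s(x)=x)] : k\in\mathbb{N}\}$.

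First I would argue that a $\Tneqtwoinfty$-interpretation $\A$ either has $|\s^{\A}|=2$ with $s^{\A}$ the identity, or has $\s^{\A}$ infinite with $s^{\A}$ having no fixed point. Indeed, since $\s^{\A}$ is nonempty, $\A$ cannot satisfy both $\Forall{x}(s(x)=x)$ and $\Forall{x}\neg(s(x)=x)$; moreover, if $s^{\A}$ had both a fixed point and a non-fixed point, then $\A$ would satisfy neither of these two sentences, so neither disjunct of the $k$-th axiom could hold --- a contradiction. Hence $s^{\A}$ is either the identity, in which case the first disjunct of every axiom forces $|\s^{\A}|=2$, or fixed-point-free, in which case the second disjunct holds for all $k$, so $|\s^{\A}|$ is infinite. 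Both kinds of structures are genuinely models (for the infinite case, take $\mathbb{Z}$ with $s$ the successor map), so $\Tneqtwoinfty$ is consistent and does have infinite models.

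Then I would take $\phi := \neg(s(x)=x)$, with $x$ a single variable of sort $\s$. This formula is $\Tneqtwoinfty$-satisfiable: any infinite $\Tneqtwoinfty$-model has $s$ fixed-point-free, hence satisfies $\phi$ under every assignment to $x$. On the other hand, every \emph{finite} $\Tneqtwoinfty$-model has $s$ equal to the identity, and therefore satisfies $\Forall{x}(s(x)=x)$, so it cannot satisfy $\phi$. Thus $\phi$ witnesses the failure of the finite model property with respect to $\{\s\}$. The statement that $\Tneqtwoinfty$ is not stably finite then follows immediately by contraposition from \Cref{SF=>FMP}.

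The whole argument is short and I do not anticipate a real obstacle; the only delicate point is excluding ``mixed'' models with both fixed and non-fixed points of $s$, which is handled by noting that each disjunct $\psi_{=2}\wedge\Forall{x}(s(x)=x)$ and $\psi_{\geq k}\wedge\Forall{x}\neg(s(x)=x)$ must be satisfied as a whole, and the two universal conjuncts are jointly unsatisfiable over a nonempty domain.
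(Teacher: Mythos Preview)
Your proposal is correct and follows essentially the same approach as the paper: both exhibit the quantifier-free literal $\neg(s(x)=x)$, observe it is satisfied in an infinite $\Tneqtwoinfty$-model (you use $\mathbb{Z}$ with successor, the paper uses $\mathbb{N}$ with successor), and note that the only finite $\Tneqtwoinfty$-models have $s$ the identity and hence cannot satisfy it. Your write-up is simply more explicit about the model characterization than the paper's terse version.
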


\begin{proof}
    There is a $\Tneqtwoinfty$-interpretation $\A$ with $|\s^{\A}|=\omega$ (write, for simplicity, $\s^{\A}=\{a_{i} \mid i\in\mathbb{N}\}$), and $s^{\A}(a_{i})=a_{i+1}$ for all $i\in\mathbb{N}$ (and any assignment of the variables) that satisfies $\neg(s(x)=x)$. However, the only finite $\Tneqtwoinfty$-interpretations satisfy $\Forall{x}(s(x)=x)$, and thus cannot satisfy $\neg(s(x)=x)$.
\end{proof}

\subsection{$\Tneqodd$}

\begin{lemma}
    The $\Sigmas$-theory $\Tneqodd$ is stably finite, and thus has the finite model property, with respect to $\{\s\}$.
\end{lemma}

\begin{proof}
    Because the signature is one-sorted, according to \Cref{OS+FMP=>SF} we only need to show the theory has the finite model property. So take a 
    quantifier-free $\Sigmas$-formula $\phi$ and a $\Tneqodd$-interpretation $\A$ that satisfies $\phi$:
    we can assume $\A$ has at least $3$ elements since, otherwise, $\phi$ would already be known to be satisfied by a finite $\Tneqodd$-interpretation. Take as well the variables $x_{1}$ through $x_{n}$ in $\phi$, 
    and the maximum $M_{i}$, for each $i$,
    of the values $j$ such that $s^{j}(x_{i})$ appears in $\phi$, and define 
    \[m=|\{s^{j}(x_{i}) : 1\leq i\leq n, 0\leq j\leq M_{i}+1\}^{\A}|,\]
    and there are two cases to consider: if $m$ is odd, make $M=m+2$, and if $m$ is even, make $M=m+1$; take as well a set $A$ with $M-m$ elements disjoint from $\s^{\A}$. We then define an interpretation $\B$ by making: 
    \[\s^{\B}=\{s^{j}(x_{i}) : 1\leq i\leq n, 0\leq j\leq M_{i}+1\}^{\A}\cup A;\]
    $s^{\B}(a)=s^{\A}(a)$ for all $a\in \{s^{j}(x_{i}) : 1\leq i\leq n, 0\leq j\leq M_{i}\}^{\A}$, and $s^{\B}(a)$ any value in $A$ different from 
    $a$ (of which there is one, since $A$ has at least $2$ elements) for each 
    $a\in A$ or $a=(s^{\A})^{M_{i}+1}(x_{i}^{\A})$ not in $\{s^{j}(x_{i}) : 1\leq i\leq n, 0\leq j\leq M_{i}\}^{\A}$; $x^{\B}=x^{\A}$ for each variable $x$ of $\phi$, and arbitrarily otherwise. $\B$ has then an odd number of elements (thus 
    finite) greater or equal than $3$, and satisfies $\neg(s(x)=x)$ for all $x$, meaning it is a $\Tneqodd$-interpretation; furthermore, it satisfies $\phi$, and so we are done.
\end{proof}

\subsection{$\Tneqoneinfty$}

\begin{lemma}
    The $\Sigmas$-theory $\Tneqoneinfty$ does not have the finite model property, and thus is not stably finite, with respect to $\{\s\}$.
\end{lemma}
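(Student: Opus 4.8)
The plan is to exhibit a single quantifier-free formula that separates the infinite models of $\Tneqoneinfty$ from its unique finite model, and then to obtain the stable finiteness claim for free from \Cref{SF=>FMP}.

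First I would read off the shape of the $\Tneqoneinfty$-interpretations from the axiomatization $\{\psi_{=1}\vee[\psi_{\geq k}\wedge\Forall{x}\neg p(x)]:k\in\mathbb{N}\}$, where $p(x)$ abbreviates $s(x)=x$. If $\A$ is such an interpretation with $|\s^{\A}|\geq 2$, then $\psi_{=1}$ fails in $\A$, so for every $k$ the second disjunct $\psi_{\geq k}\wedge\Forall{x}\neg p(x)$ must hold; hence $\s^{\A}$ is infinite and $s^{\A}$ has no fixed points. Consequently, the only finite $\Tneqoneinfty$-interpretations are the one-element ones, on which $s$ is necessarily the identity.

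Next I would take $\phi$ to be the quantifier-free formula $\neg(s(x)=x)$. It is $\Tneqoneinfty$-satisfiable: the interpretation $\A$ with $\s^{\A}=\mathbb{Z}$ and $s^{\A}(n)=n+1$ is a $\Tneqoneinfty$-interpretation (it is infinite and fixed-point-free, so it satisfies every axiom) and it satisfies $\phi$ under any assignment to $x$. But no finite $\Tneqoneinfty$-interpretation satisfies $\phi$: such an interpretation has exactly one element with $s$ the identity, so it satisfies $s(x)=x$. Therefore $\Tneqoneinfty$ does not have the finite model property with respect to $\{\s\}$, and by the contrapositive of \Cref{SF=>FMP} it is not stably finite with respect to $\{\s\}$ either.

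I do not expect any real obstacle: the argument rests on a one-formula witness. The only point that warrants a moment's attention is verifying that the disjunct $\psi_{=1}$ cannot rescue a finite model of size $\geq 2$, i.e., that being finite genuinely forces $|\s^{\A}|=1$, which, as noted above, is immediate from the structure of the axioms.
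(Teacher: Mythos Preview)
Your proof is correct and follows essentially the same approach as the paper: both use the witness formula $\neg(s(x)=x)$, observe that it is satisfied by infinite $\Tneqoneinfty$-interpretations, and note that the only finite $\Tneqoneinfty$-interpretation has a single element and hence must satisfy $s(x)=x$. Your version simply spells out the structure of the models and gives an explicit infinite interpretation, which the paper leaves implicit.
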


\begin{proof}
    Take a variable $x$ and consider the quantifier-free formula $\neg(s(x)=x)$, satisfied by the infinite $\Tneqoneinfty$-interpretations: because the only 
    finite $\Tneqoneinfty$-interpretation has only one element, it must satisfy $s(x)=x$ instead, and so this theory cannot have the finite model property.
\end{proof}

\section{Proofs for Theories from \Cref{tab-theories-sigma-22}}

\subsection{$\Tupinfty$}

\begin{lemma}\label{Tupinfty is SI}
    The $\Sigmatwo$-theory $\Tupinfty$ is stably infinite with respect to $\{\s, \s_{2}\}$.
\end{lemma}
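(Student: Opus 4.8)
The plan is to exploit two elementary facts. First, since $\Sigmatwo$ is an empty signature, a quantifier-free $\Sigmatwo$-formula is a Boolean combination of equalities between variables, so its truth value in an interpretation is determined entirely by the values assigned to those variables. Second, every $\Sigmatwo$-interpretation whose two domains are both infinite is a model of $\Tupinfty$: each of its axioms, for $k\in\mathbb{N}$, flattens (by associativity of $\vee$) to $(\psi^{\s}_{=1}\wedge\psi^{\s_{2}}_{\geq k})\vee(\psi^{\s}_{\geq k+2}\wedge\psi^{\s_{2}}_{\geq k+2})\vee\bigvee_{i=2}^{k+2}(\psi^{\s}_{=i}\wedge\psi^{\s_{2}}_{=i})$, and the middle disjunct $\psi^{\s}_{\geq k+2}\wedge\psi^{\s_{2}}_{\geq k+2}$ holds whenever both domains are infinite.

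Given these, I would proceed as follows. Fix a $\Tupinfty$-satisfiable quantifier-free $\Sigmatwo$-formula $\phi$ and a $\Tupinfty$-interpretation $\A$ satisfying it. Build a new interpretation $\B$ by adjoining to each domain a fresh countably infinite set of elements while keeping every variable assignment unchanged: take sets $N_1, N_2$, each countably infinite and disjoint from $\s^{\A}$, $\s_{2}^{\A}$ respectively, set $\s^{\B}=\s^{\A}\cup N_1$ and $\s_{2}^{\B}=\s_{2}^{\A}\cup N_2$, and let $x^{\B}=x^{\A}$ for every variable $x$ of either sort.

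Then I would check that $\B$ works. Both $\s^{\B}$ and $\s_{2}^{\B}$ are infinite, so by the second fact above $\B$ is a $\Tupinfty$-interpretation. And since $\phi$ is a Boolean combination of atoms $x=y$ and $x^{\B}=x^{\A}$, $y^{\B}=y^{\A}$ for all the relevant $x,y$, the formula $\phi$ has the same truth value in $\B$ as in $\A$, so $\B\vDash\phi$. Thus $\B$ satisfies $\phi$ with both $|\s^{\B}|$ and $|\s_{2}^{\B}|$ infinite, which is precisely what stable infiniteness of $\Tupinfty$ with respect to $\{\s,\s_{2}\}$ requires.

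I do not anticipate a real difficulty here: the argument amounts to the observation that enlarging the domains of an interpretation over an empty signature preserves quantifier-free satisfaction. The only step deserving a (one-line) justification is that an interpretation with two infinite domains satisfies all axioms of $\Tupinfty$, which follows immediately from the form of those axioms.
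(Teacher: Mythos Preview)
Your proposal is correct and is essentially the same as the paper's own proof: both take a $\Tupinfty$-interpretation $\A$ satisfying $\phi$, adjoin disjoint infinite sets to each domain while keeping all variable assignments fixed, and observe that the resulting interpretation is still a $\Tupinfty$-interpretation (because both domains are infinite) and still satisfies $\phi$ (because over an empty signature quantifier-free truth depends only on the variable assignments). Your write-up is in fact slightly more explicit than the paper's about why both of these observations hold.
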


\begin{proof}
    Let $\phi$ be a quantifier-free $\Sigmatwo$-formula, and $\A$ a $\Tupinfty$-interpretation that satisfies $\phi$. Given two infinite sets $A$ and $B$, disjoint from each other and $\s^{\A}$ and $\s_{2}^{\A}$, we define a $\Tupinfty$-interpretation $\B$ by making: $\s^{\B}=\s^{\A}\cup A$; $\s_{2}^{\B}=\s_{2}^{\A}\cup B$; $x^{\B}=x^{\A}$ for every variable $x$ of sort $\s$; and $u^{\B}=u^{\A}$ for every variable $u$ of sort $\s_{2}$. Then it is true that $\B$ satisfies $\phi$, and both $\s^{\B}$ and $\s_{2}^{\B}$ are infinite.
\end{proof}

\begin{lemma}\label{Tupinfty is not SM}
    The $\Sigmatwo$-theory $\Tupinfty$ is not smooth with respect to $\{\s, \s_{2}\}$.
\end{lemma}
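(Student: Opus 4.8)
The plan is to exhibit an explicit counterexample to the defining condition of smoothness: a quantifier-free formula $\phi$, a $\Tupinfty$-interpretation $\A \models \phi$, and a cardinal function $\kappa$ on $\{\s, \s_{2}\}$ with $\kappa(\s) \geq |\s^{\A}|$ and $\kappa(\s_{2}) \geq |\s_{2}^{\A}|$, for which no $\Tupinfty$-interpretation $\B \models \phi$ has $|\s^{\B}| = \kappa(\s)$ and $|\s_{2}^{\B}| = \kappa(\s_{2})$. Since the obstruction is purely about cardinalities, I would take $\phi$ to be a $\Tupinfty$-valid quantifier-free formula, e.g. $x = x$ for $x$ of sort $\s$, so that ``$\B \models \phi$'' imposes nothing beyond $\B$ being a $\Tupinfty$-interpretation.

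First I would recall, from the description of $\Tupinfty$ (and the instance $k=0$ of its axiomatization, which contains the disjunct $\psi^{\s}_{=2}\wedge\psi^{\s_{2}}_{=2}$), that there is a $\Tupinfty$-interpretation $\A$ with $|\s^{\A}| = |\s_{2}^{\A}| = 2$; this is the $\A$ I would use. Then I would set $\kappa(\s) = 2$ and $\kappa(\s_{2}) = 3$, noting $\kappa(\s) = 2 \geq |\s^{\A}|$ and $\kappa(\s_{2}) = 3 \geq |\s_{2}^{\A}| = 2$, so $\kappa$ is an admissible target for smoothness applied to $\A$.

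Next I would argue that no $\Tupinfty$-interpretation $\B$ satisfies $|\s^{\B}| = 2$ and $|\s_{2}^{\B}| = 3$. Consider the axiom instance for $k = 1$, namely $(\psi^{\s}_{=1}\wedge\psi^{\s_{2}}_{\geq 1})\vee(\psi^{\s}_{\geq 3}\wedge\psi^{\s_{2}}_{\geq 3})\vee(\psi^{\s}_{=2}\wedge\psi^{\s_{2}}_{=2})\vee(\psi^{\s}_{=3}\wedge\psi^{\s_{2}}_{=3})$. If $|\s^{\B}| = 2$, then $\psi^{\s}_{=1}$, $\psi^{\s}_{\geq 3}$ and $\psi^{\s}_{=3}$ all fail in $\B$, so $\B$ must satisfy $\psi^{\s}_{=2}\wedge\psi^{\s_{2}}_{=2}$, forcing $|\s_{2}^{\B}| = 2 \neq 3$. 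Hence $\B$ cannot exist, and smoothness fails for the triple $(\phi, \A, \kappa)$, proving the lemma.

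This argument is essentially routine; the only point requiring a little care is making sure the chosen $\A$ is genuinely a $\Tupinfty$-interpretation (which is immediate from the $k=0$ instance) and reading off from a single axiom instance that $|\s^{\B}| = 2$ pins down $|\s_{2}^{\B}| = 2$. No deeper difficulty is anticipated.
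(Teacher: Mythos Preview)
Your proposal is correct and follows essentially the same approach as the paper: both start from the $(2,2)$ diagonal model and show that a nearby off-diagonal pair of cardinalities is not realised by any $\Tupinfty$-interpretation. The only cosmetic difference is that the paper bumps $|\s|$ to $3$ while keeping $|\s_{2}|=2$, whereas you bump $|\s_{2}|$ to $3$ while keeping $|\s|=2$; either choice works since the finite $\Tupinfty$-models with $|\s|\geq 2$ lie exactly on the diagonal.
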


\begin{proof}
    Let $x$ be a variable of sort $\s$, and it is clear that $x=x$ is satisfied by a $\Tupinfty$-interpretation $\A$ with $|\s^{\A}|=|\s_{2}^{\A}|=2$. But there are no $\Tupinfty$-interpretations $\B$ with $|\s^{\B}|=3$ and $|\s_{2}^{\B}|=2$, so $\Tupinfty$ cannot be smooth.
\end{proof}

\begin{lemma}\label{Tupinfty is not SF}
    The $\Sigmatwo$-theory $\Tupinfty$ is not stably finite with respect to $\{\s, \s_{2}\}$.
\end{lemma}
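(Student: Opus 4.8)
The plan is to argue directly from the definition of stable finiteness: I will exhibit a single quantifier-free $\Sigmatwo$-formula $\phi$ and a $\Tupinfty$-interpretation $\A$ satisfying $\phi$ such that no finite $\Tupinfty$-interpretation $\B$ satisfying $\phi$ can have $|\s^{\B}|\leq|\s^{\A}|$. The cleanest choice is to make $\phi$ essentially trivial, so that the obstruction comes purely from cardinality constraints imposed by the axioms, not from what $\phi$ says. Concretely, I would take $\phi$ to be $x=x$ for a variable $x$ of sort $\s$ (any quantifier-free formula satisfied in the intended model would do), and take $\A$ to be the $\Tupinfty$-interpretation with $|\s^{\A}|=1$ and $\s_{2}^{\A}$ countably infinite, which is indeed a model of $\Tupinfty$ (its first group of models, as described in \Cref{sec:simpletheories}).

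Given these choices, the argument closes as follows. If $\Tupinfty$ were stably finite with respect to $\{\s,\s_{2}\}$, there would exist a $\Tupinfty$-interpretation $\B$ satisfying $\phi$ with both $\s^{\B}$ and $\s_{2}^{\B}$ finite and with $|\s^{\B}|\leq|\s^{\A}|=1$. Then $|\s^{\B}|=1$ and $\s^{\B}$ is finite. But the only $\Tupinfty$-interpretations $\B$ with $|\s^{\B}|=1$ have $\s_{2}^{\B}$ infinite, so this is a contradiction; hence no such $\B$ exists and $\Tupinfty$ is not stably finite with respect to $\{\s,\s_{2}\}$. (Note that we do not even need to invoke the finiteness of $\B$ on the second sort: the requirement $|\s^{\B}|\leq 1$ already cannot be met by any finite model, and in fact cannot be met by any model other than those of the first group.)

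The only step that needs a little care — and really the only candidate for an ``obstacle,'' though a mild one — is the structural fact that every $\Tupinfty$-interpretation with a singleton first domain has an infinite second domain, equivalently that there is no finite $\Tupinfty$-interpretation whose first sort is a singleton. This is read off the axioms: if $|\s^{\B}|=1$, then for each $k$ the disjunct $\diag{\s}{\s_{2}}{k+2}$ fails (both of its cases require $|\s^{\B}|\geq 2$), so the disjunct $\psi^{\s}_{=1}\wedge\psi^{\s_{2}}_{\geq k}$ must hold, forcing $|\s_{2}^{\B}|\geq k$; since this holds for all $k$, $\s_{2}^{\B}$ is infinite. This matches the description of $\Tupinfty$ already given in \Cref{sec:simpletheories} (where the finite models are exactly those with $|\s^{\A}|=|\s_{2}^{\A}|\geq 2$), and it is precisely the feature for which $\Tupinfty$ was designed, so I expect the write-up to be short.
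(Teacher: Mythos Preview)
Your proposal is correct and follows essentially the same approach as the paper: the paper also takes $\phi$ to be $x=x$ for $x$ of sort $\s$, picks $\A$ with $|\s^{\A}|=1$ and $|\s_{2}^{\A}|=\omega$, and then observes that there are no $\Tupinfty$-interpretations $\B$ with $|\s^{\B}|=1$ and $|\s_{2}^{\B}|$ finite. Your write-up is slightly more detailed in justifying that last structural fact from the axioms, but the argument is the same.
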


\begin{proof}
    Let $x$ be a variable of sort $\s$, and $\A$ be a $\Tupinfty$-interpretation with $|\s^{\A}|=1$ and $|\s_{2}^{\A}|=\omega$: it is clear that $\A$ satisfies $x=x$. But there are no $\Tupinfty$-interpretations $\B$ with $|\s^{\B}|=1$ and $|\s_{2}^{\B}|$ finite, so $\Tupinfty$ cannot be stably finite.
\end{proof}

\begin{lemma}\label{Tupinfty is FW}
    The $\Sigmatwo$-theory $\Tupinfty$ is finitely witnessable with respect to $\{\s, \s_{2}\}$.
\end{lemma}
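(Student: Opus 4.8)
The plan is to exhibit an explicit computable witness for $\Tupinfty$. I would first recall, from the description above, that the models of $\Tupinfty$ are exactly the structures $\A$ with $(|\s^{\A}|,|\s_{2}^{\A}|)$ equal to $(1,\omega)$, to $(\omega,\omega)$, or to $(n,n)$ for some $n\geq 2$ — so its only finite models are the ``diagonal'' ones $(n,n)$ with $n\geq 2$ — together with the standard fact that, over the empty signature $\Sigmatwo$, whether a structure satisfies a quantifier-free formula $\phi$ depends only on the equivalence relations it induces on $\vars_{\s}(\phi)$ and on $\vars_{\s_{2}}(\phi)$; in particular $\phi$ is $\T$-satisfiable iff it is satisfiable at all, since any satisfying arrangement can be realized on a large enough diagonal model.

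The witness I would use is, for a quantifier-free $\Sigmatwo$-formula $\phi$, with $N=\max\{|\vars_{\s}(\phi)|,\,|\vars_{\s_{2}}(\phi)|,\,2\}$,
\[\wit(\phi)=\phi\wedge\bigwedge_{i=1}^{N}(y_{i}=y_{i})\wedge\bigwedge_{j=1}^{N}(v_{j}=v_{j}),\]
with $y_{1},\dots,y_{N}$ fresh variables of sort $\s$ and $v_{1},\dots,v_{N}$ fresh of sort $\s_{2}$; the added conjuncts are valid and serve only to make the fresh variables occur in $\wit(\phi)$, and $\wit$ is clearly computable. Condition $(i)$ is then immediate, since $\Exists{\overarrow{w}}\wit(\phi)$ (for $\overarrow{w}=\{y_{1},\dots,y_{N},v_{1},\dots,v_{N}\}$) is logically equivalent to $\phi$.

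For condition $(ii)$ I would argue as follows. Suppose $\wit(\phi)$ is $\T$-satisfiable, so $\phi$ is satisfiable; pick equivalence relations $E$ on $\vars_{\s}(\phi)$ and $E'$ on $\vars_{\s_{2}}(\phi)$ that are induced by some model of $\phi$ — so any structure inducing $(E,E')$ satisfies $\phi$ — and let $a$, $b$ be their numbers of classes and $m=\max\{a,b,2\}$, noting $m\leq N$ since $a\leq|\vars_{\s}(\phi)|$ and $b\leq|\vars_{\s_{2}}(\phi)|$. I would then build a $\Tupinfty$-interpretation $\B$ with $|\s^{\B}|=|\s_{2}^{\B}|=m$ (a model since $m\geq 2$): interpret $\vars_{\s}(\phi)$ so as to realize $E$, which uses $a$ of the $m$ elements; assign $m-a$ of the $y_{i}$ bijectively to the remaining $m-a$ elements of $\s^{\B}$ (there are enough, as $N\geq m$) and the other $y_{i}$ arbitrarily; do the symmetric thing on $\s_{2}^{\B}$ with $E'$ and the $v_{j}$; interpret all other variables arbitrarily. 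Then $\B$ induces the arrangement $(E,E')$ on $\phi$'s variables, so $\B\models\phi$, and it trivially satisfies the remaining conjuncts, hence $\B\models\wit(\phi)$; and by construction $\vars_{\s}(\wit(\phi))^{\B}=\s^{\B}$ and $\vars_{\s_{2}}(\wit(\phi))^{\B}=\s_{2}^{\B}$, as required.

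The delicate point — and the reason I would add the fresh variables as inert tautologies rather than as a block of pairwise-distinct variables, which is the naive way to enlarge models — is the model $(1,\omega)$: a witness that forced two distinct elements of sort $\s$ would be unsatisfiable on $(1,\omega)$, breaking $\T$-equivalence with a $\phi$ that is true there, even though such a $\phi$ still needs a diagonal model with at least two $\s$-elements for the witnessing step $(ii)$. Leaving the fresh variables unconstrained lets them collapse on $(1,\omega)$, preserving $(i)$, while still being free to fan out across a finite diagonal model when establishing $(ii)$. I expect this to be the only real obstacle; the rest is routine bookkeeping over the empty signature.
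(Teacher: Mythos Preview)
Your proof is correct and follows essentially the same approach as the paper: define $\wit(\phi)$ by conjoining $\phi$ with tautologies $y_i=y_i$ and $v_j=v_j$ for $\max\{|\vars_{\s}(\phi)|,|\vars_{\s_{2}}(\phi)|,2\}$ fresh variables of each sort, then for condition $(ii)$ build a diagonal model $(m,m)$ with $m\geq 2$ and use the fresh variables to cover any leftover elements. Your version is in fact a bit cleaner than the paper's (which inadvertently writes $|\vars_{\s}(\phi)^{\A}|$ in the definition of the bound, before $\A$ is introduced), and your closing remark about why inert tautologies rather than distinctness constraints are needed---to keep $\T$-equivalence on the $(1,\omega)$ models---is a nice clarification the paper leaves implicit.
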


\begin{proof}
    Given a quantifier-free $\Sigmatwo$-formula $\phi$, consider the witness
    \[\wit(\phi)=\phi\wedge\bigwedge_{i=1}^{m}x_{i}=x_{i}\wedge\bigwedge_{i=1}^{m}u_{i}=u_{i},\]
    where
    \[m=\max\{|\vars_{\s}(\phi)^{\A}|,|\vars_{\s_{2}}(\phi)^{\A}|, 2\},\]
    $x_{1}$ through $x_{m}$ are fresh variables of sort $\s$, and $u_{1}$ through $u_{m}$ are fresh variables of sort $\s_{2}$. Given $\bigwedge_{i=1}^{m}x_{i}=x_{i}\wedge\bigwedge_{i=1}^{m}u_{i}=u_{i}$ is a tautology, and $\phi$ possesses none of the variables in $\overarrow{x}=\vars(\wit(\phi))\setminus\vars(\phi)$, we have that $\phi$ and $\Exists{\overarrow{x}}\wit(\phi)$ are $\Tupinfty$-equivalent.

    Now, assume $\A$ is a $\Tupinfty$-interpretation that satisfies $\wit(\phi)$. Take sets $A$ and $B$, disjoint from each other and $\s^{\A}$ and $\s_{2}^{\A}$, with $m-|\vars_{\s}(\phi)^{\A}|$ and $m-|\vars_{\s_{2}}(\phi)^{\A}|$ elements, respectively. We define the $\Tupinfty$-interpretation $\B$ by making: $\s^{\B}=\vars_{\s}(\phi)^{\A}\cup A$; $\s_{2}^{\B}=\vars_{\s_{2}}(\phi)^{\A}\cup B$; $x^{\B}=x^{\A}$ for $x\in\vars_{\s}(\phi)$, and $x_{i}\in\{x_{i} : 1\leq i\leq m\}\mapsto x_{i}^{\B}\in\s^{\B}$ a surjection; and $u^{\B}=u^{\A}$ for $u\in\vars_{\s_{2}}(\phi)$, and $u_{i}\in\{u_{i} : 1\leq i\leq m\}\mapsto u_{i}^{\B}\in\s_{2}^{\B}$ also a surjection (and arbitrary for other variables). This way, $\B$ satisfies $\phi$, and in addition has $\s^{\B}=\vars_{\s}(\wit(\phi))^{\B}$ and $\s_{2}^{\B}=\vars_{\s_{2}}(\wit(\phi))^{\B}$.
\end{proof}

\begin{lemma}\label{Tupinfty has FM}
    The $\Sigmatwo$-theory $\Tupinfty$ has the finite model property with respect to $\{\s, \s_{2}\}$.
\end{lemma}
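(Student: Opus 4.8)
The plan is to obtain this statement immediately from results already established. By \Cref{Tupinfty is FW}, the theory $\Tupinfty$ is finitely witnessable with respect to $\{\s,\s_2\}$, and \Cref{FW=>FMP} asserts that every finitely witnessable theory also has the finite model property with respect to the same set of sorts. Chaining these two facts yields the claim at once, so on this route there is essentially no obstacle: the proof is a single invocation.

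If a self-contained argument is preferred, here is the direct route. Let $\phi$ be a quantifier-free $\Sigma_2$-formula satisfied by a $\Tupinfty$-interpretation $\A$. If $\A$ is already finite in both sorts, we are done; otherwise $\A$ lies in one of the two infinite classes of models of $\Tupinfty$ (either $|\s^{\A}|=1$ with $\s_2^{\A}$ infinite, or both domains infinite). Since the signature is empty, the truth value of $\phi$ in any interpretation depends only on the equivalence relation that interpretation induces on $\vars(\phi)$ via equality. Setting $n=\max\{|\vars_{\s}(\phi)^{\A}|,\,|\vars_{\s_2}(\phi)^{\A}|,\,2\}$, one builds a diagonal model $\B$ with $|\s^{\B}|=|\s_2^{\B}|=n$, which is a $\Tupinfty$-interpretation because $n\geq 2$, and maps the $\s$-variables of $\phi$ onto $\s^{\B}$ and the $\s_2$-variables onto $\s_2^{\B}$ so as to respect the equality pattern they carried in $\A$. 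Then $\B$ satisfies $\phi$ and is finite in both sorts.

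The only point needing a moment of care in the direct argument is checking that the equality pattern realized in $\A$ on $\vars(\phi)$ can indeed be realized inside a diagonal model of size $n$; this holds precisely because $n$ was chosen to be at least $|\vars_{\s}(\phi)^{\A}|$ and at least $|\vars_{\s_2}(\phi)^{\A}|$, with $n\geq 2$ keeping $\B$ within the theory. No difficulty arises from the infinite branches of the axiomatization, since we never need to reproduce an infinite domain. Given that \Cref{FW=>FMP} is already available, I would simply present the one-line proof.
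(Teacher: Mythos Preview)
Your one-line proof is exactly the paper's own proof: it simply invokes \Cref{Tupinfty is FW} together with \Cref{FW=>FMP}. The additional direct argument you sketch is correct as well, but the paper does not bother with it.
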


\begin{proof}
Follows from \Cref{FW=>FMP} and \Cref{Tupinfty is FW}.
\end{proof}

\begin{lemma}\label{Tupinfty is not SW}
    The $\Sigmatwo$-theory $\Tupinfty$ is not strongly finitely witnessable with respect to $\{\s, \s_{2}\}$.
\end{lemma}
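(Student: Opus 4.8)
\textbf{Proof proposal for \Cref{Tupinfty is not SW}.}

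The plan is to derive this immediately from the previously established facts, using \Cref{SFW=>SF} in contrapositive form. Concretely: \Cref{SFW=>SF} states that every $\Sigma$-theory that is strongly finitely witnessable with respect to $S$ is also stably finite with respect to $S$. Taking $S=\{\s,\s_{2}\}$, and recalling from \Cref{Tupinfty is not SF} that $\Tupinfty$ is \emph{not} stably finite with respect to $\{\s,\s_{2}\}$, we conclude at once that $\Tupinfty$ cannot be strongly finitely witnessable with respect to $\{\s,\s_{2}\}$. There is no real obstacle here; the only thing to be careful about is that the non-stable-finiteness witness ($x=x$ satisfied by an interpretation with $|\s^{\A}|=1$ and $|\s_{2}^{\A}|=\omega$, while no $\Tupinfty$-interpretation has $|\s|=1$ and $\s_{2}$ finite) is genuinely with respect to the full set of sorts, which it is.

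If a self-contained argument not routed through \Cref{SFW=>SF} is preferred, I would instead argue directly by contradiction. Suppose $\wit$ is a strong witness for $\Tupinfty$ with respect to $\{\s,\s_{2}\}$. Let $\phi$ be the formula $\neg(u_{1}=u_{2})$ with $u_{1},u_{2}$ of sort $\s_{2}$; this is $\Tupinfty$-satisfiable by an interpretation with $|\s|=1$ and $\s_{2}$ infinite. Since $\phi$ and $\exists\,\overarrow{w}.\:\wit(\phi)$ are $\Tupinfty$-equivalent, there is a $\Tupinfty$-interpretation $\A$ with $|\s^{\A}|=1$ and $\s_{2}^{\A}$ infinite that satisfies $\wit(\phi)$. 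Let $V=\vars(\wit(\phi))$ and let $\delta_{V}$ be the arrangement induced on $V$ by $\A$; then $\wit(\phi)\wedge\delta_{V}$ is $\Tupinfty$-satisfiable (by $\A$), so condition $(ii)'$ yields a $\Tupinfty$-interpretation $\B$ satisfying $\wit(\phi)\wedge\delta_{V}$ with $\s^{\B}=\vars_{\s}(\wit(\phi)\wedge\delta_{V})^{\B}$ and $\s_{2}^{\B}=\vars_{\s_{2}}(\wit(\phi)\wedge\delta_{V})^{\B}$. Because $|\s^{\A}|=1$, all variables of sort $\s$ in $V$ lie in a single class of $\delta_{V}$, so $|\s^{\B}|=1$; and $\vars_{\s_{2}}(\wit(\phi)\wedge\delta_{V})$ is finite, so $|\s_{2}^{\B}|$ is finite. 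But $\Tupinfty$ has no interpretation with $|\s|=1$ and $\s_{2}$ finite, a contradiction.

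I expect the slick route via \Cref{SFW=>SF} to be the intended one; the only mild subtlety in the direct version is ensuring the chosen $\phi$ and arrangement force exactly the forbidden shape $|\s|=1$, $|\s_{2}|$ finite, which is handled by picking $\phi$ witnessed only in the ``$|\s|=1$'' stratum of $\Tupinfty$ and using the arrangement induced by such a model.
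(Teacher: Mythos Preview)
Your primary approach---contrapositive of \Cref{SFW=>SF} combined with \Cref{Tupinfty is not SF}---is exactly the paper's proof. The direct argument you also sketch is correct as well (though your closing remark is slightly off: $\neg(u_{1}=u_{2})$ is not witnessed \emph{only} in the $|\s|=1$ stratum, but that doesn't matter since you fix a specific $\A$ with $|\s^{\A}|=1$ and use the arrangement it induces).
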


\begin{proof}
Follows from \Cref{SFW=>SF} and \Cref{Tupinfty is not SF}.
\end{proof}

\begin{lemma}\label{Tupinfty is CV}
    The $\Sigmatwo$-theory $\Tupinfty$ is convex with respect to $\{\s, \s_{2}\}$.
\end{lemma}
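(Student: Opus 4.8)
The plan is to reduce this to results already in hand rather than argue from scratch. Recall \Cref{SI empty theories are convex}, which says that any theory over an empty signature that is stably infinite with respect to the set of all of its sorts is convex with respect to any set of sorts. Now $\Sigmatwo$ is an empty signature (its sorts are $\{\s,\s_2\}$ and it has no function symbols and no predicate symbols beyond the equalities), and \Cref{Tupinfty is SI} already establishes that $\Tupinfty$ is stably infinite with respect to $\{\s,\s_2\}$, which is precisely its full set of sorts. Applying \Cref{SI empty theories are convex} then yields immediately that $\Tupinfty$ is convex with respect to any set of sorts, and in particular with respect to $\{\s,\s_2\}$, which is what we want.

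Consequently there is essentially no obstacle: the content has been shifted into \Cref{Tupinfty is SI} (which is proved just above by adjoining two infinite sets to the domains of a given model) and into the general lemma \Cref{SI empty theories are convex}. The only thing to check is the bookkeeping, namely that $\Sigmatwo$ genuinely qualifies as an empty signature and that ``stably infinite with respect to $\{\s,\s_2\}$'' matches the hypothesis ``stably infinite with respect to the set of all of its sorts''; both are immediate.

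For completeness one could also spell out a direct argument: suppose $\phi$ is a conjunction of $\Sigmatwo$-literals with $\tmodels\phi\rightarrow\bigvee_{i=1}^{n}u_i=v_i$ but $\phi\rightarrow u_i=v_i$ is not $\Tupinfty$-valid for any $i$; for each $i$ choose a $\Tupinfty$-interpretation $\A_i$ satisfying $\phi\wedge\neg(u_i=v_i)$, and by stable infiniteness (\Cref{Tupinfty is SI}) take each $\A_i$ with both domains countably infinite, so all the $\A_i$ have a common underlying structure up to isomorphism. Over the empty signature the truth value of $\phi$ depends only on the equivalence relation induced on $\vars(\phi)$, so the disequalities witnessed separately can be realized simultaneously on this common structure by a single interpretation that still satisfies $\phi$ and falsifies every $u_i=v_i$, contradicting $\tmodels\phi\rightarrow\bigvee_{i=1}^{n}u_i=v_i$. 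But since this is exactly the reasoning already encapsulated in \Cref{SI empty theories are convex}, the clean proof is simply to cite that theorem together with \Cref{Tupinfty is SI}.
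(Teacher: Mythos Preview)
Your proof is correct and follows exactly the same approach as the paper: the paper's own proof is the single sentence ``Since $\Tupinfty$ is stably infinite, and $\Sigmatwo$ is an empty signature, we simply apply \Cref{SI empty theories are convex}.'' Your additional direct argument is fine but unnecessary, as you yourself note.
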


\begin{proof}
    Since $\Tupinfty$ is stably infinite, and $\Sigmatwo$ is an empty signature, we simply apply \Cref{SI empty theories are convex}.
\end{proof}

\subsection{$\Tmninfty$}

\begin{lemma}
    The $\Sigmatwo$-theory $\Tmninfty$ is not stably infinite, and thus not smooth, with respect to $\{\s, \s_{2}\}$ for any $m,n\in\mathbb{N}$.
\end{lemma}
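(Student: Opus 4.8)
The plan is to read off from the axiomatization of $\Tmninfty$ the (already-described) shape of its models, observe that its domain of sort $\s$ is always finite, and then invoke the definitions. Recall we may assume w.l.o.g. $m \ge n$. First I would pin down the models: fix a $\Tmninfty$-interpretation $\A$; for every $k \in \mathbb{N}$, $\A$ satisfies $\psi^{\s}_{=\max\{m,n\}} \vee (\psi^{\s}_{=\min\{m,n\}} \wedge \psi^{\s_{2}}_{\geq k})$. If $|\s^{\A}| \neq m$, then the first disjunct is false, so for \emph{every} $k$ the second disjunct must hold; this forces $|\s^{\A}| = n$ and $|\s_{2}^{\A}| \geq k$ for all $k$, i.e.\ $\s_{2}^{\A}$ is infinite. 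In any case $|\s^{\A}| \in \{m, n\}$, so $\s^{\A}$ is finite in every model of $\Tmninfty$. I would also record that $\Tmninfty$ is non-contradictory: any structure with $|\s^{\A}| = m$ and, say, $|\s_{2}^{\A}| = 1$ satisfies every axiom.

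Next, to show $\Tmninfty$ is not stably infinite with respect to $\{\s, \s_{2}\}$, I would take the quantifier-free formula $x = x$, where $x$ is a variable of sort $\s$. It is $\Tmninfty$-satisfiable (by the model just exhibited), but by the previous step there is no $\Tmninfty$-interpretation $\A$ satisfying it with $|\s^{\A}|$ infinite. Hence the requirement in the definition of stable infiniteness with respect to $\{\s, \s_{2}\}$ — a $\Tmninfty$-interpretation satisfying the formula and infinite in \emph{each} sort of $\{\s, \s_{2}\}$ — fails for this formula, so $\Tmninfty$ is not stably infinite with respect to $\{\s, \s_{2}\}$.

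Finally, for smoothness I would use that smoothness with respect to a set $S$ implies stable infiniteness with respect to $S$: given a $\Tmninfty$-satisfiable quantifier-free $\phi$ witnessed by some $\A$, apply smoothness with the cardinal function $\kappa$ sending each $\sigma \in S$ with $|\sigma^{\A}|$ finite to $\aleph_{0}$ and each remaining $\sigma$ to $|\sigma^{\A}|$; this produces a $\Tmninfty$-interpretation satisfying $\phi$ and infinite in every sort of $S$. Contrapositively, since $\Tmninfty$ is not stably infinite with respect to $\{\s, \s_{2}\}$, it is not smooth with respect to $\{\s, \s_{2}\}$. There is no serious obstacle in this argument; the only points needing a moment's care are the quantification over all $k$ when reading off the model description, and recording that $\Tmninfty$ actually has a model, so that the chosen formula is genuinely $\Tmninfty$-satisfiable.
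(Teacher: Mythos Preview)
Your proposal is correct and follows essentially the same approach as the paper: the paper's proof is the one-liner ``obvious, since $\Tmninfty$ is not contradictory, but it does not have any interpretations $\A$ with $|\s^{\A}|$ infinite,'' and you have simply unpacked this by reading off the model description from the axioms, exhibiting a concrete model, choosing $x=x$ as the witnessing formula, and spelling out the standard implication $\smooth\Rightarrow\stainf$. No substantive difference.
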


\begin{proof}
    Obvious, since $\Tmninfty$ is not contradictory, but it does not have any interpretations $\A$ with $|\s^{\A}|$ infinite.
\end{proof}

\begin{lemma}\label{Tmninfty is not SF}
    The $\Sigmatwo$-theory $\Tmninfty$ is not stably finite with respect to $\{\s, \s_{2}\}$ for any $m,n\in\mathbb{N}$ such that $m\neq n$.
\end{lemma}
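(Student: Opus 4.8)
The plan is to exhibit a trivially satisfiable formula witnessing the failure of stable finiteness, using nothing beyond the shape of the models of $\Tmninfty$. First I would recall that structure: assuming without loss of generality $m > n$ (the two are distinct positive integers), a $\Sigmatwo$-interpretation $\A$ is a $\Tmninfty$-model if and only if either $|\s^{\A}| = m$, or $|\s^{\A}| = n$ and $\s_{2}^{\A}$ is infinite. Indeed, $\A$ is a model precisely when it satisfies $\psi^{\s}_{=m} \vee (\psi^{\s}_{=n} \wedge \psi^{\s_{2}}_{\geq k})$ for every $k \in \mathbb{N}$; if $|\s^{\A}| \neq m$, then for every $k$ the second disjunct must hold, forcing $|\s^{\A}| = n$ and $|\s_{2}^{\A}| \geq k$ for all $k$, i.e. $\s_{2}^{\A}$ infinite.

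Next I would pick the witnessing pair. Let $x$ be a variable of sort $\s$ and take the quantifier-free formula $\phi$ to be $x = x$. By the characterization above there is a $\Tmninfty$-interpretation $\A$ with $|\s^{\A}| = n$ and $|\s_{2}^{\A}| = \omega$, and clearly $\A$ satisfies $\phi$.

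Finally, suppose for contradiction that $\Tmninfty$ were stably finite with respect to $\{\s, \s_{2}\}$. Then there would be a $\Tmninfty$-interpretation $\B$ satisfying $\phi$ with $|\s^{\B}|$ and $|\s_{2}^{\B}|$ both finite and, in particular, $|\s^{\B}| \leq |\s^{\A}| = n$. Since $|\s_{2}^{\B}|$ is finite, $\B$ cannot be a model of the second kind, so the characterization forces $|\s^{\B}| = m$; but then $m = |\s^{\B}| \leq n < m$, a contradiction. Hence $\Tmninfty$ is not stably finite with respect to $\{\s, \s_{2}\}$.

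I do not expect a real obstacle here: the argument is a direct consequence of the fact that the "small" sort $\s$ can only shrink to size $n$ at the cost of an infinite $\s_{2}$, so any formula with finite models is automatically pinned to $|\s| = m$. The only points requiring a little care are reading off the two classes of models correctly from the infinite axiom schema, and using $m \neq n$ (equivalently, the normalization $m > n$) to obtain the strict inequality that produces the contradiction.
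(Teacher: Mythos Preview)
Your proof is correct and follows essentially the same approach as the paper: both pick the trivial formula $x=x$ (with $x$ of sort $\s$), the $\Tmninfty$-interpretation with $|\s^{\A}|=\min\{m,n\}$ and $|\s_{2}^{\A}|=\omega$, and observe that no finite-domain model can have $|\s|\leq\min\{m,n\}$. Your normalization $m>n$ is exactly the paper's $\min\{m,n\}$ written out, and your explicit model characterization just makes the paper's one-line observation more detailed.
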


\begin{proof}
    Take a variable $x$ of sort $\s$, and we know $x=x$ is satisfied by any $\Tmninfty$-interpretation $\A$ with $|\s^{\A}|=\min\{m,n\}$ (and, necessarily, $|\s_{2}^{\A}|\geq\omega$): but, since there are no $\Tmninfty$-interpretations $\B$ with $|\s^{\B}|\leq \min\{m,n\}$ and $|\s_{2}^{\B}|$ finite, $\Tmninfty$ cannot be stably finite.
\end{proof}

\begin{lemma}\label{Tmninfty is FW}
    The $\Sigmatwo$-theory $\Tmninfty$ is finitely witnessable with respect to $\{\s, \s_{2}\}$  for any $m,n\in\mathbb{N}$.
\end{lemma}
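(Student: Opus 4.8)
The plan is to exhibit an explicit computable witness. Since the axiomatization of $\Tmninfty$ refers to $m$ and $n$ only through $\max\{m,n\}$ and $\min\{m,n\}$, we may assume $m\geq n$ and set $M=\max\{m,n\}$. If $M=0$ then $\Tmninfty$ has no interpretations at all (sorts are nonempty), so it is vacuously finitely witnessable; hence from now on $M\geq 1$. The key observation is that every \emph{finite} $\Tmninfty$-interpretation $\A$ must have $|\s^{\A}|=M$: the other allowed value, $|\s^{\A}|=\min\{m,n\}$, is possible only when $\s_2^{\A}$ is infinite. So the witness merely has to guarantee that, in addition to whatever $\phi$ imposes, there are enough variables around to cover a domain of size exactly $M$ in the sort $\s$ and some finite domain in the sort $\s_{2}$.

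Concretely, I would take fresh variables $x_1,\dots,x_M$ of sort $\s$ and one fresh variable $u$ of sort $\s_{2}$, none occurring in $\phi$, and define
\[\wit(\phi)=\phi\wedge\bigwedge_{i=1}^{M}(x_i=x_i)\wedge(u=u),\]
which is clearly computable. Condition $(i)$ is immediate: the added conjuncts are valid and mention only the fresh variables, so with $\overarrow{w}=\vars(\wit(\phi))\setminus\vars(\phi)=\{x_1,\dots,x_M,u\}$ the formula $\Exists{\overarrow{w}}\wit(\phi)$ is logically equivalent, hence $\Tmninfty$-equivalent, to $\phi$.

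For condition $(ii)$, suppose $\wit(\phi)$ is satisfied by a $\Tmninfty$-interpretation $\A$; then $\A\vDash\phi$ and, since $\A$ is a $\Tmninfty$-interpretation, $|\vars_{\s}(\phi)^{\A}|\leq|\s^{\A}|\leq M$. I would build a $\Tmninfty$-interpretation $\B$ as follows: choose a set $A'$ disjoint from $\s^{\A}$ with exactly $M-|\vars_{\s}(\phi)^{\A}|$ elements, set $\s^{\B}=\vars_{\s}(\phi)^{\A}\cup A'$ (so $|\s^{\B}|=M$), and let $\s_2^{\B}=\vars_{\s_2}(\phi)^{\A}$ if that set is nonempty and a singleton otherwise. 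Interpret variables by $y^{\B}=y^{\A}$ for $y\in\vars(\phi)$, choose the $x_i^{\B}$ so that $i\mapsto x_i^{\B}$ surjects onto $\s^{\B}$ (possible since both sets have $M$ elements), let $u^{\B}$ be any element of $\s_2^{\B}$, and interpret remaining variables arbitrarily. Since $|\s^{\B}|=M=\max\{m,n\}$, $\B$ satisfies $\psi^{\s}_{=\max\{m,n\}}$ and hence every axiom of $\Tmninfty$. Because the signature is empty, the truth value of $\phi$ is determined by the equalities among $\vars(\phi)$, and $y^{\B}=z^{\B}\iff y^{\A}=z^{\A}$ for $y,z\in\vars(\phi)$, so $\B\vDash\phi$; the other conjuncts of $\wit(\phi)$ are tautologies, so $\B\vDash\wit(\phi)$. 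Finally $\vars_{\s}(\wit(\phi))^{\B}=\vars_{\s}(\phi)^{\B}\cup\{x_1^{\B},\dots,x_M^{\B}\}=\s^{\B}$ by the choice of the $x_i$, and $\vars_{\s_2}(\wit(\phi))^{\B}=\vars_{\s_2}(\phi)^{\B}\cup\{u^{\B}\}=\s_2^{\B}$, which is exactly what $(ii)$ demands.

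The argument is essentially routine once the right design choice is made; the only point requiring care — and the spot where a naive witness fails — is recognizing that finite $\Tmninfty$-models are forced into the branch $|\s^{\A}|=\max\{m,n\}$, so the witness must add $\max\{m,n\}$ fresh $\s$-variables (a number tied to the theory, not to $\phi$), and that this choice still works when the interpretation one starts from happens to sit in the other branch with $|\s^{\A}|=\min\{m,n\}<M$.
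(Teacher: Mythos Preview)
Your proof is correct and follows essentially the same approach as the paper: both define $\wit(\phi)$ by conjoining $\phi$ with $\max\{m,n\}$ trivial equalities on fresh $\s$-variables, then build the finite model $\B$ with $|\s^{\B}|=\max\{m,n\}$ and $\s_{2}^{\B}$ essentially equal to $\vars_{\s_{2}}(\phi)^{\A}$. Your version is in fact slightly more careful, since the added fresh $\s_{2}$-variable $u$ and the singleton fallback explicitly handle the edge case $\vars_{\s_{2}}(\phi)=\emptyset$, which the paper's construction glosses over.
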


\begin{proof}
    Let $\phi$ be a quantifier-free $\Sigmatwo$-formula, and consider the witness
    \[\wit(\phi)=\phi\wedge\bigwedge_{i=1}^{\max\{m,n\}}x_{i}=x_{i},\]
    where $x_{1}$ through $x_{\max\{m,n\}}$ are fresh variables of sort $\s$; this set of variables will be denoted by $\overarrow{x}$, and it equals $\vars(\wit(\phi))\setminus\vars(\phi)$. Suppose first that the $\Tmninfty$-interpretation $\A$ satisfies $\phi$: because $\bigwedge_{i=1}^{\max\{m,n\}}x_{i}=x_{i}$ is a tautology, $\A$ also satisfies that formula, and so $\A$ satisfies $\wit(\phi)$ and thus $\Exists{\overarrow{x}}\wit(\phi)$. The reciprocal is also true because $\phi$ has none of the variables in $\overarrow{x}$, and so $\phi$ and $\Exists{\overarrow{x}}\wit(\phi)$ are $\Tmninfty$-equivalent.

    Now, assume $\A$ is a $\Tmninfty$-interpretation that satisfies $\wit(\phi)$; take a set $A$ with $\max\{m,n\}-|\s^{\A}|$ elements, and disjoint from $\s^{\A}$ and $\s_{2}^{\A}$. We define a $\Tmninfty$-interpretation $\B$ by making: $\s^{\B}=\s^{\A}\cup A$; $\s_{2}^{\B}=\vars_{\s_{2}}(\phi)^{\A}$; $x^{\B}=x^{\A}$ and $u^{\B}=u^{\A}$ for all variables $x$ and $u$, of sorts $\s$ and $\s_{2}$ respectively, in $\phi$; $x_{i}\mapsto x_{i}^{\B}$ any surjective map, for $x_{i}$ in $\overarrow{x}$ (and arbitrarily otherwise). Then it is true that $\B$ satisfies $\wit(\phi)$ (since it satisfies $\phi$), and $\s^{\B}=\vars_{\s}(\wit(\phi))^{\B}$ and $\s_{2}^{\B}=\vars_{\s_{2}}(\wit(\phi))^{\B}$.
\end{proof}

\begin{lemma}
    The $\Sigmatwo$-theory $\Tmninfty$ has the finite model property with respect to $\{\s, \s_{2}\}$ for any $m,n\in\mathbb{N}$.
\end{lemma}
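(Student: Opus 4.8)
The plan is to simply invoke the unconditional implication from finite witnessability to the finite model property. Since the immediately preceding lemma (\Cref{Tmninfty is FW}) establishes that $\Tmninfty$ is finitely witnessable with respect to $\{\s,\s_{2}\}$ for every $m,n\in\mathbb{N}$, and \Cref{FW=>FMP} asserts that any theory finitely witnessable with respect to a set of sorts also has the finite model property with respect to that set, the conclusion is immediate.

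Concretely, I would first recall that a witness $\wit$ for $\Tmninfty$ with respect to $\{\s,\s_{2}\}$ has already been exhibited, namely $\wit(\phi)=\phi\wedge\bigwedge_{i=1}^{\max\{m,n\}}x_{i}=x_{i}$. Then, given a $\Tmninfty$-satisfiable quantifier-free $\Sigmatwo$-formula $\phi$ witnessed by some interpretation $\A$, the proof of \Cref{FW=>FMP} produces a $\Tmninfty$-interpretation whose domains are exactly the (finite) sets of values of the variables of $\wit(\phi)$, hence finite, and which still satisfies $\phi$. There is no obstacle here: the entire content has already been done in \Cref{Tmninfty is FW} and \Cref{FW=>FMP}, so the present lemma is a one-line corollary.

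\begin{proof}
Follows from \Cref{FW=>FMP} and \Cref{Tmninfty is FW}.
\end{proof}
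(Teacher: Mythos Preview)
Your proof is correct and is essentially identical to the paper's own proof, which also reads ``Follows from \Cref{FW=>FMP} and \Cref{Tmninfty is FW}.'' There is nothing to add.
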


\begin{proof}
Follows from \Cref{FW=>FMP} and \Cref{Tmninfty is FW}.
\end{proof}

\begin{lemma}
    The $\Sigmatwo$-theory $\Tmninfty$ is not strongly finitely witnessable with respect to $\{\s, \s_{2}\}$ for any $m,n\in\mathbb{N}$ such that $m\neq n$.
\end{lemma}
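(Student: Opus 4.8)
The plan is to avoid any direct construction and instead derive the claim from two results already in hand. First, recall \Cref{SFW=>SF}, which states (unconditionally) that any theory strongly finitely witnessable with respect to a set of sorts $S$ is also stably finite with respect to $S$. Taking the contrapositive, it suffices to exhibit a quantifier-free formula witnessing the failure of stable finiteness of $\Tmninfty$ with respect to $\{\s,\s_{2}\}$ — but this is exactly the content of \Cref{Tmninfty is not SF}, which was established just above using the formula $x=x$ for a variable $x$ of sort $\s$: it is satisfied by a $\Tmninfty$-interpretation $\A$ with $|\s^{\A}|=\min\{m,n\}$ (forcing $|\s_{2}^{\A}|\geq\omega$), while no $\Tmninfty$-interpretation $\B$ has $|\s^{\B}|\leq\min\{m,n\}$ together with $|\s_{2}^{\B}|$ finite.

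So the proof is a one-line deduction: if $\Tmninfty$ were strongly finitely witnessable with respect to $\{\s,\s_{2}\}$, then by \Cref{SFW=>SF} it would be stably finite with respect to $\{\s,\s_{2}\}$, contradicting \Cref{Tmninfty is not SF}. This mirrors precisely the argument used for \Cref{Tupinfty is not SW}. The hypothesis $m\neq n$ enters only through its role in \Cref{Tmninfty is not SF} (when $m=n$ the theory is just $\Tleqn$-like with a free second sort and the argument collapses), so no extra case analysis is needed here.

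There is no real obstacle: the substantive work — constructing the models of $\Tmninfty$ of the relevant cardinalities and checking that no finite model drops the $\s$-cardinality to $\min\{m,n\}$ — has already been done in the proof of \Cref{Tmninfty is not SF}, and the implication from strong finite witnessability to stable finiteness is \Cref{SFW=>SF}. Thus the entire proof consists of chaining these two facts.

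\begin{proof}
Follows from \Cref{SFW=>SF} and \Cref{Tmninfty is not SF}.
\end{proof}
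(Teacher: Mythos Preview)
Your proposal is correct and matches the paper's proof exactly: the paper also writes simply ``Follows from \Cref{SFW=>SF} and \Cref{Tmninfty is not SF}.'' Your additional commentary about where the hypothesis $m\neq n$ enters and the parallel with \Cref{Tupinfty is not SW} is accurate and well observed.
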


\begin{proof}
Follows from \Cref{SFW=>SF} and \Cref{Tmninfty is not SF}.
\end{proof}

\begin{lemma}
    The $\Sigmatwo$-theory $\Tmninfty$ is not convex with respect to $\{\s, \s_{2}\}$ if $\max\{m,n\}>1$.
\end{lemma}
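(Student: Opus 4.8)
The plan is to produce a conjunction of $\Sigmatwo$-literals together with a finite family of pairs of variables of sort $\s$ that jointly violate the definition of convexity. Assume without loss of generality that $m\geq n$, so that $\max\{m,n\}=m\geq 2$. The crucial structural fact, read off directly from the axiomatization of $\Tmninfty$, is that every $\Tmninfty$-interpretation $\A$ satisfies $|\s^{\A}|\leq m$, since $|\s^{\A}|$ is forced to be either $m$ or $n\leq m$.

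First I would fix $m+1$ fresh variables $x_{1},\ldots,x_{m+1}$ of sort $\s$ and take $\phi=(x_{1}=x_{1})$, a single literal. By the pigeonhole principle, in any $\Tmninfty$-interpretation the $m+1$ values assigned to $x_{1},\ldots,x_{m+1}$ cannot all be distinct, so $\tmodels\phi\rightarrow\bigvee_{1\leq i<j\leq m+1}x_{i}=x_{j}$. Then I would verify that no single disjunct is $\T$-entailed by $\phi$: for any fixed pair $i<j$, since $m\geq 2$ there is a $\Tmninfty$-interpretation $\A$ with $|\s^{\A}|=m$ — it satisfies $\psi^{\s}_{=m}$ and hence every axiom, whatever $|\s_{2}^{\A}|$ is — in which $x_{i}$ and $x_{j}$ are assigned two distinct elements of $\s^{\A}$; this $\A$ satisfies $\phi$ but refutes $x_{i}=x_{j}$. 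Together these two observations contradict convexity of $\Tmninfty$ with respect to $\{\s,\s_{2}\}$.

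I expect no real obstacle here; the argument mirrors the non-convexity reasoning already used, for instance in the proof of \Cref{threesorted}. The only minor points to watch are that $\phi$ must be a genuine conjunction of literals (handled by the tautological literal $x_{1}=x_{1}$) and that a separating interpretation exists uniformly for every relevant pair of variables, which is guaranteed simply by $m\geq 2$ together with the existence of a $\Tmninfty$-model whose $\s$-domain has exactly $m$ elements.
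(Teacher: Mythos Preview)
Your argument is correct and essentially identical to the paper's own proof: both take $M+1$ variables of sort $\s$ with $M=\max\{m,n\}$, invoke the pigeonhole principle to get $\T$-validity of the disjunction of equalities, and then use a $\Tmninfty$-model with $|\s^{\A}|=M\geq 2$ to refute each individual equality. The only cosmetic difference is that you explicitly package the antecedent as the tautological literal $x_{1}=x_{1}$, whereas the paper leaves the conjunction empty.
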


\begin{proof}
    Let $M=\max\{m,n\}>1$ and take $M+1$ variables $x_{1}$ trough $x_{M+1}$, of sort $\s$: it is then true that $\dash_{\Tmninfty}\bigvee_{i=1}^{M}\bigvee_{j=i+1}^{M+1}x_{i}=x_{j}$, by the pigeonhole principle, but we cannot get $\dash_{\Tmninfty}x_{i}=x_{j}$ for any $1\leq i<j\leq M+1$ since there is a $\Tmninfty$-interpretation $\A$ with $|\s^{\A}|=M>1$.
\end{proof}

\subsection{$\Tsupinfty$}

\begin{lemma}
    The $\Sigmastwo$-theory $\Tsupinfty$ is not stably infinite, and thus not smooth, with respect to $\{\s, \s_{2}\}$.
\end{lemma}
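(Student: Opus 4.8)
\emph{Proof proposal.} The plan is to exhibit a single quantifier-free $\Sigmastwo$-formula that is $\Tsupinfty$-satisfiable but all of whose $\Tsupinfty$-models have a \emph{finite} $\s$-domain; this immediately refutes stable infiniteness with respect to $\{\s,\s_{2}\}$, and failure of smoothness then follows at once. The only point that needs a little care is reading the axiomatization correctly, namely seeing that whether $s^{\A}$ has a fixed point is dictated by $|\s^{\A}|$.

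First I would record the relevant consequence of the axioms. Instantiating $k=0$, every $\Tsupinfty$-interpretation $\A$ satisfies $(\psi^{\s}_{=1}\wedge\psi^{\s_{2}}_{\geq 0})\vee(\diag{\s}{\s_{2}}{2}\wedge\Forall{x}\neg p(x))$; since $\psi^{\s_{2}}_{\geq 0}$ is trivially true, whenever $|\s^{\A}|\neq 1$ the first disjunct fails, so $\A\vDash\Forall{x}\neg p(x)$, i.e.\ $s^{\A}(a)\neq a$ for every $a\in\s^{\A}$. This matches the description of the models of $\Tsupinfty$ given above the statement: a model either has $|\s^{\A}|=1$ with $s^{\A}$ the identity, or has $s^{\A}$ fixed-point free.

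Next I would take a variable $x$ of sort $\s$ and the formula $s(x)=x$, that is, $p(x)$. It is $\Tsupinfty$-satisfiable: the interpretation with $|\s^{\A}|=1$, $|\s_{2}^{\A}|=\omega$ and $s^{\A}$ the identity is a $\Tsupinfty$-interpretation (it satisfies the first disjunct of every axiom) and it satisfies $s(x)=x$. Conversely, by the step above, every $\Tsupinfty$-interpretation $\A$ satisfying $s(x)=x$ has $|\s^{\A}|=1$, since otherwise $s^{\A}$ has no fixed point while $x^{\A}\in\s^{\A}$ would be one. Hence no $\Tsupinfty$-interpretation satisfies $s(x)=x$ with $\s^{\A}$ infinite, so $\Tsupinfty$ is not stably infinite with respect to $\{\s,\s_{2}\}$ (indeed, not even with respect to $\{\s\}$). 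Finally, since smoothness with respect to a set of sorts implies stable infiniteness with respect to that set --- enlarge each of its sorts to an infinite cardinal, which is admissible as every finite cardinality is below it --- $\Tsupinfty$ is also not smooth with respect to $\{\s,\s_{2}\}$. As noted, I do not expect a real obstacle; the whole argument rests on the single observation that $|\s^{\A}|=1$ forces $s^{\A}$ to be the identity whereas $|\s^{\A}|\geq 2$ forces $s^{\A}$ to be fixed-point free, which is exactly what makes $s(x)=x$ separate the one-element models from all the infinite ones.
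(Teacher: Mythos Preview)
Your proposal is correct and follows essentially the same approach as the paper: you use the quantifier-free formula $s(x)=x$, observe it is satisfied by the $\Tsupinfty$-interpretation with $|\s^{\A}|=1$ and $|\s_{2}^{\A}|=\omega$, and note that every $\Tsupinfty$-interpretation with $|\s^{\A}|\geq 2$ satisfies $\Forall{x}\neg(s(x)=x)$, so no infinite-$\s$ model exists for this formula. The paper's proof is terser but relies on exactly this witness and exactly this dichotomy.
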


\begin{proof}
    $s(x)=x$ is satisfied by the $\Tsupinfty$-interpretation $\A$ with $|\s^{\A}|=1$ and $|\s_{2}^{\A}|=\omega$, but all $\Tsupinfty$-interpretations $\B$ with $|\s^{\A}|\geq 2$ satisfy instead $\Forall{x}\neg(s(x)=x)$.
\end{proof}

\begin{lemma}\label{Tsupinfty is FW}
    The $\Sigmastwo$-theory $\Tsupinfty$ is finitely witnessable with respect to $\{\s, \s_{2}\}$.
\end{lemma}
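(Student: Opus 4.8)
The plan is to produce a computable witness $\wit$ in the spirit of the one used for $\Tupinfty$ in \Cref{Tupinfty is FW}, but enlarged so as to also pin down the iterated $s$-images of the variables of sort $\s$. Fix a quantifier-free $\Sigmastwo$-formula $\phi$, write $\vars_{\s}(\phi)=\{z_{1},\dots,z_{p}\}$, and for each $i$ let $M_{i}$ be the largest $j$ such that $s^{j}(z_{i})$ is a subterm of $\phi$. I would take fresh variables $y_{i,j}$ of sort $\s$, for $1\le i\le p$ and $1\le j\le M_{i}+1$, together with padding variables $v_{1},\dots,v_{N}$ of sort $\s$ and $v'_{1},\dots,v'_{N}$ of sort $\s_{2}$, where $N$ is any integer exceeding both the number of subterms of $\phi$ of sort $\s$ and the number of variables of $\phi$ of sort $\s_{2}$, and at least $2$. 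Writing $y_{i,0}$ for $z_{i}$, set
\[\wit(\phi)=\phi\wedge\bigwedge_{i=1}^{p}\bigwedge_{j=1}^{M_{i}+1}\big(y_{i,j}=s(y_{i,j-1})\big)\wedge\bigwedge_{l=1}^{N}(v_{l}=v_{l})\wedge\bigwedge_{l=1}^{N}(v'_{l}=v'_{l}).\]
Condition $(i)$ of finite witnessability is then immediate, exactly as for $\Tupinfty$: the equalities $v_{l}=v_{l}$, $v'_{l}=v'_{l}$ are tautological, the equalities $y_{i,j}=s(y_{i,j-1})$ merely name subterms already occurring in $\phi$, and none of the new variables $\overarrow{w}=\vars(\wit(\phi))\setminus\vars(\phi)$ appear in $\phi$; so $\phi$ and $\Exists{\overarrow{w}}\wit(\phi)$ are $\Tsupinfty$-equivalent.

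For condition $(ii)$, given a $\Tsupinfty$-interpretation $\A$ satisfying $\wit(\phi)$, I would build a $\Tsupinfty$-interpretation $\B$ with $\s^{\B}=\vars_{\s}(\wit(\phi))^{\B}$ and $\s_{2}^{\B}=\vars_{\s_{2}}(\wit(\phi))^{\B}$, targeting the finite ``diagonal'' class of $\Tsupinfty$. Let $T$ be the maximum of the number of subterms of $\phi$ of sort $\s$, the number of variables of $\phi$ of sort $\s_{2}$, and $2$. For the second sort, put $\s_{2}^{\B}=\vars_{\s_{2}}(\phi)^{\A}\cup B'$ for a fresh set $B'$ making $|\s_{2}^{\B}|=T$, with the $v'_{l}$ sent surjectively onto $\s_{2}^{\B}$ and the remaining $\s_{2}$-variables of $\wit(\phi)$ evaluated as in $\A$, exactly as in \Cref{Tupinfty is FW}. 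For the first sort, put $\s^{\B}=\{(s^{\A})^{j}(z_{i}^{\A}) : 1\le i\le p,\ 0\le j\le M_{i}+1\}\cup B$ for a fresh set $B$ again making $|\s^{\B}|=T$, define $s^{\B}((s^{\A})^{j}(z_{i}^{\A}))=(s^{\A})^{j+1}(z_{i}^{\A})$ for $0\le j\le M_{i}$ (the targets lie in $\s^{\B}$ thanks to the buffer level $y_{i,M_{i}+1}$), and $s^{\B}(a)$ an arbitrary element of $\s^{\B}$ distinct from $a$ otherwise, sending the $v_{l}$ surjectively onto $B$. One then checks that $\B$ agrees with $\A$ on every subterm of $\phi$ — so $\B$ satisfies $\phi$ — that $\B$ satisfies the naming equalities, that $\s^{\B}=\vars_{\s}(\wit(\phi))^{\B}$ and $\s_{2}^{\B}=\vars_{\s_{2}}(\wit(\phi))^{\B}$, and that $|\s^{\B}|=|\s_{2}^{\B}|=T\ge 2$ with $s^{\B}$ fixed-point-free, so $\B$ is indeed a $\Tsupinfty$-interpretation.

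The step I expect to be the main obstacle is the construction of $s^{\B}$: it has to be a total map on the finite set $\vars_{\s}(\wit(\phi))^{\B}$ that agrees with $s^{\A}$ on the subterms of $\phi$ (so as not to disturb the truth of $\phi$), honours the conjuncts $y_{i,j}=s(y_{i,j-1})$, and at the same time meets the cardinality and fixed-point constraints defining the models of $\Tsupinfty$ — in particular the equality $|\s^{\B}|=|\s_{2}^{\B}|$ in the finite case. The buffer variables $y_{i,M_{i}+1}$ and a padding count $N$ chosen generously enough are precisely what make these demands compatible; the degenerate case $|\s^{\A}|=1$, in which $s^{\A}$ is forced to be the identity, is the delicate point and is the one sub-case that must be handled on its own rather than through the diagonal model above.
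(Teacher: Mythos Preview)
Your construction follows exactly the route the paper sketches (combine the $\Tupinfty$ witness with the $\Tneqodd$-style treatment of the iterated $s$-images), and for interpretations $\A$ with $|\s^{\A}|\ge 2$ it is essentially correct, modulo the bookkeeping point that your bound $T$ must also dominate $|\{(s^{\A})^{j}(z_i^{\A}):0\le j\le M_i+1\}|$; otherwise no fresh set $B$ of the required nonnegative size exists.

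The genuine problem is the case you flag at the end. It is not merely ``delicate'': it cannot be handled by \emph{any} witness, and the lemma as stated appears to be false. Take $\phi=(s(z_1)=z_1)$. This is $\Tsupinfty$-satisfiable, since every interpretation $\A$ with $|\s^{\A}|=1$ (where $s^{\A}$ is forced to be the identity) satisfies it; hence $\wit(\phi)$ is $\Tsupinfty$-satisfiable for any candidate witness $\wit$. But any $\Tsupinfty$-interpretation $\B$ satisfying $\wit(\phi)$ also satisfies $\phi$ (by condition $(i)$), so $z_1^{\B}$ is a fixed point of $s^{\B}$. Since every $\Tsupinfty$-model with $|\s|\ge 2$ validates $\Forall{x}\neg(s(x)=x)$, this forces $|\s^{\B}|=1$, and then the axioms force $|\s_2^{\B}|\ge\omega$. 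Consequently $\s_2^{\B}$ can never equal the finite set $\vars_{\s_2}(\wit(\phi))^{\B}$, and condition $(ii)$ fails. The same formula shows that $\Tsupinfty$ does not even have the finite model property with respect to $\{\s,\s_2\}$. The paper's proof is only a two-line pointer to the $\Tupinfty$ and $\Tneqodd$ arguments and does not address this case; the $\Tneqodd$ analogy breaks precisely because in that one-sorted theory the one-element model is already finite in every sort, whereas here it drags an infinite $\s_2$-domain along with it. So the gap you identified is not a missing sub-case of your argument but an obstruction to the statement itself.
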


\begin{proof}
    The proof is similar to the one of \Cref{Tupinfty is FW}, some care being necessary to give witnesses to the function $s$; see the proof that $\Tneqodd$ is finitely witnessable in \cite{BarTolZoh} for how this can be done.
\end{proof}

\begin{lemma}\label{Tsupinfty is not SF}
    The $\Sigmastwo$-theory $\Tsupinfty$ is not stably finite with respect to $\{\s, \s_{2}\}$.
\end{lemma}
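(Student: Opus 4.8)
The plan is to argue exactly as in the proof of \Cref{Tupinfty is not SF}, exploiting the fact that $\Tsupinfty$ admits a model whose first domain is a singleton only at the price of an infinite second domain. Concretely, I would take a variable $x$ of sort $\s$ and the (satisfiable) quantifier-free formula $x=x$, and let $\A$ be the $\Tsupinfty$-interpretation with $|\s^{\A}|=1$, $|\s_{2}^{\A}|=\omega$, and $s^{\A}$ the identity. This $\A$ satisfies every axiom instance via the left disjunct $\psi^{\s}_{=1}\wedge\psi^{\s_{2}}_{\geq k}$, and it trivially satisfies $x=x$.

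Then I would show that no witness for stable finiteness exists for this $\phi$ and $\A$. Any $\Tsupinfty$-interpretation $\B$ with $|\s^{\B}|\leq|\s^{\A}|=1$ has $|\s^{\B}|=1$, so for each $k$ the disjunct $\diag{\s}{\s_{2}}{k+2}\wedge\Forall{x}\neg p(x)$ fails (by definition $\diag{\s}{\s_{2}}{k+2}$ entails $\psi^{\s}_{\geq 2}$), and hence $\psi^{\s}_{=1}\wedge\psi^{\s_{2}}_{\geq k}$ must hold, forcing $|\s_{2}^{\B}|\geq k$ for all $k$, i.e. $|\s_{2}^{\B}|$ infinite. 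This contradicts the requirement that $|\s_{2}^{\B}|$ be finite, so $\Tsupinfty$ is not stably finite with respect to $\{\s,\s_{2}\}$. The only routine verification is that $\diag{\s}{\s_{2}}{k+2}$ indeed implies $\psi^{\s}_{\geq 2}$, which is immediate from its definition as $(\psi^{\s}_{\geq k+2}\wedge\psi^{\s_{2}}_{\geq k+2})\vee\bigvee_{i=2}^{k+2}(\psi^{\s}_{=i}\wedge\psi^{\s_{2}}_{=i})$. There is no real obstacle here: the argument is a direct reading of the axiomatization, parallel to the $\Tupinfty$ case, and the function symbol $s$ plays no active role since the chosen formula does not constrain it.
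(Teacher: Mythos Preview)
Your proposal is correct and follows essentially the same approach as the paper: exhibit the $\Tsupinfty$-interpretation $\A$ with $|\s^{\A}|=1$ and $|\s_{2}^{\A}|=\omega$, and observe that no $\Tsupinfty$-interpretation $\B$ with $|\s^{\B}|\leq 1$ can have $\s_{2}^{\B}$ finite. The paper simply declares this ``obvious,'' whereas you spell out why the axioms force $|\s_{2}^{\B}|\geq k$ for all $k$; that extra unpacking is fine and changes nothing substantive.
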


\begin{proof}
    Obvious, since $\Tsupinfty$ has an interpretation $\A$ with $|\s^{\A}|=1$ and $|\s_{2}^{\A}|=\omega$, while there are no $\Tsupinfty$-interpretations $\B$ with $|\s^{\B}|=1$ and $\s_{2}^{\B}$ finite.
\end{proof}

\begin{lemma}
    The $\Sigmastwo$-theory $\Tsupinfty$ has the finite model property with respect to $\{\s, \s_{2}\}$.
\end{lemma}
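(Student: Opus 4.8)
The plan is to obtain the finite model property as an immediate corollary of finite witnessability, exactly as was done for $\Tupinfty$ and $\Tmninfty$ above. First I would invoke \Cref{Tsupinfty is FW}, which already establishes that $\Tsupinfty$ is finitely witnessable with respect to $\{\s,\s_{2}\}$. Then, since \Cref{FW=>FMP} states that any $\Sigma$-theory finitely witnessable with respect to a set of sorts $S$ also has the finite model property with respect to $S$, applying it with $S=\{\s,\s_{2}\}$ yields the claim. So the proof itself is a single line.

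For context, the only nontrivial work hiding behind this step lies in the proof of \Cref{Tsupinfty is FW}: there one must exhibit a witness $\wit$ such that $\phi$ and $\exists\,\overarrow{w}.\:\wit(\phi)$ are $\Tsupinfty$-equivalent and such that every $\Tsupinfty$-interpretation of $\wit(\phi)$ can be shrunk so that both domains coincide with the (finitely many) variables of $\wit(\phi)$. Beyond the padding-with-fresh-variables trick used for $\Tupinfty$, this requires some care to provide witnesses for the terms $s(x)$, since $s$ is fixed-point-free in all but the one-element class of models; the paper points to the analogous treatment of $\Tneqodd$ in \cite{BarTolZoh} for how this is handled.

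Hence the main obstacle is not in this lemma at all but was already dispatched in the preceding one: once finite witnessability is in hand, the finite model property comes for free via \Cref{FW=>FMP}. An alternative, self-contained route would reprove the fact directly --- given a $\Tsupinfty$-interpretation $\A$ satisfying a quantifier-free $\phi$, retain the finitely many values taken by the variables of $\phi$ together with their $s$-iterates, collapse $\A$ onto this finite set, and then pad it out to one of the admissible cardinality profiles (namely $|\s^{\A}|=1$ with $\s_{2}^{\A}$ made finite, or $|\s^{\A}|=|\s_{2}^{\A}|$ finite and $\geq 2$ with $s$ fixed-point-free) while preserving the truth of $\phi$ --- but this merely re-executes the witness construction, so citing \Cref{FW=>FMP} and \Cref{Tsupinfty is FW} is cleaner.
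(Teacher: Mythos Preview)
Your proposal is correct and matches the paper's own proof exactly: the paper's argument is the single line ``Follows from \Cref{Tsupinfty is FW} and \Cref{FW=>FMP}.'' Your additional discussion of what underlies \Cref{Tsupinfty is FW} and of the alternative direct construction is accurate context, but not needed for the proof itself.
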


\begin{proof}
    Follows from \Cref{Tsupinfty is FW} and \Cref{FW=>FMP}.
\end{proof}

\begin{lemma}
    The $\Sigmastwo$-theory $\Tsupinfty$ is not strongly finitely witnessable with respect to $\{\s, \s_{2}\}$.
\end{lemma}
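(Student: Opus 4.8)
The plan is to avoid any direct argument with witness functions and instead invoke the general implication already established in the excerpt, exactly as was done for the analogous theories $\Tupinfty$ (Lemma \ref{Tupinfty is not SW}) and $\Tmninfty$. Concretely, \Cref{SFW=>SF} states that any theory strongly finitely witnessable with respect to $S$ is also stably finite with respect to $S$. Its contrapositive is that a theory which is \emph{not} stably finite with respect to $S$ cannot be strongly finitely witnessable with respect to $S$.

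So the only thing I need is that $\Tsupinfty$ fails to be stably finite with respect to $\{\s,\s_{2}\}$, and this is precisely \Cref{Tsupinfty is not SF}, proved just above: the quantifier-free formula $s(x)=x$ is satisfied by the $\Tsupinfty$-interpretation $\A$ with $|\s^{\A}|=1$ and $|\s_{2}^{\A}|=\omega$ (where $s^{\A}$ is the identity), yet there is no $\Tsupinfty$-interpretation $\B$ with $|\s^{\B}|=1$ and $\s_{2}^{\B}$ finite, since in $\Tsupinfty$ the only models with a one-element $\s$-domain are those whose $\s_{2}$-domain is infinite. Hence stable finiteness fails.

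Combining these two facts gives the result in one line. The proof I would write is therefore simply: \emph{Follows from \Cref{SFW=>SF} and \Cref{Tsupinfty is not SF}.} There is no real obstacle here; all the genuine content sits in \Cref{Tsupinfty is not SF}, whose proof is itself immediate once one observes the mismatch between the cardinality constraints of $\Tsupinfty$'s models (a singleton $\s$-domain forces an infinite $\s_{2}$-domain) and the existence of a satisfiable formula that pins $\s$ down to a single element.

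\begin{proof}
Follows from \Cref{SFW=>SF} and \Cref{Tsupinfty is not SF}.
\end{proof}
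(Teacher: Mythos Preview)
Your proposal is correct and matches the paper's own proof verbatim: the paper also writes simply ``Follows from \Cref{SFW=>SF} and \Cref{Tsupinfty is not SF}.''
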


\begin{proof}
Follows from \Cref{SFW=>SF} and \Cref{Tsupinfty is not SF}.
\end{proof}

\begin{lemma}\label{Tneg is CV}
    Let $\T$ be a stably infinite  $\Sigma_{n}$-theory (the sorts of $\Sigma_{n}$ are $S=\{\s_{1}, \ldots, \s_{n}\}$) with no interpretations $\A$ such that $|\s_{1}^{\A}|=1$: $\T$ is necessarily convex because of \Cref{SI empty theories are convex}.
    
    Then the $\Sigma^{n}_{s}$-theory $\Tneg$, where $\Sigma^{n}_{s}$ is the signature with sorts $S$ and one function symbol $s:\s_{1}\rightarrow\s_{1}$, and $\Tneg$ is axiomatized by $\textit{Ax}(\T)\cup\{\Forall{x}\neg(s(x)=x)\}$, is convex with respect to its only sort.
\end{lemma}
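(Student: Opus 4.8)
The plan is to transfer convexity from $\T$ to $\Tneg$ by a Nelson--Oppen-flavoured argument, treating the lone function symbol $s$ through congruence closure. Suppose $\phi$ is a conjunction of $\Sigma^{n}_{s}$-literals, $\{u_{1},v_{1},\dots,u_{m},v_{m}\}$ is a finite set of variables of sorts in $S$, and $\models_{\Tneg}\phi\rightarrow\bigvee_{i=1}^{m}u_{i}=v_{i}$; assume toward a contradiction that each $\phi\wedge\neg(u_{i}=v_{i})$ is $\Tneg$-satisfiable, say by $\A_{i}$. First I would \emph{flatten} $\phi$: for every subterm of the form $s(t)$ occurring in $\phi$ introduce a fresh variable of sort $\s_{1}$ together with a defining conjunct $w=s(x)$, and substitute back, obtaining a conjunction $\widehat{\phi}$ whose literals are either pure $\Sigma_{n}$-literals ($x=y$ or $\neg(x=y)$, of any sort) or of the form $w=s(x)$ with $w,x$ of sort $\s_{1}$. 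One checks routinely that $\widehat{\phi}\wedge\neg(u_{i}=v_{i})$ is still $\Tneg$-satisfiable (extend $\A_{i}$ by interpreting each fresh variable as the value of the subterm it names) and that $\models_{\Tneg}\widehat{\phi}\rightarrow\bigvee_{i}u_{i}=v_{i}$ persists, so it suffices to contradict the latter using $\widehat{\phi}$.

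Next I would take congruence closure. Let $V_{1}$ be the set of $\s_{1}$-sorted variables of $\widehat{\phi}$ together with the relevant $u_{i},v_{i}$, and $V_{j}$ likewise for each other sort $\s_{j}$; let $\sim$ be the least equivalence relation on $V_{1}$ containing all positive $\s_{1}$-equalities of $\widehat{\phi}$ and closed under the rule that $w\sim w'$ whenever $w=s(x)$ and $w'=s(x')$ are conjuncts with $x\sim x'$, and let $\sim_{j}$ be the analogous closure on $V_{j}$. Because $\widehat{\phi}$ has the $\Tneg$-model $\A_{1}$, no negated conjunct $\neg(x=y)$ of $\widehat{\phi}$ can have $x\sim y$, and no conjunct $w=s(x)$ can have $w\sim x$ --- the latter is exactly where the axiom $\forall\, x.\:\neg(s(x)=x)$ is used. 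Moreover every $\Tneg$-model of $\widehat{\phi}$ respects $\sim$ and $\sim_{j}$, so from $\A_{i}\vDash\widehat{\phi}\wedge\neg(u_{i}=v_{i})$ we obtain $u_{i}\not\sim v_{i}$ (or $u_{i}\not\sim_{j}v_{i}$, according to their sort) for every $i$.

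Finally I would assemble a single canonical $\Tneg$-interpretation $\B$ refuting the disjunction. Put $\s_{1}^{\B}=(V_{1}/\!\sim)\cup F_{1}$ and $\s_{j}^{\B}=(V_{j}/\!\sim_{j})\cup F_{j}$ with the $F_{j}$ pairwise disjoint countably infinite fillers, interpret every variable of $\widehat{\phi}$ (and every $u_{i},v_{i}$) by its class, and define $s^{\B}$ on classes of the form $[x]$ by $s^{\B}([x])=[w]$ whenever $w=s(x)$ is a conjunct --- well defined by the closure rule and with no fixed point on its domain since no such conjunct has $w\sim x$ --- extending $s^{\B}$ arbitrarily to the remaining part of the (infinite) set $\s_{1}^{\B}$ by any map sending each point to a different point, so that $s^{\B}$ is globally fixed-point-free. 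Since $\Sigma_{n}$ is empty, the $\Sigma_{n}$-reduct of $\B$ is determined up to isomorphism by its tuple of cardinalities, namely $(\aleph_{0},\dots,\aleph_{0})$; as $\T$ is stably infinite with respect to all of $S$, by \Cref{LowenheimSkolemDownwards} it has a model with exactly that tuple of cardinalities, so the reduct of $\B$ is a $\T$-model and $\B$ is a $\Tneg$-interpretation. By construction $\B\vDash\widehat{\phi}$ (equalities and $s$-literals hold by definition; disequalities hold because distinct classes are distinct) and $\B\vDash\bigwedge_{i}\neg(u_{i}=v_{i})$, hence $\B\vDash\phi$ while $\B\not\vDash\bigvee_{i}u_{i}=v_{i}$ --- contradicting the hypothesis. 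Therefore $\models_{\Tneg}\phi\rightarrow u_{i}=v_{i}$ for some $i$, and $\Tneg$ is convex with respect to $S$ (and hence with respect to any single sort). I expect the main obstacle to be precisely the bookkeeping around $s$: ensuring the congruence closure neither forces an equality that some $\A_{i}$ refutes nor collides with the fixed-point-free axiom, and then realizing the quotient together with a fixed-point-free $s$ without creating unintended identifications; the hypothesis that $\T$ has no model $\A$ with $|\s_{1}^{\A}|=1$ makes $\Tneg$ a faithful extension of the (\Cref{SI empty theories are convex}-)convex theory $\T$, while stable infiniteness is what supplies the infinite $\s_{1}$ needed for the fixed-point-free interpretation of $s$.
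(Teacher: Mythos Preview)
Your argument is correct and follows a genuinely different route from the paper's. The paper also flattens, but it eliminates $s$ entirely: each term $s^{i}(x)$ is replaced by a fresh $\Sigma_{n}$-variable $x_{i}$, and the fixed-point-free axiom is encoded by adjoining the disequalities $\neg(x_{i+1}=x_{i})$, yielding a $\Sigma_{n}$-cube $\phi'$; it then shows that $\Tneg$-satisfiability of $\phi$ and $\T$-satisfiability of $\phi'$ coincide, and invokes the convexity of $\T$ (from \Cref{SI empty theories are convex}) on $\phi'$ directly. You instead keep the flat $s$-equations $w=s(x)$ in $\widehat{\phi}$, compute a congruence closure, and build one canonical countermodel $\B$ refuting all $u_{i}=v_{i}$ simultaneously, appealing only to the existence of a $\T$-model with every sort of size $\aleph_{0}$ (stable infiniteness plus L\"owenheim--Skolem) rather than to $\T$'s convexity as an independent ingredient. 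What your approach buys is that the functionality of $s$ is handled explicitly by the congruence rule, so well-definedness of $s^{\B}$ is automatic; the paper's purely relational $\phi'$ carries no congruence constraints, and the backward model construction there must do that work when defining $s^{\A}$. What the paper's approach buys is a transparent reduction to the base theory: once the satisfiability correspondence is in place, convexity of $\Tneg$ is literally inherited from convexity of $\T$. Note also that in your construction the hypothesis that $\T$ has no model with $|\s_{1}^{\A}|=1$ is not actually used---you only need one $\T$-model with infinite $\s_{1}$, which stable infiniteness already provides---whereas the paper's converse construction uses it to guarantee a fixed-point-free $s$ exists on every $\T$-model.
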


\begin{proof}
    Given a $\Sigma_{s}^{n}$-cube $\phi$, we change the name of every variable $x$ of sort $\s_{1}$ in $\phi$ to $x_{0}$, and replace each occurrence of 
    $s^{i+1}(x)$ in $\phi$ by a new variable $x_{i+1}$, adding the literal $\neg(x_{i+1}=x_{i})$ to $\phi$: the result 
    is the $\Sigma_{n}$-cube $\phi^{\prime}$. 
    
    It is clear that, if $\phi$ is $\Tneg$-satisfiable, $\phi^{\prime}$ is $\T$-satisfiable: indeed, if $\A$ satisfies 
    $\phi$, take the $\Sigma_{n}$-interpretation $\subf{\A}$ with $\s_{i}^{\subf{\A}}=\s_{i}^{\A}$ for each $1\leq i\leq n$; by changing the 
    value of $x_{i}^{\subf{\A}}$ (for $x$ of sort $\s_{1}$) so that it equals $(s^{\A})^{i}(x^{\A})$, the resulting 
    $\T$-interpretation $\C$ then satisfies $\phi^{\prime}$. 

    Conversely, if the $\T$-interpretation $\C$ satisfies $\phi^{\prime}$, we define a $\Tneg$-interpretation $\A$ by making: 
    $\s_{i}^{\A}=\s_{i}^{\C}$ for each $1\leq i\leq n$; $s^{\A}(x_{i}^{\C})=x_{i+1}^{\C}$ if $x_{i+1}$ occurs in $\phi^{\prime}$, and otherwise 
    any value different from $x_{i}^{\C}$ 
    itself; and $x^{\A}=x_{0}^{\C}$ for any variable $x$ in $\phi$, and $x^{\A}=x^{\C}$ otherwise. It is then easy to show $\A$ satisfies $\phi$.

    So assume that $\dash_{\Tneg}\phi\rightarrow\bigvee_{j=1}^{m}x_{j}=y_{j}$: then $\dash_{\T}\phi^{\prime}\rightarrow\bigvee_{j=1}^{m}x_{j}=y_{j}$, since otherwise 
    there exists a $\T$-interpretation $\C$ that satisfies $\phi^{\prime}$ but not $\bigvee_{j=1}^{m}x_{j}=y_{j}$, meaning 
    there is a $\Tneg$-interpretation that satisfies $\phi$ but not $\bigvee_{j=1}^{m}x_{j}=y_{j}$ and thus leading to a contradiction. Given $\T$ is convex, 
    $\dash_{\T}\phi^{\prime}\rightarrow x_{j}=y_{j}$ for some $1\leq j\leq m$, and hence 
    $\phi^{\prime}\wedge\neg(x_{j}=y_{j})$ is not $\T$-satisfiable. It follows that $\phi\wedge\neg(x_{j}=y_{j})$ is not $\Tneg$-satisfiable, and so 
    $\dash_{\Tneg}\phi\rightarrow x_{j}=y_{j}$, showing this theory is in fact convex.
\end{proof}

\begin{lemma}\label{Tsupinfty is CV}
    The $\Sigmastwo$-theory $\Tsupinfty$ is convex with respect to $\{\s, \s_{2}\}$.
\end{lemma}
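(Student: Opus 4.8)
The plan is to decompose $\Tsupinfty$ according to whether $s$ has a fixed point, and to show that convexity is inherited from the two pieces. Let $\T_{0}$ be the class of $\Tsupinfty$-interpretations in which $s$ has a fixed point, and $\T_{1}$ those in which it does not. By the axioms these two classes are disjoint, their union is all of $\Tsupinfty$, and they coincide respectively with the interpretations having $|\s^{\A}|=1$ (so $s^{\A}$ the identity and $\s_{2}^{\A}$ infinite) and with the interpretations having $s^{\A}$ fixed-point-free and $|\s^{\A}|=|\s_{2}^{\A}|$ either a finite number $\geq 2$ or infinite. Consequently $\dash_{\Tsupinfty}\varphi$ holds iff both $\dash_{\T_{0}}\varphi$ and $\dash_{\T_{1}}\varphi$ hold, and for a cube $\phi$ unsatisfiable in one piece, $\dash_{\Tsupinfty}(\phi\rightarrow\psi)$ is equivalent to provability of $\phi\rightarrow\psi$ in the other piece.

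First I would prove each piece is convex. For $\T_{1}$: it is precisely the theory obtained, in the sense of \Cref{Tneg is CV}, by adjoining a fixed-point-free $s$ to the empty-signature $\Sigmatwo$-theory $\T^{-}$ whose models have $|\s^{\A}|=|\s_{2}^{\A}|\geq 2$ or both domains infinite; $\T^{-}$ is stably infinite with respect to $\{\s,\s_{2}\}$ (given a $\T^{-}$-model of a quantifier-free formula, adjoin to each domain a disjoint countably infinite set while keeping the assignment fixed) and it has no model with $|\s^{\A}|=1$, so \Cref{Tneg is CV} applies and $\T_{1}$ is convex with respect to $\{\s,\s_{2}\}$. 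For $\T_{0}$: in any induced valid disjunction of equalities, a disjunct between two $\s$-variables is already valid since the $\s$-domain is a singleton, and otherwise all disjuncts are between $\s_{2}$-variables, in which case the $\s$-part of the cube holds in every $\T_{0}$-interpretation and one reduces to convexity of the theory of infinite sets, itself a consequence of \Cref{SI empty theories are convex}.

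Now let $\phi$ be a $\Sigmastwo$-cube with $\dash_{\Tsupinfty}\phi\rightarrow\bigvee_{j=1}^{m}z_{j}=w_{j}$, and assume $\phi$ is $\Tsupinfty$-satisfiable. If $\phi$ is not $\T_{0}$-satisfiable I would finish using convexity of $\T_{1}$, and symmetrically if $\phi$ is not $\T_{1}$-satisfiable using convexity of $\T_{0}$. The real case is when $\phi$ is satisfiable in both pieces. Then $\T_{0}$-satisfiability forces $\phi$ to contain no disequality between $\s$-terms, so $\phi$ splits as $\phi_{\s}\wedge\phi_{\s_{2}}$ with $\phi_{\s}$ a conjunction of positive $\s$-equations and $\phi_{\s_{2}}$ built from equalities and disequalities of $\s_{2}$-variables. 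Convexity of $\T_{1}$ yields some $j$ with $\dash_{\T_{1}}\phi\rightarrow z_{j}=w_{j}$, and I would argue this same $j$ works for $\T_{0}$: if $z_{j},w_{j}$ are of sort $\s$ this is immediate there, and if they are of sort $\s_{2}$ and some $\T_{0}$-interpretation satisfied $\phi\wedge\neg(z_{j}=w_{j})$, I would build a $\T_{1}$-interpretation satisfying $\phi\wedge\neg(z_{j}=w_{j})$ --- realize $\phi_{\s}$ in a fixed-point-free structure (available since $\phi$ is $\T_{1}$-satisfiable; shrink to a countable model via \Cref{LowenheimSkolemDownwards} and enlarge the $\s$-domain to a countably infinite fixed-point-free one by adjoining a disjoint successor chain) and realize $\phi_{\s_{2}}\wedge\neg(z_{j}=w_{j})$ in a countably infinite set; with both domains now countably infinite the result is a $\T_{1}$-interpretation, contradicting $\dash_{\T_{1}}\phi\rightarrow z_{j}=w_{j}$. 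Hence $\dash_{\Tsupinfty}\phi\rightarrow z_{j}=w_{j}$, as needed.

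I expect the main obstacle to be exactly this last step: pushing a $\T_{1}$-forced equality of $\s_{2}$-variables down into $\T_{0}$, since $\T_{1}$ rigidly couples $|\s^{\A}|$ and $|\s_{2}^{\A}|$ whereas $\T_{0}$ does not; the resolution is that one is always free to inflate both sorts to countably infinite, which decouples the constraint coming from $\phi_{\s}$ from the one coming from $\phi_{\s_{2}}$. The remaining verifications --- that the claimed descriptions of $\T_{0}$ and $\T_{1}$ follow from the axioms, that $\T^{-}$ is stably infinite, and that the patched structure indeed models $\ax(\Tsupinfty)$ --- are routine cardinality bookkeeping.
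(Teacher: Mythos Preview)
Your proposal is correct and follows essentially the same approach as the paper's proof: both split $\Tsupinfty$ into the $|\s^{\A}|=1$ piece and the $|\s^{\A}|\geq 2$ piece, invoke \Cref{Tneg is CV} for the latter and (implicitly or explicitly) the convexity of $\Tinfty$ for the former, and then glue by taking the $\s$-part of a countably infinite $\T_{1}$-model together with the $\s_{2}$-part of a $\T_{0}$-model to reach a contradiction. Your write-up is a bit more explicitly modular (you name $\T_{0}$, $\T_{1}$, prove each convex, and treat all three satisfiability cases), whereas the paper folds the ``only $\T_{0}$-satisfiable'' case into a short preliminary paragraph and goes straight to the mixed-model construction, but the key ideas and the final patching argument are identical.
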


\begin{proof}


Suppose that $\phi$ is a $\Sigmastwo$-formula that is also a conjunction of literals, and that $\dash_{\Tsupinfty}\phi\rightarrow\bigvee_{i=1}^{n}x_{i}=y_{i}$. If $\phi$ is only true in
$\Tsupinfty$-interpretations $\A$ where $|\s^{\A}|=1$ we are done: indeed, in that case we may assume none of the pairs $(x_{i}, y_{i})$ are of sort $\s$ (otherwise we would simply have 
$\dash_{\Tsupinfty}\phi\rightarrow x_{i}=y_{i}$), take the conjunction $\phi^{\prime}$ of literals in $\phi$ with variables of sort $\s_{2}$ and it is true that 
$\dash_{\Tinfty}\phi^{\prime}\rightarrow\bigvee_{i=1}^{n}x_{i}=y_{i}$, if one sees $\Tinfty$ as a theory with the sort $\s_{2}$; since $\Tinfty$ is convex, it follows that 
$\dash_{\Tinfty}\phi^{\prime}\rightarrow x_{i}=y_{i}$ for some $1\leq i\leq n$, and thus $\dash_{\Tsupinfty}\phi\rightarrow x_{i}=y_{i}$.

So we can assume that $\phi$ is true in some $\Tsupinfty$-interpretation $\A_{0}$ with $|\s^{\A_{0}}|\geq 2$: by \Cref{LowenheimSkolemDownwards} and the fact the theory $\T$, whose interpretations are all $\Tsupinfty$-interpretations $\A$ with $|\s^{\A}|\geq 2$, axiomatized by 
\[\{\big((\psi^{\s}_{k+2}\wedge\psi^{\s_{2}}_{\geq k+2})\vee\bigvee_{i=2}^{k+2}(\psi^{\s}_{=i}\wedge\psi^{\s_{2}}_{=i})\big)\wedge \Forall{x}\neg(s(x)=x) : k\in\mathbb{N}\},\]
is stably infinite, we may assume that $|\s^{\A_{0}}|=|\s_{2}^{\A_{0}}|=\omega$. Now, given that $\dash_{\Tsupinfty}\phi\rightarrow\bigvee_{i=1}^{n}x_{i}=y_{i}$, we have that 
$\dash_{\T}\phi\rightarrow\bigvee_{i=1}^{n}x_{i}=y_{i}$, and since the theory $\T$ is convex by \Cref{Tneg is CV}, we have $\dash_{\T}\phi\rightarrow x_{i}=y_{i}$ for some $1\leq i\leq n$, and of course every 
$\Tsupinfty$-interpretation $\A$ with $|\s^{\A}|\geq 2$ then satisfies $\phi\rightarrow x_{i}=y_{i}$.

Suppose, however, there is a $\Tsupinfty$-interpretation $\A_{1}$ with $|\s^{\A_{1}}|=1$ that does not satisfy $\phi\rightarrow x_{i}=y_{i}$, and thus 
satisfies $\phi$ but not $x_{i}=y_{i}$ (of course, for that to happen, $x_{i}$ and $y_{i}$ must be variables of sort $\s_{2}$): again by \Cref{LowenheimSkolemDownwards}, we
may assume $|\s_{2}^{\A_{1}}|=\omega$. We then define an interpretation $\B$ such that: $\s^{\B}=\s^{\A_{0}}$; $\s_{2}^{\B}=\s_{2}^{\A_{1}}$ (so $|\s^{\B}|=|\s_{2}^{\B}|=\omega$); 
$s^{\B}=s^{\A_{0}}$ (so $\B$ satisfies $\Forall{x}\neg(s(x)=x)$, and is thus a $\Tsupinfty$-interpretation); 
$x^{\B}=x^{\A_{0}}$ for all variables $x$ of sort $\s$; and $u^{\B}=u^{\A_{1}}$ for all variables $u$ of sort $\s_{2}$. Then $\B$ satisfies $\phi$: any literal with variables of sort $\s$ in $\phi$ is satisfied by 
$\A_{0}$ and, by construction, by $\B$; and any literals with variables of sort $\s_{2}$ in $\phi$ is satisfied by $\A_{1}$ and, again by construction, by $\B$; and since $\phi$ is a conjunction of literals, it is then 
satisfied by $\B$. But $\B$ does not satisfy $x_{i}=y_{i}$ while still being a $\T$-interpretation, what is a contradiction.
\end{proof}

\subsection{$\Ttwocube$}

\begin{lemma}
    The $\Sigma_{3}$-theory $\Ttwocube$ is not stably infinite, and thus not smooth, with respect to $\{\s, \s_{2}, \s_{3}\}$.
\end{lemma}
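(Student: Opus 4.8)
The plan is to exploit the fact that the third sort of $\Ttwocube$ has cardinality exactly $1$ in every model, so that no $\Ttwocube$-interpretation can make $\s_{3}$ infinite; this immediately rules out stable infiniteness (and, a fortiori, smoothness) with respect to the full set of sorts.

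First I would note that $\psi^{\s_{3}}_{=1}$ belongs to the axiomatization of $\Ttwocube$, so every $\Ttwocube$-interpretation $\A$ satisfies $|\s_{3}^{\A}| = 1$; in particular $\s_{3}^{\A}$ is always finite. Next I would check that $\Ttwocube$ is consistent, so that the failure of stable infiniteness is not vacuous: the $\Sigma_{3}$-structure with $|\s^{\A}| = |\s_{2}^{\A}| = 3$ and $|\s_{3}^{\A}| = 1$ satisfies $\psi^{\s_{3}}_{=1}$ and, for each $k \in \mathbb{N}$, the right disjunct $\psi^{\s}_{\geq 3} \wedge \psi^{\s_{2}}_{\geq 3}$ of the $k$-th axiom, hence is a $\Ttwocube$-interpretation. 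It therefore satisfies the quantifier-free formula $x = x$, for $x$ a variable of sort $\s$, which is thus $\Ttwocube$-satisfiable.

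Then I would conclude: stable infiniteness with respect to $\{\s, \s_{2}, \s_{3}\}$ would demand a $\Ttwocube$-interpretation satisfying $x = x$ with all three domains infinite, but every $\Ttwocube$-interpretation has $|\s_{3}^{\A}| = 1$, so no such interpretation exists; hence $\Ttwocube$ is not stably infinite with respect to $\{\s, \s_{2}, \s_{3}\}$. Since smoothness with respect to a set of sorts implies stable infiniteness with respect to that set (\Cref{SMimpliesSI}), it follows by contraposition that $\Ttwocube$ is not smooth with respect to $\{\s, \s_{2}, \s_{3}\}$ either.

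I do not expect any real obstacle here; the only point worth double-checking is that $\Ttwocube$ genuinely has a model, which is handled by the explicit finite structure above (equivalently, one may take any model of $\Ttwo$ and adjoin a one-element interpretation of the new sort $\s_{3}$, exactly as in the definition of $\Ttwocube$).
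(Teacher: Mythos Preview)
Your proof is correct and follows exactly the same approach as the paper: the paper's proof simply says ``Obvious, since it is not contradictory, but has no models $\A$ with $|\s_{3}^{\A}|$ infinite,'' and you have spelled out both of these facts (consistency via an explicit finite model, and the cardinality constraint $|\s_{3}^{\A}|=1$ forced by the axiom $\psi^{\s_{3}}_{=1}$) in full detail. There is nothing to add.
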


\begin{proof}
    Obvious, since it is not contradictory, but has no models $\A$ with $|\s_{3}^{\A}|$ infinite.
\end{proof}

\begin{lemma}
    The $\Sigma_{3}$-theory $\Ttwocube$ is not stably finite with respect to $\{\s, \s_{2}, \s_{3}\}$.
\end{lemma}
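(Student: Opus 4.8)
The plan is to produce a quantifier-free formula together with a $\Ttwocube$-interpretation satisfying it that admits no finite $\Ttwocube$-model with domains of equal-or-smaller size. Concretely, I would let $\A$ be the $\Ttwocube$-interpretation with $|\s^{\A}| = 2$, $|\s_2^{\A}| = \omega$ and $|\s_3^{\A}| = 1$; this is a $\Ttwocube$-interpretation because it satisfies $\psi^{\s_3}_{=1}$ together with the disjunct $\psi^{\s}_{=2} \wedge \psi^{\s_2}_{\geq k}$ of each axiom, for every $k \in \mathbb{N}$. Taking a variable $x$ of sort $\s$, the interpretation $\A$ trivially satisfies the quantifier-free formula $x = x$.

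Next I would argue that no replacement interpretation works. Suppose toward a contradiction that $\Ttwocube$ is stably finite with respect to $\{\s, \s_2, \s_3\}$; then there must be a $\Ttwocube$-interpretation $\B$ satisfying $x = x$ with all three domains finite and $|\s^{\B}| \leq |\s^{\A}| = 2$. By the axioms of $\Ttwocube$, every model has either $|\s^{\B}| = 2$ or $|\s^{\B}| \geq 3$; the second case is excluded by $|\s^{\B}| \leq 2$, so $|\s^{\B}| = 2$ (in particular $|\s^{\B}| = 1$ is impossible). But then, for every $k \in \mathbb{N}$, the only disjunct of the $k$-th axiom that $\B$ can satisfy is $\psi^{\s}_{=2} \wedge \psi^{\s_2}_{\geq k}$, whence $|\s_2^{\B}| \geq k$ for all $k$; that is, $\s_2^{\B}$ is infinite, contradicting the finiteness of $\s_2^{\B}$. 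Hence $\Ttwocube$ is not stably finite with respect to $\{\s, \s_2, \s_3\}$.

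There is no real obstacle here: the entire argument is a short cardinality count, entirely analogous to the proof that the underlying theory $\Ttwo$ fails stable finiteness (an alternative route would be to observe that $\Ttwocube$ is just $\Ttwo$ with an added sort of size $1$ that is irrelevant to any $\Sigma_2$-subformula, and to transfer the known failure of stable finiteness of $\Ttwo$ through that correspondence). The only point warranting a moment's care is checking that the bound $|\s^{\B}| \leq 2$ forces $|\s^{\B}| = 2$ and cannot be met by a one-element domain, which is immediate from the shape of the axioms.
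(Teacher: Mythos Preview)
Your proof is correct and follows essentially the same approach as the paper: you use the $\Ttwocube$-interpretation $\A$ with $|\s^{\A}|=2$, $|\s_{2}^{\A}|=\omega$, $|\s_{3}^{\A}|=1$ and observe that no $\Ttwocube$-interpretation $\B$ with $|\s^{\B}|\leq 2$ can have $\s_{2}^{\B}$ finite. The only difference is cosmetic --- you spell out explicitly why $|\s^{\B}|\leq 2$ forces $|\s^{\B}|=2$ and hence $|\s_{2}^{\B}|\geq\omega$, whereas the paper simply asserts this.
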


\begin{proof}
    Consider any quantifier-free $\Sigma_{3}$-formula $\phi$ that is satisfied by the $\Ttwocube$-interpretation $\A$ with $|\s^{\A}|=2$, $|\s_{2}^{\A}|=\omega$ and $|\s_{3}^{\A}|=1$: there are no 
    $\Ttwocube$-interpretations $\B$ such $|\s^{\B}|\leq |\s^{\A}|$ and $|\s_{2}^{\B}|\leq |\s_{2}^{\A}|$ and all two of these are finite, so $\Ttwocube$ cannot be stably finite.
\end{proof}

\begin{lemma}\label{Ttwocube is FW}
    The $\Sigma_{3}$-theory $\Ttwocube$ is finitely witnessable with respect to $\{\s, \s_{2}, \s_{3}\}$.
\end{lemma}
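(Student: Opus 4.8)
The plan is to reduce the statement to the known fact that $\Ttwo$ is finitely witnessable \cite{SZRRBT-21}, by viewing $\Ttwocube$ as $\Ttwo$ with an added, forcibly one-element, sort $\s_{3}$. The key observation is that $\ax(\Ttwocube)=\{\psi^{\s_{3}}_{=1}\}\cup\ax(\Ttwo)$ and the $\Ttwo$-axioms only constrain $\s$ and $\s_{2}$; hence a $\Sigma_{3}$-structure $\A$ is a $\Ttwocube$-model exactly when $|\s_{3}^{\A}|=1$ and its $\Sigmatwo$-reduct is a $\Ttwo$-model, and moreover, adjoining a singleton $\s_{3}$-domain (sending every variable of sort $\s_{3}$ to its unique element) turns any $\Ttwo$-interpretation into a $\Ttwocube$-interpretation satisfying the same $\Sigmatwo$-formulas. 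Let $\wit_{0}$ be a witness for $\Ttwo$ with respect to $\{\s,\s_{2}\}$.

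Given a quantifier-free $\Sigma_{3}$-formula $\phi$, since $\Sigma_{3}$ is empty the only way a variable of sort $\s_{3}$ occurs in $\phi$ is inside an atom $x=y$ with $x,y$ both of sort $\s_{3}$, and every such atom is $\Ttwocube$-valid. So I would first form $\phi^{*}$ from $\phi$ by replacing each such atom by $w'=w'$, with $w'$ a fresh variable of sort $\s$; then $\phi^{*}$ is a genuine $\Sigmatwo$-formula, mentions no variable of sort $\s_{3}$, and $\phi^{*}\equiv_{\Ttwocube}\phi$. The witness would then be
\[\wit(\phi)=\phi^{*}\wedge\wit_{0}(\phi^{*})\wedge(w=w),\]
with $w$ a fresh variable of sort $\s_{3}$; this is in the spirit of the witness in \Cref{Tupinfty is FW}, with $\wit_{0}$ doing the real work on the $\{\s,\s_{2}\}$-part and $w$ present only to cover the rigid sort $\s_{3}$.

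For condition $(i)$, writing $\overarrow{w}$ for the variables of $\wit(\phi)$ not occurring in $\phi$, I would peel off $\exists w.\,(w=w)$ (valid), then the existential over the fresh variables internal to $\wit_{0}(\phi^{*})$ (turning $\wit_{0}(\phi^{*})$ into $\phi^{*}$ by the witness equivalence for $\wit_{0}$), and finally $\exists w'.\,\phi^{*}$, which is $\Ttwocube$-equivalent to $\phi$ because both $w'=w'$ and the original $\s_{3}$-atoms hold in every $\Ttwocube$-interpretation. For condition $(ii)$, from a $\Ttwocube$-interpretation $\A$ of $\wit(\phi)$ I would pass to its $\Sigmatwo$-reduct, which satisfies $\wit_{0}(\phi^{*})$, apply the witness property of $\wit_{0}$ to obtain a $\Ttwo$-interpretation $\B$ satisfying $\wit_{0}(\phi^{*})$ with $\s^{\B}=\vars_{\s}(\wit_{0}(\phi^{*}))^{\B}$ and $\s_{2}^{\B}=\vars_{\s_{2}}(\wit_{0}(\phi^{*}))^{\B}$, and then extend $\B$ by a one-element $\s_{3}$-domain, interpreting $w$ there. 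The resulting $\Ttwocube$-interpretation $\A'$ still satisfies $\wit_{0}(\phi^{*})$, hence $\phi^{*}$, hence $w=w$, hence $\wit(\phi)$; and the three equalities $\s^{\A'}=\vars_{\s}(\wit(\phi))^{\A'}$, $\s_{2}^{\A'}=\vars_{\s_{2}}(\wit(\phi))^{\A'}$ and $\s_{3}^{\A'}=\vars_{\s_{3}}(\wit(\phi))^{\A'}$ follow by sandwiching: for $\s$ and $\s_{2}$ from the corresponding equalities for $\B$ together with $\vars(\wit_{0}(\phi^{*}))\subseteq\vars(\wit(\phi))$, and for $\s_{3}$ because $w\in\vars_{\s_{3}}(\wit(\phi))$ and $|\s_{3}^{\A'}|=1$. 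Computability of $\wit$ is immediate from that of $\wit_{0}$. The only slightly delicate point is the bookkeeping of which variables count as the fresh ones of $\wit(\phi)$ (the copy $w'$, the variable $w$, and the fresh variables internal to $\wit_{0}(\phi^{*})$), so that the $\Ttwocube$-equivalence of $\phi$ and $\exists\,\overarrow{w}.\,\wit(\phi)$ goes through cleanly; but since the $\s_{3}$-part of every $\Ttwocube$-model is completely rigid, this is not a genuine obstacle — the whole content is supplied by the $\Ttwo$-witness.
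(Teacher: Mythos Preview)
Your proof is correct but takes a different route from the paper's. The paper gives the witness directly: $\wit(\phi)=\phi\wedge\bigwedge_{i=1}^{3}(x_{i}=x_{i})\wedge\bigwedge_{j=1}^{3}(u_{j}=u_{j})\wedge(t=t)$ with fresh $x_{i}$ of sort $\s$, $u_{j}$ of sort $\s_{2}$, $t$ of sort $\s_{3}$, and then builds the witnessing interpretation by padding the $\s$- and $\s_{2}$-parts of $\vars(\phi)^{\A}$ out to at least three elements each (landing in the $|\s|\geq 3$, $|\s_{2}|\geq 3$ branch of $\Ttwo$) while keeping the one-element $\s_{3}$-domain. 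In effect it inlines the standard $\Ttwo$ witness rather than invoking it as a black box. Your reduction is more modular and would transfer verbatim to any ``add a forced-singleton sort'' situation; the paper's direct construction is shorter and sidesteps the bookkeeping around $\phi\mapsto\phi^{*}$ --- in particular, it keeps $\phi$ itself as a conjunct, so the original $\s_{3}$-variables of $\phi$ stay in $\wit(\phi)$ and there is nothing to argue about their disappearance from the free-variable set.
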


\begin{proof}
    Take as witness, given a quantifier-free $\Sigma_{3}$-formula $\phi$, 
    \[\wit(\phi)=\phi\wedge\delta,\quad\text{where}\quad \delta=\bigwedge_{i=1}^{3}x_{i}=x_{i}\wedge\bigwedge_{j=1}^{3}u_{j}=u_{j}\wedge(t=t),\]
    $x_{i}$ are fresh variables of sort $\s$, $u_{j}$ of sort $\s_{2}$, and $t$ of sort $\s_{3}$. Of course, if $\overarrow{x}=\vars(\wit(\phi))\setminus\vars(\phi)$, we have that $\phi$ and $\Exists{\overarrow{x}}\wit(\phi)=\phi\wedge\Exists{\overarrow{x}}\delta$ are $\Ttwocube$-equivalent since $\delta$ is a tautology.

    Now, suppose $\A$ is a $\Ttwocube$-interpretation that satisfies $\wit(\phi)$: take 
    $m=|\vars_{\s}(\phi)^{\A}|$, $n=|\vars_{\s_{2}}(\phi)^{\A}|$, $M=\max\{0, 3-m\}$, $N=\max\{0, 3-n\}$, and sets 
    $A$ and $B$ with cardinalities, respectively, $M$ and $N$, disjoint from each other and $\s^{\A}$. We define an interpretation $\B$ such that: $\s^{\B}=\vars_{\s}(\phi)^{\A}\cup A$, 
    $\s_{2}^{\B}=\vars_{\s_{2}}(\phi)^{\A}\cup B$, and 
    $\s_{3}^{\B}=\s_{3}^{\A}$; $x^{\B}=x^{\A}$ for any variable $x$ in $\vars(\phi)$; $x_{i}\in\{x_{1}, x_{2}, x_{3}\}\mapsto x_{i}^{\B}\in A$ a surjection, and $y_{j}\in\{y_{1}, y_{2}, y_{3}\}\mapsto y_{j}^{\B}\in B$ also a surjection; and arbitrarily for other variables.
    Then we have that $\B$ is a $\Ttwocube$-interpretation that satisfies $\wit(\phi)$, since it satisfies $\phi$, with $\t^{\B}=\vars_{\t}(\wit(\phi))^{\B}$ for each $\t\in\{\s, \s_{2}, \s_{3}\}$.
\end{proof}

\begin{lemma}\label{FMP of Ttwocube}
    The $\Sigma_{3}$-theory $\Ttwocube$ has the finite model property with respect to $\{\s, \s_{2}, \s_{3}\}$.
\end{lemma}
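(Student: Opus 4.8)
The plan is to derive this immediately from the preceding lemma together with the general implication already proved in the body of the paper. By \Cref{Ttwocube is FW}, the $\Sigma_3$-theory $\Ttwocube$ is finitely witnessable with respect to $\{\s, \s_2, \s_3\}$; by \Cref{FW=>FMP}, any $\Sigma$-theory that is finitely witnessable with respect to a set of sorts $S$ also has the finite model property with respect to $S$. Instantiating the latter with $S = \{\s, \s_2, \s_3\}$ and the witness produced in \Cref{Ttwocube is FW} yields exactly the claim.

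There is essentially no obstacle here: all of the real work lies in \Cref{Ttwocube is FW}, where the witness $\wit(\phi)=\phi\wedge\delta$ is shown to admit, for every $\T$-satisfiable instance, a $\Ttwocube$-interpretation $\B$ whose three domains are all finite, since each equals the interpretation of a finite set of variables. If one wished for a self-contained argument instead of a citation, one would take a model $\A$ of a quantifier-free $\phi$ and shrink it directly: restrict $\s^{\A}$ and $\s_2^{\A}$ to the values taken by $\vars_\s(\phi)$ and $\vars_{\s_2}(\phi)$, pad each with fresh elements up to cardinality at least $3$ so that the resulting structure still satisfies the axioms of $\Ttwo$ (and hence of $\Ttwocube$, its $\s_3$-domain being kept as the original singleton), and verify that $\phi$ is preserved because its atoms are equalities between variables whose interpretations are unchanged. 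This, however, merely re-derives a special case of the witness construction of \Cref{Ttwocube is FW}, so invoking \Cref{FW=>FMP} together with \Cref{Ttwocube is FW} is the economical route and the one I would take.
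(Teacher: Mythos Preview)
Your proposal is correct and matches the paper's proof exactly: the paper simply writes ``Follows from \Cref{FW=>FMP} and \Cref{Ttwocube is FW}.'' Your additional self-contained sketch is fine but unnecessary, as you yourself note.
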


\begin{proof}
Follows from \Cref{FW=>FMP} and \Cref{Ttwocube is FW}.
\end{proof}

\begin{lemma}
    The $\Sigma_{3}$-theory $\Ttwocube$ is not strongly finitely witnessable with respect to $\{\s, \s_{2}, \s_{3}\}$.
\end{lemma}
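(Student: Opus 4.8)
The plan is to obtain the statement as an immediate corollary of \Cref{SFW=>SF}, using the fact — already established in the preceding lemma — that $\Ttwocube$ is not stably finite with respect to $\{\s, \s_{2}, \s_{3}\}$. Recall that \Cref{SFW=>SF} asserts that every $\Sigma$-theory that is strongly finitely witnessable with respect to a set of sorts $S$ is also stably finite with respect to $S$. Taking $S = \{\s, \s_{2}, \s_{3}\} = \S_{\Sigma}$ and contraposing, a theory that fails to be stably finite with respect to this set cannot be strongly finitely witnessable with respect to it. Since $\Ttwocube$ is not stably finite with respect to $\{\s, \s_{2}, \s_{3}\}$, it is therefore not strongly finitely witnessable with respect to that set, as required.

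For orientation, I would recall why the needed non-stable-finiteness holds, since that is where the only real content lies: the $\Ttwocube$-interpretation $\A$ with $|\s^{\A}| = 2$, $|\s_{2}^{\A}| = \omega$ and $|\s_{3}^{\A}| = 1$ satisfies a quantifier-free $\Sigma_{3}$-formula such as $x = x$ for $x$ of sort $\s$, yet the axioms force every $\Ttwocube$-interpretation with $|\s| = 2$ to have an infinite $\s_{2}$-domain, and every one with a finite $\s_{2}$-domain to have $|\s| \geq 3$; hence no $\Ttwocube$-interpretation $\B$ satisfying this formula can have $|\s^{\B}| \leq |\s^{\A}|$, $|\s_{2}^{\B}| \leq |\s_{2}^{\A}|$ and all three domains finite. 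This is precisely the obstruction that \Cref{SFW=>SF} converts into the failure of strong finite witnessability.

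There is essentially no obstacle in this last step: the claim is a one-line consequence of \Cref{SFW=>SF} once the separate (already completed) verification that $\Ttwocube$ is not stably finite is in place. A more hands-on route is in principle available — directly contradicting condition $(ii)'$ in the definition of a strong witness by taking an arrangement that pins the $\s$-domain of $\wit(\phi)$ to exactly two elements while $\phi$ is a quantifier-free formula forcing infinitely many distinct $\s_{2}$-elements in $\A$ — but it is strictly longer and offers no real advantage, so I would not pursue it; the proof via \Cref{SFW=>SF} mirrors the pattern already used for $\Tupinfty$, $\Tmninfty$ and $\Tsupinfty$.
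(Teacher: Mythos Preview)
Your proof is correct, but it takes a different route from the paper. You invoke \Cref{SFW=>SF} together with the already-established failure of stable finiteness for $\Ttwocube$, obtaining the result in one line; this is exactly the pattern the paper itself uses for $\Tupinfty$, $\Tmninfty$ and $\Tsupinfty$, as you note. The paper, however, chooses here to give the direct argument you sketch and set aside: it assumes a strong witness $\wit$ exists, starts from a $\Ttwocube$-interpretation $\A$ with $|\s^{\A}|=2$ and $|\s_{2}^{\A}|=\omega$, builds the arrangement $\delta_V$ on $V=\vars(\wit(\phi))$ induced by $\A'$, and then argues that the resulting finite $\B$ must have $|\s^{\B}|\in\{1,2\}$, each of which contradicts the axioms of $\Ttwocube$. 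Your approach is shorter, more uniform with the surrounding proofs, and arguably preferable once \Cref{SFW=>SF} is available; the paper's direct argument has the minor advantage of being self-contained and of making explicit exactly where condition $(ii)'$ breaks down.
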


\begin{proof}
    Suppose we have a strong witness $\wit$: for any quantifier-free formula $\phi$ that is satisfied by the $\Ttwocube$-interpretation $\A$ 
    with $|\s^{\A}|=2$ and $|\s_{2}^{\A}|=\omega$, there is a $\Ttwocube$-interpretation
    $\A^{\prime}$ differing from $\A$ at most on the value assigned to $\overarrow{x}=\vars(\wit(\phi))\setminus\vars(\phi)$ that satisfies $\wit(\phi)$. Of course, if 
    $V=\vars(\wit(\phi))$, and 
    $\delta_{V}$ is the arrangement such that $x$ is related to $y$ iff $x^{\A^{\prime}}=y^{\A^{\prime}}$, then $\A^{\prime}$ satisfies $\wit(\phi)\wedge\delta_{V}$. Of course, since $\wit$ is a strong witness, there is a $\T$-interpretation $\B$ that satisfies
    $\wit(\phi)\wedge\delta_{V}$ with $\s^{\B}=\vars_{\s}(\wit(\phi)\wedge\delta_{V})^{\B}$ and $\s_{2}^{\B}=\vars_{\s_{2}}(\wit(\phi)\wedge\delta_{V})^{\B}$ (and also $\s_{3}^{\B}=\vars_{\s_{3}}(\wit(\phi)\wedge\delta_{V})^{\B}$, but that is irrelevant right now): 
    but if $\vars_{\s}(\wit(\phi)\wedge\delta_{V})^{\B}$ has only one element, we have a contradiction since no $\Ttwocube$-interpretation has only one element in its domain of sort $\s$; and if 
    $\vars_{\s}(\wit(\phi)\wedge\delta_{V})^{\B}$ has two elements, $\B$ is a $\Ttwocube$-interpretation with 
    $|\s^{\B}|=2$ and $|\s_{2}^{\B}|$ finite, which is again a contradiction, and we are done.
\end{proof}

\begin{lemma}
    The $\Sigma_{3}$-theory $\Ttwocube$ is convex with respect to $\{\s, \s_{2}, \s_{3}\}$.
\end{lemma}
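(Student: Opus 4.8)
\emph{Proof proposal.} The plan is to reduce convexity of $\Ttwocube$ to convexity of the $\Sigmatwo$-theory $\Ttwo$, which is already available. First I would take a conjunction of $\Sigma_{3}$-literals $\phi$ together with variables $u_{1}, v_{1}, \ldots, u_{n}, v_{n}$ of sorts in $\{\s, \s_{2}, \s_{3}\}$ such that $\dash_{\Ttwocube} \phi \rightarrow \bigvee_{i=1}^{n} u_{i} = v_{i}$. If $\phi$ is not $\Ttwocube$-satisfiable, then $\phi \rightarrow u_{1} = v_{1}$ is vacuously $\Ttwocube$-valid, so I may assume $\phi$ is $\Ttwocube$-satisfiable. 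Since $\Sigma_{3}$ is empty, every literal of $\phi$ relates two variables of a single sort, so $\phi$ decomposes as $\phi_{\s} \wedge \phi_{\s_{2}} \wedge \phi_{\s_{3}}$, where $\phi_{\t}$ is the conjunction of the literals of $\phi$ whose variables have sort $\t$.

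Next I would dispose of the sort $\s_{3}$, which is inert. In every $\Ttwocube$-interpretation $\A$ we have $|\s_{3}^{\A}| = 1$, so no disequality between variables of sort $\s_{3}$ is satisfiable; as $\phi$ is $\Ttwocube$-satisfiable, $\phi_{\s_{3}}$ consists only of equalities between $\s_{3}$-variables, each of which is $\Ttwocube$-valid, and hence $\phi$ is $\Ttwocube$-equivalent to $\psi := \phi_{\s} \wedge \phi_{\s_{2}}$. For the same reason, if some pair $(u_{i}, v_{i})$ has sort $\s_{3}$, then $u_{i}^{\A} = v_{i}^{\A}$ in every $\Ttwocube$-interpretation $\A$, so $\dash_{\Ttwocube} \phi \rightarrow u_{i} = v_{i}$ and we are done. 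Hence I may assume that each of $u_{1}, v_{1}, \ldots, u_{n}, v_{n}$ has sort $\s$ or $\s_{2}$, so that $\psi$ and $\bigvee_{i} u_{i}=v_{i}$ are $\Sigmatwo$-formulas.

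Then I would relate $\Ttwocube$-interpretations to $\Ttwo$-interpretations. The $\Sigmatwo$-reduct of any $\Ttwocube$-interpretation — obtained by forgetting the $\s_{3}$-domain and the assignments to $\s_{3}$-variables — is a $\Ttwo$-interpretation, and conversely every $\Ttwo$-interpretation is such a reduct, since one may adjoin a one-element domain for $\s_{3}$ and send every $\s_{3}$-variable to its unique element. Because $\psi$ and $\bigvee_{i} u_{i}=v_{i}$ mention only sorts $\s$ and $\s_{2}$, a $\Ttwocube$-interpretation satisfies them exactly when its $\Sigmatwo$-reduct does; combined with $\dash_{\Ttwocube} \psi \rightarrow \bigvee_{i} u_{i}=v_{i}$ (which follows from the $\Ttwocube$-equivalence of $\phi$ and $\psi$ together with the hypothesis), this gives $\dash_{\Ttwo} \psi \rightarrow \bigvee_{i} u_{i}=v_{i}$. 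As $\Ttwo$ is convex with respect to $\{\s,\s_{2}\}$ (established in \cite{BarTolZoh}; alternatively, $\Ttwo$ is stably infinite over the empty signature $\Sigmatwo$, so \Cref{SI empty theories are convex} applies), there is some $i$ with $\dash_{\Ttwo} \psi \rightarrow u_{i}=v_{i}$, whence $\dash_{\Ttwocube} \psi \rightarrow u_{i}=v_{i}$ and therefore $\dash_{\Ttwocube} \phi \rightarrow u_{i}=v_{i}$, which is what convexity requires.

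The argument is essentially bookkeeping; the only point that I expect to need care — and thus the main obstacle — is justifying that passing between $\Ttwocube$ and $\Ttwo$ preserves both which disjunctions $\phi \rightarrow \bigvee_{i} u_{i}=v_{i}$ and which single implications $\phi \rightarrow u_{i}=v_{i}$ are valid. This rests entirely on the signature being empty: it makes each literal sort-local, confines the contribution of the always-singleton sort $\s_{3}$ to $\Ttwocube$-valid equalities, and leaves behind a genuine convexity question about $\Ttwo$ that has already been answered.
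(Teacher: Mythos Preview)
Your proposal is correct and follows essentially the same approach as the paper: eliminate the trivial sort $\s_{3}$ (whose domain always has one element) and reduce the convexity question to the known convexity of $\Ttwo$. In fact your argument supplies the details that the paper's proof only sketches (``we can safely ignore variables of sort $\s_{3}$ \ldots\ what is left is essentially $\Ttwo$'').
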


\begin{proof}
    We can safely ignore variables of sort $\s_{3}$, since all $\Ttwocube$-interpretations have domains of sort $\s_{3}$ with cardinality $1$: what is left is essentially $\Ttwo$. Using this theory is convex, one can with some effort prove the same holds for $\Ttwocube$.
\end{proof}

\section{Proofs for Theories in \Cref{tab-theories-sigma-s}}

First we prove the following lemma:

\begin{restatable}{lemma}{bblemma}
\label{propertiesofbb}
The following holds for $\bb$:
    \begin{enumerate}
    \item $\bb$ is increasing.
        \item For all $n\in\mathbb{N}\setminus\{0,1\}$, $\bb(n)>n$.
        \item For any computable function $f:\mathbb{N}\rightarrow\mathbb{N}$, there exists $N\in\mathbb{N}$ such that $\bb(n)>f(n)$ for all $n\geq N$.
    \end{enumerate}
\end{restatable}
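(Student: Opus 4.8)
The plan is to handle the three items separately, leaning on the facts about $\bb$ already recalled in \Cref{sec:onbb}. Item 3 requires essentially no work: it is precisely Rad\'{o}'s theorem on the growth rate of the Busy Beaver, which we have already quoted from \cite{Rado} when motivating the construction, so it suffices to cite it there.

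For item 1, I would establish the slightly stronger inequality $\bb(n+1)\ge\bb(n)+1$ by a direct machine construction. Fix an $n$-state Turing machine $M$ that, started on the all-$0$ tape, halts having written $\bb(n)$ ones. Introduce one fresh state $q$ and redirect every halting transition of $M$ so that it enters $q$ instead; in $q$ the machine scans in a fixed direction, leaving the tape unchanged as long as it reads $1$, and upon reading the first $0$ it writes a $1$ and halts. Since the halting configuration of $M$ carries only finitely many ones, such a $0$ is reached after finitely many steps, so $M$ extended with $q$ is an $(n+1)$-state machine that halts on the all-$0$ tape with $\bb(n)+1$ ones on it; hence $\bb(n+1)\ge\bb(n)+1>\bb(n)$. (The weaker monotonicity $\bb(n+1)\ge\bb(n)$ is already immediate from the ``at most $n$ states'' phrasing of the definition, but the strict version is what item 2 needs.)

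For item 2, I would combine item 1 with a base value. Since $\bb$ is strictly increasing on $\mathbb{N}$, for every $n\ge 2$ we have $\bb(n)\ge\bb(2)+(n-2)$, and using $\bb(2)=4$ (see \Cref{sec:onbb}) this gives $\bb(n)\ge n+2>n$, as claimed; the exclusion of $0$ and $1$ is exactly because $\bb(0)=0$ and $\bb(1)=1$.

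The only part with any real content is the construction in item 1, and even there the points to verify are just that the added scanning state terminates --- which it does precisely because the halting configuration of $M$ contains only finitely many ones --- and that this step increases the number of ones by exactly one; everything else is routine bookkeeping about Turing machine syntax.
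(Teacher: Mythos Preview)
Your proposal is correct and follows essentially the same route as the paper: for item~1 the paper also adds one extra state to a machine achieving $\bb(n)$ that prints one additional $1$ (you supply the scanning detail the paper leaves implicit), item~2 is likewise derived from item~1 together with $\bb(2)>2$, and item~3 is deferred to Rad\'{o}~\cite{Rado}. The only difference is the level of detail, not the argument itself.
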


\begin{proof}
    \begin{enumerate}
    \item Trivial, as one can take a $n+1$-states Turing machine whose first $n$ states emulate the Turing machine that prints $\bb(n)$ $1$'s, and make the new state print a single new $1$.
        \item Trivial, from the last item and the fact that $\bb(2)>2$.
        \item See \cite{Rado}.
    \end{enumerate}
\end{proof}

\subsection{$\TBB$}

\begin{lemma}
    The $\Sigmaone$-theory $\TBB$ is stably infinite with respect to its only sort.
\end{lemma}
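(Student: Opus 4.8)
The plan is to exploit two facts: by the shape of its axiomatization, $\TBB$ contains every infinite $\Sigmaone$-structure; and the signature $\Sigmaone$ is empty, so whether a quantifier-free formula holds in an interpretation depends only on which of its variables are identified. Together these let us "pad" any model of a given formula to an infinite one without affecting satisfaction.

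First I would observe that every infinite $\Sigmaone$-structure is a $\TBB$-interpretation. Indeed, each axiom of $\TBB$ has the form $\psi_{\geq \bb(k+2)}\vee\bigvee_{i=2}^{k+2}\psi_{=\bb(i)}$ for $k\in\mathbb{N}$, and $\bb(k+2)$ is a natural number (see \cite{Rado}), so an infinite domain always satisfies the first disjunct $\psi_{\geq\bb(k+2)}$; hence all of $\ax(\TBB)$ holds in such a structure. Next, given a $\TBB$-satisfiable quantifier-free $\Sigmaone$-formula $\phi$, I would take a $\TBB$-interpretation $\B$ satisfying $\phi$ and define an interpretation $\A$ by setting $\s^{\A}=\s^{\B}\cup A$, where $A$ is a countably infinite set disjoint from $\s^{\B}$, and $x^{\A}=x^{\B}$ for every variable $x$ of sort $\s$ (arbitrary for other variables, if any). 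Then $\s^{\A}$ is infinite, so $\A$ is a $\TBB$-interpretation by the first step. Since $\Sigmaone$ is empty, the atomic subformulas of $\phi$ are exactly the equalities $x=y$ between variables, and for variables $x,y$ occurring in $\phi$ we have $x^{\A}=y^{\A}$ iff $x^{\B}=y^{\B}$; thus $\A$ and $\B$ satisfy the same atomic subformulas of $\phi$, so $\A\models\phi$. This produces a $\TBB$-interpretation with infinite domain satisfying $\phi$, which is precisely stable infiniteness with respect to the single sort $\{\s\}$.

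There is no genuine obstacle in this argument; the only point deserving a moment's care is the first step, namely checking that infinite structures satisfy every axiom, and this is immediate because each axiom is a disjunction whose leading disjunct $\psi_{\geq\bb(k+2)}$ is merely a lower bound on cardinality, which every infinite domain meets.
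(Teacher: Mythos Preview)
Your proof is correct and follows essentially the same idea as the paper's: both use that the signature is empty and that $\TBB$ has infinite models, so any satisfying interpretation can be enlarged to an infinite one without disturbing the truth of equalities between variables. Your write-up simply spells out in detail what the paper compresses into one sentence (``there is a $\TBB$-interpretation $\A$ with $|\s^{\A}|=\omega$, to which any finite $\TBB$-interpretation can be extended'').
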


\begin{proof}
    It is a theory over an empty signature, and there is a $\TBB$-interpretation $\A$ with $|\s^{\A}|=\omega$, to which any finite $\TBB$-interpretation can be extended.
\end{proof}

\begin{lemma}
    The $\Sigmaone$-theory $\TBB$ is not smooth with respect to its only sort.
\end{lemma}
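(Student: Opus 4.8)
The plan is to refute smoothness by a direct counterexample. First I would recall the description of the models of $\TBB$ already established above: an interpretation $\A$ is a $\TBB$-interpretation exactly when $|\s^{\A}|$ is infinite or equals $\bb(k)$ for some $k\geq 2$. In particular, using the tabulated values $\bb(2)=4$ and $\bb(3)=6$ together with the fact that $\bb$ is increasing (item~1 of \Cref{propertiesofbb}), no $\TBB$-interpretation has a domain of size exactly $5$: the only values of $\bb$ strictly below $6$ are $\bb(0)=0$, $\bb(1)=1$ and $\bb(2)=4$, while $\bb(k)\geq 6$ for every $k\geq 3$.

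Next, I would take the quantifier-free formula $x=x$, where $x$ is a variable of sort $\s$. This formula is $\TBB$-valid, hence in particular it is satisfied by a $\TBB$-interpretation $\A$ with $|\s^{\A}|=4=\bb(2)$, such an interpretation existing since $\TBB$ has models of size $\bb(2)$. Now consider the cardinal function $\kappa$ on $\{\s\}$ given by $\kappa(\s)=5$. Since $5\geq 4=|\s^{\A}|$, smoothness of $\TBB$ with respect to $\{\s\}$ would yield a $\TBB$-interpretation $\B$ satisfying $x=x$ with $|\s^{\B}|=\kappa(\s)=5$. But we have just seen that no $\TBB$-interpretation has a domain of size $5$, a contradiction. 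Therefore $\TBB$ is not $\smooth$ with respect to its only sort.

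The argument is essentially immediate, and the only point that needs the slightest attention — the ``main obstacle'', though a very mild one — is verifying that $5$ genuinely falls in a gap of the range of $\bb$; this rests only on the small-value table of $\bb$ recalled in the introduction and on the monotonicity of $\bb$, and uses none of its rapid-growth properties.
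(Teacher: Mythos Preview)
Your proof is correct and follows essentially the same approach as the paper: exhibit a $\TBB$-interpretation of size $4$ and note that no $\TBB$-interpretation has size $5$, which directly violates smoothness. You are simply more explicit than the paper in naming the formula $x=x$, the cardinal function $\kappa(\s)=5$, and in justifying via monotonicity of $\bb$ that $5$ lies in the gap between $\bb(2)=4$ and $\bb(3)=6$.
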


\begin{proof}
    There is a $\TBB$-interpretation $\A$ with $|\s^{\A}|=4$, but no $\TBBtwo$-interpretations $\B$ with $|\s^{\B}|=5$.
\end{proof}

\begin{lemma}\label{Arbitrarymodels=>FMP}
    If $\T$ is a $\Sigmaone$-theory such that, for every $n\in\mathbb{N}$, there exists a $\T$-interpretation $\A$ with $n<|\s^{\A}|<\omega$, then $\T$ has the finite model property with respect to $\s$.
\end{lemma}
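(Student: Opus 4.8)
The plan is to exploit the fact that $\Sigmaone$ is an empty signature, so that the truth value of a quantifier-free $\Sigmaone$-formula under an interpretation depends only on the partition that the interpretation induces on the (finitely many) variables occurring in it. Concretely, fix a quantifier-free $\Sigmaone$-formula $\phi$ that is $\T$-satisfiable, say $\A\models\phi$ for some $\T$-interpretation $\A$, and put $V=\vars(\phi)$. Since the only $\Sigmaone$-terms are variables, every atomic subformula of $\phi$ is an equality between two members of $V$; a routine induction on the structure of $\phi$ then shows that whenever two $\Sigmaone$-interpretations $\M_1,\M_2$ induce the same equivalence relation on $V$ (that is, $x^{\M_1}=y^{\M_1}\iff x^{\M_2}=y^{\M_2}$ for all $x,y\in V$), one has $\M_1\models\phi\iff\M_2\models\phi$.

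The second step is to transplant the satisfying assignment onto a finite $\T$-model. Let $c$ be the number of equivalence classes of the relation $x\sim y\iff x^{\A}=y^{\A}$ on $V$, and note $c\le|V|$. Invoking the hypothesis with $n=|V|$, choose a $\T$-interpretation $\B$ with $|V|<|\s^{\B}|<\omega$; in particular $\s^{\B}$ has at least $c$ distinct elements. Pick $c$ such elements and define $\B'$ to be the interpretation whose underlying structure is that of $\B$, but in which the variables of $V$ are reassigned so that $\B'$ induces exactly the relation $\sim$ on $V$ (two variables of $V$ receive the same element iff they were assigned equal values by $\A$), the remaining variables being assigned arbitrarily. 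By the observation of the first step, $\B'\models\phi$; moreover $\B'$ is a $\T$-interpretation because it has the same underlying structure as $\B$ (and sentences in $\ax(\T)$ do not depend on variable assignments), and $|\s^{\B'}|=|\s^{\B}|$ is finite. This witnesses the finite model property of $\T$ with respect to $\s$.

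I do not anticipate a genuine difficulty here: the argument is essentially a downward-transfer/finite-model trick special to empty signatures. The only two places calling for a small amount of care are the degenerate case $V=\emptyset$, where $\phi$ has no atomic subformulas and is therefore satisfied by every $\Sigmaone$-interpretation --- so any finite $\T$-model, which exists by hypothesis, does the job --- and the precise statement of the ``truth depends only on the induced partition'' lemma, which is the routine structural induction mentioned above.
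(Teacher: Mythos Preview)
Your proposal is correct and follows essentially the same approach as the paper's proof. The paper also picks a finite $\T$-model $\B$ with more than $|\vars(\phi)^{\A}|$ elements, defines an injective map $f:\vars(\phi)^{\A}\to\s^{\B}$, and reassigns variables via $x^{\B'}=f(x^{\A})$; your formulation in terms of the induced partition on $V$ and your choice $n=|V|$ rather than $n=|\vars(\phi)^{\A}|$ are cosmetic variations of the same argument.
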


\begin{proof}
    Let $\phi$ be a quantifier-free $\Sigmaone$-formula that is satisfied by some $\T$-interpretation $\A$, and let $n$ be $|\vars(\phi)^{\A}|$: by hypothesis, there exists a $\T$-interpretation $\B$ with $n<|\s^{\B}|<\omega$. Since $n<|\s^{\B}|$, there exists an injective function $f:\vars(\phi)^{\A}\rightarrow\s^{\B}$; so we define an interpretation $\Bp$, different from $\B$ only on $\vars(\phi)$, where $x^{\Bp}=f(x^{\A})$ (for each $x\in\vars(\phi)$).

    Now, every atomic subformula of $\phi$ is of the form $x=y$: and $\A$ satisfies $x=y$ iff $x^{\A}=y^{\A}$, what happens iff $f(x^{\A})=f(y^{\A})$ (since $f$ is injective), what in turns happens iff $x^{\Bp}=y^{\Bp}$. So $\Bp$ satisfies an atomic subformula of $\phi$ iff $\A$ satisfies the same subformula, and since $\phi$ is a quantifier-free formula in an empty signature (and therefore its truth value is entirely determined by the truth values of its atomic subformulas), $\phi$ is satisfied by $\Bp$. Given $|\s^{\Bp}|=|\s^{\B}|<\omega$, we have finished the proof.
\end{proof}

\begin{lemma}
    The $\Sigmaone$-theory $\TBB$ is stably finite, and thus has the finite model property, with respect to its only sort.
\end{lemma}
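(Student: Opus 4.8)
The plan is to chain together two results already established in the paper. Since the signature $\Sigmaone$ has a single sort, \Cref{OS+FMP=>SF} tells us that it is enough to prove that $\TBB$ has the finite model property with respect to $\s$. And by \Cref{Arbitrarymodels=>FMP}, to obtain the finite model property it suffices to show that $\TBB$ has arbitrarily large finite models, that is, that for every $n \in \mathbb{N}$ there is a $\TBB$-interpretation $\A$ with $n < |\s^{\A}| < \omega$.

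So the real content is producing these interpretations. First I would fix $n \in \mathbb{N}$ and, using parts (1) and (2) of \Cref{propertiesofbb} (that $\bb$ is increasing and $\bb(m) > m$ for $m \geq 2$), pick some $k \geq 2$ with $\bb(k) > n$. Then I would take any $\Sigmaone$-interpretation $\A$ with $|\s^{\A}| = \bb(k)$ (a finite number) and check that $\A$ is a $\TBB$-interpretation. For this one has to verify each axiom $\psi_{\geq \bb(k'+2)} \vee \bigvee_{i=2}^{k'+2} \psi_{=\bb(i)}$, for $k' \in \mathbb{N}$: if $k \leq k'+2$, then $2 \leq k \leq k'+2$, so $\A$ satisfies the disjunct $\psi_{=\bb(k)}$; if instead $k > k'+2$, then since $\bb$ is increasing we have $\bb(k) > \bb(k'+2)$, so $\A$ satisfies $\psi_{\geq \bb(k'+2)}$. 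Either way the axiom holds, so $\A$ is a $\TBB$-interpretation with $n < |\s^{\A}| < \omega$, as required.

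There is no serious obstacle here: the whole argument rests on the monotonicity of $\bb$ (so that large values of $\bb$ are legitimate finite model sizes) together with the two reductions mentioned above. The only point needing a little care is the case split in the verification that a cardinality-$\bb(k)$ structure satisfies every axiom, and noticing that $k \geq 2$ is always available — which is immediate since $\bb$ is unbounded on $\{m : m \geq 2\}$. Once the model construction is in place, \Cref{Arbitrarymodels=>FMP} yields the finite model property and \Cref{OS+FMP=>SF} upgrades it to stable finiteness, completing the proof.
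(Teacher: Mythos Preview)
Your proposal is correct and follows essentially the same approach as the paper: reduce stable finiteness to the finite model property via \Cref{OS+FMP=>SF}, reduce the finite model property to the existence of arbitrarily large finite models via \Cref{Arbitrarymodels=>FMP}, and use \Cref{propertiesofbb} to find, for each $n$, some $k\geq 2$ with $\bb(k)>n$. The only difference is that you spell out the verification that a structure of cardinality $\bb(k)$ satisfies every axiom of $\TBB$, which the paper leaves implicit; your case split on $k\leq k'+2$ versus $k>k'+2$ is correct.
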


\begin{proof}
    Because of \Cref{propertiesofbb}, for any $n\in\mathbb{N}$, there exists an $m\in\mathbb{N}\setminus\{0,1\}$ such that $\bb(m)>n$, and so the $\TBB$-interpretation $\A$ with $\bb(m)$ elements satisfies $n<|\s^{\A}|<\omega$. We can then apply \Cref{Arbitrarymodels=>FMP} to obtain that $\TBB$ has the finite model property, and then use \Cref{OS+FMP=>SF} to obtain $\TBB$ is also stably finite.
\end{proof}

\begin{lemma}\label{TBB is not FW}
    The $\Sigmaone$-theory $\TBB$ is not finitely witnessable, and thus not strongly finitely witnessable, with respect to its only sort.
\end{lemma}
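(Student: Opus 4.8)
The plan is to argue by contradiction, leveraging \Cref{propertiesofbb}: no computable function eventually dominates $\bb$. Suppose $\wit$ were a witness for $\TBB$ with respect to its only sort $\s$. For each $n\in\mathbb{N}$, let $\phi_{n}$ be the \emph{quantifier-free} $\Sigmaone$-formula $\bigwedge_{1\leq i<j\leq n}\neg(x_{i}=x_{j})$ on fresh variables $x_{1},\ldots,x_{n}$ of sort $\s$ (it is important to use this quantifier-free version rather than $\psi_{\geq n}$, so that $\wit$ applies to it); note $\phi_{n}$ can be computed from $n$. Since $\TBB$ has infinite models, $\phi_{n}$ is $\TBB$-satisfiable, hence so is $\Exists{\overarrow{w}}\wit(\phi_{n})$ (being $\TBB$-equivalent to $\phi_{n}$, where $\overarrow{w}=\vars(\wit(\phi_{n}))\setminus\vars(\phi_{n})$) and therefore also $\wit(\phi_{n})$ itself. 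Write $h(n)=|\vars(\wit(\phi_{n}))|$; as $\wit$ is computable, $h$ is a computable function.

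The key estimate I would establish is $h(n)\geq\bb(j_{n})$, where $j_{n}=\min\{\, j\geq 2 : \bb(j)\geq n\,\}$. By the witness condition there is a $\TBB$-interpretation $\A$ satisfying $\wit(\phi_{n})$ with $\s^{\A}=\vars_{\s}(\wit(\phi_{n}))^{\A}$, so $|\s^{\A}|\leq h(n)$ and in particular $\A$ is finite; moreover $\A$ satisfies $\Exists{\overarrow{w}}\wit(\phi_{n})$ and hence $\phi_{n}$, forcing $|\s^{\A}|\geq n$. Every finite $\TBB$-model has cardinality $\bb(j)$ for some $j\geq 2$, so $|\s^{\A}|=\bb(j)$ with $\bb(j)\geq n$, and thus $|\s^{\A}|\geq\bb(j_{n})$ by monotonicity of $\bb$; combining gives $h(n)\geq\bb(j_{n})$. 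In particular, whenever $n>\bb(m-1)$ for some $m\geq 2$ (recalling $\bb(1)=1$), no $j$ with $2\leq j\leq m-1$ satisfies $\bb(j)\geq n$, so $j_{n}\geq m$ and hence $h(n)\geq\bb(m)$.

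The remaining step converts this into a computable function dominating $\bb$; this is the point that needs care, since we cannot directly name the threshold $\bb(m-1)$. The idea is to bootstrap past it: set $a_{2}=2$ and $a_{m+1}=h(a_{m})+1$, which is a computable sequence. One then checks by induction that $a_{m}>\bb(m-1)$ for all $m\geq 2$: indeed $a_{2}=2>1=\bb(1)$, and if $a_{m}>\bb(m-1)$ then $h(a_{m})\geq\bb(m)$ by the estimate above, so $a_{m+1}=h(a_{m})+1>\bb(m)$. Hence $h(a_{m})\geq\bb(m)$ for every $m\geq 2$, so the computable function $f$ given by $f(m)=h(a_{m})$ for $m\geq 2$ and $f(m)=0$ otherwise satisfies $f(m)\geq\bb(m)$ for all $m\geq 2$, contradicting \Cref{propertiesofbb}. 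This shows $\TBB$ is not finitely witnessable; it is then not strongly finitely witnessable either, since a strong witness is in particular a witness (taking the arrangement on the empty variable set reduces $(ii)'$ to $(ii)$). The main obstacle is exactly this last bootstrapping trick --- turning the coarse statement ``$h$ is large beyond $\bb(m-1)$'' into an honest, computable dominator of $\bb$ --- while the rest is a routine unwinding of the witness definition.
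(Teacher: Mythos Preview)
Your proof is correct and follows essentially the same strategy as the paper's: from a putative witness $\wit$, build by recursion a computable function that dominates $\bb$, contradicting \Cref{propertiesofbb}. The paper defines $f(0)=0$ and $f(n+1)=|\vars_{\s}(\wit(\psi_{\geq f(n)+1}))|$ directly; your two-step decomposition (first the non-recursive $h(n)=|\vars(\wit(\phi_{n}))|$, then the bootstrap $a_{m+1}=h(a_{m})+1$) unwinds the same recursion, and your $f(m)=h(a_{m})$ is essentially the paper's $f$ up to an index shift. One point where your version is actually more careful: you feed $\wit$ the quantifier-free $\phi_{n}$ rather than the quantified $\psi_{\geq n}$, which matches the stated domain of a witness; the paper's use of $\psi_{\geq f(n)+1}$ is a small sloppiness that your formulation avoids.
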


\begin{proof}
    Suppose, by contradiction, that $\TBB$ has a witness $\wit$, which is a computable function. We then define a function $f:\mathbb{N}\rightarrow\mathbb{N}$ by making $f(0)=0$ and
    $f(n+1)=|\vars_{\s}(\wit(\psi_{\geq f(n)+1}))|$.
    Clearly, $f$ is computable.

    But we also show by induction that, for all $n\in\mathbb{N}$, $f(n)\geq\bb(n)$, what proves $f$ is non computable given \Cref{propertiesofbb}, thus leading to a contradiction. It is clear that $f(0)\geq\bb(0)=0$; assuming we have successfully proved $f(n)\geq\bb(n)$, we know that $\psi_{\geq f(n)+1}$ is $\TBB$-satisfiable, since this theory has an infinite model, and therefore so is $\wit(\psi_{\geq f(n)+1})$. Given $\wit$ is a witness, there must exist a $\TBB$-interpretation $\A$ that satisfies $\wit(\psi_{\geq f(n)+1})$ with $\s^{\A}=\vars_{\s}(\wit(\psi_{\geq f(n)+1}))^{\A}$. But, by the definition of $f$, 
    \[f(n+1)=|\vars_{\s}(\wit(\psi_{\geq f(n)+1}))|\geq |\vars_{\s}(\wit(\psi_{\geq f(n)+1}))^{\A}|=|\s^{\A}|;\]
    and since $\A$ satisfies $\wit(\psi_{\geq f(n)+1})$, it must also satisfy $\Exists{\overarrow{x}}\wit(\psi_{\geq f(n)+1})$, for $\overarrow{x}=\vars(\wit(\psi_{\geq f(n)+1}))\setminus\vars(\psi_{\geq f(n)+1})$, and thus $\psi_{\geq f(n)+1}$, meaning $|\s^{\A}|\geq f(n)+1$. But $\A$, being a finite $\TBB$-interpretation, must satisfy $|\s^{\A}|=\bb(m)$ for some $m\in\mathbb{N}$; since 
    \[\bb(m)=|\s^{\A}|>f(n)\geq\bb(n),\]
    $m>n$ according to 
    \Cref{propertiesofbb}, and therefore $m\geq n+1$, proving $f(n+1)\geq |\s^{\A}|\geq \bb(n+1)$.
\end{proof}

\begin{lemma}
    The $\Sigmaone$-theory $\TBB$ is convex with respect to its only sort.
\end{lemma}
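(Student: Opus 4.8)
The plan is to invoke the result already established earlier, namely \Cref{SI empty theories are convex}, which states that any theory over an empty signature that is stably infinite with respect to the set of all its sorts is convex with respect to any set of sorts. I would first recall that $\TBB$ is a $\Sigmaone$-theory, hence its signature is empty. Then I would cite the lemma, proved just above, that $\TBB$ is stably infinite with respect to its only sort.

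Combining these two facts, the conclusion is immediate: $\TBB$ is a stably infinite theory over an empty signature, so \Cref{SI empty theories are convex} applies directly and yields that $\TBB$ is convex with respect to its only sort. There is essentially no obstacle here; the only thing to be careful about is that the hypothesis of \Cref{SI empty theories are convex} requires stable infiniteness with respect to \emph{all} sorts, but since $\TBB$ has exactly one sort, stable infiniteness with respect to that sort is the same as stable infiniteness with respect to the set of all sorts.

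\begin{proof}
    Since $\TBB$ is a $\Sigmaone$-theory, its signature is empty. Moreover, $\TBB$ is stably infinite with respect to its only sort, as shown above. As this single sort is the set of all sorts of $\Sigmaone$, \Cref{SI empty theories are convex} applies and gives that $\TBB$ is convex with respect to its only sort.
\end{proof}
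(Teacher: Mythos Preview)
Your proof is correct and follows exactly the same approach as the paper: both invoke \Cref{SI empty theories are convex} after noting that $\TBB$ is stably infinite over an empty signature. Your additional remark that the single sort of $\Sigmaone$ is the full set of sorts (so the hypothesis of \Cref{SI empty theories are convex} is met) is a harmless clarification.
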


\begin{proof}
    $\TBB$ is stably infinite, and is defined on an empty signature: according to \Cref{SI empty theories are convex}, it is then convex.
\end{proof}

\subsection{$\TBBtwo$}\label{proofs of TBBtwo}

\begin{lemma}
    The $\Sigmatwo$-theory $\TBBtwo$ is stably infinite with respect to $\{\s, \s_{2}\}$.
\end{lemma}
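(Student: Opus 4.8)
The plan is to prove this by a straightforward domain-enlargement argument, mirroring the proof that $\Tupinfty$ is stably infinite (\Cref{Tupinfty is SI}). Given a $\TBBtwo$-satisfiable quantifier-free $\Sigmatwo$-formula $\phi$, I would first fix a $\TBBtwo$-interpretation $\A$ that satisfies $\phi$. Then I would pick two infinite sets $A$ and $B$, disjoint from each other and from $\s^{\A}$ and $\s_{2}^{\A}$, and define a $\Sigmatwo$-interpretation $\B$ by setting $\s^{\B}=\s^{\A}\cup A$, $\s_{2}^{\B}=\s_{2}^{\A}\cup B$, $x^{\B}=x^{\A}$ for every variable $x$ of sort $\s$, and $u^{\B}=u^{\A}$ for every variable $u$ of sort $\s_{2}$.

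The next step is to check that $\B$ still satisfies $\phi$. Since $\Sigmatwo$ is empty, the truth value of $\phi$ under any interpretation is determined entirely by the truth values of its atomic subformulas, all of which are equalities $x=y$ between variables; as the enlargement leaves every variable assignment unchanged, $x^{\B}=y^{\B}$ holds iff $x^{\A}=y^{\A}$ holds, so $\phi$ has the same truth value under $\B$ as under $\A$, hence $\B\vDash\phi$.

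The final step is to verify that $\B$ is in fact a $\TBBtwo$-interpretation. Both $\s^{\B}$ and $\s_{2}^{\B}$ are infinite, so for every $k\in\mathbb{N}$ each of them has at least $\bb(k+2)$ elements; thus $\B$ satisfies the disjunct $\psi^{\s}_{\geq\bb(k+2)}\wedge\psi^{\s_{2}}_{\geq\bb(k+2)}$ of $\bbdiag{\s}{\s_{2}}{k+2}$, and therefore every axiom $(\psi^{\s}_{=1}\wedge\psi^{\s_{2}}_{\geq k})\vee\bbdiag{\s}{\s_{2}}{k+2}$ of $\TBBtwo$. Since both domains of $\B$ are infinite, $\B$ is the required witness. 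I do not expect any real obstacle here: the only point needing a moment's care is the observation that the ``both domains infinite'' class of models of $\TBBtwo$ satisfies every axiom, which reduces to the trivial fact that an infinite set has more than $\bb(k+2)$ elements for every $k$; in particular, the non-computability of $\bb$ plays no role in this property.
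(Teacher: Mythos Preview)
Your proposal is correct and takes essentially the same approach as the paper: the paper's proof simply records that $\TBBtwo$ is over an empty signature and has an interpretation with $|\s^{\A}|=|\s_{2}^{\A}|=\omega$, which is exactly the observation underlying your domain-enlargement construction (and is spelled out in the same level of detail you use in the proof of \Cref{Tupinfty is SI}, which you explicitly mirror).
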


\begin{proof}
    Obvious, since it is a theory over an empty signature, and there is a $\TBBtwo$-interpretation $\A$ with $|\s^{\A}|=|\s_{2}^{\A}|=\omega$.
\end{proof}

\begin{lemma}
    The $\Sigmatwo$-theory $\TBBtwo$ is not smooth with respect to $\{\s, \s_{2}\}$.
\end{lemma}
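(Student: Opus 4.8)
The plan is to witness the failure of smoothness directly, in the same spirit as the proof that $\Tupinfty$ is not smooth (\Cref{Tupinfty is not SM}). I would take $\phi$ to be the trivially valid formula $x = x$ for a variable $x$ of sort $\s$, so that every $\TBBtwo$-interpretation satisfies it. First I would recall, from the description of $\TBBtwo$ given above, that it has a model $\A$ lying in its ``diagonal'' class with $|\s^{\A}| = |\s_{2}^{\A}| = \bb(2) = 4$; concretely, $\A$ satisfies the axiom $(\psi^{\s}_{=1}\wedge\psi^{\s_{2}}_{\geq k})\vee\bbdiag{\s}{\s_{2}}{k+2}$ for every $k\in\mathbb{N}$, via the disjunct $\psi^{\s}_{=\bb(2)}\wedge\psi^{\s_{2}}_{=\bb(2)}$ of $\bbdiag{\s}{\s_{2}}{k+2}$ (and via $\psi^{\s}_{\geq\bb(2)}\wedge\psi^{\s_{2}}_{\geq\bb(2)}$ when $k=0$).

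Next I would set $\kappa(\s)=5$ and $\kappa(\s_{2})=4$, which is a legitimate smoothness challenge since $\kappa(\s)=5\geq 4=|\s^{\A}|$ and $\kappa(\s_{2})=4\geq 4=|\s_{2}^{\A}|$. It then suffices to show that there is no $\TBBtwo$-interpretation $\B$ with $|\s^{\B}|=5$ and $|\s_{2}^{\B}|=4$. This follows from the partition of $\TBBtwo$-interpretations into three classes: a model with $|\s^{\B}|=5$ has a finite domain of sort $\s$ whose size is different from $1$, so it cannot lie in the class with $|\s^{\B}|=1$ nor in the class where both sorts are infinite; hence it must lie in the diagonal class, forcing $|\s^{\B}|=|\s_{2}^{\B}|$, contradicting $5\neq 4$. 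Equivalently, one checks directly that the axiom obtained by taking $k=1$, namely $(\psi^{\s}_{=1}\wedge\psi^{\s_{2}}_{\geq 1})\vee\bbdiag{\s}{\s_{2}}{3}$, fails when $|\s^{\B}|=5$ and $|\s_{2}^{\B}|=4$, since $5\notin\{1,\bb(2),\bb(3)\}=\{1,4,6\}$ and $5<\bb(3)=6$. Therefore $\TBBtwo$ is not smooth with respect to $\{\s,\s_{2}\}$.

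There is essentially no obstacle here: the only point requiring a little care is justifying that $5$ cannot be the $\s$-coordinate of the cardinality pair of any $\TBBtwo$-model while the $\s_{2}$-coordinate stays $4$, and this is immediate from the three-class description of the models of $\TBBtwo$ recalled above. The same argument goes through with any finite value strictly between $\bb(2)=4$ and $\bb(3)=6$ in place of $5$ for the $\s$-coordinate.
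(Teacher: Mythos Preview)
Your proposal is correct and follows essentially the same approach as the paper: exhibit the model $\A$ with $|\s^{\A}|=|\s_{2}^{\A}|=4$ and observe that there is no $\TBBtwo$-model $\B$ with $|\s^{\B}|=5$ and $|\s_{2}^{\B}|=4$. The paper states this in one line as ``obvious''; you simply unpack the justification in more detail.
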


\begin{proof}
    Obvious, since there is a $\TBBtwo$-interpretation $\A$ with $|\s^{\A}|=|\s_{2}^{\A}|=4$, but no $\TBBtwo$-interpretations $\B$ with $|\s^{\B}|=5$ and $|\s^{\B}_{2}|=4$.
\end{proof}

\begin{lemma}\label{FMP of TBBtwo}
    The $\Sigmatwo$-theory $\TBBtwo$ has the finite model property with respect to $\{\s, \s_{2}\}$.
\end{lemma}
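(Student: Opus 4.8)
The plan is to argue directly (a witness function is not available, since $\TBBtwo$ will turn out not to be finitely witnessable), exploiting two facts: that $\Sigmatwo$ is empty, so the truth value of a quantifier-free formula in an interpretation is determined by the equality pattern among its variables; and that $\bb$ takes arbitrarily large finite values. This is the two-sorted analogue of \Cref{Arbitrarymodels=>FMP}.

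First I would fix a $\TBBtwo$-satisfiable quantifier-free $\Sigmatwo$-formula $\phi$ together with a $\TBBtwo$-interpretation $\A$ satisfying it, and set $n=|\vars_{\s}(\phi)^{\A}|$ and $m=|\vars_{\s_{2}}(\phi)^{\A}|$; both are finite, being bounded by the number of variables occurring in $\phi$. By \Cref{propertiesofbb}, $\bb$ is increasing and satisfies $\bb(k)>k$, so there is some $k\geq 2$ with $\bb(k)\geq\max\{n,m\}$. The next step is to check that the $\Sigmatwo$-structure $\B$ with $|\s^{\B}|=|\s_{2}^{\B}|=\bb(k)$ is a $\TBBtwo$-interpretation: for the axiom indexed by $l\in\mathbb{N}$, namely $(\psi^{\s}_{=1}\wedge\psi^{\s_{2}}_{\geq l})\vee\bbdiag{\s}{\s_{2}}{l+2}$, it suffices to verify its second disjunct. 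If $k\geq l+2$ then $\bb(k)\geq\bb(l+2)$, so $\psi^{\s}_{\geq\bb(l+2)}\wedge\psi^{\s_{2}}_{\geq\bb(l+2)}$ holds; if $2\leq k\leq l+2$, then $\bb(k)$ equals $\bb(i)$ for $i=k\in\{2,\ldots,l+2\}$, so the disjunct $\psi^{\s}_{=\bb(i)}\wedge\psi^{\s_{2}}_{=\bb(i)}$ of $\bbdiag{\s}{\s_{2}}{l+2}$ holds. One of the two cases always applies.

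Finally, since $|\s^{\B}|=\bb(k)\geq n$ and $|\s_{2}^{\B}|=\bb(k)\geq m$, I would choose injections $f\colon\vars_{\s}(\phi)^{\A}\to\s^{\B}$ and $g\colon\vars_{\s_{2}}(\phi)^{\A}\to\s_{2}^{\B}$ and define a $\TBBtwo$-interpretation $\Bp$ agreeing with $\B$ on the underlying structure, with $x^{\Bp}=f(x^{\A})$ for $x\in\vars_{\s}(\phi)$, $u^{\Bp}=g(u^{\A})$ for $u\in\vars_{\s_{2}}(\phi)$, and arbitrary elsewhere; changing only the variable assignment keeps $\Bp$ a $\TBBtwo$-interpretation. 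For every atomic subformula of $\phi$, which has the form $x=y$, the interpretation $\Bp$ satisfies it iff $\A$ does, because $f$ and $g$ are injective; since $\phi$ is quantifier-free over an empty signature, its truth value is determined by those of its atoms, so $\Bp$ satisfies $\phi$, with $|\s^{\Bp}|=|\s_{2}^{\Bp}|=\bb(k)$ finite. I expect no serious obstacle; the only mildly delicate point is the verification that the diagonal structure of size $\bb(k)$ satisfies every axiom of $\TBBtwo$, which rests on the case split on $k$ versus $l+2$ and on monotonicity of $\bb$.
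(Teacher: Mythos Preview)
Your proposal is correct and follows essentially the same approach as the paper: take a finite diagonal model of size $\bb(k)$ large enough to inject the values of $\phi$'s variables, then transfer the variable assignment along injections to preserve all atomic equalities. You are in fact more careful than the paper, which simply asserts that the structure with $|\s^{\B}|=|\s_{2}^{\B}|=\bb(n)$ is a $\TBBtwo$-interpretation without spelling out the case split on the axiom index that you provide.
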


\begin{proof}
    Let $\phi$ be a quantifier-free $\Sigmatwo$-formula, and $\A$ a $\TBBtwo$-interpretation that satisfies $\phi$. Take then an $n\in\mathbb{N}\setminus\{0,1\}$ such that
    \[n\geq\max\{|\vars_{\s}(\phi)^{\A}|, |\vars_{\s_{2}}(\phi)^{\A}|\},\]
    which exists because both $\vars_{\s}(\phi)$ and $\vars_{\s_{2}}(\phi)$ are finite, and we know that $\bb(n)>n$ from \Cref{propertiesofbb} (since $n\in\mathbb{N}\setminus\{0,1\}$). Take then the $\TBBtwo$-interpretation $\B$ with $|\s^{\B}|=|\s_{2}^{\B}|=\bb(n)$ and, for injective functions $f:\vars_{\s}(\phi)^{\A}\rightarrow\s^{\B}$ and $f_{2}:\vars_{\s_{2}}(\phi)^{\A}\rightarrow\B$, $x^{\B}=f(x^{\A})$ and $u^{\B}=f_{2}(u^{\A})$ for $x\in\vars_{\s}(\phi)$ and $u\in\vars_{\s_{2}}(\phi)$ (and arbitrary for other variables). Then, for any atomic subformula $x=y$ or $u=v$ of $\phi$, where $x$ and $y$ are of sort $\s$, and $u$ and $v$ are of sort $\s_{2}$, we have that $\B$ satisfies $x=y$ (respectively $u=v$) iff $\A$ satisfies $x=y$ ($u=v$), meaning $\B$ satisfies $\phi$. Of course, both $|\s^{\B}|$ and $|\s_{2}^{\B}|$ equal $\bb(n)$, and are thus finite, finishing the proof.
\end{proof}

\begin{lemma}
    The $\Sigmatwo$-theory $\TBBtwo$ is not stably finite with respect to $\{\s, \s_{2}\}$.
\end{lemma}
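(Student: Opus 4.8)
The plan is to imitate the argument for $\Tupinfty$ (\Cref{Tupinfty is not SF}): I would exhibit a trivially satisfiable quantifier-free $\Sigmatwo$-formula, say $x=x$ for $x$ a variable of sort $\s$, together with a $\TBBtwo$-interpretation $\A$ having $|\s^{\A}|=1$ and $|\s_{2}^{\A}|=\omega$, and then show that no finite $\TBBtwo$-interpretation can match $\A$'s cardinalities from below. First I would check that such an $\A$ is in fact a $\TBBtwo$-interpretation: for every $k\in\mathbb{N}$ the first disjunct $\psi^{\s}_{=1}\wedge\psi^{\s_{2}}_{\geq k}$ of the $k$-th axiom holds in $\A$, so $\A$ satisfies the whole axiomatization; and of course $\A\models x=x$.

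Then I would assume, toward a contradiction, that there is a $\TBBtwo$-interpretation $\B$ satisfying $x=x$ with $|\s^{\B}|$ and $|\s_{2}^{\B}|$ both finite, $|\s^{\B}|\leq|\s^{\A}|=1$ and $|\s_{2}^{\B}|\leq|\s_{2}^{\A}|$. Since every model of $\TBBtwo$ has a nonempty $\s$-domain, this forces $|\s^{\B}|=1$. The key observation is that a one-element $\s$-domain is incompatible with a finite $\s_{2}$-domain in $\TBBtwo$: by \Cref{propertiesofbb} the function $\bb$ is increasing and $\bb(i)>i\geq 2$ for every $i\geq 2$, so $\bb(k+2)\geq\bb(2)>1$ and $\bb(i)>1$ for all $i\in\{2,\ldots,k+2\}$; consequently $\B$ (with $|\s^{\B}|=1$) satisfies none of $\psi^{\s}_{\geq\bb(k+2)}$ or $\psi^{\s}_{=\bb(i)}$, hence $\B\not\models\bbdiag{\s}{\s_{2}}{k+2}$ for any $k$. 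Therefore $\B$ must satisfy the remaining disjunct $\psi^{\s}_{=1}\wedge\psi^{\s_{2}}_{\geq k}$ of every axiom, which gives $|\s_{2}^{\B}|\geq k$ for all $k$, i.e. $|\s_{2}^{\B}|$ is infinite --- contradicting finiteness.

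I do not expect any real obstacle here; the proof is a near-copy of the $\Tupinfty$ case, the only new ingredient being the trivial remark (immediate from \Cref{propertiesofbb}) that all the diagonal entries $\bb(i)$ with $i\geq 2$ exceed $1$. The single point requiring a bit of care is the bookkeeping of the cardinality constraints: only the inequality $|\s^{\B}|\leq|\s^{\A}|=1$ is actually used, forcing $|\s^{\B}|=1$, after which the axiomatization does all the work.
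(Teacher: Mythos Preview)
Your proposal is correct and follows essentially the same approach as the paper: both use the formula $x=x$ with $x$ of sort $\s$, the $\TBBtwo$-interpretation $\A$ with $|\s^{\A}|=1$ and $|\s_{2}^{\A}|=\omega$, and then argue that no $\TBBtwo$-interpretation $\B$ with $|\s^{\B}|=1$ can have finite $\s_{2}$-domain because all diagonal values $\bb(i)$ exceed $1$. The only cosmetic difference is that the paper invokes $\bb(k+2)\geq 4$ directly, whereas you obtain the same bound via \Cref{propertiesofbb}; your write-up is in fact slightly more detailed in verifying that $\A$ is indeed a model.
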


\begin{proof}
    Consider the $\Sigmatwo$-formula $x=x$, where $x$ is a variable of sort $\s$, that is satisfied by the $\TBBtwo$-interpretation $\A$ with $|\s^{\A}|=1$ and $|\s_{2}^{\A}|=\omega$. But there is no $\TBBtwo$-interpretation $\B$ that satisfies $x=x$ with $|\s^{\B}|\leq|\s^{\A}|$, $|\s_{2}^{\B}|\leq|\s_{2}^{\A}|$ and both finite, since there are no $\TBBtwo$-interpretations $\B$ with $|\s^{\B}|=1$ and $|\s_{2}^{\B}|$ finite, given that $\bb(k+2)\geq 4$ for $k\in\mathbb{N}$.
\end{proof}

\begin{lemma}\label{not FW of TBBtwo}
    The $\Sigmatwo$-theory $\TBBtwo$ is not finitely witnessable, and thus not strongly finitely witnessable, with respect to $\{\s, \s_{2}\}$.
\end{lemma}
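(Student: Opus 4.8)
The plan is to mirror the proof of \Cref{TBB is not FW}: from a hypothetical witness for $\TBBtwo$ we extract a computable function that dominates the Busy Beaver function, contradicting \Cref{propertiesofbb}(3). So suppose for contradiction that $\TBBtwo$ is finitely witnessable with respect to $\{\s,\s_{2}\}$, witnessed by a computable $\wit$. Define $f\colon\mathbb{N}\to\mathbb{N}$ by $f(0)=0$ and
\[f(n+1)=|\vars_{\s}(\wit(\psi^{\s}_{\geq f(n)+1}))|.\]
Since $\wit$ is computable, and building $\psi^{\s}_{\geq m}$ from $m$ and counting the free variables of sort $\s$ in a formula are both effective operations, $f$ is computable.

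The heart of the argument is the claim that $f(n)\geq\bb(n)$ for every $n\in\mathbb{N}$, proved by induction. The base case holds since $f(0)=0=\bb(0)$. For the step, I would first note that $\psi^{\s}_{\geq f(n)+1}$ is $\TBBtwo$-satisfiable, because $\TBBtwo$ has a model in which both domains are infinite; hence $\wit(\psi^{\s}_{\geq f(n)+1})$ is $\TBBtwo$-satisfiable, and the second witness condition yields a $\TBBtwo$-interpretation $\A$ satisfying it with $\s^{\A}=\vars_{\s}(\wit(\psi^{\s}_{\geq f(n)+1}))^{\A}$ and $\s_{2}^{\A}=\vars_{\s_{2}}(\wit(\psi^{\s}_{\geq f(n)+1}))^{\A}$; in particular both domains of $\A$ are finite. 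Inspecting the axiomatization of $\TBBtwo$ (as in \Cref{proofs of TBBtwo}), a $\TBBtwo$-interpretation with both domains finite must lie in the ``diagonal'' class characterized by $\bbdiag{\s}{\s_{2}}{k+2}$, so $|\s^{\A}|=\bb(m)$ for some $m\geq 2$. Now $f(n+1)\geq|\vars_{\s}(\wit(\psi^{\s}_{\geq f(n)+1}))^{\A}|=|\s^{\A}|$, while $\A$ also satisfies $\exists\,\overarrow{x}.\:\wit(\psi^{\s}_{\geq f(n)+1})$ (for $\overarrow{x}$ the fresh variables of $\wit$) and hence $\psi^{\s}_{\geq f(n)+1}$, so $|\s^{\A}|\geq f(n)+1$. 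Combining, $\bb(m)=|\s^{\A}|\geq f(n)+1>f(n)\geq\bb(n)$, and since $\bb$ is increasing this forces $m>n$, i.e.\ $m\geq n+1$, whence $f(n+1)\geq|\s^{\A}|=\bb(m)\geq\bb(n+1)$.

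Having established $f(n)\geq\bb(n)$ for all $n$, \Cref{propertiesofbb}(3) gives an $N$ with $\bb(N)>f(N)$, contradicting $f(N)\geq\bb(N)$; hence no witness exists and $\TBBtwo$ is not finitely witnessable. For the ``thus'' part, I would observe that a strong witness is in particular a witness: given $\wit(\phi)$ that is $\TBBtwo$-satisfiable, choose a satisfying interpretation $\B$, let $\delta_{V}$ be the arrangement it induces on $V=\vars(\wit(\phi))$, and apply condition $(ii)'$ to $\wit(\phi)\wedge\delta_{V}$, whose set of free variables coincides with that of $\wit(\phi)$; this recovers condition $(ii)$. Hence $\TBBtwo$ is not strongly finitely witnessable either. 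I expect the only nontrivial point to be the bookkeeping that a finite $\TBBtwo$-model necessarily has $\s$-cardinality exactly $\bb(m)$ for some $m\geq 2$ (ruling out the classes with $|\s^{\A}|=1$ or with both domains infinite); everything else is a routine transcription of the $\Sigmaone$-argument.
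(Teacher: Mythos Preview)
Your proof is correct and follows essentially the same approach as the paper, which simply says ``We proceed as in the proof of \Cref{TBB is not FW}'' and defines the analogous function $f$ (with $f(0)=0$, $f(1)=1$, and $f(n+1)=|\vars_{\s}(\wit(\psi^{\s}_{\geq f(n)+1}))|$). Your version is more detailed---you spell out why a finite $\TBBtwo$-model must have $|\s^{\A}|=\bb(m)$ for some $m\geq 2$ and why strong finite witnessability implies finite witnessability---but the underlying argument is identical.
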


\begin{proof}
    We proceed as in the proof of \Cref{TBB is not FW}, supposing, by contradiction, that $\TBBtwo$ has a witness $\wit$: we then define a function $f:\mathbb{N}\rightarrow{N}$ by making $f(0)=0$, $f(1)=1$ and
    $f(n+1)=|\vars_{\s}(\wit(\psi^{\s}_{\geq f(n)+1}))|$.
\end{proof}

\begin{lemma}\label{TBBtwo is CV}
    The $\Sigmatwo$-theory $\TBBtwo$ is convex with respect to $\{\s, \s_{2}\}$.
\end{lemma}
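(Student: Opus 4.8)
The plan is to reuse, verbatim, the strategy already applied above for the convexity of $\Tupinfty$ (\Cref{Tupinfty is CV}) and of $\TBB$. First I would recall that $\TBBtwo$ has just been shown to be stably infinite with respect to $\{\s,\s_2\}$, and that $\{\s,\s_2\}$ is precisely the full set of sorts of $\Sigmatwo$. Since $\Sigmatwo$ is an empty signature, \Cref{SI empty theories are convex} applies directly: any theory over an empty signature that is stably infinite with respect to all of its sorts is convex with respect to any set of sorts. Instantiating this with $\T=\TBBtwo$ and the sort set $\{\s,\s_2\}$ gives exactly the desired conclusion.

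There is essentially no obstacle: the only point that requires care is making sure the stable infiniteness invoked is with respect to the \emph{entire} sort set rather than a proper subset, and this holds because the witnessing $\TBBtwo$-interpretation has $|\s^{\A}|=|\s_2^{\A}|=\omega$, enlarging both sorts simultaneously. Hence the proof is a one-line appeal to \Cref{SI empty theories are convex}, and I would write it in the same terse form as the corresponding proofs for $\Tupinfty$ and $\TBB$ earlier in the paper.
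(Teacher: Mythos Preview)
Your proposal is correct and matches the paper's own proof essentially verbatim: the paper simply notes that $\TBBtwo$ is stably infinite over the empty signature $\Sigmatwo$ and invokes \Cref{SI empty theories are convex}. Your additional remark checking that stable infiniteness holds with respect to the full sort set is a reasonable safeguard, but otherwise there is nothing to add.
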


\begin{proof}
    Since $\TBBtwo$ is stably infinite and $\Sigmatwo$ is empty, \Cref{SI empty theories are convex} guarantees it is convex.
\end{proof}

\subsection{$\TBBn$}\label{TBBn}

\begin{lemma}
    The $\Sigmatwo$-theory $\TBBn$ is not stably infinite, and thus not smooth, with respect to $\{\s, \s_{2}\}$ for any $n\geq 1$.
\end{lemma}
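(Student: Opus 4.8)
The plan is to read the claim directly off the shape of the axiomatization of $\TBBn$, exactly as was done for $\Tmninfty$ just above. The crucial point is that $\psi^{\s}_{=n}$ is an axiom of $\TBBn$, so every $\TBBn$-interpretation $\A$ has $|\s^{\A}| = n$, a fixed finite number; in particular $\TBBn$ has no interpretation with an infinite domain of sort $\s$.

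First I would check that $\TBBn$ is not the inconsistent theory, by exhibiting a model: take any $\Sigmatwo$-structure $\A$ with $|\s^{\A}| = n$ and $|\s_{2}^{\A}| = \bb(2) = 4$. It satisfies $\psi^{\s}_{=n}$, and for every $k \in \mathbb{N}$ it satisfies the disjunct $\psi^{\s_{2}}_{=\bb(2)}$ (the $i=2$ term) of the corresponding instance of the schema, hence that whole disjunction; so $\A$ is a $\TBBn$-interpretation. Next, I would take a trivially $\TBBn$-satisfiable quantifier-free formula, e.g. $x = x$ with $x$ a variable of sort $\s$, and observe that no $\TBBn$-interpretation satisfying it can have $|\s^{\A}|$ infinite. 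By the definition of stable infiniteness with respect to $\{\s, \s_{2}\}$, this already shows that $\TBBn$ is not stably infinite w.r.t. $\{\s, \s_{2}\}$, for every $n \geq 1$.

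For the ``and thus not smooth'' part, I would use the standard implication that smoothness with respect to a set of sorts $S$ forces stable infiniteness with respect to $S$ (given a model, send each sort in $S$ to an infinite cardinal bounding its domain and apply smoothness); contraposing it with the previous paragraph gives that $\TBBn$ is not smooth w.r.t. $\{\s, \s_{2}\}$.

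I do not anticipate any real obstacle here. The only thing to be slightly careful about is verifying that the claimed witness structure with $|\s_{2}^{\A}| = 4$ satisfies every instance of the axiom schema indexed by $k$, which it does precisely because $\bb(2) = 4$ occurs as the disjunct $\psi^{\s_{2}}_{=\bb(i)}$ for $i = 2$ in each instance. In particular, the argument uses only that $\bb(2)$ is a concrete finite value, not the rapid growth of $\bb$.
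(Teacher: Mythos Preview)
Your proposal is correct and follows essentially the same approach as the paper: the paper's proof is a one-liner observing that $\TBBn$ is not contradictory but has no interpretations with both domains infinite (because $|\s^{\A}|=n$ is forced by the axiom $\psi^{\s}_{=n}$), which is exactly what you spell out, just with more care about exhibiting an explicit model and naming a witness formula.
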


\begin{proof}
    Obvious, since $\TBBn$ is not contradictory, but there are no $\TBBn$-interpretations $\A$ such that $|\s^{\A}|$ and $|\s_{2}^{\A}|$ are both infinite.
\end{proof}

\begin{lemma}
    The $\Sigmatwo$-theory $\TBBn$ is stably finite, and thus has the finite model property, with respect to $\{\s, \s_{2}\}$ for any $n\geq 1$.
\end{lemma}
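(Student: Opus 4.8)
The plan is to prove stable finiteness directly by a model-shrinking argument; the finite model property then follows immediately from \Cref{SF=>FMP}. Fix $n\geq 1$, a quantifier-free $\Sigmatwo$-formula $\phi$, and a $\TBBn$-interpretation $\A$ satisfying $\phi$. Since every $\TBBn$-interpretation validates $\psi^{\s}_{=n}$, we have $|\s^{\A}|=n$, so the sort $\s$ requires no attention: any replacement interpretation $\B$ with $\s^{\B}=\s^{\A}$ automatically satisfies $|\s^{\B}|=n=|\s^{\A}|$. The only real work concerns $\s_{2}$, whose cardinality in $\A$ is either infinite or equal to $\bb(l)$ for some $l\geq 2$ (this follows from the axioms together with the unboundedness of $\bb$). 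If $|\s_{2}^{\A}|$ is already finite, then $\A$ itself witnesses stable finiteness and we are done.

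First I would handle the case $|\s_{2}^{\A}|\geq\omega$. Let $m=|\vars_{\s_{2}}(\phi)^{\A}|$, which is finite since $\phi$ has only finitely many variables. By \Cref{propertiesofbb}, $\bb$ is increasing and hence unbounded, so there is some $k\geq 2$ with $\bb(k)\geq m$. I would then build $\B$ by setting $\s^{\B}=\s^{\A}$ with $x^{\B}=x^{\A}$ for every variable $x$ of sort $\s$; choosing a set $D$ with $|D|=\bb(k)$ as $\s_{2}^{\B}$; fixing an injection $f\colon\vars_{\s_{2}}(\phi)^{\A}\to D$ (possible since $\bb(k)\geq m$); and putting $u^{\B}=f(u^{\A})$ for $u\in\vars_{\s_{2}}(\phi)$, arbitrarily elsewhere. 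Because the signature is empty, the truth value of $\phi$ in an interpretation is determined by which atomic subformulas $x=y$ it satisfies, and by construction $\B$ satisfies $x=y$ iff $\A$ does --- injectivity of $f$ handling the $\s_{2}$-variables and equality of evaluations handling the $\s$-variables --- so $\B\vDash\phi$.

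It then remains to check that $\B$ is a $\TBBn$-interpretation and that the cardinality bounds hold. The bounds are immediate: $|\s^{\B}|=n=|\s^{\A}|$ and $|\s_{2}^{\B}|=\bb(k)<\omega\leq|\s_{2}^{\A}|$. For the axioms, $\psi^{\s}_{=n}$ holds by construction; and for each $j\in\mathbb{N}$, the disjunction $\psi^{\s_{2}}_{\geq\bb(j+2)}\vee\bigvee_{i=2}^{j+2}\psi^{\s_{2}}_{=\bb(i)}$ is satisfied because, if $k\leq j+2$ then $\psi^{\s_{2}}_{=\bb(k)}$ is one of the disjuncts, while if $k>j+2$ then $\bb(k)\geq\bb(j+2)$ by monotonicity of $\bb$ (\Cref{propertiesofbb}), so $\psi^{\s_{2}}_{\geq\bb(j+2)}$ holds. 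This finishes the proof of stable finiteness, and \Cref{SF=>FMP} yields the finite model property.

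The argument is essentially routine; the only even mildly delicate point is this last verification that the shrunken model still satisfies every axiom of $\TBBn$, which is exactly where monotonicity of the Busy Beaver function is needed. Everything else is the standard observation that, over an empty signature, a quantifier-free formula only constrains how many distinct values its variables take, so one may freely pad or trim the domains as long as they remain large enough (here, a value of $\bb$) and still form a model of the theory.
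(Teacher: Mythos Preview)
Your proof is correct and follows essentially the same approach as the paper: keep $\s^{\B}=\s^{\A}$, shrink $\s_{2}$ to a set of size $\bb(k)$ large enough to accommodate the distinct $\s_{2}$-values of the variables of $\phi$, and transfer those values via an injection. The only cosmetic difference is that the paper picks the \emph{minimum} such $\bb(k)$ and thereby handles the finite and infinite cases for $|\s_{2}^{\A}|$ uniformly, whereas you split them and dispose of the finite case trivially with $\B=\A$; your explicit verification that $\B$ satisfies each axiom of $\TBBn$ (using monotonicity of $\bb$) is actually more careful than the paper's terse assertion.
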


\begin{proof}
    Let $\phi$ be a quantifier-free $\Sigmatwo$-formula, and $\A$ a $\TBBn$-interpretation that satisfies $\phi$. Take then $\bb(m)$ as the minimum value of $\bb(k)$ such that
    \[\bb(k)\geq\min\{2,\quad |\vars_{\s_{2}}(\phi)^{\A}|\}.\]
    Take then the $\TBBn$-interpretation $\B$ with $\s^{\B}=\s^{\A}$, $|\s_{2}^{\B}|=\bb(m)$ and, for an injective function  $f_{2}:\vars_{\s_{2}}(\phi)^{\A}\rightarrow\B$, $x^{\B}=x^{\A}$ and $u^{\B}=f_{2}(u)$ for $x\in\vars_{\s}(\phi)$ and $u\in\vars_{\s_{2}}(\phi)$ (and arbitrary for other variables). This way, $\B$ satisfies $\phi$, and of course, $|\s^{\B}|=|\s^{\A}|$ and $|\s_{2}^{\B}|\leq|\s_{2}^{\A}|$ are both finite, finishing the proof.
\end{proof}

\begin{lemma}
    The $\Sigmatwo$-theory $\TBBn$ is not finitely witnessable, and thus not strongly finitely witnessable, with respect to $\{\s, \s_{2}\}$ for any $n\geq 1$.
\end{lemma}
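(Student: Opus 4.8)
The plan is to adapt, almost verbatim, the arguments of \Cref{TBB is not FW} and \Cref{not FW of TBBtwo}, with the sort $\s_{2}$ playing the role that $\s$ plays there. The point is that in every $\TBBn$-interpretation $\A$ the sort $\s$ is pinned at $|\s^{\A}|=n$, so it is $\s_{2}$ whose finite cardinalities are forced to be Busy Beaver values $\bb(j)$ with $j\geq 2$, and it is through $\s_{2}$ that non-computability will leak. Concretely, I would assume for contradiction that $\TBBn$ has a witness $\wit$ (hence computable), and define $f\colon\mathbb{N}\to\mathbb{N}$ by $f(0)=0$, $f(1)=1$, and $f(m+1)=|\vars_{\s_{2}}(\wit(\psi^{\s_{2}}_{\geq f(m)+1}))|$, where, exactly as in \Cref{TBB is not FW}, $\psi^{\s_{2}}_{\geq f(m)+1}$ denotes its quantifier-free matrix. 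Since $\wit$ is computable and these cardinality formulas are effectively constructible, $f$ is computable.

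The heart of the proof is the claim that $f(m)\geq\bb(m)$ for all $m$, by induction. The base cases use $\bb(0)=0$ and $\bb(1)=1$. For the step, assume $f(m)\geq\bb(m)$ with $m\geq 1$. The formula $\psi^{\s_{2}}_{\geq f(m)+1}$ is $\TBBn$-satisfiable, because $\TBBn$ has a model with $\s_{2}$ infinite; hence $\wit(\psi^{\s_{2}}_{\geq f(m)+1})$ is $\TBBn$-satisfiable, and clause $(ii)$ of the witness definition yields a $\TBBn$-interpretation $\A$ satisfying it with $\s_{2}^{\A}=\vars_{\s_{2}}(\wit(\psi^{\s_{2}}_{\geq f(m)+1}))^{\A}$, so $f(m+1)\geq|\s_{2}^{\A}|$. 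Since $\A$ also satisfies $\Exists{\overarrow{x}}\wit(\psi^{\s_{2}}_{\geq f(m)+1})$ and thus $\psi^{\s_{2}}_{\geq f(m)+1}$, we get $|\s_{2}^{\A}|\geq f(m)+1$. Finally $\s_{2}^{\A}$ is finite (it is bounded by the number of $\s_{2}$-variables of $\wit(\psi^{\s_{2}}_{\geq f(m)+1})$), so the axioms of $\TBBn$ force $|\s_{2}^{\A}|=\bb(j)$ for some $j\geq 2$; from $\bb(j)=|\s_{2}^{\A}|>f(m)\geq\bb(m)$ and the monotonicity of $\bb$ (\Cref{propertiesofbb}) we conclude $j>m$, i.e.\ $j\geq m+1$, whence $f(m+1)\geq|\s_{2}^{\A}|=\bb(j)\geq\bb(m+1)$. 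This closes the induction, and then $f$ is a computable function dominating $\bb$ everywhere, contradicting the fact (\Cref{propertiesofbb}) that $\bb$ eventually dominates every computable function. Therefore $\TBBn$ is not finitely witnessable, and, since strong finite witnessability implies finite witnessability (take the empty arrangement in clause $(ii)'$), it is not strongly finitely witnessable either.

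I do not expect a real obstacle here; the only points requiring care are bookkeeping. First, one must check the base of the induction against the fact that the smallest finite $\s_{2}$-domain of a $\TBBn$-model has size $\bb(2)=4$ — which is precisely why the recursion is seeded with $f(1)=1$, so that at the first interesting step $f(2)\geq|\s_{2}^{\A}|\geq\bb(2)=4$. Second, one must keep the inequality chain $\bb(m)\leq f(m)<|\s_{2}^{\A}|=\bb(j)$ aligned with the strict monotonicity of $\bb$ in order to upgrade $j>m$ to $j\geq m+1$. Beyond these, the proof is a routine transcription of the two earlier Busy-Beaver non-witnessability arguments, and I would in fact phrase it as "we proceed as in the proof of \Cref{not FW of TBBtwo}, using $\s_{2}$ in place of $\s$."
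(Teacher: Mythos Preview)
Your proposal is correct and follows essentially the same approach as the paper: the paper's proof is a one-line reference to \Cref{TBB is not FW} together with exactly your recursion $f(0)=0$, $f(1)=1$, $f(m+1)=|\vars_{\s_{2}}(\wit(\psi^{\s_{2}}_{\geq f(m)+1}))|$, and you have simply spelled out the induction in full.
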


\begin{proof}
    We proceed as in the proof of \Cref{TBB is not FW}, supposing, by contradiction, that $\TBBn$ has a witness $\wit$: we then define a function $f:\mathbb{N}\rightarrow{N}$ by making $f(0)=0$, $f(1)=1$ and
    $f(n+1)=|\vars_{\s_{2}}(\wit(\psi^{\s_{2}}_{\geq f(n)+1}))|$.
\end{proof}

\begin{lemma}
    The $\Sigmatwo$-theory $\TBBn$ is not convex with respect to $\{\s, \s_{2}\}$ for $n>1$.
\end{lemma}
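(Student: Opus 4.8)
The plan is to exploit the fact that every $\TBBn$-interpretation has exactly $n$ elements in its domain of sort $\s$, and then run the pigeonhole argument in exactly the same way as in the proof that $\Tmninfty$ is not convex. Concretely, I would take $n+1$ fresh variables $x_{1},\ldots,x_{n+1}$ of sort $\s$ and the tautological conjunction of literals $\phi$ given by $x_{1}=x_{1}$ (so that $\phi$ is $\TBBn$-valid and we are genuinely in the setting of the convexity definition). Since $|\s^{\A}|=n$ for every $\TBBn$-interpretation $\A$, no assignment can make $x_{1}^{\A},\ldots,x_{n+1}^{\A}$ pairwise distinct, so $\tdash\phi\rightarrow\bigvee_{1\leq i<j\leq n+1}x_{i}=x_{j}$; here the disjuncts $x_{i}=x_{j}$ play the role of the equalities $u_{k}=v_{k}$ in the definition of convexity.

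The second step is to rule out every single disjunct. Fix a pair $1\leq i<j\leq n+1$. Because $n>1$, there is a $\TBBn$-interpretation with $|\s^{\A}|=n\geq 2$ — for instance the one with $|\s_{2}^{\A}|=\bb(2)=4$, which satisfies the axiomatization — and in such an interpretation we may choose the evaluation of the variables so that $x_{i}^{\A}\neq x_{j}^{\A}$ while still satisfying $\phi$. Hence $\tdash\phi\rightarrow x_{i}=x_{j}$ fails for each pair, contradicting convexity with respect to $\{\s,\s_{2}\}$.

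I do not expect any real obstacle: the argument is structurally identical to the $\Tmninfty$ case already in the paper, the only point needing a one-line check being the existence of a $\TBBn$-model with at least two elements in sort $\s$, which is immediate from $n>1$ together with the observation that $\bb(i)\geq 4$ for all $i\geq 2$ so the second family of axioms is always satisfiable alongside $\psi^{\s}_{=n}$. I would therefore keep the proof to two or three sentences mirroring the $\Tmninfty$ proof.

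\begin{proof}
    Take $n+1$ variables $x_{1}$ through $x_{n+1}$ of sort $\s$: it is true that $\tdash\bigvee_{1\leq i<j\leq n+1}x_{i}=x_{j}$, by the pigeonhole principle, since every $\TBBn$-interpretation $\A$ has $|\s^{\A}|=n$; but we cannot get $\tdash x_{i}=x_{j}$ for any $1\leq i<j\leq n+1$, since there is a $\TBBn$-interpretation $\A$ with $|\s^{\A}|=n>1$ (for example one with $|\s_{2}^{\A}|=\bb(2)=4$), in which the variables can be evaluated so that $x_{i}^{\A}\neq x_{j}^{\A}$.
\end{proof}
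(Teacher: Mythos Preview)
Your proof is correct and essentially identical to the paper's own argument: both take $n+1$ variables of sort $\s$, invoke the pigeonhole principle to get $\tdash\bigvee_{1\leq i<j\leq n+1}x_{i}=x_{j}$ from $|\s^{\A}|=n$, and then use $n>1$ to exhibit a model refuting each individual disjunct. Your version is slightly more explicit (you name a concrete witness with $|\s_{2}^{\A}|=\bb(2)=4$), but the structure is the same.
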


\begin{proof}
    Take $n+1$ variables $x_{1}$ trough $x_{n+1}$ of sort $\s$, and it is clear that $\dash_{\TBBn}\bigvee_{i=1}^{n}\bigvee_{j=i+1}^{n+1}x_{i}=x_{j}$; we cannot, however, derive $\dash_{\TBBn}x_{i}=x_{j}$ for any $1\leq i<j\leq n+1$ since there is a $\T$-interpretation $\A$ with $|\s^{\A}|=n\geq 2$, implying $\TBBn$ is not convex.
\end{proof}

\subsubsection{$\TBBone$}

As proved in \Cref{TBBn}, $\TBBone$ is neither stably infinite, smooth, finitely witnessable nor strongly finitely witnessable, but it is stably finite and has the finite model property with respect to $\{\s, \s_{2}\}$.

\begin{lemma}
    The $\Sigmatwo$-theory $\TBBone$ is convex with respect to $\{\s, \s_{2}\}$.
\end{lemma}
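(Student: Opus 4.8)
The plan is to reduce convexity of $\TBBone$ to convexity of the $\Sigmaone$-theory $\TBB$, which has already been established (it is stably infinite over an empty signature, hence convex by \Cref{SI empty theories are convex}). The guiding observation is that in every $\TBBone$-interpretation $\A$ the domain $\s^{\A}$ is a singleton, so the sort $\s$ carries no information: an equality between variables of sort $\s$ is $\TBBone$-valid, a disequality between variables of sort $\s$ is $\TBBone$-unsatisfiable, and the admissible cardinalities of $\s_2^{\A}$ are exactly the admissible cardinalities of the unique sort of a $\TBB$-interpretation, namely $\{\bb(k) : k\geq 2\}\cup\{\text{infinite cardinals}\}$.

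First I would fix a conjunction of $\Sigmatwo$-literals $\phi$ together with variables $u_1, v_1, \ldots, u_n, v_n$ of sorts in $\{\s,\s_2\}$ such that $\dash_{\TBBone}\phi\rightarrow\bigvee_{i=1}^{n}u_i=v_i$, and argue by cases. If some pair $(u_i,v_i)$ has sort $\s$, then $\dash_{\TBBone}u_i=v_i$ outright, because $|\s^{\A}|=1$ in every $\TBBone$-interpretation, and we are done. Otherwise all $u_i,v_i$ have sort $\s_2$. If $\phi$ contains a disequality between two variables of sort $\s$, then $\phi$ is $\TBBone$-unsatisfiable and $\dash_{\TBBone}\phi\rightarrow u_1=v_1$ holds vacuously. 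In the remaining case every $\s$-literal of $\phi$ is a positive equality, hence $\TBBone$-valid, so $\phi$ is $\TBBone$-equivalent to the conjunction $\phi'$ of its literals built from variables of sort $\s_2$ only (an empty conjunction, i.e.\ $\top$, if there are none).

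Next I would transfer the problem to $\TBB$ by a renaming. Given a $\TBBone$-interpretation $\A$, keeping $\s_2^{\A}$ (now treated as the single sort of $\TBB$) together with the assignments of the $\s_2$-variables yields a $\Sigmaone$-interpretation that is a $\TBB$-interpretation, since $|\s_2^{\A}|$ ranges over exactly the admissible $\TBB$-cardinalities; conversely any $\TBB$-interpretation extends to a $\TBBone$-interpretation by attaching a one-element domain for $\s$. Because the signature is empty, the truth value of $\phi'$ and of each $u_i=v_i$ depends only on the equivalence pattern induced on the $\s_2$-variables, so this correspondence shows that $\dash_{\TBBone}\phi'\rightarrow\bigvee_{i=1}^{n}u_i=v_i$ holds iff the renamed implication is $\TBB$-valid. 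Convexity of $\TBB$ then yields validity of the renamed $\phi'\rightarrow u_i=v_i$ for some $i$, and translating back gives $\dash_{\TBBone}\phi'\rightarrow u_i=v_i$, hence $\dash_{\TBBone}\phi\rightarrow u_i=v_i$.

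I do not anticipate a genuine obstacle; the one point that demands care is making the $\TBBone$/$\TBB$ correspondence precise enough to conclude the equivalence of the two entailments — namely checking that, because the signature is empty, an interpretation is determined for these purposes by the cardinalities of its sorts and the equivalence classes induced on the finitely many named variables, so that the correspondence is onto both classes of interpretations and transports satisfaction of $\phi'$ faithfully in both directions.
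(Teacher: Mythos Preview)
Your proposal is correct and follows essentially the same approach as the paper: reduce to $\TBB$ by exploiting that the sort $\s$ is trivial (always a singleton) and that the sort $\s_2$ in $\TBBone$ has exactly the same spectrum as the single sort of $\TBB$, then invoke convexity of $\TBB$ via \Cref{SI empty theories are convex}. The paper's own proof is a one-line sketch of precisely this idea, whereas you spell out the case analysis and the bijective correspondence between interpretations explicitly; there is no substantive difference.
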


\begin{proof}
    If we disregard the second sort of $\TBBone$, it becomes simply $\TBB$: given this theory is stably-infinite in an empty signature, from \Cref{SI empty theories are convex} we obtain it is convex. 
\end{proof}

\subsection{$\TmnBB$}

\begin{lemma}
    The $\Sigmatwo$-theory $\TmnBB$ is not stably infinite, and thus not smooth, with respect to $\{\s,\s_{2}\}$, for all $m,n\in\mathbb{N}$.
\end{lemma}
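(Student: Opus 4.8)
The plan is to imitate the short arguments given just above for $\Tmninfty$ and $\TBBn$: exhibit a $\TmnBB$-satisfiable quantifier-free formula all of whose $\TmnBB$-models have a finite domain of sort $\s$, and then read off non-smoothness from the failure of stable infiniteness.

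First I would note that each axiom of $\TmnBB$ — namely, for a fixed $k\in\mathbb{N}$, the disjunction $(\psi^{\s}_{=n}\wedge\psi^{\s_{2}}_{\geq k})\vee(\psi^{\s}_{=m}\wedge\psi^{\s_{2}}_{\geq\bb(k+1)})\vee\bigvee_{i=1}^{k+1}(\psi^{\s}_{=m}\wedge\psi^{\s_{2}}_{=\bb(i)})$ — has the property that every one of its disjuncts entails either $\psi^{\s}_{=n}$ or $\psi^{\s}_{=m}$. Hence for any $\TmnBB$-interpretation $\A$ we must have $|\s^{\A}|\in\{m,n\}$, so $\s^{\A}$ is finite; in particular, no $\TmnBB$-interpretation has $\s^{\A}$ infinite. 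Next I would check that $\TmnBB$ is not contradictory (for the values of $m,n$ for which it is defined, in particular $m\geq n\geq 1$): the $\Sigmatwo$-interpretation $\A$ with $|\s^{\A}|=n$ and $\s_{2}^{\A}$ countably infinite satisfies every axiom, since then its first disjunct $\psi^{\s}_{=n}\wedge\psi^{\s_{2}}_{\geq k}$ holds for all $k$. Consequently the quantifier-free formula $x=x$, with $x$ of sort $\s$, is $\TmnBB$-satisfiable, yet no $\TmnBB$-interpretation $\B$ satisfying it has $|\s^{\B}|$ infinite; therefore $\TmnBB$ is not stably infinite with respect to $\{\s,\s_{2}\}$.

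Finally, non-smoothness is immediate from the definition of smoothness: if $\TmnBB$ were smooth with respect to $\{\s,\s_{2}\}$, then, starting from the interpretation above that satisfies $x=x$ and applying smoothness with the cardinality function sending every sort to $\omega$, we would obtain a $\TmnBB$-interpretation of $x=x$ whose $\s$-domain has size $\omega$, contradicting the previous step. I expect no genuine obstacle here; the only points that deserve a line of care are verifying that the axioms really do pin $|\s^{\A}|$ to $\{m,n\}$ and that the theory has at least one model, so that the quantifier-free witness $x=x$ is indeed $\TmnBB$-satisfiable.
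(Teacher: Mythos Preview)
Your proposal is correct and follows essentially the same line as the paper's own proof, which simply notes that $\TmnBB$ is not contradictory yet has no interpretations with infinite $\s$-domain (indeed no interpretations with both domains infinite). You merely spell out the details—that every disjunct of every axiom pins $|\s^{\A}|$ to $\{m,n\}$ and that a model with $|\s^{\A}|=n$ and $\s_{2}^{\A}$ infinite exists—which the paper leaves implicit under the word ``obvious.''
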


\begin{proof}
    Obvious, given $\TmnBB$ is not contradictory, but has no interpretations $\A$ where both $|\s^{\A}|$ and $|\s_{2}^{\A}|$ are infinite.
\end{proof}

\begin{lemma}
    The $\Sigmatwo$-theory $\TmnBB$ has the finite model property with respect to $\{\s,\s_{2}\}$ for all $m,n\in\mathbb{N}$.
\end{lemma}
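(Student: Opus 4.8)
The plan is to follow the template of \Cref{FMP of TBBtwo}, using the two facts from \Cref{propertiesofbb} that $\bb$ is increasing and unbounded, together with the standing assumption $m\geq n$. So let $\phi$ be a quantifier-free $\Sigmatwo$-formula and $\A$ a $\TmnBB$-interpretation satisfying it. The first observation I would make is that every $\TmnBB$-interpretation already has $\s^{\A}$ finite — indeed the $k=0$ instance of the axiom schema forces $|\s^{\A}|\in\{n,m\}$ — so the only obstruction to finiteness is that $\s_{2}^{\A}$ may be infinite. If $\s_{2}^{\A}$ is finite we are done with $\B=\A$, so assume $\s_{2}^{\A}$ is infinite.

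Next I would replace $\A$ by a model living on the ``$\bb$-branch'' with $|\s|=m$. Using that $\bb$ is unbounded and $\vars_{\s_{2}}(\phi)$ is finite, pick $j\geq 1$ with $\bb(j)\geq|\vars_{\s_{2}}(\phi)^{\A}|$, and let $\B$ be the interpretation with $|\s^{\B}|=m$ and $|\s_{2}^{\B}|=\bb(j)$. A small check is needed here that $\B$ really is a $\TmnBB$-interpretation: for each $k$, if $j\geq k+1$ then $\bb(j)\geq\bb(k+1)$ by monotonicity, so the second disjunct of the $k$-th axiom holds, while if $j\leq k$ then $\bb(j)=\bb(i)$ with $i=j\leq k+1$, so the third disjunct holds. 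Then I would choose an injection $f\colon\vars_{\s}(\phi)^{\A}\hookrightarrow\s^{\B}$, which exists precisely because $|\vars_{\s}(\phi)^{\A}|\leq|\s^{\A}|\leq m=|\s^{\B}|$ — this is the one place the hypothesis $m\geq n$ is used, covering the case where $\A$ came from the branch with $|\s^{\A}|=n$ — and an injection $g\colon\vars_{\s_{2}}(\phi)^{\A}\hookrightarrow\s_{2}^{\B}$, which exists by the choice of $j$. I would then interpret the variables of $\phi$ by $x^{\B}=f(x^{\A})$ for $x$ of sort $\s$ and $u^{\B}=g(u^{\A})$ for $u$ of sort $\s_{2}$, the remaining variables arbitrarily.

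Finally I would verify $\B\vDash\phi$: since $\Sigmatwo$ is empty, the truth value of the quantifier-free $\phi$ depends only on the truth values of its atomic subformulas $x=y$, and injectivity of $f$ and $g$ yields $x^{\B}=y^{\B}$ iff $x^{\A}=y^{\A}$ for variables of either sort; hence $\B$ agrees with $\A$ on $\phi$. As both $|\s^{\B}|=m$ and $|\s_{2}^{\B}|=\bb(j)$ are finite, this establishes the finite model property with respect to $\{\s,\s_{2}\}$. I do not expect a genuine obstacle; the only delicate point is the bookkeeping around the disjunctive axiom schema — confirming that the chosen cardinality pair $(m,\bb(j))$ satisfies every axiom, and noting that $m\geq n$ is exactly what lets the $\s$-variables of an arbitrary $\phi$ be embedded into the size-$m$ domain.
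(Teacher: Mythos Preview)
Your proof is correct and follows essentially the same approach as the paper's: both construct a finite $\TmnBB$-interpretation $\B$ with $|\s^{\B}|=\max\{m,n\}$ and $|\s_{2}^{\B}|=\bb(j)$ for some $j$ large enough that $\bb(j)\geq|\vars_{\s_{2}}(\phi)^{\A}|$, then use injections of the variable values into the new domains (the paper phrases this as extending $\vars_{\s}(\phi)^{\A}$ and $\vars_{\s_{2}}(\phi)^{\A}$ by disjoint sets of the right size, which is the same thing). Your write-up is actually a bit more thorough than the paper's, since you explicitly verify that $(m,\bb(j))$ satisfies every instance of the axiom schema, whereas the paper simply asserts that $\B$ is a $\TmnBB$-interpretation.
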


\begin{proof}
    Let $\phi$ be a quantifier-free $\Sigmatwo$-formula, and $\A$ a $\TmnBB$-interpretation that satisfies $\phi$. Take sets $A$ and $B$, disjoint from each other and $\s^{\A}$ and $\s_{2}^{\A}$, with $\max\{m,n\}-|\vars_{\s}(\phi)^{\A}|$ and $\bb(k)-|\vars_{\s_{2}}(\phi)^{\A}|$ elements, respectively, where $k$ is the minimum natural number (greater than $1$) such that $\bb(k)\geq |\vars_{\s_{2}}(\phi)^{\A}|$. We then define a $\TmnBB$-interpretation $\B$ by making: $\s^{\B}=\vars_{\s}(\phi)^{\A}\cup A$ (which has $\max\{m,n\}$ elements); $\s_{2}^{\B}=\vars_{\s_{2}}(\phi)^{\A}\cup B$ (that has, then, $\bb(k)$ elements); $x^{\B}=x^{\A}$ and $u^{\B}=u^{\A}$ for all variables $x$ and $u$ in $\phi$ of sorts, respectively, $\s$ and $\s_{2}$; and arbitrarily for other variables. It is then clear that $\B$ is a $\TmnBB$-interpretation, with both $|\s^{\B}|$ and $|\s_{2}^{\B}|$ finite, that satisfies $\phi$.
\end{proof}

\begin{lemma}
    The $\Sigmatwo$-theory $\TmnBB$ is not stably finite with respect to $\{\s,\s_{2}\}$ for all $m,n\in\mathbb{N}$ such that $m\neq n$.
\end{lemma}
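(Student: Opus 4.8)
The plan is to mimic the argument used for $\Tmninfty$ (the lemma labelled ``$\Tmninfty$ is not SF''), exploiting that, as for that theory, the axiomatization of $\TmnBB$ forces any interpretation whose first sort has the \emph{smaller} admissible cardinality to have an infinite second sort. Recall from the caption of \Cref{tab-theories-sigma-s} that we assume without loss of generality $m \geq n$, so $m \neq n$ gives $m > n$.

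First I would isolate two structural facts that are immediate from the axiomatization of $\TmnBB$ (as described in \Cref{sec:bbtheories-sig2}): $(a)$~every $\TmnBB$-interpretation $\A$ satisfies $|\s^{\A}| \in \{n, m\}$, since each axiom is a disjunction all of whose disjuncts assert $|\s^{\A}| = n$ or $|\s^{\A}| = m$ (in the degenerate axiom for $k = 0$, where $\bb(1) = 1$, this still reduces to $|\s^{\A}| = n$ or $|\s^{\A}| = m$); and $(b)$~if $|\s^{\A}| = n$, then since $m \neq n$ the disjuncts requiring $|\s^{\A}| = m$ cannot hold, so in every axiom the disjunct $\psi^{\s}_{=n} \wedge \psi^{\s_{2}}_{\geq k}$ must hold, whence $|\s_{2}^{\A}| \geq k$ for all $k$, i.e. $\s_{2}^{\A}$ is infinite. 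I would also note that a $\TmnBB$-interpretation with $|\s^{\A}| = n$ and $|\s_{2}^{\A}| = \omega$ does exist, as it satisfies the first disjunct of every axiom.

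Then the argument runs as follows. Let $x$ be a variable of sort $\s$, let $\phi$ be the formula $x = x$, and let $\A$ be a $\TmnBB$-interpretation with $|\s^{\A}| = n$ and $|\s_{2}^{\A}| = \omega$; clearly $\A \vDash \phi$. Suppose, for contradiction, that some $\TmnBB$-interpretation $\B$ satisfies $\phi$ with $|\s^{\B}|$ and $|\s_{2}^{\B}|$ finite, $|\s^{\B}| \leq |\s^{\A}| = n$, and $|\s_{2}^{\B}| \leq |\s_{2}^{\A}|$. By $(a)$ together with $m > n$, the bound $|\s^{\B}| \leq n$ forces $|\s^{\B}| = n$; but then $(b)$ yields that $\s_{2}^{\B}$ is infinite, contradicting the finiteness of $|\s_{2}^{\B}|$. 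Hence no such $\B$ exists, so $\TmnBB$ is not stably finite with respect to $\{\s, \s_{2}\}$.

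I do not expect any genuine obstacle here: the whole proof is elementary. The only points needing slight care are making the structural facts $(a)$ and $(b)$ precise directly from the axiomatization (in particular handling the $k = 0$ axiom), and keeping the convention $m \geq n$ visible throughout, so that ``smaller first sort'' unambiguously means $|\s^{\A}| = n$.
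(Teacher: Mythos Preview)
Your proof is correct and follows essentially the same approach as the paper's: both use the formula $x=x$, the $\TmnBB$-interpretation $\A$ with $|\s^{\A}|=\min\{m,n\}$ and $|\s_{2}^{\A}|=\omega$, and the observation that no $\TmnBB$-interpretation $\B$ with $|\s^{\B}|\leq\min\{m,n\}$ can have $\s_{2}^{\B}$ finite. Your version is simply more explicit, spelling out the structural facts $(a)$ and $(b)$ that the paper leaves implicit under the word ``obvious.''
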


\begin{proof}
    Obvious, since $x=x$, for $x$ a variable of sort $\s$, is satisfied by the $\TmnBB$-interpretation $\A$ with $|\s^{\A}|=\min\{m,n\}$ and $|\s_{2}^{\A}|=\omega$, but there are no $\TmnBB$-interpretations with $|\s^{\B}|$ and $|\s_{2}^{\B}|$ finite, and $|\s^{\B}|\leq|\s^{\A}|$ and $|\s_{2}^{\B}|\leq|\s_{2}^{\B}|$. 
\end{proof}

\begin{lemma}
    The $\Sigmatwo$-theory $\TmnBB$ is not finitely witnessable, and thus not strongly finitely witnessable, with respect to $\{\s,\s_{2}\}$ for all $m,n\in\mathbb{N}$.
\end{lemma}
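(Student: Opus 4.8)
The plan is to adapt the argument used for \Cref{TBB is not FW} (and, as noted there, for $\TBBn$). The two facts about $\TmnBB$ I would exploit are: (a) it has a model in which $\s_{2}$ is infinite, so that $\psi^{\s_{2}}_{\geq N}$ is $\TmnBB$-satisfiable for every $N$; and (b) from its axiomatization, in any $\TmnBB$-interpretation $\A$ in which $\s_{2}^{\A}$ is finite one has $|\s^{\A}|=m$ and $|\s_{2}^{\A}|=\bb(j)$ for some $j$. I would first dispose of the degenerate case $m=n$: then $\TmnBB$ has models but none with a finite domain, so it fails the finite model property and hence, by \Cref{FW=>FMP}, is not finitely witnessable. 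So assume $m\neq n$, and without loss of generality $m>n$.

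Next I would assume for contradiction that $\TmnBB$ has a witness $\wit$ (computable by definition) and define $f\colon\mathbb{N}\to\mathbb{N}$ by $f(0)=0$ and $f(k+1)=|\vars_{\s_{2}}(\wit(\psi^{\s_{2}}_{\geq f(k)+1}))|$. Since $\wit$ is computable, the map $N\mapsto\psi^{\s_{2}}_{\geq N}$ is computable, and counting the free variables of a fixed sort in a formula is computable, $f$ is computable.

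The core of the proof is the claim that $f(k)\geq\bb(k)$ for every $k$, proved by induction; the base case follows from $\bb(0)=0$. For the step, assume $f(k)\geq\bb(k)$. By (a), $\psi^{\s_{2}}_{\geq f(k)+1}$ is $\TmnBB$-satisfiable, and since it is $\TmnBB$-equivalent to $\Exists{\overarrow{w}}\wit(\psi^{\s_{2}}_{\geq f(k)+1})$, so is $\wit(\psi^{\s_{2}}_{\geq f(k)+1})$; the witness property then yields a $\TmnBB$-interpretation $\A$ satisfying it with $\s_{2}^{\A}=\vars_{\s_{2}}(\wit(\psi^{\s_{2}}_{\geq f(k)+1}))^{\A}$, so $f(k+1)\geq|\s_{2}^{\A}|$. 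Also $\A$ satisfies $\Exists{\overarrow{w}}\wit(\psi^{\s_{2}}_{\geq f(k)+1})$, hence $\psi^{\s_{2}}_{\geq f(k)+1}$, so $|\s_{2}^{\A}|\geq f(k)+1>\bb(k)$; being the image of finitely many variables, $\s_{2}^{\A}$ is finite, so by (b) $|\s_{2}^{\A}|=\bb(j)$ for some $j$. Then $\bb(j)=|\s_{2}^{\A}|>\bb(k)$, and since $\bb$ is increasing (\Cref{propertiesofbb}), $j\geq k+1$, whence $f(k+1)\geq|\s_{2}^{\A}|=\bb(j)\geq\bb(k+1)$. This produces a computable $f$ with $f(k)\geq\bb(k)$ for all $k$, contradicting the fact (\Cref{propertiesofbb}) that $\bb$ eventually dominates every computable function; hence $\TmnBB$ is not finitely witnessable, and a fortiori not strongly finitely witnessable.

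I expect the main obstacle to be step (b): carefully reading off from the somewhat intricate axiomatization of $\TmnBB$ that every interpretation with $\s_{2}$ finite indeed has $|\s_{2}|$ equal to a value of $\bb$, with an indexing that makes the implication $\bb(j)>\bb(k)\Rightarrow j>k$ and the induction close correctly; the remainder is a routine transcription of the $\TBB$ argument into the two-sorted setting.
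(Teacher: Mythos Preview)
Your core inductive argument is correct and follows essentially the same route as the paper: define a computable $f$ recursively from $\wit$, show $f(k)\geq\bb(k)$ using that any $\TmnBB$-model with $\s_{2}$ finite has $|\s_{2}|$ a value of $\bb$, and derive a contradiction with \Cref{propertiesofbb}. Your choice of test formula is in fact slightly simpler than the paper's: the paper also conjoins a clause forcing $|\s|\geq\max\{m,n\}$, but as you correctly observe, finiteness of $\s_{2}^{\A}$ alone already pins down the structure enough for the bound to go through.

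There is, however, an error in your handling of the degenerate case. When $m=n$, the axiomatization of $\TmnBB$ still admits finite models: for instance $|\s^{\A}|=m$ and $|\s_{2}^{\A}|=\bb(2)=4$ satisfies every axiom via the disjunct $\psi^{\s}_{m}\wedge\psi^{\s_{2}}_{=\bb(2)}$. So your claim that ``$\TmnBB$ has models but none with a finite domain'' when $m=n$ is false, and the appeal to \Cref{FW=>FMP} does not dispose of that case. Fortunately the case split is entirely unnecessary: your main argument nowhere uses $m\neq n$, since properties (a) and (b) hold for all $m,n$. Simply delete the first paragraph's case analysis and run the induction uniformly.
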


\begin{proof}
    Suppose $\TmnBB$ has a (necessarily computable) witness $\wit$, and consider the function $f:\mathbb{N}\rightarrow\mathbb{N}$ such that $f(0)=0$, $f(1)=1$ and, for $k>1$,
    \[f(k)=|\vars(\wit(\delta_{k}))|,\quad\text{where}\quad\delta_{k}=\bigwedge_{1\leq i<j\leq \max\{m,n\}}\neg(x_{i}=x_{j})\wedge\bigwedge_{1\leq i<j\leq f(k)+1}\neg(u_{i}=u_{j}),\]
    where $x_{1}$ through $x_{\max\{m,n\}}$ are variables of sort $\s$, and the $u_{i}$ are variables of sort $\s_{2}$: we prove by induction that $f(k)\geq\bb(k)$, and thus $f$ is not computable, leading to a contradiction with its definition that uses only computable notions. This is obvious for $k=0$ and $k=1$, so assume it holds for an arbitrary $k$. Since $\delta_{k+1}$ holds (for all $k\in\mathbb{N}$) in some interpretation on the $\TsBB$-model with $\max\{m,n\}$ elements in the domain of sort $\s$ and $\omega$ elements in the domain of sort $\s_{2}$, so does $\wit(\delta_{k+1})$: hence there must exist a $\TsBB$-interpretation $\A$ that satisfies $\wit(\delta_{k+1})$ (and thus $\delta_{k+1}$) with $\s^{\A}=\vars(\wit(\delta_{k+1}))^{\A}$ and $\s_{2}^{\A}=\vars_{\s_{2}}(\wit(\delta_{k+1}))^{\A}$. So $|\s^{\A}|\geq\max\{m,n\}$ (and thus $|\s^{\A}|=\max\{m,n\}$), and $|\s_{2}^{\A}|\geq f(k)+1\geq \bb(k)+1$, forcing $f(k+1)$ to be equal to or greater than $\bb(k+1)$, as we wished to prove.
\end{proof}

\begin{lemma}
    The $\Sigmatwo$-theory $\TmnBB$ is not convex with respect to $\{\s, \s_{2}\}$ if $\max\{m,n\}>1$.
\end{lemma}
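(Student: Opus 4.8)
The plan is to reuse the pigeonhole-style argument already employed for $\Tmninfty$ and $\TBBn$: the key fact is that the $\s$-domain of every $\TmnBB$-interpretation has bounded cardinality. Write $M = \max\{m,n\}$ (recall we assume $m \geq n$, so $M = m$), and suppose $M > 1$.

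First I would read off from the axiomatization that every $\TmnBB$-interpretation $\A$ satisfies $|\s^{\A}| \in \{m, n\}$, hence $|\s^{\A}| \leq M$, uniformly over all models. Taking $M+1$ fresh variables $x_1, \dots, x_{M+1}$ of sort $\s$ (and $\phi$ any valid literal, e.g.\ $x_1 = x_1$), the pigeonhole principle then gives
\[\dash_{\TmnBB} \bigvee_{1 \leq i < j \leq M+1} x_i = x_j,\]
since in any $\TmnBB$-interpretation the $M+1$ values assigned to these variables cannot all be distinct.

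Next I would check that no individual disjunct is $\TmnBB$-valid, i.e.\ $\dash_{\TmnBB} x_i = x_j$ fails for every pair $1 \leq i < j \leq M+1$. For this it suffices to exhibit a $\TmnBB$-interpretation $\A$ with $|\s^{\A}| \geq 2$ interpreting $x_i$ and $x_j$ by distinct elements: the interpretation with $|\s^{\A}| = m = M > 1$ and $\s_2^{\A}$ infinite works, as it satisfies, for every $k \in \mathbb{N}$, the second disjunct $\psi^{\s}_{=m} \wedge \psi^{\s_2}_{\geq \bb(k+1)}$ of the $k$-th axiom. Combining the two observations, $\TmnBB$ entails the disjunction $\bigvee_{i<j} x_i = x_j$ of equalities between variables of sort $\s$ but none of the individual equalities, so it is not convex with respect to $\{\s, \s_2\}$.

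The argument is routine; the only mild point to get right is that the bound $|\s^{\A}| \leq M$ is genuinely uniform across all models of $\TmnBB$, which is immediate from the syntactic shape of the axioms, each disjunct of which pins $|\s^{\A}|$ to either $m$ or $n$. I do not anticipate any real obstacle.
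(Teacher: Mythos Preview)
Your proposal is correct and takes essentially the same approach as the paper: both use the pigeonhole argument on $M+1$ variables of sort $\s$ (with $M=\max\{m,n\}$) to obtain a $\TmnBB$-valid disjunction of equalities, and then falsify each individual equality via a model with $|\s^{\A}|=M>1$ and $|\s_{2}^{\A}|=\omega$. Your version is slightly more careful in making the trivial cube $\phi=(x_1=x_1)$ explicit, which the paper leaves implicit.
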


\begin{proof}
    Take $\max\{m,n\}+1$ variables $x_{1}$ trough $x_{\max\{m,n\}+1}$ of sort $\s$: it is then true that 
    \[\dash_{\TmnBB}\bigvee_{i=1}^{\max\{m,n\}-1}\bigvee_{j=i+1}^{\max\{m,n\}}x_{i}=x_{j},\]
    but we cannot get $\dash_{\TmnBB}x_{i}=x_{j}$ for any $1\leq i<j\leq \max\{m,n\}+1$ since there is a $\TmnBB$-interpretation $\A$ with $|\s^{\A}|=\max\{m,n\}>1$ and $|\s_{2}^{\A}|=\omega$.
\end{proof}

\subsection{$\TsBB$}\label{proof of TsBB}

\begin{lemma}\label{TsBB is SI}
    The $\Sigmas$-theory $\TsBB$ is smooth, and thus stably infinite, with respect to $\{\s\}$.
\end{lemma}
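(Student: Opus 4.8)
The plan is to read the shape of the $\TsBB$-models directly off the axiomatization. Taking $k$ sufficiently large in the axioms, one sees that a finite $\TsBB$-interpretation of cardinality $n$ is exactly a $\Sigmas$-structure in which $s$ has precisely $\qq(n)$ fixed points, and an infinite $\TsBB$-interpretation is exactly one in which $s$ has infinitely many fixed points (this last point uses only that $\qq$ is unbounded, which is clear since $\bb$ takes finite values). So, given a quantifier-free $\Sigmas$-formula $\phi$, a $\TsBB$-interpretation $\A$ satisfying $\phi$, and a cardinal $\kappa\geq|\s^{\A}|$, I would build the required $\B$ by adjoining fresh elements to $\s^{\A}$ while keeping both the interpretation of $s$ on $\s^{\A}$ and the values of the variables of $\phi$ unchanged. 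Since every $\Sigmas$-term has the form $s^{m}(x)$ and $s^{\A}$ maps $\s^{\A}$ into itself, this makes $t^{\B}=t^{\A}$ for every term $t$ occurring in $\phi$, so $\B\models\phi$ automatically; the work lies entirely in choosing the new elements so that $\B$ is again a $\TsBB$-interpretation of cardinality exactly $\kappa$.

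If $\kappa$ is infinite, I would adjoin a set of $\kappa$ fresh elements, each a fixed point of $s^{\B}$. Then $|\s^{\B}|=\kappa$ (as $|\s^{\A}|\leq\kappa$) and $s^{\B}$ has infinitely many fixed points, so $\B$ is a $\TsBB$-interpretation and this case is finished.

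If $\kappa$ is finite, write $\lambda=|\s^{\A}|$, so that $s^{\A}$ has exactly $\qq(\lambda)$ fixed points. Here I would invoke two monotonicity facts about $\qq$, both consequences of \Cref{propertiesofbb} (that $\bb$ is increasing) together with the definition of $\qq$: the map $n\mapsto\qq(n)$ is non-decreasing, and the map $n\mapsto n-\qq(n)$ is non-decreasing (because $\qq(n+1)-\qq(n)\in\{0,1\}$). From $\lambda\leq\kappa$ these give $a:=\qq(\kappa)-\qq(\lambda)\geq 0$ and $b:=(\kappa-\qq(\kappa))-(\lambda-\qq(\lambda))\geq 0$, with $a+b=\kappa-\lambda$. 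I would then adjoin $a$ fresh elements as new fixed points of $s^{\B}$ and $b$ fresh elements as new non-fixed points, sending each of the latter under $s^{\B}$ to some element of $\s^{\A}$ (possible since $\s^{\A}\neq\emptyset$). The resulting $\B$ has $\kappa$ elements and exactly $\qq(\kappa)$ fixed points, hence is a $\TsBB$-interpretation of cardinality $\kappa$ satisfying $\phi$; the degenerate cases $\kappa=\lambda$ (nothing to add) and $\kappa=1$ (then $b=0$) are automatically covered. This establishes smoothness, and stable infiniteness follows at once: for a $\TsBB$-satisfiable quantifier-free $\phi$ we pick some $\A\models\phi$ and apply smoothness with $\kappa=\max(\omega,|\s^{\A}|)$.

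The main obstacle is the finite case, and specifically the need to certify that the prescribed split of the $\kappa-\lambda$ new elements into $a$ new fixed and $b$ new non-fixed points is realizable, i.e. that $a,b\geq 0$; that is exactly what the monotonicity of $\qq$ and of $n\mapsto n-\qq(n)$ provides. The remaining points — that growing $\s^{\A}$ while freezing $s$ on it preserves every term-value of $\phi$, and the elementary arithmetic $a+b=\kappa-\lambda$ — are routine.
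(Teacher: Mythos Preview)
Your proposal is correct and follows essentially the same approach as the paper: split into the infinite case (adjoin $\kappa$ new fixed points) and the finite case (adjoin $\qq(\kappa)-\qq(\lambda)$ new fixed points and the remaining $\kappa-\lambda$ new elements as non-fixed points), while keeping $s$ and the variable assignments unchanged on $\s^{\A}$. You are in fact slightly more explicit than the paper about why the counts $a,b$ are nonnegative (via the monotonicity of $\qq$ and of $n\mapsto n-\qq(n)$) and about why $\B\models\phi$ (preservation of term values), both of which the paper leaves implicit.
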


\begin{proof}
    Let $\phi$ be a quantifier-free $\Sigmas$-formula, $\A$ a $\TsBB$-interpretation that satisfies $\phi$, and $\kappa(\s)$ a cardinal equal to or grater than $|\s^{\A}|$. There are then two cases to consider.

    \begin{enumerate}
        \item If $\kappa(\s)$ is infinite, consider a set $A$ with cardinality $\kappa(\s)$ and disjoint from $\s^{\A}$: we define a $\TsBB$-interpretation $\B$ by making $\s^{\B}=\s^{\A}\cup A$  (which has cardinality $\kappa(\s)$); $s^{\B}(a)=s^{\A}(a)$ for all $a\in\s^{\A}$, and $s^{\B}(a)=a$ for all $a\in A$ (meaning $\B$ has infinite elements $a$ satisfying $s^{\B}(a)=a$, and thus is indeed a $\TsBB$-interpretation); and $x^{\B}=x^{\A}$ for all variables $x$ (meaning $\B$ satisfies $\phi$).
        \item If $\kappa(\s)$ is finite, so is $|\s^{\A}|$: take then sets $A$ and $B$, disjoint from each other and $\s^{\A}$, with $\qq(\kappa(\s))-\qq(|\s^{\A}|)$ and $\kappa(s)-(|\s^{\A}|+|A|)$ elements, respectively. We then define an interpretation $\B$ by making:
        \begin{enumerate}
            \item $\s^{\B}=\s^{\A}\cup A\cup B$ (which has $\kappa(\s)$ elements); 
            \item $s^{\B}(a)=s^{\A}(a)$ for all $a\in\s^{\A}$, $s^{\B}(a)=a$ for all $a\in A$, and $s^{\B}(a)$ any element of $\s^{\B}\setminus\{a\}$ for $a\in B$ (notice that $\B$ possesses $\qq(|\s^{\A}|)$ elements in $\s^{\A}$ where $s^{\B}$ is the identity, and $\qq(\kappa(\s))-\qq(|\s^{\A}|)$ more on $A$, to a total of $\qq(\kappa(\s))$, hat makes of $\B$ a $\TsBB$-interpretation);
            \item and $x^{\B}=x^{\A}$ for all variables $x$ (meaning $\B$ satisfies $\phi$).
            \end{enumerate}
    \end{enumerate}
\end{proof}

\begin{lemma}\label{TsBB has FMP}
    The $\Sigmas$-theory $\TsBB$ is stably finite, and thus has the finite model property, with respect to $\{\s\}$.
\end{lemma}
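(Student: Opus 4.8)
The plan is first to reduce to the finite model property: since $\Sigmas$ is one-sorted, \Cref{OS+FMP=>SF} shows it is enough to prove that $\TsBB$ has the finite model property with respect to $\{\s\}$. So I would fix a quantifier-free $\Sigmas$-formula $\phi$ and a $\TsBB$-interpretation $\A$ satisfying it; if $\s^{\A}$ is finite there is nothing to do, so assume $\s^{\A}$ is infinite and construct a finite $\TsBB$-interpretation $\B$ satisfying $\phi$.

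The construction follows the finite case of the proof of \Cref{TsBB is SI}, adapted for the function symbol in the usual way. Let $x_1,\dots,x_n$ be the variables of $\phi$ and $M_i$ the largest $j$ with $s^j(x_i)$ occurring in $\phi$. Put $D_0=\{s^j(x_i)^{\A} : 1\le i\le n,\ 0\le j\le M_i\}$, $D=D_0\cup\{s^{M_i+1}(x_i)^{\A} : 1\le i\le n\}$ (both finite, say $|D|=d$), and $f_0=|\{a\in D_0 : s^{\A}(a)=a\}|$. Choose an integer $N$ large enough that $N\ge d$, $\qq(N)\ge f_0$, and $N-\qq(N)\ge d$. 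Pick a set $A$ disjoint from $\s^{\A}$ with $N-d$ elements and set $\s^{\B}=D\cup A$; define $s^{\B}$ to agree with $s^{\A}$ on $D_0$ (well-defined, since $s^{\A}(D_0)\subseteq D$), turn exactly $\qq(N)-f_0$ of the remaining ``free'' elements $(D\setminus D_0)\cup A$ (of which there are $N-|D_0|\ge N-d\ge\qq(N)$) into fixed points of $s^{\B}$, and send each of the other free elements to some distinct element of $\s^{\B}$; finally let $x^{\B}=x^{\A}$ for the variables of $\phi$ and assign other variables arbitrarily.

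It then remains to check two things. First, by induction on $j$ one gets $s^j(x_i)^{\B}=s^j(x_i)^{\A}$ for $0\le j\le M_i+1$, the inductive step using that $s^j(x_i)^{\A}\in D_0$ for $j\le M_i$, where $s^{\B}$ equals $s^{\A}$; hence each atomic subformula $s^j(x_i)=s^{j'}(x_{i'})$ of $\phi$, and so $\phi$ itself (being quantifier-free), has the same truth value under $\B$ as under $\A$, giving $\B\models\phi$. Second, the fixed points of $s^{\B}$ are exactly the $f_0$ forced ones inside $D_0$ together with the $\qq(N)-f_0$ chosen free ones, for a total of $\qq(N)=\qq(|\s^{\B}|)$, so $\B$ is indeed a finite $\TsBB$-interpretation. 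Stable finiteness then follows from \Cref{OS+FMP=>SF}.

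The one genuinely nonroutine point is the existence of such an $N$. It uses that $\qq$ is non-decreasing and unbounded --- indeed $\qq(\bb(m))=m$ for every $m$ --- so $\qq(N)\ge f_0$ for all large $N$; and that $N-\qq(N)\to\infty$, which holds because $\bb$ eventually dominates the computable function $m\mapsto 2m$ by \Cref{propertiesofbb}, so that $\bb(\qq(N))\le N$ forces $\qq(N)<N/2$ for large $N$, whence $N-\qq(N)\ge d$ eventually. Any sufficiently large $N$ then works; it also needs $N\ge 2$ so that a non-fixed free element has a valid image, which is automatic.
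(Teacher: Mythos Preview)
Your proof is correct and follows essentially the same route as the paper: reduce to the finite model property via \Cref{OS+FMP=>SF}, then build a finite $\TsBB$-model by retaining the values of all terms $s^{j}(x_i)$ appearing in $\phi$ (plus one extra application of $s$) and padding with fresh elements so that the total size $N$ and the fixed-point count $\qq(N)$ come out right. The only difference is cosmetic: the paper writes down an explicit target size $m=\max\{\bb(q-p+r),q\}$ (in its notation $p=|D_0|$, $q=|D|$, $r=f_0$) and forces every element of $D\setminus D_0$ to be a fixed point, whereas you argue non-constructively that any sufficiently large $N$ works and leave the status of $D\setminus D_0$ flexible---your growth argument for $N-\qq(N)\to\infty$ via \Cref{propertiesofbb} is a nice touch that makes the existence of $N$ transparent.
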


\begin{proof}
    Because of \Cref{OS+FMP=>SF}, we only need to prove that $\TsBB$ has the finite model property. So take a quantifier-free $\Sigmas$-formula $\phi$ and an infinite $\TsBB$-interpretation $\A$ that satisfies $\phi$. Let $\vars(\phi)=\{z_{1}, \ldots , z_{n}\}$, and for each $z_{i}$, let $M_{i}$ be the maximum of $j$ such that the term $s^{j}(z_{i})$ occurs in $\phi$. Take then
    \[\alpha(\phi)^{\A}=\{(s^{\A})^{j}(z_{i}^{\A}): 1\leq i\leq n, 0\leq j\leq M_{i}\},\quad Id^{\A}(\phi)=\{a\in \alpha(\phi)^{\A} : s^{\A}(a)=a\},\]
    \[\alpha(\phi)_{+1}^{\A}=\{(s^{\A})^{j}(z_{i}^{\A}): 1\leq i\leq n, 0\leq j\leq M_{i}+1\},\]
    $p=|\alpha(\phi)^{\A}|$, $q=|\alpha(\phi)_{+1}^{\A}|$, $r=|Id^{\A}(\phi)|$, and $m=\max\{\bb(-p+q+r), q\}$. We finally take sets $A$ and $B$, disjoint from each other and $\s^{\A}$, with, respectively, $\qq(m)+p-q-r$ elements and $m-p+r-\qq(m)$ elements. We then define an interpretation $\B$ with: 
    \begin{enumerate}
        \item $\s^{\B}=\alpha(\phi)_{+1}^{\A}\cup A\cup B$ (which has $m$ elements); 
        \item $s^{\B}(a)=\s^{\A}(a)$ for all $a\in \alpha(\phi)^{\A}$ (which has $r$ elements that satisfy $s^{\B}(a)=a$, namely those in $Id^{\A}(\phi)$), $\s^{\A}(a)=a$ for all $a\in A$ or $a\in \alpha(\phi)_{+1}^{\A}\setminus\alpha(\phi)^{\A}$ (there are, respectively, $\qq(m)+p-q-r$ and $q-p$, leading to a total of $\qq(m)$ elements in $\B$ where $\s^{\B}$ is the identity, meaning $\B$ is indeed a $\TsBB$-interpretation), and $s^{\B}(a)$ any value in $\s^{\B}\setminus\{a\}$ for $a\in B$;
        \item $x^{\B}=x^{\A}$ for all variables $x$ in $\phi$, and otherwise arbitrary.
        \end{enumerate}
        It is obvious that $\B$ is finite, and furthermore one can show that it additionally satisfies $\phi$.
\end{proof}

\begin{lemma}\label{TsBB is not FW}
    The $\Sigmas$-theory $\TsBB$ is not finitely witnessable, and thus not strongly finitely witnessable, with respect to $\{\s\}$.
\end{lemma}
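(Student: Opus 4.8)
The plan is to adapt the template of \Cref{TBB is not FW}: assuming $\TsBB$ had a witness $\wit$ (necessarily a computable function), I would extract from $\wit$ a computable function $f$ with $f(n)\geq\bb(n)$ for every $n$, contradicting \Cref{propertiesofbb}. In contrast to the $\TBB$ case, no recursion will be needed, because in $\TsBB$ the number of fixed points of $s$ can be pinned down by a single quantifier-free formula, and this number already forces a sharp lower bound on the domain size.

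Concretely, for each $n$ I would take fresh variables $x_1,\ldots,x_n$ of sort $\s$ and the quantifier-free $\Sigmas$-formula $\phi_n=\bigwedge_{i=1}^{n} s(x_i)=x_i\wedge\bigwedge_{1\leq i<j\leq n}\neg(x_i=x_j)$, asserting that there are $n$ distinct fixed points of $s$. This $\phi_n$ is $\TsBB$-satisfiable, since any infinite $\TsBB$-interpretation has infinitely many fixed points of $s$. The key step is the cardinality bound: every $\TsBB$-interpretation $\A$ with $\A\vDash\phi_n$ satisfies $|\s^{\A}|\geq\bb(n)$. Indeed $\A$ has at least $n$ fixed points of $s^{\A}$; for infinite $\A$ the bound is trivial, and for finite $\A$ with $|\s^{\A}|=m$ the definition of $\TsBB$ forces $\A$ to have exactly $\qq(m)$ fixed points, so $\qq(m)\geq n$, whence $m\geq\bb(\qq(m))\geq\bb(n)$, using $\bb(\qq(k))\leq k$ and the monotonicity of $\bb$.

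Given this, I would conclude exactly as in \Cref{TBB is not FW}. By clause $(i)$ of the witness, $\phi_n$ and $\Exists{\overarrow{w}}\wit(\phi_n)$ (with $\overarrow{w}=\vars(\wit(\phi_n))\setminus\vars(\phi_n)$) are $\TsBB$-equivalent, so $\wit(\phi_n)$ is $\TsBB$-satisfiable; by clause $(ii)$ there is a $\TsBB$-interpretation $\A$ with $\A\vDash\wit(\phi_n)$ and $\s^{\A}=\vars_{\s}(\wit(\phi_n))^{\A}$. Then $\A\vDash\Exists{\overarrow{w}}\wit(\phi_n)$, hence $\A\vDash\phi_n$, so the cardinality bound gives $\bb(n)\leq|\s^{\A}|=|\vars_{\s}(\wit(\phi_n))^{\A}|\leq|\vars_{\s}(\wit(\phi_n))|$. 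Hence $f(n)=|\vars_{\s}(\wit(\phi_n))|$ is computable (since $\wit$ is and $\phi_n$ is produced algorithmically from $n$) and satisfies $f(n)\geq\bb(n)$ for all $n$, contradicting \Cref{propertiesofbb}. Finally, since strong finite witnessability implies finite witnessability, $\TsBB$ is not strongly finitely witnessable either.

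I expect the only real content to be the cardinality bound in the second paragraph — namely, that pinning down $n$ fixed points of $s$ genuinely forces the domain to have at least $\bb(n)$ elements; everything else is a routine instantiation of the witness-to-computable-function argument already carried out for $\TBB$, and the bound itself reduces to the inequality $\bb(\qq(k))\leq k$ recorded just after the definition of $\qq$.
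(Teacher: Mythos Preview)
Your proof is correct and essentially identical to the paper's: the paper also uses the formula $\delta_n=\bigwedge_{1\leq i<j\leq n}\neg(x_i=x_j)\wedge\bigwedge_{i=1}^{n}(s(x_i)=x_i)$, defines $f(n)=|\vars(\wit(\delta_n))|$, and argues (slightly more tersely) that any $\TsBB$-interpretation satisfying $\wit(\delta_n)$ must have at least $\bb(n)$ elements, yielding the same contradiction with \Cref{propertiesofbb}. Your observation that no recursion is needed here (unlike for $\TBB$) matches the paper's proof exactly.
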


\begin{proof}
    Suppose $\TsBB$ indeed has a witness $\wit$ and consider the function $f:\mathbb{N}\rightarrow\mathbb{N}$ such that $f(0)=0$ and, for $n>0$,
    \[f(n)=|\vars(\wit(\delta_{n}))|,\quad\text{where}\quad\delta_{n}=\bigwedge_{1\leq i<j\leq n}\neg(x_{i}=x_{j})\wedge\bigwedge_{i=1}^{n}(s(x_{i})=x_{i}):\]
    since $\delta_{n}$ holds in some infinite $\TsBB$-interpretations, so does $\wit(\phi)$, and thus there must exist a $\TsBB$-interpretation $\A$ that satisfies $\wit(\delta_{n})$ (and thus $\delta_{n}$) with $\s^{\A}=\vars(\wit(\delta_{n}))^{\A}$. Then, there i a finite number $m\geq n$ of elements in $\A$ satisfying $s^{\A}(a)=a$, meaning that $\A$ satisfies $\psi_{=m}^{=}$ and thus $\psi_{\geq\bb(m)}$. In other words, $f(n)\geq\bb(m)\geq \bb(n)$, and so $f$ cannot be computable, contradicting its definition.
\end{proof}

\begin{lemma}\label{satisfiability in TsBB}
    Let $\T$ be the $\Sigmas$-theory with all $\Sigmas$-structures as models: if the quantifier-free $\Sigmas$-formula $\phi$ is $\T$-satisfiable, it is $\TsBB$-satisfiable.
\end{lemma}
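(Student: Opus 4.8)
The plan is to reuse, essentially verbatim, the construction carried out in the proof of \Cref{TsBB has FMP}, after observing that that construction never actually used the fact that the interpretation it started from was a $\TsBB$-interpretation (nor that it was infinite): it only used that $\phi$ mentions finitely many terms, together with the arithmetic of $\bb$ and $\qq$.

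First I would take a $\T$-interpretation $\A$ that satisfies $\phi$, i.e.\ an arbitrary $\Sigmas$-structure equipped with a variable assignment. Since $\phi$ is quantifier-free, its truth value under any interpretation is completely determined by the values of the finitely many terms $s^{j}(z_{i})$ occurring in it, and by the equalities holding among them. So, exactly as in the proof of \Cref{TsBB has FMP}, I would isolate the finite sets $\alpha(\phi)^{\A}$, $\alpha(\phi)_{+1}^{\A}$ and $Id^{\A}(\phi)=\{a\in\alpha(\phi)^{\A} : s^{\A}(a)=a\}$, and record their cardinalities $p=|\alpha(\phi)^{\A}|$, $q=|\alpha(\phi)_{+1}^{\A}|$, $r=|Id^{\A}(\phi)|$, noting $r\leq p\leq q$. (By \Cref{LowenheimSkolemDownwards} one may even assume $\s^{\A}$ is countable, but this is not needed.)

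Next I would build the target $\TsBB$-interpretation $\B$, mimicking \Cref{TsBB has FMP}: choose a finite cardinality $m$ large enough that $m\geq q$, $\qq(m)\geq r$ and $m-\qq(m)\geq p-r$ --- possible precisely because $\bb$ eventually outgrows the identity, so $\qq(\bb(k))=k$ and $\bb(k)-k$ can be taken arbitrarily large; then let $\s^{\B}$ consist of $\alpha(\phi)_{+1}^{\A}$ together with two fresh disjoint sets of the appropriate sizes, define $s^{\B}$ to coincide with $s^{\A}$ on $\alpha(\phi)^{\A}$, to be the identity on one fresh set and on $\alpha(\phi)_{+1}^{\A}\setminus\alpha(\phi)^{\A}$, and to be fixed-point-free on the other fresh set, and put $x^{\B}=x^{\A}$ for every $x\in\vars(\phi)$. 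As in \Cref{TsBB has FMP}, one then checks that $s^{\B}$ has exactly $\qq(m)$ fixed points, so $\B$ is a $\TsBB$-interpretation, and that every term occurring in $\phi$ has the same value and the same equality type in $\B$ as in $\A$, whence $\B\vDash\phi$; thus $\phi$ is $\TsBB$-satisfiable.

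The only real subtlety I anticipate is the cardinality bookkeeping: one must pick $m$ large enough that the two auxiliary sets have non-negative size and that $\B$ has room both for the $\qq(m)$ forced identity points and for the $p-r$ forced non-identity points coming from the relevant finite part of $\A$. This is exactly the place where the unbounded growth of $\bb$ (equivalently, $\qq(\bb(k))=k$ together with $\bb(k)-k\to\infty$, and the monotonicity of $k\mapsto k-\qq(k)$) is used; everything else is routine and identical to the finite-model-property argument already established for $\TsBB$.
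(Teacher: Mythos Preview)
Your approach is correct but takes a genuinely different route from the paper's.

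The paper's proof is much more direct: it first uses that $\T$ (the theory of an uninterpreted unary function) is stably infinite to pass to an \emph{infinite} $\T$-interpretation $\B$ satisfying $\phi$; it then simply adjoins to $\B$ countably many fresh fixed points $a_{n}$ and countably many fresh non-fixed points $b_{n}$ with $s(b_{n})=a_{n}$, leaving $s$ and the variable assignment unchanged on $\s^{\B}$. The resulting structure is infinite with infinitely many fixed points, hence automatically a $\TsBB$-model (the infinite disjunct of every axiom is satisfied), and it still satisfies $\phi$ because nothing relevant to $\phi$ was touched. No arithmetic involving $\bb$ or $\qq$ is needed at all.

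Your route instead targets a \emph{finite} $\TsBB$-model by recycling the finite-model-property construction. You are right that that construction never uses that the starting interpretation lies in $\TsBB$, so the idea is sound. One small bookkeeping slip: since you make the $q-p$ elements of $\alpha(\phi)_{+1}^{\A}\setminus\alpha(\phi)^{\A}$ into fixed points of $\B$ as well, the number of forced fixed points is $r+(q-p)$, not $r$; hence the required inequality is $\qq(m)\geq r+(q-p)$ (exactly the condition the choice $m\geq\bb(q-p+r)$ in \Cref{TsBB has FMP} guarantees), not merely $\qq(m)\geq r$. With that correction your argument goes through. The trade-off is clear: your proof yields a finite witness (which is more than the lemma asks for) at the cost of the delicate cardinality arithmetic, whereas the paper's proof sidesteps all of that by going infinite.
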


\begin{proof}
    Let $\phi$ be a quantifier-free, $\T$-satisfiable formula, and let $\A$ be a $\T$-interpretation that satisfies $\phi$; because $\T$ is stably infinite, we can obtain a $\T$-interpretation $\B$ that satisfies $\phi$ with $|\s^{\B}|\geq \omega$. Then let $m$ be the number of elements $a$ in $\s^{\B}$ such that $s^{\B}(a)=a$, and $n$ be the number of elements $a$ such that $s^{\B}(a)\neq a$.

    Because $|\s^{\B}|\geq\omega$, either $m$ or $n$ is infinite, and we define $\C$ by making $\s^{\C}=\s^{\B}\cup\{a_{n}:n\in\mathbb{N}\}\cup\{b_{n}:n\in\mathbb{N}\}$, where the latter sets are disjoint from $\s^{\B}$ and each other; $s^{\C}(a)=s^{\B}(a)$ for all $a\in \s^{\B}$, $s^{\C}(a_{n})=a_{n}$ and $s^{\C}(b_{n})=a_{n}$; and $x^{\C}=x^{\B}$ for all variables $x$. Then $\C$ not only satisfies $\phi$, but is also a $\TsBB$-interpretation, meaning $\phi$ is also $\TsBB$-satisfiable.
\end{proof}

\begin{lemma}\label{TsBB is CV}
    The $\Sigmas$-theory $\TsBB$ is convex with respect to $\{\s\}$.
\end{lemma}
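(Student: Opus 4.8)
The plan is to transfer convexity from the $\Sigmas$-theory $\T$ of all $\Sigmas$-structures, whose convexity is guaranteed by \Cref{uninterpretedfunctionsis convex} since $\Sigmas$ is one-sorted with a single, uninterpreted unary function symbol. The two ingredients that make the transfer work are: every $\TsBB$-interpretation is a $\T$-interpretation, and, conversely, by \Cref{satisfiability in TsBB}, every $\T$-satisfiable quantifier-free $\Sigmas$-formula is $\TsBB$-satisfiable.

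First I would record the following equivalence: for any conjunction of $\Sigmas$-literals $\phi$ and any variables $u_{1}, v_{1}, \ldots, u_{n}, v_{n}$ of sort $\s$, we have $\dash_{\TsBB}\phi\rightarrow\bigvee_{i=1}^{n}u_{i}=v_{i}$ if and only if $\dash_{\T}\phi\rightarrow\bigvee_{i=1}^{n}u_{i}=v_{i}$. The direction from $\T$ to $\TsBB$ is immediate, because every $\TsBB$-interpretation is a $\T$-interpretation. For the other direction, suppose $\dash_{\T}\phi\rightarrow\bigvee_{i=1}^{n}u_{i}=v_{i}$ fails; then some $\T$-interpretation satisfies the quantifier-free formula $\phi\wedge\bigwedge_{i=1}^{n}\neg(u_{i}=v_{i})$, so by \Cref{satisfiability in TsBB} this formula is also $\TsBB$-satisfiable, contradicting $\dash_{\TsBB}\phi\rightarrow\bigvee_{i=1}^{n}u_{i}=v_{i}$.

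With this in hand the result follows in two lines: assume $\dash_{\TsBB}\phi\rightarrow\bigvee_{i=1}^{n}u_{i}=v_{i}$; then $\dash_{\T}\phi\rightarrow\bigvee_{i=1}^{n}u_{i}=v_{i}$ by the equivalence, so by convexity of $\T$ (\Cref{uninterpretedfunctionsis convex}) there is an index $i$ with $\dash_{\T}\phi\rightarrow u_{i}=v_{i}$, and since $\TsBB$-interpretations are $\T$-interpretations we conclude $\dash_{\TsBB}\phi\rightarrow u_{i}=v_{i}$, as required.

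I do not expect a genuine obstacle here; the only point requiring a moment of care is checking the hypothesis of \Cref{satisfiability in TsBB}, namely that $\phi\wedge\bigwedge_{i=1}^{n}\neg(u_{i}=v_{i})$ is quantifier-free (it is, being a conjunction of literals) and that the $\T$ appearing there is precisely the $\Sigmas$-theory of all structures. In particular, unlike the cardinality-sensitive properties studied earlier, no facts about the growth of $\bb$ or about the inverse $\qq$ enter this argument.
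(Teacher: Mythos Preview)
Your proof is correct and follows essentially the same approach as the paper's: both arguments reduce convexity of $\TsBB$ to convexity of the theory $\T$ of all $\Sigmas$-structures (\Cref{uninterpretedfunctionsis convex}) via the two facts that every $\TsBB$-interpretation is a $\T$-interpretation and that $\T$-satisfiability of quantifier-free formulas implies $\TsBB$-satisfiability (\Cref{satisfiability in TsBB}). The only cosmetic difference is that you isolate the biconditional $\dash_{\TsBB}\phi\rightarrow\bigvee_{i}u_{i}=v_{i}\iff\dash_{\T}\phi\rightarrow\bigvee_{i}u_{i}=v_{i}$ and then argue directly, whereas the paper wraps the same steps in a proof by contradiction.
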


\begin{proof}
    For a proof by contradiction, take a conjunction of $\Sigmas$-literals $\phi$ and assume that, although $\dash_{\TsBB}\phi\rightarrow\bigvee_{i=1}^{n}x_{i}=y_{i}$, we have $\not\dash_{\TsBB}\phi\rightarrow x_{i}=y_{i}$ for each $1\leq i\leq n$. So we can find $\TsBB$-interpretations $\A_{i}$ that satisfy, simultaneously, $\phi$ and $\neg(x_{i}=y_{i})$.

    Now, we state that $\tdash\phi\rightarrow\bigvee_{i=1}^{n}x_{i}=y_{i}$, for $\T$ the $\Sigmas$-theory of the uninterpreted function $s$: if that were not true, we should be able to find a $\T$-interpretation $\A$ that satisfies $\phi$ but not $\bigvee_{i=1}^{n}x_{i}=y_{i}$, meaning $\phi\wedge\neg\bigvee_{i=1}^{n}x_{i}=y_{i}$ is $\T$-satisfiable, and by \Cref{satisfiability in TsBB} that would imply this formula is $\TsBB$-satisfiable, contradicting our assumptions. And, because all $\TsBB$-interpretations are $\T$-interpretations, we get that $\phi\wedge\neg(x_{i}=y_{i})$ is $\T$-satisfiable for each $1\leq i\leq n$. 

    Given \Cref{uninterpretedfunctionsis convex} and that $\tdash\phi\rightarrow\bigvee_{i=1}^{n}x_{i}=y_{i}$, there must exist $1\leq i\leq n$ such that $\tdash\phi\rightarrow x_{i}=y_{i}$, contradicting that $\phi\wedge\neg(x_{i}=y_{i})$ is $\T$-satisfiable. The conclusion must be that $\TsBB$ is convex.
\end{proof}

\subsection{$\TneqBB$}

\begin{lemma}
    The $\Sigmas$-theory $\TneqBB$ is not stably infinite, and thus not smooth, with respect to $\{\s\}$.
\end{lemma}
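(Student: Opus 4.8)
The plan is to produce a single quantifier-free $\Sigmas$-formula that is $\TneqBB$-satisfiable yet is satisfied by no $\TneqBB$-interpretation with an infinite domain. This gives the failure of stable infiniteness directly, and non-smoothness follows at once, since a smooth theory, starting from any model $\A$ of a satisfiable formula and the cardinal function sending $\s$ to some infinite cardinal $\geq|\s^{\A}|$, would yield an infinite model of that formula.

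First I would extract the one structural fact about models of $\TneqBB$ that I need, reading it off from the axiomatization $\{(\psi_{=2}\wedge\Forall{x}\neg p(x))\vee((\psi_{\geq\bb(k+2)}\vee\bigvee_{i=2}^{k+2}\psi_{=\bb(i)})\wedge\Forall{x}p(x)):k\in\mathbb{N}\}$, where $p(x)$ abbreviates $s(x)=x$: if $\A$ is a $\TneqBB$-interpretation with $|\s^{\A}|$ infinite, then $\A$ cannot satisfy $\psi_{=2}$, so for each $k$ it must satisfy the second disjunct of the $k$-th axiom, and in particular it satisfies $\Forall{x}p(x)$, i.e.\ $s^{\A}$ is the identity. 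I do not need the finer fact that the finite identity-models are exactly those of cardinality $\bb(j)$ for $j\geq2$. I would also observe that $\TneqBB$ is nonempty — e.g.\ any two-element structure in which $s$ has no fixed point is a model — so the argument is not vacuous.

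Next I would take $\phi$ to be $\neg(s(x)=x)$. It is $\TneqBB$-satisfiable: let $\A$ have $\s^{\A}=\{a,b\}$ with $s^{\A}$ the transposition of $a$ and $b$, and set $x^{\A}=a$; then $s^{\A}(x^{\A})=b\neq a$, so $\A\vDash\phi$. Conversely, any $\TneqBB$-interpretation $\B$ with $|\s^{\B}|$ infinite interprets $s^{\B}$ as the identity by the observation above, hence satisfies $s(x)=x$ under every assignment and so cannot satisfy $\phi$. Thus $\phi$ is $\TneqBB$-satisfiable but has no infinite $\TneqBB$-model, which is precisely the failure of stable infiniteness with respect to $\{\s\}$; and consequently $\TneqBB$ is not smooth with respect to $\{\s\}$ either. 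I do not anticipate a real obstacle here: the only step that needs a little care is the case split on the axioms establishing that an infinite model interprets $s$ as the identity, and that is immediate from the shape of the disjunction.
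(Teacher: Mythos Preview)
Your proposal is correct and follows exactly the same approach as the paper's proof: exhibit the quantifier-free formula $\neg(s(x)=x)$, observe it is satisfied in a two-element $\TneqBB$-model where $s$ has no fixed points, and note that every infinite $\TneqBB$-model must take the second disjunct of each axiom and hence interpret $s$ as the identity. Your write-up is in fact more detailed than the paper's one-line proof, which simply states the two facts without spelling out the case split on the axioms.
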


\begin{proof}
 Any $\TneqBB$-interpretation $\A$ with $|\s^{\A}|=2$ satisfies $\neg(s(x)=x)$, while all infinite $\TneqBB$-interpretations satisfy instead $s(x)=x$.
 \end{proof}

 \begin{lemma}
    The $\Sigmas$-theory $\TneqBB$ is stably finite, and thus has the finite model property, with respect to $\{\s\}$.
\end{lemma}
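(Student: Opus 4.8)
The plan is to use that $\Sigmas$ is one-sorted: by \Cref{OS+FMP=>SF} it suffices to show that $\TneqBB$ has the finite model property with respect to $\{\s\}$, and stable finiteness then follows at once. So I would concentrate on the finite model property.

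Let $\phi$ be a quantifier-free $\Sigmas$-formula and $\A$ a $\TneqBB$-interpretation satisfying it. If $\s^{\A}$ is finite there is nothing to prove, so assume $|\s^{\A}|\geq\omega$. The key preliminary observation is that every infinite model of $\TneqBB$ interprets $s$ as the identity: in a model satisfying the axiom with parameter $k$, either $\psi_{=2}$ holds (impossible when $\s^{\A}$ is infinite) or $\Forall{x}(s(x)=x)$ holds; hence $s^{\A}$ is the identity, and consequently every term $s^{j}(z)$ appearing in $\phi$, with $z$ a variable of $\phi$, evaluates in $\A$ to $z^{\A}$.

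Next I would invoke \Cref{propertiesofbb} (giving that $\bb$ is increasing with $\bb(n)>n$ for $n\geq 2$, hence unbounded) to pick $j\geq 2$ with $\bb(j)\geq|\vars(\phi)^{\A}|$. Taking a set $B$ with $\bb(j)$ elements and an injective function $f$ from $\vars(\phi)^{\A}$ into $B$, I would define a $\TneqBB$-interpretation $\B$ with $\s^{\B}=B$, with $s^{\B}$ the identity on $B$ (so that $\B$, of cardinality $\bb(j)$ with $j\geq2$ and $s^{\B}$ the identity, is indeed a $\TneqBB$-model), and with $z^{\B}=f(z^{\A})$ for every variable $z$ of $\phi$ (arbitrary otherwise). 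Since $\phi$ is quantifier-free, its truth value in an interpretation is fixed by its atomic subformulas $s^{p}(z)=s^{q}(z')$; because $s^{\A}$ and $s^{\B}$ are both the identity, $\A$ satisfies such an atom iff $z^{\A}=z'^{\A}$, while $\B$ satisfies it iff $f(z^{\A})=f(z'^{\A})$, which by injectivity of $f$ is again equivalent to $z^{\A}=z'^{\A}$. Thus $\A$ and $\B$ agree on all atomic subformulas of $\phi$, so $\B\models\phi$, and $\B$ is finite; this establishes the finite model property, and \Cref{OS+FMP=>SF} then yields stable finiteness.

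I do not expect a real obstacle: this is essentially the cardinality-collapsing argument used for $\TBB$ in \Cref{Arbitrarymodels=>FMP} and for $\Tneqodd$ earlier, adapted to keep $s$ interpreted as the identity. The only points needing a little care are the preliminary remark that an infinite $\TneqBB$-model must interpret $s$ as the identity (so the function symbol does not interfere with the argument) and the check that the constructed $\B$, of cardinality $\bb(j)$ with $j\geq2$, really is a $\TneqBB$-model.
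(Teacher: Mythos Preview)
Your proposal is correct and follows essentially the same approach as the paper: observe that an infinite $\TneqBB$-model interprets $s$ as the identity, then collapse to a finite model of cardinality $\bb(j)$ for some $j\geq 2$ with $s$ still the identity, checking that atoms are preserved. The only cosmetic differences are that the paper argues stable finiteness directly (which is immediate once $\A$ is infinite, since any finite $\B$ then has $|\s^{\B}|\leq|\s^{\A}|$) rather than going through \Cref{OS+FMP=>SF}, and it builds $\s^{\B}$ as $\vars(\phi)^{\A}$ padded by fresh elements rather than via an abstract injection $f$; neither difference is substantive.
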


\begin{proof}
    Let $\phi$ be a quantifier-free $\Sigmas$-formula, and $\A$ a $\TneqBB$-interpretation that satisfies $\phi$: we may assume $\A$ is infinite, since otherwise there would be nothing to prove, and so $\A$ satisfies $\Forall{x}(s(x)=x)$. Take then the smallest $m$ such that $\bb(m)\geq\max\{|\vars(\phi)^{\A}|, 3\}$, and a set $A$ disjoint from $\s^{\A}$ with $\bb(m)-|\vars(\phi)^{\A}|$ elements. We define the interpretation $\B$ by: $\s^{\B}=\vars(\phi)^{\A}\cup A$ (so $\B$ has at least $\bb(m)\geq 3$ elements, but $|\s^{\B}|\leq|\s^{\A}|$); $s^{\B}(a)=a$ for all $a\in\s^{\B}$ (so $\B$ satisfies $\Forall{x}(s(x)=x)$ and is, therefore, a $\TneqBB$-interpretation); $x^{\B}=x^{\A}$ for all variables $x$ in $\phi$ (so $\B$ satisfies $\phi$); and arbitrarily for other variables.
\end{proof}

\begin{lemma}
    The $\Sigmas$-theory $\TneqBB$ is not finitely witnessable, and thus not strongly finitely witnessable, with respect to $\{\s\}$.
\end{lemma}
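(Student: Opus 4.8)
The plan is to mirror the argument of \Cref{TBB is not FW}: from a hypothetical witness for $\TneqBB$ I would extract a computable function that dominates the Busy Beaver function $\bb$, which is impossible by \Cref{propertiesofbb}. Concretely, suppose for contradiction that $\wit$ is a witness for $\TneqBB$ with respect to $\{\s\}$; it is, by definition, computable. Fixing fresh variables $x_{1}, x_{2}, \ldots$ of sort $\s$, let $\distinct{m}$ abbreviate the quantifier-free formula $\bigwedge_{1\leq i<j\leq m}\neg(x_{i}=x_{j})$, and define $f\colon\mathbb{N}\to\mathbb{N}$ by $f(0)=3$ and $f(n+1)=|\vars_{\s}(\wit(\distinct{f(n)+1}))|$. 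Since $\wit$ is computable and extracting the free variables of a formula is effective, $f$ is computable.

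Next I would prove, by induction on $n$, that $f(n)\geq\max\{3,\bb(n)\}$. The base case is immediate since $f(0)=3\geq\max\{3,\bb(0)\}$. For the step, assume $f(n)\geq\max\{3,\bb(n)\}$, so in particular $f(n)+1\geq 4$. The formula $\distinct{f(n)+1}$ is $\TneqBB$-satisfiable — it holds in the infinite $\TneqBB$-interpretation on which $s$ is the identity — so, by clause $(i)$ of the witness definition, $\wit(\distinct{f(n)+1})$ is $\TneqBB$-satisfiable as well; by clause $(ii)$ there is a $\TneqBB$-interpretation $\A$ satisfying $\wit(\distinct{f(n)+1})$ with $\s^{\A}=\vars_{\s}(\wit(\distinct{f(n)+1}))^{\A}$, so $\s^{\A}$ is finite and $|\s^{\A}|\leq f(n+1)$. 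Since $\A$ also satisfies the existential closure of $\wit(\distinct{f(n)+1})$ over its auxiliary variables, it satisfies $\distinct{f(n)+1}$, and hence $|\s^{\A}|\geq f(n)+1\geq 4$.

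The one place where the specific axiomatization of $\TneqBB$ enters is the observation that its only interpretation with fewer than three elements is the two-element one (on which $s$ has no fixed point); every $\TneqBB$-interpretation with at least three elements lies in the other class, and therefore has, when finite, cardinality $\bb(k)$ for some $k\geq 2$. Applying this to $\A$ — which has $\s^{\A}$ finite and $|\s^{\A}|\geq 4$ — gives $|\s^{\A}|=\bb(k)$ for some $k\geq 2$; then $\bb(k)=|\s^{\A}|\geq f(n)+1>f(n)\geq\bb(n)$, so $k>n$ since $\bb$ is increasing, whence $f(n+1)\geq|\s^{\A}|=\bb(k)\geq\bb(n+1)$; together with $f(n+1)\geq|\s^{\A}|\geq 4\geq 3$ this completes the induction. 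Thus $f$ is a computable function with $f(n)\geq\bb(n)$ for all $n$, contradicting \Cref{propertiesofbb}. Finally, since strong finite witnessability implies finite witnessability, $\TneqBB$ is not strongly finitely witnessable either. I expect the only genuine care to be needed in the bookkeeping with clauses $(i)$ and $(ii)$ of the witness definition and in the cardinality dichotomy just described; everything else is routine, exactly as in \Cref{TBB is not FW}.
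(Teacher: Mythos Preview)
Your proposal is correct and follows essentially the same approach as the paper: assume a witness exists, build a computable function $f$ via the recursion $f(n+1)=|\vars_{\s}(\wit(\distinct{f(n)+1}))|$, and show it dominates $\bb$, contradicting \Cref{propertiesofbb}. Your choice of base case $f(0)=3$ (rather than the paper's $f(0)=0$, $f(1)=1$) is in fact slightly cleaner here, since it guarantees from the outset that the finite model $\A$ produced by clause~$(ii)$ has at least four elements and hence cannot be the exceptional two-element $\TneqBB$-model; this lets the cardinality dichotomy kick in immediately and the induction go through without any stalling at small $n$.
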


\begin{proof}
The proof is very similar to that of \Cref{TBB is not FW}: if $\TneqBB$ indeed has a witness $\wit$, we define a function $f:\mathbb{N}\rightarrow{N}$ by making $f(0)=0$, $f(1)=1$, and for $n\geq 1$
$f(n+1)=|\vars_{\s}(\wit(\psi_{\geq f(n)+1}))|$.
Because $\wit$ is computable, so must be $f$, but one can prove that $f$ grows at least as fast as $\bb$, so we reach a contradiction.
\end{proof}

\begin{lemma}
    The $\Sigmas$-theory $\TneqBB$ is not convex with respect to $\{\s\}$.
\end{lemma}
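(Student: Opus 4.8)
The plan is to combine two facts about $\TneqBB$. First, as described above, the only $\TneqBB$-interpretation in which $s$ is not the identity is the (unique up to isomorphism) two-element interpretation $\A$ with $\s^{\A}=\{a,b\}$, $s^{\A}(a)=b$ and $s^{\A}(b)=a$; every other $\TneqBB$-interpretation interprets $s$ as the identity and has $|\s^{\A}|$ infinite or equal to $\bb(k)$ for some $k\ge 2$. Second, a two-element domain cannot contain three distinct elements, which is the usual pigeonhole ingredient behind the non-convexity proofs in this paper.

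Concretely, I would take $x$ to be a variable of sort $\s$ and set $\phi=\neg(s(x)=x)$, a conjunction consisting of a single $\Sigmas$-literal. The first step is to observe that every $\TneqBB$-interpretation satisfying $\phi$ must be the two-element interpretation $\A$ above (in particular $\phi$ is $\TneqBB$-satisfiable, witnessed by sending $x$ to $a$). Then I would introduce three fresh variables $y_1,y_2,y_3$ of sort $\s$ and argue, by the pigeonhole principle applied inside that two-element domain, that $\dash_{\TneqBB}\phi\rightarrow(y_1=y_2\vee y_1=y_3\vee y_2=y_3)$.

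To finish, I would check that no single disjunct is entailed: for each pair $1\le i<j\le 3$, the two-element interpretation with $x$ and $y_i$ sent to $a$, $y_j$ sent to $b$, and the remaining variable arbitrary, satisfies $\phi$ but not $y_i=y_j$, so $\phi\rightarrow y_i=y_j$ is not $\TneqBB$-valid. Together with the previous display this contradicts convexity, so $\TneqBB$ is not convex with respect to $\{\s\}$.

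There is no real difficulty in the argument; the point worth emphasizing is why a bare pigeonhole argument (as used for $\Tmninfty$, $\TBBn$ and $\TmnBB$) does not apply directly: $\TneqBB$ has arbitrarily large models, so the cardinality must first be collapsed, and the key observation is that the single disequality $\neg(s(x)=x)$ already forces the model down to exactly two elements. This is also where $\TneqBB$ parts ways with $\TneqBBone$: there the analogous disequality is compatible with all of the infinitely many large models and pins down nothing, leaving that theory convex.
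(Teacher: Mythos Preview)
Your proof is correct and follows essentially the same approach as the paper: both use the literal $\neg(s(x)=x)$ to force the two-element model and then invoke pigeonhole. The only cosmetic difference is that the paper additionally conjoins $\neg(y=w)$ into $\phi$ so that the disjunction can be taken with just two equalities ($y=z$ and $w=z$) rather than your three.
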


\begin{proof}
    Take variables $x$, $y$, $w$ and $z$, and consider the conjunction of literals $\phi$ equal to $\neg(s(x)=x)\wedge\neg(y=w)$: it is then true that $\dash_{\TneqBB}\phi\rightarrow(y=z)\vee(w=z)$, since $\neg(s(x)=x)$ is only satisfied in a $\TneqBB$-interpretation $\A$ with $|\s^{\A}|=2$, and by the pigeonhole principle, if $\neg(y=w)$, then either $y=z$ or $w=z$. But we cannot get either $\dash_{\TneqBB}\phi\rightarrow y=z$ nor $\dash_{\TneqBB}\phi\rightarrow w=z$, since $\A$ has $|\s^{\A}|=2$, and we can change the value assigned to $z$ while keeping $\phi$ satisfied.
\end{proof}

\subsection{$\TneqBBone$}

\begin{lemma}
    The $\Sigmas$-theory $\TneqBBone$ is not stably infinite, and thus is not smooth, with respect to $\{\s\}$.
\end{lemma}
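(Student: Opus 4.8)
The plan is to produce one quantifier-free $\Sigmas$-formula that is $\TneqBBone$-satisfiable but has no infinite $\TneqBBone$-model; failure of stable infiniteness follows at once, and since every smooth theory is in particular stably infinite (given a model of a formula, smoothness with $\kappa(\s)$ an infinite cardinal provides another model of it whose domain of sort $\s$ is infinite), failure of smoothness follows as well.

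First I would record the following consequence of the axiomatization $\{\psi_{=1}\vee((\psi_{\geq \bb(k+2)}\vee\bigvee_{i=2}^{k+2}\psi_{=\bb(i)})\wedge\Forall{x}\neg p(x)) : k\in\mathbb{N}\}$ of $\TneqBBone$ (where $p(x)$ abbreviates $s(x)=x$): every $\TneqBBone$-interpretation $\B$ with $|\s^{\B}|$ infinite satisfies $\Forall{x}\neg(s(x)=x)$. Indeed, in such a $\B$ the disjunct $\psi_{=1}$ is false and $\psi_{\geq\bb(k+2)}$ is true for every $k\in\mathbb{N}$, so for the $k$-th axiom to hold $\B$ must satisfy the remaining disjunct, and in particular $\Forall{x}\neg p(x)$. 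On the other hand, any $\Sigmas$-structure with a one-element domain is a $\TneqBBone$-interpretation, since it satisfies $\psi_{=1}$, and in it $s$ is forced to be the identity.

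Then I would take $\phi$ to be the literal $s(x)=x$, for $x$ a variable of sort $\s$. It is satisfied by the one-element $\TneqBBone$-interpretation, hence it is $\TneqBBone$-satisfiable. But no infinite $\TneqBBone$-interpretation $\B$ can satisfy it, since $\B\vDash\Forall{x}\neg(s(x)=x)$. Thus $\phi$ is a $\TneqBBone$-satisfiable quantifier-free formula with no infinite $\TneqBBone$-model, so $\TneqBBone$ is not stably infinite, and therefore not smooth, with respect to $\{\s\}$.

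I do not expect any real obstacle: the argument is the mirror image of the one for $\TneqBB$ (there $s$ is the identity only on the large models and fixed-point-free on the small ones, whereas here the two regimes are swapped), and the only step needing a line of justification is the reading of the axioms in the infinite case.
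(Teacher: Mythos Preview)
Your proposal is correct and takes exactly the same approach as the paper: the paper's proof is the one-line observation that the one-element $\TneqBBone$-interpretation satisfies $s(x)=x$ while every infinite $\TneqBBone$-interpretation satisfies $\neg(s(x)=x)$. You have simply supplied the details behind this observation.
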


\begin{proof}
    Obvious, since the $\TneqBBone$-interpretation with one element in its domain satisfies $s(x)=x$, while all infinite $\TneqBBone$-interpretation satisfy instead $\neg(s(x)=x)$.
\end{proof}

\begin{lemma}
    The $\Sigmas$-theory $\TneqBBone$ is stably finite, and thus has the finite model property, with respect to $\{\s\}$.
\end{lemma}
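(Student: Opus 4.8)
The plan is to reduce stable finiteness to the finite model property: since $\Sigmas$ is one-sorted, \Cref{OS+FMP=>SF} shows it suffices to prove that $\TneqBBone$ has the finite model property with respect to $\{\s\}$.

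To that end, I would fix a quantifier-free $\Sigmas$-formula $\phi$ and a $\TneqBBone$-interpretation $\A$ that satisfies it. If $\A$ is already finite there is nothing to prove, so assume $|\s^{\A}|$ is infinite; by the axiomatization of $\TneqBBone$, $\A$ then satisfies $\Forall{x}\neg(s(x)=x)$, i.e.\ $s^{\A}$ has no fixed points. Writing $\vars(\phi)=\{z_{1},\ldots,z_{n}\}$ and letting $M_{i}$ be the largest $j$ such that $s^{j}(z_{i})$ occurs in $\phi$, I would set
\[\alpha=\{(s^{\A})^{j}(z_{i}^{\A}) : 1\leq i\leq n,\ 0\leq j\leq M_{i}+1\},\]
a finite subset of $\s^{\A}$. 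Since $\bb$ is increasing and $\bb(m)>m$ for $m\geq 2$ (\Cref{propertiesofbb}), it is unbounded, so I can pick $m\geq 2$ with $\bb(m)\geq|\alpha|$; let $A$ be a set of $\bb(m)-|\alpha|$ fresh elements disjoint from $\s^{\A}$.

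Next I would define a finite interpretation $\B$: $\s^{\B}=\alpha\cup A$, so $|\s^{\B}|=\bb(m)$; $s^{\B}(a)=s^{\A}(a)$ for every $a\in\{(s^{\A})^{j}(z_{i}^{\A}) : 1\leq i\leq n,\ 0\leq j\leq M_{i}\}$ (these values are $(s^{\A})^{j+1}(z_{i}^{\A})\in\alpha\subseteq\s^{\B}$, and differ from $a$ since $s^{\A}$ has no fixed points); $s^{\B}(a)$ equal to an arbitrary element of $\s^{\B}\setminus\{a\}$ for every remaining $a$ (possible because $\bb(m)\geq\bb(2)=4>1$); and $x^{\B}=x^{\A}$ for every variable $x$ of $\phi$, arbitrary otherwise. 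By construction $s^{\B}$ has no fixed points and $|\s^{\B}|=\bb(m)$ with $m\geq 2$, so $\B$ is a $\TneqBBone$-interpretation. A routine induction on $j$ shows $(s^{\B})^{j}(z_{i}^{\B})=(s^{\A})^{j}(z_{i}^{\A})$ for $0\leq j\leq M_{i}$, hence every atomic subformula of $\phi$ (an equality between terms of that shape) has the same truth value in $\B$ as in $\A$, so $\B$ satisfies $\phi$. As $\B$ is finite, this gives the finite model property, and stable finiteness follows from \Cref{OS+FMP=>SF}. The argument mirrors the one for $\TneqBB$, the difference being that here one keeps $s$ fixed-point-free instead of making it the identity.

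The step I expect to need the most care is the definition of $s^{\B}$: one must verify that forcing $s^{\B}=s^{\A}$ on the ``interior'' elements $\{(s^{\A})^{j}(z_{i}^{\A}) : j\leq M_{i}\}$ is well defined (independent of the representation chosen for each element), and that the genuinely ``top'' elements $(s^{\A})^{M_{i}+1}(z_{i}^{\A})$ together with the padding elements of $A$ can then be sent to arbitrary non-self values without clashing with those forced assignments, so that $s^{\B}$ really has no fixed point while the value of every term occurring in $\phi$ is preserved.
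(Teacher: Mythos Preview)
Your proof is correct and follows essentially the same approach as the paper: reduce to the finite model property via \Cref{OS+FMP=>SF}, collect the finite set of values of all $s^{j}(z_{i})$ occurring in $\phi$ (plus one extra iterate), pad the domain up to some $\bb(m)$ with $m\geq 2$, define $s^{\B}$ to agree with $s^{\A}$ on the ``interior'' values and to be fixed-point-free elsewhere, and conclude that $\B$ is a finite $\TneqBBone$-model of $\phi$. The only cosmetic difference is that the paper takes the least such $\bb(m)$ while you take any, which is immaterial for the finite model property.
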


\begin{proof}
    Because of \Cref{OS+FMP=>SF}, we just need to prove $\TneqBBone$ has the finite model property. We then take a 
    quantifier-free $\Sigmas$-formula $\phi$, and a $\TneqBBone$-interpretation $\A$ that satisfies $\phi$, and without loss of generality we assume that $\A$ has at least $4$ elements, and thus $s^{\A}$ is never the identity. If $x_{1}$ through $x_{n}$ are the variables in $\phi$, and $M_{i}$ is the maximum of the values $j$ such that $s^{j}(x_{i})$ appears in $\phi$, we define 
    \[A=\{s^{j}(x_{i}) : 1\leq i\leq n, 0\leq j\leq M_{i}\}^{\A},\quad A^{+}=\{s^{j}(x_{i}) : 1\leq i\leq n, 0\leq j\leq M_{i}+1\}^{\A}\]
    and $m=|A^{+}|$: then we take the least $M$ on the image of $\bb$ such that $M\geq m$; we also take a set $B$ with $M-m$ elements disjoint from $\s^{\A}$. We then define an interpretation $\B$ by making: $\s^{\B}=A^{+}\cup B$; $s^{\B}(a)=s^{\A}(a)$ for all $a\in A$, and $s^{\B}(a)$ any value different from 
    $a$, for each 
    $a\in B$ or $a\in A^{+}\setminus A$; $x^{\B}=x^{\A}$ for each variable $x$ of $\phi$, and arbitrarily otherwise. $\B$ has then $M$ (finite) number of elements, that is a value of $\bb$, and satisfies $\neg(s(x)=x)$ for all $x$, meaning it is a $\TneqBBone$-interpretation; furthermore, it satisfies $\phi$, and so we are done.
\end{proof}

\begin{lemma}
    The $\Sigmas$-theory $\TneqBBone$ is not finitely witnessable, and thus is not strongly finitely witnessable, with respect to $\{\s\}$.
\end{lemma}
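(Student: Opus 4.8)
The plan is to adapt, essentially verbatim, the argument proving that $\TBB$ is not finitely witnessable (\Cref{TBB is not FW}), which was already reused for $\TneqBB$. Suppose for contradiction that $\TneqBBone$ has a witness $\wit$; since a witness is by definition computable, it suffices to build from $\wit$ a computable function $f\colon\mathbb{N}\to\mathbb{N}$ with $f(n)\geq\bb(n)$ for every $n$, which contradicts the third item of \Cref{propertiesofbb}.

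First I would record two structural facts about $\TneqBBone$. One: for every $m$, the formula $\psi_{\geq m}$ is $\TneqBBone$-satisfiable, since $\TneqBBone$ has infinite interpretations. Two: every \emph{finite} $\TneqBBone$-interpretation $\A$ with $|\s^{\A}|>1$ has $|\s^{\A}|=\bb(j)$ for some $j\geq 2$ --- indeed, for $k$ large enough $\bb(k+2)>|\s^{\A}|$, so the disjunct $\bigvee_{i=2}^{k+2}\psi_{=\bb(i)}$ of the $k$-th axiom must hold in $\A$.

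Then I would define $f(0)=0$, $f(1)=1$, and $f(n+1)=|\vars_{\s}(\wit(\psi_{\geq f(n)+1}))|$ for $n\geq 1$; $f$ is computable because $\wit$ is and $\psi_{\geq\cdot}$ is effectively constructible. The heart of the proof is the induction $f(n)\geq\bb(n)$. The base cases are immediate. For the step, assume $f(n)\geq\bb(n)$; then $f(n)+1\geq\bb(n)+1\geq 2$ (as $n\geq 1$), so $\psi_{\geq f(n)+1}$, and hence $\wit(\psi_{\geq f(n)+1})$, is $\TneqBBone$-satisfiable. Thus there is a $\TneqBBone$-interpretation $\A$ satisfying $\wit(\psi_{\geq f(n)+1})$ with $\s^{\A}=\vars_{\s}(\wit(\psi_{\geq f(n)+1}))^{\A}$, so $f(n+1)\geq|\s^{\A}|$; and since $\A$ also satisfies $\Exists{\overarrow{w}}\wit(\psi_{\geq f(n)+1})$ and therefore $\psi_{\geq f(n)+1}$, we get $|\s^{\A}|\geq f(n)+1\geq 2$, so by the structural fact $|\s^{\A}|=\bb(j)$ for some $j\geq 2$. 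Finally $\bb(j)=|\s^{\A}|>f(n)\geq\bb(n)$ forces $j>n$ by monotonicity of $\bb$ (\Cref{propertiesofbb}), i.e. $j\geq n+1$, whence $f(n+1)\geq|\s^{\A}|=\bb(j)\geq\bb(n+1)$. A computable $f$ dominating $\bb$ contradicts \Cref{propertiesofbb}; the failure of strong finite witnessability then follows because it is a stronger property than finite witnessability.

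I do not anticipate a genuine obstacle: the only care needed is in the bookkeeping of the base case ($f(1)=1$ is what makes $f(n)+1\geq 2$ hold along the induction, so that the ``cardinality $1$'' disjunct of the axioms is excluded and the domains that appear are genuinely values of $\bb$), and in verifying that finite models of $\psi_{\geq m}$ in $\TneqBBone$ are forced to have cardinality in the image of $\bb$ --- both of which are straightforward.
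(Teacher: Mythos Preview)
Your proposal is correct and follows essentially the same argument as the paper: assume a witness, build a computable $f$ recursively from the number of variables of $\wit$ applied to a formula asserting at least $f(n)+1$ distinct elements, and show $f(n)\geq\bb(n)$ by induction. The paper's own proof is terser (it writes the distinctness conjunction $\bigwedge_{i}\bigwedge_{j}\neg(x_i=x_j)$ explicitly rather than using the shorthand $\psi_{\geq f(n)+1}$, which is technically defined with an outer existential), but the content and the key bookkeeping --- in particular setting $f(1)=1$ so that the singleton model is excluded throughout the induction --- are the same.
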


\begin{proof}
    This follows \Cref{TBB is not FW}: suppose we have a witness $\wit$, necessarily computable, and define a function $f:\mathbb{N}\rightarrow\mathbb{N}$ such that $f(0)=0$, $f(1)=1$ and, assuming $f(n)$ defined, 
    \[f(n+1)=|\vars_{\s}(\wit(\bigwedge_{i=1}^{f(n)}\bigwedge_{j=i+1}^{f(n)+1}\neg(x_{i}=x_{j})))|,\]
    which is clearly computable. The contradiction is that $f$ grows at least as fast as $\bb$, and by \Cref{propertiesofbb} cannot be computable.
\end{proof}

\begin{lemma}
    The $\Sigmas$-theory $\TneqBBone$ is convex with respect to $\{\s\}$.
\end{lemma}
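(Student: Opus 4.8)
The plan is to reduce convexity of $\TneqBBone$ to convexity of its ``body'' $\T_{0}$, the $\Sigmas$-theory axiomatized by $\ax(\TBB)\cup\{\Forall{x}\neg(s(x)=x)\}$, handling the unique one-element model of $\TneqBBone$ as a separate trivial case. The first step is to verify that the $\T_{0}$-interpretations are precisely the $\TneqBBone$-interpretations $\A$ with $|\s^{\A}|>1$: on one hand, every $\T_{0}$-interpretation has $s$ fixed-point-free and cardinality $\bb(k)$ for some $k\geq 2$ or infinite, hence at least $\bb(2)=4>1$ elements, and a short check (using that $\bb$ is increasing, Lemma~\ref{propertiesofbb}) shows it satisfies every axiom of $\TneqBBone$; on the other hand, a $\TneqBBone$-interpretation $\A$ with $|\s^{\A}|>1$ satisfies $\Forall{x}\neg(s(x)=x)$, and if $|\s^{\A}|$ is finite then choosing $k$ with $\bb(k+2)>|\s^{\A}|$ forces $|\s^{\A}|$ to be one of $\bb(2),\ldots,\bb(k+2)$, so $\A$ satisfies $\ax(\TBB)$. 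Since $\TBB$ is a stably infinite $\Sigmaone$-theory over an empty signature with no one-element interpretations, Lemma~\ref{Tneg is CV} applies to $\T_{0}=(\TBB)_{\neg}$ and yields that $\T_{0}$ is convex with respect to $\{\s\}$.

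Then, given a conjunction of $\Sigmas$-literals $\phi$ with $\dash_{\TneqBBone}\phi\rightarrow\bigvee_{i=1}^{n}x_{i}=y_{i}$, I would split into three cases. If $\phi$ is $\TneqBBone$-unsatisfiable, then $\dash_{\TneqBBone}\phi\rightarrow x_{1}=y_{1}$ holds vacuously. If $\phi$ is satisfiable but has no $\T_{0}$-model, then every $\TneqBBone$-interpretation satisfying it has a single element, in which all variables are interpreted alike, so again $\dash_{\TneqBBone}\phi\rightarrow x_{1}=y_{1}$. Otherwise some $\T_{0}$-interpretation satisfies $\phi$; since every $\T_{0}$-interpretation is a $\TneqBBone$-interpretation, $\dash_{\T_{0}}\phi\rightarrow\bigvee_{i=1}^{n}x_{i}=y_{i}$, and convexity of $\T_{0}$ gives $\dash_{\T_{0}}\phi\rightarrow x_{i}=y_{i}$ for some $i$; as every remaining $\TneqBBone$-interpretation is one-element and hence validates $x_{i}=y_{i}$ on its own, we conclude $\dash_{\TneqBBone}\phi\rightarrow x_{i}=y_{i}$, as required.

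The heart of the argument is the model identification $\T_{0}=\{\A:\A\text{ a }\TneqBBone\text{-interpretation},\ |\s^{\A}|>1\}$; once this is in hand the case analysis is essentially forced, since the one-element model never obstructs convexity --- it validates every equality between variables, hence validates each disjunct of any entailed disjunction by itself. This parallels the convexity argument for $\Tsupinfty$ (Lemma~\ref{Tsupinfty is CV}) but is simpler: being one-sorted, we need neither a downward cardinality argument nor any gluing of interpretations across sorts.
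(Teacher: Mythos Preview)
Your proof is correct and is in fact cleaner than the route the paper takes. The paper's proof simply says ``This is a very long and dull proof: however, an essentially equal proof may be found in \cite{BarTolZoh} when we prove that $\Tneqodd$ is convex,'' i.e., it defers to a direct, tedious syntactic argument carried out elsewhere. You instead factor the theory as a one-element model together with $(\TBB)_{\neg}$ and invoke \Cref{Tneg is CV}, a lemma already present in this paper (used there for $\Tsupinfty$). The hypotheses of that lemma do hold: $\TBB$ is a stably infinite $\Sigmaone$-theory with no one-element model (its smallest model has $\bb(2)=4$ elements), so $(\TBB)_{\neg}$ is convex. Your identification of the $(\TBB)_{\neg}$-interpretations with the $\TneqBBone$-interpretations of cardinality $>1$ is correct, and the residual case analysis is sound because the one-element model validates \emph{every} equality between variables and hence can never obstruct a single disjunct from being entailed.

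What your approach buys is a short, modular argument that reuses machinery the paper already develops; what the paper's deferred proof presumably buys is self-containment relative to \cite{BarTolZoh} (and independence from \Cref{Tneg is CV}, which was introduced only for the $\Sigmastwo$ theory $\Tsupinfty$). The authors seem not to have noticed that \Cref{Tneg is CV} specializes directly to this one-sorted situation, where, as you observe, the argument is strictly simpler than in \Cref{Tsupinfty is CV}: no downward L\"owenheim--Skolem step and no cross-sort gluing are needed.
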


\begin{proof}
    This is a very long and dull proof: however,  an essentially equal proof may be found in \cite{BarTolZoh} when we prove that $\Tneqodd$ is convex.
\end{proof}

\subsection{$\TssBB$}

The proofs that $\TneqBBone$ is smooth (and thus stably infinite) and stably finite (and that it therefore has the finite model property), but not finitely witnessable (and thus is not strongly finitely witnessable) are very similar to those of the correspondent properties of $\TsBB$, in \Cref{proof of TsBB}.

\begin{lemma}\label{TssBB is not CV}
    The $\Sigmas$-theory $\TssBB$ is not convex with respect to $\{\s\}$.
\end{lemma}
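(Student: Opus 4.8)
The strategy is to exhibit an explicit conjunction of literals $\phi$ and variables $x_1, x_2, z$ of sort $\s$ such that $\tdash_{\TssBB} \phi \rightarrow (x_1 = z) \vee (x_2 = z)$ but neither disjunct is $\TssBB$-valid. The natural candidate is to force the domain to have exactly two elements using the function $s$. Recall that $\TssBB = \TsBB \cup \{\psiv\}$, and in $\TsBB$ a model of cardinality $k+1$ has exactly $\qq(k+1)$ fixed points of $s$; since $\qq(2) = 1$, a two-element model of $\TssBB$ has exactly one fixed point, so $s$ is a constant function on it. I would take $\phi$ to be a formula asserting both $\neg(s(x_1) = x_1)$ (forcing at least one non-fixed-point, hence at least two elements), together with a literal like $s(x_1) = x_2 \wedge \neg(x_1 = x_2)$ pinning down two distinct named elements, plus enough structure to force the domain to be \emph{exactly} two elements. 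The cleanest way to cap the cardinality is to also assert $s(x_2) = x_2$ and $s(x_1) = x_2$: then $x_1$ is not a fixed point, $x_2$ is, and any third element $a$ would, by $\psiv$, satisfy $s(s(a)) \in \{s(a), a\}$, but we need a counting argument showing no model of $\TssBB$ of size $\geq 3$ can satisfy this configuration — in fact the issue is that $\TssBB$ has models of all finite sizes, so $\phi$ alone will not cap the size.

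Given that obstacle, the better route is to mimic the convexity-failure argument for similar theories in the paper (e.g. the proof that $\TneqBB$ and $\TmnBB$ are not convex): rather than capping the cardinality absolutely, one uses that the \emph{number of fixed points} is rigidly tied to the cardinality via $\qq$. Concretely, I would choose $\phi$ to assert that $x_1, x_2, z$ are pairwise related in a way forcing $s$ to have exactly one fixed point among the relevant elements while making the domain small. The key arithmetic fact is $\qq(k+1) \in \{\qq(k), \qq(k)+1\}$ and $\qq(1)=\qq(2)=1$, $\qq(3)=\qq(4)=\cdots$ eventually grows; so there \emph{are} finitely many sizes with exactly one fixed point, namely sizes $1, 2, 3$ (since $\qq(1)=\qq(2)=1$ and $\qq(3)=1$, but $\qq(4)=2$). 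So a formula forcing "exactly one fixed point of $s$, and at least two elements" confines models to sizes $2$ or $3$. Then with $3$ named variables $x_1, x_2, z$ of sort $\s$, asserting $\neg(x_1 = x_2)$, that $s(x_1) = x_1$ (the unique fixed point) and $\neg(s(x_2) = x_2)$, and a literal bounding fixed points (e.g. $\forall$-free encoding via $s(x)=x \to x = x_1$ is not quantifier-free, so instead assert $s(x_2) = s(z) = x_1$ which, combined with non-fixed-point-ness, keeps things small), one gets $\tdash_{\TssBB} \phi \rightarrow x_2 = z \vee \ldots$.

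The main obstacle, and where I would spend the most care, is pinning down the cardinality constraint with a \emph{quantifier-free} formula: convexity is stated for conjunctions of literals, so I cannot simply write $\psi_{\leq 3}$. I would resolve this by using $s$ cleverly — asserting equalities like $s(x_i) = x_j$ among the three variables so that the subgraph they span already forces $\qq$ of the domain size to be small, and simultaneously forces the domain to be nonempty with a non-fixed-point so the size is at least $2$. Concretely I expect the final $\phi$ to look like $\neg(s(x_1)=x_1) \wedge s(s(x_1)) = x_1 \wedge \neg(x_1 = x_2) \wedge \neg(x_1 = z) \wedge s(x_2) = s(x_2)$-type literals, arranged so that $\psiv$ plus $\TsBB$'s fixed-point count forces $|\s^{\A}| \in \{2,3\}$, whence by pigeonhole $\tdash_{\TssBB}\phi \to (x_1 = z_1) \vee (x_1 = z_2) \vee (z_1 = z_2)$ for three extra fresh variables, but no single equality is forced since one can always relabel within a two- or three-element model. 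Once the right $\phi$ is isolated, verifying both "the disjunction is valid" (pigeonhole on a bounded domain) and "no disjunct is valid" (exhibit two distinct small models realizing the complementary inequalities) is routine, paralleling the analogous lemmas elsewhere in the paper; I would likely remark that the detailed verification mirrors that for $\TsBB$ and $\TmnBB$.
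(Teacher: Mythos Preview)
Your plan takes a long detour and never lands on a concrete witness; more importantly, the route you are exploring cannot work. A conjunction of literals can only speak about the finitely many terms it mentions, so it cannot enforce ``exactly one fixed point of $s$ in the whole domain'' or any global cardinality bound. Since $\TssBB$ has models of every finite size (this is part of what makes $\TsBB$ smooth), any cube you write down will be satisfiable in arbitrarily large $\TssBB$-models, and the pigeonhole step you are aiming for simply does not go through. The various candidate $\phi$'s you sketch all run into this: adding literals like $s(x_2)=x_2$ or $s(z)=x_1$ constrains only the named points, not the rest of the domain, so nothing caps $|\s^{\A}|$.

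What you are missing is that $\psiv$ \emph{itself} is the source of non-convexity, and you should use it directly rather than via cardinality gymnastics. The axiom says $\forall x.\,[s(s(x))=x \vee s(s(x))=s(x)]$; so with the cube $\phi \equiv (y=s(x))\wedge(z=s(y))$ you immediately get $\vdash_{\TssBB}\phi\rightarrow(z=x)\vee(z=y)$, because $z$ is forced to equal $s(s(x))$. This is exactly the paper's argument. It then suffices to exhibit two four-element $\TssBB$-models (each with two fixed points, matching $\qq(4)=2$): one where $s$ on the non-fixed part collapses to a fixed point (so $z=y\neq x$), and one where $s$ swaps the two non-fixed elements (so $z=x\neq y$). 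That finishes the proof in a few lines, with no cardinality bounding needed.
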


\begin{proof}
    Because $\TssBB$ satisfies $\psiv$, given the cube $\phi$ equal to $(y=s(x))\wedge(z=s(y))$, we have that $\dash_{\TssBB}\phi\rightarrow (x=z)\vee(y=z)$. Take then the $\TssBB$-interpretations $\A$ and $\B$ with: $\s^{\A}=\{a, b, c, d\}$; $s^{\A}(a)=s^{\B}(a)=a$, $s^{\A}(b)=s^{\B}(b)=b$, $s^{\A}(c)=a$, $s^{\A}(d)=b$, $s^{\B}(c)=d$ and $s^{\B}(d)=c$; $x^{\A}=c$, $y^{\A}=z^{\A}=a$, $x^{\B}=z^{\B}=c$ and $y^{\B}=d$.

    Then both $\A$ and $\B$ satisfy $\phi$, but $\A$ does not satisfy $x=z$, while $\B$ does not satisfy $y=z$.
\end{proof}

\subsection{$\TtwoBB$}\label{proofs of TtwoBB}

\begin{lemma}\label{TtwoBB is SM}
    The $\Sigmastwo$-theory $\TtwoBB$ is smooth, and thus stably infinite, with respect to $\{\s, \s_{2}\}$.
\end{lemma}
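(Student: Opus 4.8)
The plan is to prove smoothness directly by an elementary domain‑enlargement construction, the one delicate point being that enlarging the sort $\s$ must not accidentally trigger the Busy‑Beaver lower bounds on $\s_{2}$; stable infiniteness will then follow immediately.

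First I would fix a quantifier‑free $\Sigmastwo$‑formula $\phi$, a $\TtwoBB$‑interpretation $\A$ satisfying $\phi$, and a map $\kappa$ from $\{\s,\s_{2}\}$ to cardinals with $\kappa(\s)\geq|\s^{\A}|$ and $\kappa(\s_{2})\geq|\s_{2}^{\A}|$. Let $N$ be the number of $a\in\s^{\A}$ with $s^{\A}(a)=a$. I would then choose sets $A$ and $B$, disjoint from one another and from the domains of $\A$, with $|\s^{\A}\cup A|=\kappa(\s)$ and $|\s_{2}^{\A}\cup B|=\kappa(\s_{2})$ (possible since $\kappa$ dominates the current sizes), fix some $a_{0}\in\s^{\A}$, and define $\B$ by: $\s^{\B}=\s^{\A}\cup A$; $\s_{2}^{\B}=\s_{2}^{\A}\cup B$; $s^{\B}(a)=s^{\A}(a)$ for $a\in\s^{\A}$ and $s^{\B}(a)=a_{0}$ for $a\in A$; and $x^{\B}=x^{\A}$ for every variable $x$.

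The key observation is that $s^{\B}$ has exactly $N$ fixed points: on $\s^{\A}$ it agrees with $s^{\A}$, and no $a\in A$ is fixed, since $s^{\B}(a)=a_{0}\in\s^{\A}$ while $a\notin\s^{\A}$. Hence for each $k\in\mathbb{N}$, $\B$ satisfies $\psi^{=}_{\geq k+2}$ if and only if $\A$ does; and in that case $\A$, being a $\TtwoBB$‑interpretation, satisfies $\psi^{\s_{2}}_{\geq\bb(k+2)}$, so $|\s_{2}^{\A}|\geq\bb(k+2)$, whence $|\s_{2}^{\B}|=\kappa(\s_{2})\geq|\s_{2}^{\A}|\geq\bb(k+2)$ and $\B$ satisfies $\psi^{\s_{2}}_{\geq\bb(k+2)}$. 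Thus $\B$ satisfies every axiom of $\TtwoBB$. Moreover $\B\models\phi$: since $\Sigmastwo$ has only the function $s\colon\s\to\s$, every term of $\phi$ is either $s^{j}(x)$ for a variable $x$ of sort $\s$ or a variable $u$ of sort $\s_{2}$, and because $\s^{\A}$ is closed under $s^{\A}$ one gets $(s^{j}(x))^{\B}=(s^{\A})^{j}(x^{\A})=(s^{j}(x))^{\A}$ and $u^{\B}=u^{\A}$, so $\A$ and $\B$ agree on all atomic subformulas of $\phi$. Finally $|\s^{\B}|=\kappa(\s)$ and $|\s_{2}^{\B}|=\kappa(\s_{2})$ by construction, so $\B$ witnesses smoothness. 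Stable infiniteness then follows by applying this to any $\TtwoBB$‑satisfiable quantifier‑free formula, with $\kappa(\s)$ and $\kappa(\s_{2})$ taken to be an infinite cardinal at least as large as the sizes of the two sorts in some satisfying interpretation.

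The main obstacle — indeed the only real content — is recognizing that one must not create new fixed points of $s$ when enlarging $\s$: adding fixed points could raise the fixed‑point count beyond $\A$'s, which via the Busy‑Beaver axioms would demand $|\s_{2}|\geq\bb(k+2)$ for a larger $k$ than $\A$ already satisfies, possibly exceeding $\kappa(\s_{2})$. Routing all new $\s$‑elements to the single value $a_{0}$ keeps the fixed‑point count fixed at $N$, after which every remaining verification is inherited from $\A$ being a model.
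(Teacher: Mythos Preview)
Your proof is correct and follows essentially the same approach as the paper: enlarge both domains, extend $s$ on the new $\s$-elements so as to create no new fixed points, and conclude that the Busy-Beaver axioms still hold because the fixed-point count is unchanged while $|\s_{2}^{\B}|\geq|\s_{2}^{\A}|$. The only cosmetic difference is that the paper lets $s^{\B}(a)$ be ``anything but $a$'' for $a\in A$, whereas you route every new element to a single fixed $a_{0}\in\s^{\A}$; your choice is just a concrete instance of theirs, and your justification that $\B\models\phi$ is actually more detailed than the paper's.
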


\begin{proof}
    Take a quantifier-free $\Sigmastwo$-formula $\phi$, a $\TtwoBB$-interpretation $\A$ that satisfies $\phi$, and cardinals $\kappa(\s)\geq|\s^{\A}|$ and $\kappa(\s_{2})\geq |\s_{2}^{\A}|$; take as well sets $A$ and $B$ with, respectively, $\kappa(\s)-|\s^{\A}|$ and $\kappa(\s_{2})-|\s_{2}^{\A}|$ elements, disjoint from each other and the domains of $\A$.

    We the define the interpretation $\B$ as follows: $\s^{\B}=\s^{\A}\cup A$ and $\s_{2}^{\B}=\s_{2}^{\A}\cup B$ (so $|\s^{\B}|\geq |\s^{\A}|$); $s^{\B}(a)=s^{\A}(a)$ for each $a\in \s^{\A}$, and $\s^{\B}(a)$ is anything but $a$ for $a\in A$ (this way, the number $k+1$ of elements in $\s^{\B}$ satisfying $s^{\B}(a)=a$ is the same as the number of elements in $\s^{\A}$ satisfying $s^{\A}(a)=a$; since $|\s^{\A}|\geq \bb(k+1)$, $|\s^{\B}|\geq \bb(k+1)$, and thus $\B$ is a $\TtwoBB$-interpretation); and $x^{\B}=x^{\A}$ for every variable $x$. This way $\B$ satisfies $\phi$, $|\s^{\B}|=\kappa(\s)$ and $|\s_{2}^{\B}|=\kappa(\s_{2})$.
\end{proof}

\begin{lemma}\label{TtwoBB is not FW}
    The $\Sigmastwo$-theory $\TtwoBB$ is not finitely witnessable, and thus not strongly finitely witnessable, with respect to $\{\s, \s_{2}\}$.
\end{lemma}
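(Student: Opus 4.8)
The plan is to adapt the fast-growing-witness argument of \Cref{TBB is not FW}: assuming $\TtwoBB$ has a witness $\wit$ (which by definition is computable), I would extract from it a computable function that dominates $\bb$ everywhere, contradicting \Cref{propertiesofbb}. The point exploited is that in $\TtwoBB$ the domain of sort $\s_{2}$ is forced to be large (of order $\bb$) as soon as $s$ has many fixed points on the domain of sort $\s$, and the number of fixed points can be pinned down by a small quantifier-free formula.

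Concretely, for each $n\geq 2$ I would take fresh variables $x_{1},\dots,x_{n}$ of sort $\s$ and set
\[\phi_{n}=\bigwedge_{i=1}^{n}\big(s(x_{i})=x_{i}\big)\wedge\bigwedge_{1\leq i<j\leq n}\neg(x_{i}=x_{j}).\]
First I would check that $\phi_{n}$ is $\TtwoBB$-satisfiable, e.g.\ in the interpretation with $n$ elements in $\s$ all fixed by $s$ and $\omega$ elements in $\s_{2}$, where every axiom $\psi^{=}_{\geq k+2}\rightarrow\psi^{\s_{2}}_{\geq\bb(k+2)}$ holds (it holds because $|\s_{2}|\geq\bb(k+2)$ whenever $k+2\leq n$, and vacuously otherwise). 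Since $\phi_{n}$ is $\TtwoBB$-equivalent to $\Exists{\overarrow{w}}\wit(\phi_{n})$ with $\overarrow{w}=\vars(\wit(\phi_{n}))\setminus\vars(\phi_{n})$, the formula $\wit(\phi_{n})$ is $\TtwoBB$-satisfiable, so condition $(ii)$ of finite witnessability yields a $\TtwoBB$-interpretation $\A$ with $\A\vDash\wit(\phi_{n})$ and $\s_{2}^{\A}=\vars_{\s_{2}}(\wit(\phi_{n}))^{\A}$. From $\A\vDash\wit(\phi_{n})$ we get $\A\vDash\Exists{\overarrow{w}}\wit(\phi_{n})$, hence $\A\vDash\phi_{n}$; so $s^{\A}$ has at least $n$ fixed points, i.e.\ $\A\vDash\psi^{=}_{\geq n}$, and the axiom instance with $k=n-2$ forces $|\s_{2}^{\A}|\geq\bb(n)$. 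Therefore $|\vars_{\s_{2}}(\wit(\phi_{n}))|\geq|\s_{2}^{\A}|\geq\bb(n)$.

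Finally, I would define $f\colon\mathbb{N}\to\mathbb{N}$ by $f(0)=0$, $f(1)=1$, and $f(n)=|\vars_{\s_{2}}(\wit(\phi_{n}))|$ for $n\geq 2$; it is computable because $n\mapsto\phi_{n}$ is computable and $\wit$ is, yet the previous paragraph shows $f(n)\geq\bb(n)$ for all $n$, contradicting \Cref{propertiesofbb}, since no computable function eventually dominates $\bb$. Hence $\TtwoBB$ is not finitely witnessable, and since strong finite witnessability implies finite witnessability, it is not strongly finitely witnessable either. I expect the only delicate point to be the bookkeeping around the witness conditions --- in particular deducing $\A\vDash\phi_{n}$ from $\A\vDash\wit(\phi_{n})$ through the $\TtwoBB$-equivalence, and confirming $\TtwoBB$-satisfiability of $\phi_{n}$ --- while the rest is routine and parallels the $\TBB$ argument.
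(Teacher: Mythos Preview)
Your proposal is correct and follows essentially the same approach as the paper's proof: both construct the computable function $f(n)=|\vars_{\s_{2}}(\wit(\phi_{n}))|$ from a hypothetical witness, where $\phi_{n}$ asserts the existence of $n$ distinct fixed points of $s$, and then show $f(n)\geq\bb(n)$ via the axiom forcing $|\s_{2}|\geq\bb(n)$, contradicting \Cref{propertiesofbb}. Your write-up is in fact cleaner than the paper's, which contains a few sort subscript typos in the final chain of inequalities.
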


\begin{proof}
    Suppose we have a witness $\wit$. We define a function $f:\mathbb{N}\rightarrow\mathbb{N}$ by making $f(0)=0$, $f(1)=1$, and assuming $f(n)$ defined, 
    \[f(n+1)=|\vars_{\s_{2}}(\wit(\bigwedge_{i=1}^{n+1}(s(x_{i})=x_{i})\wedge\bigwedge_{i=1}^{n}\bigwedge_{j=i+1}^{n+1}\neg(x_{i}=x_{j})))|,\]
    which is computable: after all, producing the formula inside the witness is easily seen to be computable, as well as is producing its
    witness (given $\wit$ is computable), its set of variables and the cardinality of that set. So $f$ is a computable function.

    However, since $\delta_{n+1}=\bigwedge_{i=1}^{n+1}(s(x_{i})=x_{i})\wedge\bigwedge_{i=1}^{n}\bigwedge_{j=i+1}^{n+1}\neg(x_{i}=x_{j}$ is only satisfied in a $\TtwoBB$-interpretation $\A$ that satisfies 
    $\psi^{=}_{\geq n+1}$, $\wit(\delta_{n+1})$ is only satisfied in a $\TtwoBB$-interpretation $\A^{\prime}$ with $|\s^{\A^{\prime}}|\geq\bb(n+1)$. Because there must exist a $\TtwoBB$-interpretation $\B$ that satisfies 
    $\wit(\delta_{n+1})$ with $\s^{\B}=\vars_{\s}(\wit(\delta_{n+1}))^{\B}$, we obtain that $f(n+1)=|\vars_{\s}(\wit(\delta_{n+1}))|\geq |\s^{\B}|\geq\bb(n+1)$, and by 
    \Cref{propertiesofbb} $f$ cannot be computable, leading to a contradiction.
\end{proof}

\begin{lemma}\label{satisfiability in TtwoBB}
    Let $\T$ be the $\Sigmastwo$-theory of uninterpreted functions: then a quantifier-free $\Sigmastwo$-formula $\phi$ is $\TtwoBB$-satisfiable if it is $\T$-satisfiable.
\end{lemma}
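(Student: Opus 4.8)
The plan is to follow the same overall strategy as in the proof of \Cref{satisfiability in TsBB}: start from an arbitrary $\T$-interpretation satisfying $\phi$ and enlarge one of its domains so that it becomes a $\TtwoBB$-interpretation while still satisfying $\phi$. The crucial simplification compared to the $\TsBB$ case is that every axiom of $\TtwoBB$ is an implication whose consequent is a \emph{lower} cardinality bound on $\s_{2}$, and none of the axioms bounds any domain from above. Hence, instead of carefully controlling the number of fixed points of $s$ (as was needed for $\TsBB$), it suffices to blow up $\s_{2}$ to an infinite cardinality, which makes every such consequent true and thus satisfies the whole axiomatization automatically.

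Concretely, first I would take a $\T$-interpretation $\A$ with $\A\vDash\phi$ (which exists by hypothesis), fix a countably infinite set $C$ disjoint from $\s_{2}^{\A}$, and define a $\Sigmastwo$-interpretation $\B$ by $\s^{\B}=\s^{\A}$, $s^{\B}=s^{\A}$, $\s_{2}^{\B}=\s_{2}^{\A}\cup C$, and $x^{\B}=x^{\A}$ for every variable $x$ of either sort. This is a legitimate $\Sigmastwo$-structure, since the only function symbol of $\Sigmastwo$ has arity $\s\rightarrow\s$, so nothing must be defined on the new elements of $\s_{2}^{\B}$. Next I would argue $\B\vDash\phi$: as $\phi$ is quantifier-free, its truth value in $\B$ is determined by the truth values of its atomic subformulas, which have the form $s^{i}(x)=s^{j}(y)$ with $x,y$ of sort $\s$, or $u=v$ with $u,v$ of sort $\s_{2}$. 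In the first case the two terms are evaluated entirely inside $\s^{\B}=\s^{\A}$ using $s^{\B}=s^{\A}$, so they take exactly the values they have in $\A$; in the second case $u^{\B}=u^{\A}$ and $v^{\B}=v^{\A}$. Thus $\A$ and $\B$ agree on every atomic subformula of $\phi$, whence $\B\vDash\phi$.

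Finally I would check that $\B$ is a $\TtwoBB$-interpretation, i.e.\ that $\B\vDash\psi^{=}_{\geq k+2}\rightarrow\psi^{\s_{2}}_{\geq\bb(k+2)}$ for every $k\in\mathbb{N}$. Since $\s_{2}^{\B}\supseteq C$ is infinite, $\B$ satisfies $\psi^{\s_{2}}_{\geq n}$ for every $n\in\mathbb{N}$, in particular for $n=\bb(k+2)$, so each implication in $\ax(\TtwoBB)$ has a true consequent and is satisfied; therefore $\B$ witnesses the $\TtwoBB$-satisfiability of $\phi$. The only step that takes any care is the preservation of $\phi$ under the extension, and even there the argument is routine; there is no genuine obstacle here, in contrast with the $\TsBB$ analogue where the fixed-point count had to be matched exactly and a more delicate construction was required.
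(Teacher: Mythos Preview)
Your proposal is correct and follows essentially the same approach as the paper, which simply refers the reader to the analogous argument for $\TsBB$: enlarge a domain so that the resulting interpretation satisfies the axioms of $\TtwoBB$ while preserving the truth of the quantifier-free formula. Your version is in fact slightly more direct, since you blow up $\s_{2}$ immediately rather than first invoking stable infiniteness of the theory of uninterpreted functions; this shortcut is harmless here because the only constraints in $\ax(\TtwoBB)$ are lower bounds on $|\s_{2}^{\B}|$, and an infinite $\s_{2}^{\B}$ vacuously satisfies all of them.
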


\begin{proof}
    See \Cref{satisfiability in TsBB} for a similar proof.
\end{proof}

\begin{lemma}\label{TtwoBB is CV}
    The $\Sigmastwo$-theory $\TtwoBB$ is convex with respect to $\{\s, \s_{2}\}$.
\end{lemma}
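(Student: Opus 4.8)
The plan is to mimic the argument for \Cref{TsBB is CV}: reduce the convexity of $\TtwoBB$ to the convexity of the $\Sigmastwo$-theory $\T$ of uninterpreted functions (using \Cref{satisfiability in TtwoBB}), and then derive that two-sorted convexity from the one-sorted statement \Cref{uninterpretedfunctionsis convex}. So I would start by assuming, for contradiction, that $\TtwoBB$ is not convex: there is a conjunction of $\Sigmastwo$-literals $\phi$ together with variables $z_{1},w_{1},\ldots,z_{n},w_{n}$ of sorts in $\{\s,\s_{2}\}$ such that $\dash_{\TtwoBB}\phi\rightarrow\bigvee_{i=1}^{n}z_{i}=w_{i}$ but $\not\dash_{\TtwoBB}\phi\rightarrow z_{i}=w_{i}$ for every $i$; in particular each $\phi\wedge\neg(z_{i}=w_{i})$ is $\TtwoBB$-satisfiable, hence $\T$-satisfiable. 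If $\phi\wedge\bigwedge_{i}\neg(z_{i}=w_{i})$ were $\T$-satisfiable, \Cref{satisfiability in TtwoBB} would make it $\TtwoBB$-satisfiable, contradicting the entailment; hence $\tdash\phi\rightarrow\bigvee_{i=1}^{n}z_{i}=w_{i}$. It therefore suffices to show that $\T$ is convex, which contradicts $\not\dash_{\TtwoBB}\phi\rightarrow z_{i}=w_{i}$ for all $i$.

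For the convexity of $\T$, I would exploit that in a $\T$-interpretation the $\s$-component (an arbitrary set with an arbitrary unary function) and the $\s_{2}$-component (an arbitrary set) are completely independent, and that the literals of $\phi$ split by sort: write $\phi=\phi_{\s}\wedge\phi_{\s_{2}}$, where $\phi_{\s}$ consists of the literals built from terms $s^{k}(x)$ with $x$ of sort $\s$ and $\phi_{\s_{2}}$ of the literals among $\s_{2}$-variables, and split $\{1,\ldots,n\}$ into $I_{\s}$ and $I_{\s_{2}}$ according to the sort of $z_{i},w_{i}$. If neither $\tdash\phi_{\s}\rightarrow\bigvee_{i\in I_{\s}}z_{i}=w_{i}$ nor $\tdash\phi_{\s_{2}}\rightarrow\bigvee_{i\in I_{\s_{2}}}z_{i}=w_{i}$ held, combining a $\T$-model of $\phi_{\s}\wedge\bigwedge_{i\in I_{\s}}\neg(z_{i}=w_{i})$ with a $\T$-model of $\phi_{\s_{2}}\wedge\bigwedge_{i\in I_{\s_{2}}}\neg(z_{i}=w_{i})$ (taking the $\s$-part of one and the $\s_{2}$-part of the other) would give a $\T$-model of $\phi$ falsifying every disjunct, a contradiction. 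So one of the two entailments holds; passing between $\T$ and the one-sorted theory of the uninterpreted function $s$ (by forgetting or freely adjoining an $\s_{2}$-component), \Cref{uninterpretedfunctionsis convex} applies to $\phi_{\s}$ and yields some $i\in I_{\s}$ with $\tdash\phi_{\s}\rightarrow z_{i}=w_{i}$, while in the other case $\phi_{\s_{2}}$ lies over the pure theory of equality on $\s_{2}$, which is convex (a degenerate case of the same reasoning), again producing a single implied disjunct. Either way $\tdash\phi\rightarrow z_{i}=w_{i}$, contradicting $\not\dash_{\TtwoBB}\phi\rightarrow z_{i}=w_{i}$.

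The main obstacle is the two-sorted step: \Cref{uninterpretedfunctionsis convex} is available only for a one-sorted signature with a single unary function, so I cannot apply it directly to $\Sigmastwo$, and the ``decompose $\phi$ by sort and recombine partial models'' argument above is what bridges the gap. Its legitimacy rests on two easy but necessary facts: the literals of a $\Sigmastwo$-formula partition cleanly by sort (equalities relate terms of a common sort), and the two sorts of $\T$ are genuinely independent, so an arbitrary $\s$-component and an arbitrary $\s_{2}$-component assemble into a $\T$-interpretation. The remaining pieces --- the reduction through \Cref{satisfiability in TtwoBB} and convexity of the pure equality theory --- are routine.
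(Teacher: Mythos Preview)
Your proposal is correct and follows essentially the same approach as the paper: reduce convexity of $\TtwoBB$ to convexity of the $\Sigmastwo$-theory of uninterpreted functions via \Cref{satisfiability in TtwoBB}, then invoke \Cref{uninterpretedfunctionsis convex}, exactly as in the proof of \Cref{TsBB is CV}. The paper merely asserts that the passage from the one-sorted result \Cref{uninterpretedfunctionsis convex} to convexity of the two-sorted theory is ``easy to prove,'' whereas you spell out that bridge via the sort-by-sort decomposition and model recombination --- which is precisely the argument one would expect there.
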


\begin{proof}
    Given \Cref{uninterpretedfunctionsis convex}, it is easy to prove that the $\Sigmastwo$-theory of uninterpreted functions is also convex; by applying \Cref{satisfiability in TtwoBB}, we get that $\TtwoBB$ is convex much like in the proof of \Cref{TsBB is CV}.
\end{proof}

\begin{lemma}\label{TtwoBB is SF}
    The $\Sigmastwo$-theory $\TtwoBB$ is stably finite, and thus has the finite model property, with respect to $\{\s, \s_{2}\}$.
\end{lemma}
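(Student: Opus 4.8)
The plan is to obtain the finite model property for free from stable finiteness via \Cref{SF=>FMP}, and to prove stable finiteness by a direct finite-sub-interpretation construction. Given a quantifier-free $\Sigmastwo$-formula $\phi$ and a $\TtwoBB$-interpretation $\A\models\phi$, the goal is to build a finite $\TtwoBB$-interpretation $\B\models\phi$ with $|\s^{\B}|\le|\s^{\A}|$ and $|\s_{2}^{\B}|\le|\s_{2}^{\A}|$. After the harmless normalization of assuming $\phi$ contains a variable of each sort (conjoining trivial literals otherwise) and that $\A$ is not already finite in both sorts, the first step is to isolate the finitely many elements of $\s^{\A}$ on which the truth of $\phi$ depends: writing $\vars_{\s}(\phi)=\{x_{1},\dots,x_{n}\}$ and letting $M_{i}$ be the greatest $j$ with $s^{j}(x_{i})$ occurring in $\phi$, I would take $\s^{\B}=\alpha^{+}:=\{(s^{\A})^{j}(x_{i}^{\A}):1\le i\le n,\ 0\le j\le M_{i}+1\}$, define $s^{\B}$ to agree with $s^{\A}$ on the ``interior'' $\alpha:=\{(s^{\A})^{j}(x_{i}^{\A}):j\le M_{i}\}$ (which $s^{\A}$ maps into $\alpha^{+}$), and redirect each element of $\alpha^{+}\setminus\alpha$ to some other element of $\s^{\B}$ (this layer is empty unless $|\s^{\B}|\ge 2$, so such a target always exists). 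A routine induction gives $(s^{\B})^{j}(x_{i}^{\A})=(s^{\A})^{j}(x_{i}^{\A})$ for $j\le M_{i}$, so putting $x_{i}^{\B}:=x_{i}^{\A}$ preserves every $\s$-atom of $\phi$, and by construction the fixed points of $s^{\B}$ are exactly those fixed points of $s^{\A}$ that lie in $\alpha$; call their number $r$.

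The second step handles the second sort. Let $f$ be the number of fixed points of $s^{\A}$ in $\s^{\A}$ (a cardinal, possibly infinite), so that $r\le f$. Since $\A$ satisfies the axioms $\psi^{=}_{\geq k+2}\to\psi^{\s_{2}}_{\geq\bb(k+2)}$ of $\TtwoBB$, if $f\ge 2$ then $|\s_{2}^{\A}|\ge\bb(f)$ (instantiate $k=f-2$ when $f$ is finite; when $f$ is infinite, $\A$ satisfies $\psi^{=}_{\geq m}$ for all $m$, so $\s_{2}^{\A}$ is infinite since $\bb$ is unbounded by \Cref{propertiesofbb}). I would then pick $\s_{2}^{\B}$ of cardinality $\max\{\,|\vars_{\s_{2}}(\phi)^{\A}|,\ 1,\ \bb(r)\text{ if }r\ge 2\,\}$, fix an injection of $\vars_{\s_{2}}(\phi)^{\A}$ into it, and interpret the $\s_{2}$-variables of $\phi$ through this injection; this preserves every $\s_{2}$-atom of $\phi$, hence $\B\models\phi$.

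The last step is verification. The domain $\s^{\B}=\alpha^{+}$ is a finite subset of $\s^{\A}$, so $|\s^{\B}|\le|\s^{\A}|$; each term in the maximum defining $|\s_{2}^{\B}|$ is $\le|\s_{2}^{\A}|$ --- the first since $\vars_{\s_{2}}(\phi)^{\A}\subseteq\s_{2}^{\A}$, the second trivially, and the third because $r\le f$, $f\ge 2$, and $\bb$ is increasing, so $\bb(r)\le\bb(f)\le|\s_{2}^{\A}|$ --- hence $\s_{2}^{\B}$ is finite and $|\s_{2}^{\B}|\le|\s_{2}^{\A}|$. Finally $\B$ is a $\TtwoBB$-interpretation: if $\B\models\psi^{=}_{\geq k+2}$ then $r\ge k+2\ge 2$, and therefore $|\s_{2}^{\B}|\ge\bb(r)\ge\bb(k+2)$ because $\bb$ is increasing. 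I expect the only real obstacle to be the coupling of the two sorts in the second step: one must notice that passing to the finite term-closure can only shrink the set of fixed points, and that $\A$'s own satisfaction of the Busy-Beaver axioms already forces $|\s_{2}^{\A}|$ above $\bb$ of the fixed-point count, which is exactly what allows the necessarily finite $\s_{2}^{\B}$ to sit below $|\s_{2}^{\A}|$ while still clearing the axiomatic lower bound $\bb(r)$. The degenerate cases ($\phi$ without $\s$-variables, or $|\alpha^{+}|=1$) need only a quick separate remark and cause no trouble.
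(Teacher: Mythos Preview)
Your proof is correct and follows essentially the same construction as the paper's: take $\s^{\B}$ to be the finite term-closure $\alpha^{+}$ with $s^{\B}$ inherited on the interior and made fixed-point-free on the boundary, and take $\s_{2}^{\B}$ large enough to clear the $\bb$-threshold imposed by the surviving fixed-point count $r$ while staying below $|\s_{2}^{\A}|$. Your treatment of the edge case $r<2$ (dropping the $\bb(r)$ term from the maximum) is in fact cleaner than the paper's, which pads the fixed-point count up to $m=\max\{2,r\}$ and then needs $\bb(2)\le|\s_{2}^{\A}|$ even in situations where the axioms do not guarantee it.
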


\begin{proof}
    Take a quantifier-free $\Sigmastwo$-formula $\phi$, and a $\TtwoBB$-interpretation $\A$ that satisfies
    $\phi$. Let $\vars_{\s}(\phi)=\{x_{1}, \ldots , x_{n}\}$ and, for each $i\in[1,n]$, let $M_{i}$ be the maximum of $j$ such that $s^{j}(x_{i})$ appears in $\phi$. We also define
    \[A=\{s^{j}(x_{i}) : 1\leq i\leq n, 0\leq j\leq M_{i}\}^{\A}\quad\text{and}\quad A^{+}=\{s^{j}(x_{i}) : 1\leq i\leq n, 0\leq j\leq M_{i}+1\}^{\A}.\]
    Now, let $m$ be the maximum between $2$ and the number of elements $a$ in $A$ that satisfy $s^{\A}(a)=a$: notice it is equal to or less than the number of 
    elements $a$ in $\s^{\A}$ that satisfy $s^{\A}(a)=a$. Take as well a set $B$ with $\max\{0, \bb(m)-|\vars_{\s}(\phi)^{\A}|\}$ elements.

    We define an interpretation $\B$ such that: $\s^{\B}=\vars_{\s}(\phi)^{\A}\cup B$ and $\s_{2}^{\B}=A^{+}$ (so $|\s^{\B}|=\bb(m)$); $s^{\B}(a)=s^{\A}(a)$ for each $a\in A$, and $s^{\B}(a)$
    is anything but $a$ for $a\in A^{+}\setminus A$ (so the number of elements $a\in \s^{\B}$ satisfying 
    $s^{\B}(a)=a$ is $m$, making of $\B$ a $\TtwoBB$-interpretation); and $x^{\B}=x^{\A}$ for any variable $x$ in $\phi$, and arbitrary otherwise. 
    
    This way $\B$ satisfies $\phi$, and has both $\s^{\B}$ and $\s_{2}^{\B}$ finite. Furthermore, $\s_{2}^{\B}\subseteq \s_{2}^{\A}$, so $|\s_{2}^{\B}|\leq|\s_{2}^{\A}|$, and since the number of elements 
    $a\in\s^{\B}$ satisfying $s^{\B}(a)=a$ is $m$, equal to or less than the number $m^{\prime}$ of elements 
    $a\in\s^{\A}$ satisfying $s^{\A}(a)=a$, we have that $|\s^{\B}|=\bb(m)\leq \bb(m^{\prime})\leq |\s^{\A}|$, finishing the proof.
    
\end{proof}

\subsection{$\TBBtwotwo$}

All proofs for $\TBBtwotwo$ are very similar to those of $\TBBtwo$ in \Cref{proofs of TBBtwo}, with the exception of the proof that $\TBBtwotwo$ is not stably finite.

\begin{lemma}\label{TBBtwotwo is not SF}
    The $\Sigmastwo$-theory $\TBBtwotwo$ is not stably finite with respect to $\{\s, \s_{2}\}$.
\end{lemma}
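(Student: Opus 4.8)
The plan is to exhibit a single quantifier-free $\Sigmastwo$-formula that is satisfied by a $\TBBtwotwo$-interpretation with a domain of sort $\s$ of size $1$, and then observe that no $\TBBtwotwo$-interpretation with $|\s^{\A}|=1$ can have a finite domain of sort $\s_{2}$. Recall from the description of $\TBBtwotwo$ (it is $\TtwoBB$ together with extra models $\A$ where $|\s^{\A}|=1$ and $|\s_{2}^{\A}|\geq\omega$) that the only interpretations with $|\s^{\A}|=1$ are exactly these extra ones, and they all have $\s_{2}^{\A}$ infinite; every other $\TBBtwotwo$-interpretation has $|\s^{\A}|\geq\bb(k+2)\geq\bb(2)=4>1$ whenever its sort-$\s$ domain is finite, and if $\s^{\A}$ is infinite then in particular $|\s^{\A}|>1$ as well.

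First I would take $x$ to be a variable of sort $\s$ and consider the trivially satisfiable formula $x=x$. There is a $\TBBtwotwo$-interpretation $\A$ with $|\s^{\A}|=1$ and $|\s_{2}^{\A}|=\omega$ (one of the added models), and $\A\vDash x=x$. For stable finiteness we would need a $\TBBtwotwo$-interpretation $\B$ with $\B\vDash x=x$, with $|\s^{\B}|$ and $|\s_{2}^{\B}|$ both finite, and with $|\s^{\B}|\leq|\s^{\A}|=1$ and $|\s_{2}^{\B}|\leq|\s_{2}^{\A}|=\omega$. The constraint $|\s^{\B}|\leq 1$ forces $|\s^{\B}|=1$, hence $\B$ is one of the added models, hence $|\s_{2}^{\B}|\geq\omega$ is infinite, contradicting the requirement that $|\s_{2}^{\B}|$ be finite. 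Therefore no such $\B$ exists, and $\TBBtwotwo$ is not stably finite with respect to $\{\s,\s_{2}\}$.

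This mirrors exactly the argument given for $\TBBtwo$ in \Cref{proofs of TBBtwo} and for $\Tupinfty$ in \Cref{Tupinfty is not SF}; the only point needing care is the justification that $|\s^{\A}|=1$ in a $\TBBtwotwo$-interpretation forces $\s_{2}^{\A}$ infinite, which follows directly from the axiomatization: taking $k=0$ in the axiom, the disjunct $\psi^{\s}_{=1}\wedge\psi^{\s_{2}}_{\geq 0}$ is vacuous on its second conjunct, but taking larger $k$ the disjunct $\psi^{\s}_{=1}\wedge\psi^{\s_{2}}_{\geq k}$ must hold for every $k$ (since the other disjunct $\psi^{=}_{\geq k+2}\rightarrow\psi^{\s_{2}}_{\geq\bb(k+2)}$ would, together with $|\s^{\A}|=1$, also force $\psi^{\s_{2}}_{\geq \bb(k+2)}$ unless $\psi^{=}_{\geq k+2}$ fails — and with only one element the number of fixed points is at most $1<k+2$ for $k\geq 0$, so actually the implication is vacuously true). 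Hence I would argue slightly more carefully: when $|\s^{\A}|=1$ the second disjunct of each axiom is vacuously satisfied, so the axioms impose no lower bound on $|\s_{2}^{\A}|$ via that route; instead, the fact that $\TBBtwotwo$'s models with $|\s^{\A}|=1$ all have $\s_{2}^{\A}$ infinite is part of the theory's stated definition (the ``extra models'' clause), which is the fact I will invoke. The main obstacle, then, is purely expository: ensuring the reader sees that $|\s^{\B}|\leq 1$ rules out the finite-$\s_{2}$ models of $\TtwoBB$ and leaves only the infinite-$\s_{2}$ added models. The proof itself is short.

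\begin{proof}
    Consider the $\Sigmastwo$-formula $x=x$, where $x$ is a variable of sort $\s$; it is satisfied by the $\TBBtwotwo$-interpretation $\A$ with $|\s^{\A}|=1$ and $|\s_{2}^{\A}|=\omega$. Suppose, for contradiction, that there is a $\TBBtwotwo$-interpretation $\B$ that satisfies $x=x$ with $|\s^{\B}|$ and $|\s_{2}^{\B}|$ both finite, $|\s^{\B}|\leq|\s^{\A}|=1$ and $|\s_{2}^{\B}|\leq|\s_{2}^{\A}|$. Then $|\s^{\B}|=1$, so $\B$ is one of the models of $\TBBtwotwo$ with a one-element domain of sort $\s$; by the definition of $\TBBtwotwo$, every such model has $|\s_{2}^{\B}|\geq\omega$, contradicting the assumption that $|\s_{2}^{\B}|$ is finite. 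Hence no such $\B$ exists, and $\TBBtwotwo$ is not stably finite with respect to $\{\s, \s_{2}\}$.
\end{proof}
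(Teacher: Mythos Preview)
Your proof is essentially the same as the paper's: both pick an interpretation $\A$ with $|\s^{\A}|=1$ and $|\s_{2}^{\A}|=\omega$ and observe that no $\TBBtwotwo$-interpretation $\B$ with $|\s^{\B}|=1$ has $\s_{2}^{\B}$ finite. The paper's proof is a one-liner invoking exactly this fact; your version just spells out the formula $x=x$ and the contradiction explicitly.

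One remark: the concern you raise in your preamble is more serious than you let on. You correctly observe that, from the axiomatization in \Cref{tab-theories-sigma-s} alone, when $|\s^{\A}|=1$ the disjunct $\psi^{=}_{\geq k+2}\rightarrow\psi^{\s_{2}}_{\geq\bb(k+2)}$ is vacuously true for every $k$ (there is only one fixed point), so the axioms do not literally forbid a model with $|\s^{\A}|=1$ and $\s_{2}^{\A}$ finite. You then fall back on the paper's informal description of $\TBBtwotwo$ to justify the key claim. The paper's own proof does exactly the same thing (it simply asserts ``no interpretations $\B$ with $|\s^{\B}|=1$ and $\s_{2}^{\B}$ finite''), so your argument matches the paper's standard; but your instinct that something is off between the axiomatization and the informal description is well-founded and worth flagging rather than dismissing.
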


\begin{proof}
    Obvious, since $\TBBtwotwo$ has interpretations $\A$ with $|\s^{\A}|=1$ and $|\s_{2}^{\A}|=\omega$, but no interpretations $\B$ with $|\s^{\B}|=1$ and $\s_{2}^{\B}$ finite.
\end{proof}

\subsection{$\TstwoBB$}

The proofs that $\TstwoBB$ is smooth (and therefore stably infinite), not finitely witnessable (and thus not strongly finitely witnessable) and has the finite model property are very similar to the corresponding proofs of $\TtwoBB$ in \Cref{proofs of TtwoBB}. The proof that $\TstwoBB$ is not convex follows the proof that $\TssBB$ is not convex, in \Cref{TssBB is not CV}; and, finally, the proof that $\TstwoBB$ is not stably finite, is almost the same as that of \Cref{TBBtwotwo is not SF}.

\subsection{$\TsBBtwo$}

\begin{lemma}
    The $\Sigmastwo$-theory $\TsBBtwo$ is not stably infinite, and thus not smooth, with respect to $\{\s, \s_{2}\}$.
\end{lemma}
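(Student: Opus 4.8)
The plan is to exhibit a single quantifier-free $\Sigmastwo$-formula that is $\TsBBtwo$-satisfiable but forces the $\s$-domain to be a singleton, so that stable infiniteness with respect to $\{\s,\s_{2}\}$ fails; non-smoothness then follows because a smooth theory is in particular stably infinite. The formula I would use is $\phi := (s(x)=x)$ with $x$ a variable of sort $\s$, mirroring the argument already used for $\Tsupinfty$.

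First I would recall the shape of the axioms of $\TsBBtwo$, namely $\{(\psi^{\s}_{=1}\wedge\psi^{\s_{2}}_{\geq k})\vee(\bbdiag{\s}{\s_{2}}{k+2}\wedge\Forall{x}\neg p(x)) : k\in\mathbb{N}\}$, and observe that any $\TsBBtwo$-interpretation $\B$ with $|\s^{\B}|\geq 2$ must, for every $k$, falsify the first disjunct (since $\psi^{\s}_{=1}$ fails), and hence must satisfy the conjunct $\Forall{x}\neg p(x)$; that is, $\B\vDash\Forall{x}\neg(s(x)=x)$, so $s^{\B}$ has no fixed point. Consequently $\B$ cannot satisfy $\phi$.

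Next I would verify that $\phi$ is nonetheless $\TsBBtwo$-satisfiable: take $\A$ with $\s^{\A}=\{*\}$, $s^{\A}(*)=*$, $\s_{2}^{\A}$ countably infinite, and $x^{\A}=*$. For every $k$ the disjunct $\psi^{\s}_{=1}\wedge\psi^{\s_{2}}_{\geq k}$ holds in $\A$, so $\A$ is a $\TsBBtwo$-interpretation, and clearly $\A\vDash\phi$. Combining the two observations, every $\TsBBtwo$-interpretation satisfying $\phi$ has $|\s^{\A}|=1$, which is finite; hence there is no $\TsBBtwo$-interpretation satisfying $\phi$ whose $\s$-domain is infinite, so $\TsBBtwo$ is not stably infinite with respect to $\{\s,\s_{2}\}$. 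Since a smooth theory is stably infinite (enlarge each domain to an infinite cardinal), $\TsBBtwo$ is not smooth with respect to $\{\s,\s_{2}\}$ either.

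This argument has essentially no obstacle; the only things to double-check carefully are that the one-element interpretation genuinely satisfies the axiom scheme (immediate from the first disjunct) and that on models with $|\s^{\A}|\ge 2$ the second disjunct of every axiom really carries the conjunct $\Forall{x}\neg p(x)$ (immediate from the axiomatization). So this lemma reduces to a short proof in the same spirit as the surrounding ones.
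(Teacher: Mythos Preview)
Your proof is correct and follows essentially the same approach as the paper: both use the formula $s(x)=x$, observe it is satisfied by the $\TsBBtwo$-interpretation with $|\s^{\A}|=1$ and $|\s_{2}^{\A}|=\omega$, and note that every $\TsBBtwo$-interpretation with $|\s^{\B}|\geq 2$ must satisfy $\Forall{x}\neg(s(x)=x)$. Your version simply spells out the details more carefully.
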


\begin{proof}
    Obvious: while $s(x)=x$ is satisfied by any $\TsBBtwo$-interpretation $\A$ with $|\s^{\A}|=1$ and $|\s_{2}^{\A}|=\omega$, all $\TsBBtwo$-interpretations $\B$ with $|\s^{\A}|$ greater than one satisfy instead $\Forall{x}\neg(s(x)=x)$.
\end{proof}

\begin{lemma}
    The $\Sigmastwo$-theory $\TsBBtwo$ is not finitely witnessable, and thus not strongly finitely witnessable, with respect to $\{\s, \s_{2}\}$.
\end{lemma}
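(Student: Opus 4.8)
The plan is to reuse, essentially verbatim, the diagonalization scheme already employed for $\TBB$ (\Cref{TBB is not FW}), $\TBBtwo$ (\Cref{not FW of TBBtwo}) and $\TtwoBB$ (\Cref{TtwoBB is not FW}): assume towards a contradiction that $\TsBBtwo$ admits a witness $\wit$, and turn $\wit$ into a computable function that dominates the Busy Beaver function $\bb$ pointwise, contradicting \Cref{propertiesofbb}.

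First I would record the two structural facts about $\TsBBtwo$ that make the scheme go through. The theory has interpretations in which $\s^{\A}$ is infinite (its second class of models), so $\psi^{\s}_{\geq k}$ is $\TsBBtwo$-satisfiable for every $k$. And every $\TsBBtwo$-interpretation $\A$ with $\s^{\A}$ finite and $|\s^{\A}|>1$ satisfies $|\s^{\A}|=\bb(m)$ for some $m\geq 2$: indeed $\psi^{\s}_{=1}$ fails, so for every $k$ the axiom forces $\bbdiag{\s}{\s_{2}}{k+2}$, and once $\bb(k+2)>|\s^{\A}|$ this leaves only the disjunct $\bigvee_{i=2}^{k+2}(\psi^{\s}_{=\bb(i)}\wedge\psi^{\s_{2}}_{=\bb(i)})$, pinning $|\s^{\A}|$ to a value of $\bb$ (this is just the restriction inherited from $\TBBtwo$, unaffected by the extra function symbol $s$).

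Then I would define $f:\mathbb{N}\rightarrow\mathbb{N}$ by $f(0)=0$, $f(1)=1$ and, for $n\geq 1$, $f(n+1)=|\vars_{\s}(\wit(\psi^{\s}_{\geq f(n)+1}))|$; this is computable since $\wit$ is. The core is the induction $f(n)\geq\bb(n)$. The base cases hold as $\bb(0)=0$ and $\bb(1)=1$. For the step, put $\phi=\psi^{\s}_{\geq f(n)+1}$: it is $\TsBBtwo$-satisfiable, hence so is $\wit(\phi)$ by clause $(i)$ of the witness definition, so by clause $(ii)$ there is a $\TsBBtwo$-interpretation $\A$ satisfying $\wit(\phi)$ with $\s^{\A}=\vars_{\s}(\wit(\phi))^{\A}$ and $\s_{2}^{\A}=\vars_{\s_{2}}(\wit(\phi))^{\A}$. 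Since $\vars_{\s}(\wit(\phi))$ is finite, $\s^{\A}$ is finite; since $\A$ satisfies $\wit(\phi)$ it satisfies $\Exists{\overarrow{w}}\wit(\phi)$ and hence $\phi$, so $|\s^{\A}|\geq f(n)+1\geq 2$; by the structural fact $|\s^{\A}|=\bb(m)$ for some $m\geq 2$, and $\bb(m)=|\s^{\A}|>f(n)\geq\bb(n)$ forces $m\geq n+1$ because $\bb$ is increasing; hence $f(n+1)=|\vars_{\s}(\wit(\phi))|\geq|\s^{\A}|=\bb(m)\geq\bb(n+1)$. A computable $f$ with $f\geq\bb$ pointwise contradicts \Cref{propertiesofbb}, and since strong finite witnessability implies finite witnessability, $\TsBBtwo$ is not strongly finitely witnessable either.

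The only delicate point is the second structural fact — that a $\TsBBtwo$-model forced to be finite (and non-singleton) in the sort $\s$ must have $|\s^{\A}|$ equal to a Busy Beaver value; everything else (computability of $f$, the equivalence of $\phi$ and $\Exists{\overarrow{w}}\wit(\phi)$, and the variable-counting bookkeeping) is routine and mirrors the earlier proofs.
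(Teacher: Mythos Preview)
Your proposal is correct and follows essentially the same diagonalization scheme as the paper's proof. The only cosmetic difference is that the paper, for this particular lemma, writes out the quantifier-free conjunction $\bigwedge_{1\leq i<j\leq f(n)+1}\neg(x_{i}=x_{j})$ explicitly rather than using the abbreviation $\psi^{\s}_{\geq f(n)+1}$; since $\wit$ is only defined on quantifier-free input and $\psi^{\s}_{\geq n}$ carries an existential prefix, the explicit form is technically the correct one to feed into $\wit$ --- but the paper itself uses both notations interchangeably across the analogous proofs (e.g.\ \Cref{TBB is not FW}, \Cref{not FW of TBBtwo}), so this is a trivial point.
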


\begin{proof}
    The proof is similar to that of \Cref{TBB is not FW}: if there is a witness $\wit$, define a function $f:\mathbb{N}\rightarrow\mathbb{N}$ by making $f(0)=0$, $f(1)=1$, and assuming $f(n)$ defined, 
    \[f(n+1)=|\vars_{\s}(\wit(\bigwedge_{i=1}^{f(n)}\bigwedge_{j=i+1}^{f(n)+1}\neg(x_{i}=x_{j})))|,\]
    where $x_{1}$ through $x_{f(n)+1}$ are variables of sort $\s$. With this, $f$ is supposed to be computable, but grows faster than $\bb$, contradicting \Cref{propertiesofbb}.
\end{proof}

\begin{lemma}
    The $\Sigmastwo$-theory $\TsBBtwo$ is convex with respect to $\{\s, \s_{2}\}$.
\end{lemma}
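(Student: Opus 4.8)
The plan is to adapt the proof of \Cref{Tsupinfty is CV} almost verbatim: $\TsBBtwo$ differs from $\Tsupinfty$ only in that the common finite cardinality of $\s^{\A}$ and $\s_2^{\A}$ in its non‑singleton models is restricted to the values $\bb(k)$, $k\geq 2$, instead of to arbitrary integers $\geq 2$, and this restriction plays no role in the convexity argument. First I would record the three families of $\TsBBtwo$‑models: (1)~$|\s^{\A}|=1$, $\s_2^{\A}$ infinite, $s^{\A}$ the identity; (2)~$\s^{\A}$ and $\s_2^{\A}$ both infinite, $s^{\A}$ with no fixed points; (3)~$|\s^{\A}|=|\s_2^{\A}|=\bb(k)$ for some $k\geq 2$, $s^{\A}$ with no fixed points. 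Let $\phi$ be a conjunction of $\Sigmastwo$‑literals with $\dash_{\TsBBtwo}\phi\rightarrow\bigvee_{i=1}^{n}x_i=y_i$; we may assume $\phi$ is $\TsBBtwo$‑satisfiable (otherwise the claim is trivial) and aim to show $\dash_{\TsBBtwo}\phi\rightarrow x_i=y_i$ for some $i$.

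The next step is a case split. Suppose first that every $\TsBBtwo$‑model of $\phi$ is of family~(1). If some pair $(x_i,y_i)$ is of sort $\s$, then $\dash_{\TsBBtwo}\phi\rightarrow x_i=y_i$ at once since $|\s^{\A}|=1$ in all models of $\phi$; so we may assume all pairs are of sort $\s_2$. Since $\phi$ is satisfiable in singleton‑$\s$ models it contains no sort‑$\s$ disequality, so all its sort‑$\s$ literals hold automatically in family‑(1) models. Letting $\phi'$ be the conjunction of the sort‑$\s_2$ literals of $\phi$, any $\Tinfty$‑model of $\phi'$ (read on sort $\s_2$) lifts to a family‑(1) model of $\phi$, whence $\dash_{\Tinfty}\phi'\rightarrow\bigvee_{i=1}^{n}x_i=y_i$; as $\Tinfty$ is convex (stably infinite over an empty signature, \Cref{SI empty theories are convex}) we get $\dash_{\Tinfty}\phi'\rightarrow x_i=y_i$ for some $i$, hence $\dash_{\TsBBtwo}\phi\rightarrow x_i=y_i$.

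In the remaining case $\phi$ has a model $\A_0$ of family~(2) or~(3). Adjoining countably many fresh elements to a family‑(3) model and extending $s$ so it stays fixed‑point‑free turns it into a family‑(2) model still satisfying $\phi$; combined with \Cref{LowenheimSkolemDownwards} we may take $|\s^{\A_0}|=|\s_2^{\A_0}|=\omega$. The subclass $\T'$ of $\TsBBtwo$‑interpretations with $|\s^{\A}|\geq 2$ is exactly $\Tneg$ in the sense of \Cref{Tneg is CV}, built over the empty two‑sorted theory $\S'$ whose structures satisfy $|\s^{\C}|=|\s_2^{\C}|$ with this common value equal to some $\bb(k)$ ($k\geq 2$) or infinite. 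This $\S'$ is stably infinite (each of its structures enlarges to one with both domains infinite) and has no structure with $|\s^{\C}|=1$, since $\bb(k)\geq\bb(2)=4$; so \Cref{Tneg is CV} gives that $\T'$ is convex. From $\dash_{\TsBBtwo}\phi\rightarrow\bigvee_{i=1}^{n}x_i=y_i$ we get $\dash_{\T'}\phi\rightarrow\bigvee_{i=1}^{n}x_i=y_i$, so convexity of $\T'$ fixes an index $i$ with $\dash_{\T'}\phi\rightarrow x_i=y_i$; thus every family‑(2) and family‑(3) model satisfies $\phi\rightarrow x_i=y_i$.

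It remains to handle family‑(1) models for this same $i$. If some family‑(1) model $\A_1$ satisfied $\phi\wedge\neg(x_i=y_i)$, then $x_i,y_i$ would be of sort $\s_2$, and by \Cref{LowenheimSkolemDownwards} we could take $|\s_2^{\A_1}|=\omega$; the interpretation $\B$ with $\s^{\B}=\s^{\A_0}$, $\s_2^{\B}=\s_2^{\A_1}$, $s^{\B}=s^{\A_0}$, sort‑$\s$ variables evaluated as in $\A_0$ and sort‑$\s_2$ variables as in $\A_1$, is then a family‑(2) model which satisfies $\phi$ (each literal is pure and is decided in $\B$ exactly as in $\A_0$ or in $\A_1$) but not $x_i=y_i$ — contradicting the previous step. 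Hence every $\TsBBtwo$‑model satisfies $\phi\rightarrow x_i=y_i$, so $\TsBBtwo$ is convex. The only genuinely new point, and the one I expect to require care, is verifying that $\T'$ meets the hypotheses of \Cref{Tneg is CV}: the ``Busy Beaver gap'' in the allowed finite sizes neither destroys stable infiniteness of $\S'$ (one only ever enlarges towards $\omega$) nor creates a singleton‑$\s$ model (the least allowed size is $\bb(2)=4$); everything else is a transcription of the $\Tsupinfty$ argument.
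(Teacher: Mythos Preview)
Your proof is correct and follows essentially the same approach as the paper's own proof: a case split on whether $\phi$ is only satisfiable in $|\s^{\A}|=1$ models (reducing to convexity of $\Tinfty$), versus having a model with $|\s^{\A}|>1$ (invoking \Cref{Tneg is CV} for the restricted theory and then gluing via the product construction $\B$). If anything, you are more careful than the paper in explicitly verifying the hypotheses of \Cref{Tneg is CV} for the underlying empty-signature theory (stable infiniteness and the absence of singleton-$\s$ models via $\bb(2)=4$), which the paper leaves implicit.
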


\begin{proof}
    This proof is very similar to the one of \Cref{Tsupinfty is CV}. We start by 
    taking a $\Sigmastwo$-cube $\phi$, and assuming that $\dash_{\TsBBtwo}\phi\rightarrow\bigvee_{i=1}^{n}x_{i}=y_{i}$. We can also assume 
    that $\phi$ is true in some $\TsBBtwo$-interpretation $\A_{0}$ with 
    $|\s^{\A_{0}}|>1$: otherwise, given the theory whose interpretations are all 
    $\TsBBtwo$-interpretations $\A$ with $|\s^{\A}|=1$ is essentially $\Tinfty$, which is convex, we would be done; we 
    can take $|\s^{\A_{0}}|=\omega$ by \Cref{LowenheimSkolemDownwards}.

    Now, consider the theory whose interpretations are all $\TsBBtwo$-interpretations $\A$ with $|\s^{\A}|>1$, and since it is stably infinite, and 
    convex by \Cref{Tneg is CV}, we have that there exists $1\leq i\leq n$ such that $\phi\rightarrow x_{i}=y_{i}$ all $\TsBBtwo$-interpretations $\A$ with 
    $|\s^{\A}|>1$; in particular, by $\A_{1}$. But suppose there is a $\TsBBtwo$-interpretation $\A_{1}$ with $|\s^{\A_{1}}|=1$, and without loss of generality 
    $|\s_{2}^{\A_{1}}|=\omega$, that satisfies $\phi$ and not $x_{i}=y_{i}$: we then build a $\TsBBtwo$-interpretation $\B$ with $\s^{\B}=\s^{\A_{0}}$, 
    $\s_{2}^{\B}=\s_{2}^{\A_{1}}$ and $s^{\B}=s^{\A_{0}}$ that satisfies $\phi$ and not $x_{i}=y_{i}$, leading to a contradiction.
\end{proof}

\begin{lemma}
    The $\Sigmastwo$-theory $\TsBBtwo$ has the finite model property with respect to $\{\s, \s_{2}\}$.
\end{lemma}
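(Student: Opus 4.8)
The plan is to establish the finite model property by the Busy‑Beaver diagram technique already used for $\TtwoBB$ in the proof of \Cref{TtwoBB is SF}. Given a $\TsBBtwo$‑satisfiable quantifier‑free $\Sigmastwo$‑formula $\phi$ and a $\TsBBtwo$‑interpretation $\A$ with $\A\models\phi$, the aim is to produce a $\TsBBtwo$‑interpretation $\B$ with $\B\models\phi$ both of whose domains are finite. If $\A$ is already finite, i.e.\ $|\s^{\A}|=|\s_{2}^{\A}|=\bb(i)$ for some $i\geq 2$, there is nothing to do, so assume $\A$ is one of the infinite models of $\TsBBtwo$.

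The core case is $|\s^{\A}|>1$, where $s^{\A}$ is fixed‑point free and $|\s^{\A}|=|\s_{2}^{\A}|$ is infinite. Let $x_{1},\dots,x_{n}$ be the variables of sort $\s$ occurring in $\phi$ and, for each $j$, let $M_{j}$ be the largest $m$ with $s^{m}(x_{j})$ occurring in $\phi$; put $A=\{(s^{\A})^{m}(x_{j}^{\A}):1\le j\le n,\ 0\le m\le M_{j}+1\}$, a finite set large enough that the truth value at $\A$ of every atomic $\s$‑subformula of $\phi$ is determined by the equalities among elements of $A$ and by the partial $s^{\A}$‑action on $A$. Since $\bb$ is increasing and unbounded (\Cref{propertiesofbb}), choose $i\geq 2$ with $\bb(i)\geq\max\{\,|A|,\,|\vars_{\s_{2}}(\phi)^{\A}|,\,2\,\}$, and build $\B$: take $\s^{\B}$ of cardinality $\bb(i)$ carrying a faithful copy of $A$ on which $s^{\B}$ imitates $s^{\A}$, so that $s^{\B}(t^{\B})=(s(t))^{\B}$ whenever $s(t)$ is represented in $A$, and define $s^{\B}$ on the remaining elements so that $s^{\B}$ has no fixed point at all — for instance by completing the dangling tails of the copy of $A$ into cycles and adding extra cycles until the size is exactly $\bb(i)$; take $\s_{2}^{\B}$ of cardinality $\bb(i)$ containing $\vars_{\s_{2}}(\phi)^{\A}$, otherwise arbitrary; and assign every variable of $\phi$ its $\A$‑value. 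Then $|\s^{\B}|=|\s_{2}^{\B}|=\bb(i)$ and $s^{\B}$ is fixed‑point free, so $\B$ is a $\TsBBtwo$‑interpretation finite in both sorts, and $\B\models\phi$ because the truth value of a quantifier‑free $\Sigmastwo$‑formula at an interpretation is fixed once one knows the equalities it induces among its finitely many $\s$‑terms and among its $\s_{2}$‑variables, all of which $\B$ copies from $\A$.

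The case requiring the most care is $|\s^{\A}|=1$, where $s^{\A}$ is the identity and $\s_{2}^{\A}$ is infinite: here all $\s$‑terms of $\phi$ share a single value, so the $\s$‑part of $\phi$ reduces to positive equalities among $\s$‑terms, and I would realise these together with the $\s_{2}$‑equalities inside a fixed‑point‑free $\TsBBtwo$‑model of an appropriate size $\bb(i)$ by the analogous padding. I expect this collapsing case — and, more generally, the combinatorial bookkeeping of forcing both domains simultaneously onto a common genuine value $\bb(i)$ of the Busy Beaver function while keeping $s^{\B}$ both consistent with the finite $s$‑diagram extracted from $\phi$ and fixed‑point free — to be the main obstacle; once the cardinalities are pinned to a value of $\bb$, membership in $\TsBBtwo$ is immediate and $\B\models\phi$ is routine.
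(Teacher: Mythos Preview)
Your overall strategy matches the paper's (one-line) proof, which simply refers back to the argument for $\TBBtwo$; your treatment of the case $|\s^{\A}|>1$ is essentially that argument adapted to the presence of $s$, and it is fine.

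The gap is exactly where you flag uncertainty: the case $|\s^{\A}|=1$. Take $\phi=(s(x)=x)$. This is a quantifier-free $\Sigmastwo$-formula satisfied by any $\TsBBtwo$-interpretation $\A$ with $|\s^{\A}|=1$ and $|\s_{2}^{\A}|$ infinite (there $s^{\A}$ is the identity). But from the axiomatisation of $\TsBBtwo$, every \emph{finite} model has $|\s^{\B}|=|\s_{2}^{\B}|=\bb(i)$ for some $i\geq 2$ and satisfies $\Forall{x}\neg(s(x)=x)$; there is no finite model with $|\s^{\B}|=1$, since that would force $|\s_{2}^{\B}|\geq k$ for every $k$. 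Hence no finite $\TsBBtwo$-interpretation satisfies $\phi$. Your plan to ``realise'' the positive equalities from the one-element case inside a fixed-point-free model therefore cannot succeed: the single atomic formula $s(x)=x$ is one of those positive equalities, and it is false in every fixed-point-free structure. The paper's proof has the same blind spot---the $\TBBtwo$ argument it invokes is for an empty signature and never confronts the function symbol $s$---so the lemma as stated appears to be false, not merely under-argued.
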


\begin{proof}
    Follows the steps of \Cref{FMP of TBBtwo}.
\end{proof}

\begin{lemma}
    The $\Sigmastwo$-theory $\TsBBtwo$ is not stably finite with respect to $\{\s, \s_{2}\}$.
\end{lemma}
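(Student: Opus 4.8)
The plan is to show directly that $\TsBBtwo$ fails stable finiteness with respect to $\{\s,\s_{2}\}$ by exhibiting a trivially satisfiable formula whose only finite-domain $\TsBBtwo$-witnesses are forced to be larger than necessary in the $\s$-component. First I would recall the classification of $\TsBBtwo$-interpretations coming from its axiomatization in \Cref{tab-theories-sigma-s}: every model $\A$ lies in exactly one of three families --- those with $|\s^{\A}|=1$, $|\s_{2}^{\A}|\geq\omega$ and $s^{\A}$ the identity; those with $\s^{\A}$ and $\s_{2}^{\A}$ both infinite and $s^{\A}$ fixed-point-free; and those with $|\s^{\A}|=|\s_{2}^{\A}|=\bb(k)$ for some $k\geq 2$ and $s^{\A}$ fixed-point-free. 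The point I would extract is that, since $\bb(i)\geq\bb(2)=4$ for every $i\geq 2$, the finite disjuncts of $\bbdiag{\s}{\s_{2}}{k+2}$ never yield a model with exactly one element of sort $\s$; hence there is no $\TsBBtwo$-interpretation $\B$ with $|\s^{\B}|=1$ and $\s_{2}^{\B}$ finite.

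Next I would take $\phi$ to be the quantifier-free $\Sigmastwo$-formula $x=x$ with $x$ a variable of sort $\s$, and let $\A$ be the $\TsBBtwo$-interpretation with $|\s^{\A}|=1$ and $|\s_{2}^{\A}|=\omega$; clearly $\A\vDash\phi$. Suppose, for contradiction, that $\B$ were a $\TsBBtwo$-interpretation satisfying $\phi$ with $|\s^{\B}|$ and $|\s_{2}^{\B}|$ both finite, $|\s^{\B}|\leq|\s^{\A}|=1$ and $|\s_{2}^{\B}|\leq|\s_{2}^{\A}|$. Since domains are nonempty, $|\s^{\B}|=1$, so by the classification $\B$ lies in the first family and $\s_{2}^{\B}$ is infinite, contradicting finiteness of $|\s_{2}^{\B}|$. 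Thus no such $\B$ exists and $\TsBBtwo$ is not stably finite with respect to $\{\s,\s_{2}\}$.

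This is essentially the same argument used to show that $\TBBtwo$, $\Tupinfty$ and $\TBBtwotwo$ are not stably finite (compare \Cref{TBBtwotwo is not SF}), so I expect no real obstacle. The only step requiring any care is the model classification itself, which is already justified by the axiomatization --- in particular the observation that the Busy Beaver diagonal disjuncts cannot produce a one-element $\s$-domain, so the ``$|\s^{\A}|=1$'' models of $\TsBBtwo$ are exactly the ones whose $\s_{2}$-domain is infinite.
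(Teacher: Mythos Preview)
Your proposal is correct and takes essentially the same approach as the paper: exhibit the $\TsBBtwo$-interpretation $\A$ with $|\s^{\A}|=1$ and $|\s_{2}^{\A}|=\omega$, and observe that no $\TsBBtwo$-interpretation $\B$ has $|\s^{\B}|=1$ with $\s_{2}^{\B}$ finite. The paper's version is terser (it simply says ``Obvious'' and states the key cardinality observation), whereas you spell out the model classification and the role of $\bb(2)=4$, but the underlying argument is identical.
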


\begin{proof}
    Obvious: while there is a $\TsBBtwo$-interpretation $\A$ with $|\s^{\A}|=1$ and $|\s_{2}^{\A}|=\omega$, there are no $\Tsupinfty$-interpretations $\B$ with $|\s^{\B}|=1$ and $|\s_{2}^{\B}|<\omega$.
\end{proof}

\subsection{$\TBBtwocube$}

\begin{lemma}
    The $\Sigma_{3}$-theory $\TBBtwocube$ is not stably infinite, and thus not smooth, with respect to $\{\s, \s_{2}, \s_{3}\}$.
\end{lemma}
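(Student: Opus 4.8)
The plan is to argue exactly as for the $\Sigma_{3}$-theory $\Ttwocube$: the theory $\TBBtwocube$ is non-contradictory, yet none of its interpretations has an infinite domain of sort $\s_{3}$, so it cannot be stably infinite with respect to a set of sorts that contains $\s_{3}$, and in particular not with respect to $\{\s, \s_{2}, \s_{3}\}$; stable infiniteness then fails, and non-smoothness follows since every smooth theory is stably infinite.

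First I would check that $\TBBtwocube$ has at least one interpretation, so that the non-triviality requirement in the definition of stable infiniteness is not vacuously met. This is immediate: $\TBBtwocube$ is axiomatized by $\{\psi^{\s_{3}}_{=1}\}\cup\ax(\TBBtwo)$, and $\TBBtwo$ has interpretations (e.g., one with $|\s^{\A}| = |\s_{2}^{\A}| = \bb(2) = 4$, using the known value $\bb(2)=4$); extending such an interpretation by a one-element domain for $\s_{3}$ yields a $\TBBtwocube$-interpretation. Hence, for instance, the quantifier-free formula $x = x$, for $x$ a variable of sort $\s$, is $\TBBtwocube$-satisfiable.

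Next I would observe that every $\TBBtwocube$-interpretation $\A$ satisfies $\psi^{\s_{3}}_{=1}$, hence has $|\s_{3}^{\A}| = 1$, which is finite. Therefore there is no $\TBBtwocube$-interpretation that satisfies $x = x$ (nor any other $\TBBtwocube$-satisfiable formula) with $\s_{3}^{\A}$ infinite. By the definition of stable infiniteness with respect to $\{\s, \s_{2}, \s_{3}\}$ — which requires, for every $\TBBtwocube$-satisfiable quantifier-free formula, a $\TBBtwocube$-interpretation satisfying it whose domains of all three sorts are infinite — we conclude that $\TBBtwocube$ is not stably infinite. Finally, since every smooth theory is stably infinite (the implication also invoked for the analogous theories $\Ttwocube$ and $\TBBtwo$), $\TBBtwocube$ is not smooth either.

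There is no real obstacle here: the argument is the verbatim analogue of the one for $\Ttwocube$, and the only mild point is to exhibit a witnessing model certifying that the theory is not contradictory — for which one may use the known value $\bb(2) = 4$, or, just as well, an infinite $\TBBtwo$-model extended by a singleton $\s_{3}$-domain.
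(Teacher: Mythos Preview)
Your proof is correct and follows exactly the same approach as the paper: the theory is non-contradictory yet every model has $|\s_{3}^{\A}|=1$, so no model can have all three domains infinite, whence stable infiniteness (and hence smoothness) fails. The paper's proof is just the one-line version of what you wrote.
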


\begin{proof}
    Obvious, since it is not contradictory, but has no models $\A$ with $|\s_{3}^{\A}|$ infinite.
\end{proof}

\begin{lemma}\label{FMP of TBBtwocube}
    The $\Sigma_{3}$-theory $\TBBtwocube$ has the finite model property with respect to $\{\s, \s_{2}, \s_{3}\}$.
\end{lemma}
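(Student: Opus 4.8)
Unlike $\Ttwocube$, whose finite model property follows from finite witnessability via \Cref{FW=>FMP}, the theory $\TBBtwocube$ is built on the Busy Beaver function and is not finitely witnessable, so that route is unavailable. Instead I would argue directly, mimicking the proof that $\TBBtwo$ has the finite model property (\Cref{FMP of TBBtwo}) and treating the extra, always-singleton sort $\s_{3}$ separately. Fix a quantifier-free $\Sigma_{3}$-formula $\phi$ and a $\TBBtwocube$-interpretation $\A$ that satisfies $\phi$. Since $\Sigma_{3}$ is empty, every atomic subformula of $\phi$ is an equality between two variables of the same sort; and since every $\TBBtwocube$-interpretation satisfies $\psi^{\s_{3}}_{=1}$, all variables of sort $\s_{3}$ are forced to denote the same element in $\A$. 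Thus the truth value of $\phi$ in $\A$ depends only on the partition of $\vars_{\s}(\phi)$ and $\vars_{\s_{2}}(\phi)$ by the equalities that hold in $\A$, exactly as for $\TBBtwo$.

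The next step is to pick $n\in\mathbb{N}\setminus\{0,1\}$ with $n\geq\max\{|\vars_{\s}(\phi)^{\A}|,|\vars_{\s_{2}}(\phi)^{\A}|\}$, possible because these sets are finite; by \Cref{propertiesofbb}, $\bb$ is increasing and $\bb(n)>n\geq n$. Then I would define a $\Sigma_{3}$-interpretation $\B$ with $|\s^{\B}|=|\s_{2}^{\B}|=\bb(n)$ and $|\s_{3}^{\B}|=1$, and, using injective maps $f\colon\vars_{\s}(\phi)^{\A}\to\s^{\B}$ and $f_{2}\colon\vars_{\s_{2}}(\phi)^{\A}\to\s_{2}^{\B}$ (which exist since $\bb(n)\geq n$), set $x^{\B}=f(x^{\A})$ for $x\in\vars_{\s}(\phi)$, $u^{\B}=f_{2}(u^{\A})$ for $u\in\vars_{\s_{2}}(\phi)$, $t^{\B}$ equal to the unique element of $\s_{3}^{\B}$ for every variable $t$ of sort $\s_{3}$, and arbitrarily for the remaining variables.

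It then remains to check two things. First, that $\B$ is a $\TBBtwocube$-interpretation: it satisfies $\psi^{\s_{3}}_{=1}$ by construction, and for each $k\in\mathbb{N}$ a case split settles the $k$-th axiom — if $k+2\leq n$ then $\bb(k+2)\leq\bb(n)$ so $\B$ satisfies $\psi^{\s}_{\geq\bb(k+2)}\wedge\psi^{\s_{2}}_{\geq\bb(k+2)}$, whereas if $k+2>n$ then $2\leq n\leq k+2$, so $\B$ satisfies $\psi^{\s}_{=\bb(n)}\wedge\psi^{\s_{2}}_{=\bb(n)}$, one of the disjuncts of $\bbdiag{\s}{\s_{2}}{k+2}$. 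Second, that $\B\vDash\phi$: for variables $x,y$ of sort $\s$ (respectively $\s_{2}$), $\B\vDash x=y$ iff $f(x^{\A})=f(y^{\A})$ iff, by injectivity, $x^{\A}=y^{\A}$ iff $\A\vDash x=y$; for variables of sort $\s_{3}$, both $\A$ and $\B$ satisfy every such equality, their $\s_{3}$-domains being singletons. Hence $\A$ and $\B$ agree on all atomic subformulas of $\phi$, and as $\phi$ is quantifier-free over an empty signature, $\B\vDash\phi$; and all three domains of $\B$ are finite. I expect the only step requiring any care to be the case split verifying that $\B$ is a model of $\TBBtwocube$ — and even this is mild, being a direct use of the monotonicity of $\bb$ just as in \Cref{FMP of TBBtwo}; there is no genuine obstacle.
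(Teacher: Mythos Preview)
Your proposal is correct and follows essentially the same approach as the paper: the paper's proof simply reads ``Essentially the same proof as that of \Cref{FMP of TBBtwo},'' and your argument is precisely that proof adapted to $\Sigma_{3}$ by adding the trivially-handled singleton sort $\s_{3}$. Your explicit case split verifying that $\B$ models the axioms is more detailed than the paper bothers to be, but it is exactly the intended content.
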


\begin{proof}
    Essentially the same proof as that of \Cref{FMP of TBBtwo}.
\end{proof}

\begin{lemma}
    The $\Sigma_{3}$-theory $\TBBtwocube$ is not stably infinite with respect to $\{\s, \s_{2}, \s_{3}\}$.
\end{lemma}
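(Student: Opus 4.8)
The plan is to negate the definition of stable infiniteness directly by pointing to the sort $\s_{3}$, whose cardinality is pinned to $1$ in every model. The axiomatization of $\TBBtwocube$ in \Cref{tab-theories-sigma-s} contains the conjunct $\psi^{\s_{3}}_{=1}$, so every $\TBBtwocube$-interpretation $\A$ has $|\s_{3}^{\A}|=1$, and in particular $\s_{3}^{\A}$ is never infinite, irrespective of which formula $\A$ happens to satisfy. This single fact is the whole engine of the argument.

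First I would check that $\TBBtwocube$ is not contradictory, so that a $\TBBtwocube$-satisfiable quantifier-free formula actually exists. Any structure $\A$ with $|\s^{\A}|=|\s_{2}^{\A}|=\bb(2)=4$ and $|\s_{3}^{\A}|=1$ is a model: it satisfies $\psi^{\s_{3}}_{=1}$, and for every $k\in\mathbb{N}$ it satisfies the disjunct $\bbdiag{\s}{\s_{2}}{k+2}$ through its $i=2$ term $\psi^{\s}_{=\bb(2)}\wedge\psi^{\s_{2}}_{=\bb(2)}$, which is present in the disjunction $\bigvee_{i=2}^{k+2}$ for all $k$. I would then take the trivially satisfiable formula $x=x$ for a variable $x$, which $\A$ satisfies.

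To conclude, I would apply the definition of stable infiniteness with respect to $\{\s, \s_{2}, \s_{3}\}$: it would demand a $\TBBtwocube$-interpretation satisfying $x=x$ with all three of $|\s^{\A}|$, $|\s_{2}^{\A}|$ and $|\s_{3}^{\A}|$ infinite. Since $|\s_{3}^{\A}|=1$ holds in every $\TBBtwocube$-interpretation, no such model exists, so the property fails and $\TBBtwocube$ is not stably infinite with respect to $\{\s, \s_{2}, \s_{3}\}$. There is no genuine obstacle here: the claim is immediate from the bounding conjunct $\psi^{\s_{3}}_{=1}$, exactly mirroring the analogous statement for $\Ttwocube$ and the earlier "not stably infinite" lemma for $\TBBtwocube$ itself.
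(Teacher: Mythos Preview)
Your argument is correct for the statement as literally written: the axiom $\psi^{\s_{3}}_{=1}$ forces $|\s_{3}^{\A}|=1$ in every model, so no $\TBBtwocube$-interpretation can have $\s_{3}$ infinite, and since the theory is non-trivial this kills stable infiniteness. This is exactly the reasoning the paper uses in its first $\TBBtwocube$ lemma (the one phrased ``not stably infinite, and thus not smooth''), whose proof reads ``Obvious, since it is not contradictory, but has no models $\A$ with $|\s_{3}^{\A}|$ infinite.''

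One thing worth flagging: the specific lemma you were given --- the one without the ``and thus not smooth'' clause --- appears to be a typo in the paper. Its accompanying proof does not argue about stable infiniteness at all; it exhibits the model with $|\s^{\A}|=|\s_{3}^{\A}|=1$ and $|\s_{2}^{\A}|=\omega$, observes that no $\TBBtwocube$-interpretation with $|\s^{\B}|=1$ has $|\s_{2}^{\B}|$ finite, and concludes ``$\TBBtwocube$ cannot be stably finite.'' By analogy with the parallel pair of lemmas for $\Ttwocube$, the intended header here was almost certainly ``not stably finite.'' So your proof is the right one for the stated claim and matches the paper's actual not-stably-infinite argument; the paper's proof under this header is addressing a different property.
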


\begin{proof}
    Consider any quantifier-free $\Sigma_{3}$-formula $\phi$ that is satisfied by the $\TBBtwocube$-interpretation $\A$ with $|\s^{\A}|=|\s_{3}^{\A}|=1$ and $|\s_{2}^{\A}|=\omega$: since there are no 
    $\TBBtwocube$-interpretations $\B$ such that $|\s^{\B}|=1$ and $|\s_{2}^{\B}|$ is finite, $\TBBtwocube$ cannot be stably finite.
\end{proof}

\begin{lemma}
    The $\Sigma_{3}$-theory $\TBBtwocube$ is not finitely witnessable, and thus not strongly finitely witnessable, with respect to $\{\s, \s_{2}, \s_{3}\}$.
\end{lemma}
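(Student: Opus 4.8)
**Proof plan for: $\TBBtwocube$ is not finitely witnessable (hence not strongly finitely witnessable) w.r.t. $\{\s,\s_2,\s_3\}$.**

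The plan is to mimic the argument used for $\TBBtwo$ in \Cref{not FW of TBBtwo}, which in turn follows the template of \Cref{TBB is not FW}. First I would observe that $\TBBtwocube$ is obtained from $\TBBtwo$ by freezing a third sort $\s_3$ to a single element, so that a $\TBBtwocube$-interpretation $\A$ is essentially a $\TBBtwo$-interpretation together with $|\s_3^{\A}|=1$; in particular the possible pairs $(|\s^{\A}|,|\s_2^{\A}|)$ are exactly those of $\TBBtwo$: either $|\s^{\A}|=1$ and $\s_2^{\A}$ infinite, or both infinite, or $|\s^{\A}|=|\s_2^{\A}|=\bb(k)$ for some $k\geq 2$. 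Since $\bb$ is increasing with $\bb(k)>k$ for $k\geq 2$ (\Cref{propertiesofbb}), the theory $\TBBtwocube$ has arbitrarily large finite models of the diagonal form.

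Next I would assume for contradiction that $\TBBtwocube$ has a witness $\wit$, which by definition is computable. Following \Cref{not FW of TBBtwo}, I would define $f:\mathbb{N}\to\mathbb{N}$ by $f(0)=0$, $f(1)=1$, and $f(n+1)=|\vars_{\s}(\wit(\psi^{\s}_{\geq f(n)+1}))|$. The function $f$ is computable because $\wit$ is. The core of the argument is to prove by induction that $f(n)\geq\bb(n)$ for all $n$. The base cases are immediate. For the inductive step: $\psi^{\s}_{\geq f(n)+1}$ is $\TBBtwocube$-satisfiable (it holds in any model where $\s^{\A}$ is infinite, e.g.\ the "both infinite" class), hence so is $\wit(\psi^{\s}_{\geq f(n)+1})$; by the witness property there is a $\TBBtwocube$-interpretation $\A$ satisfying it with $\s^{\A}=\vars_{\s}(\wit(\psi^{\s}_{\geq f(n)+1}))^{\A}$. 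Then $f(n+1)=|\vars_{\s}(\wit(\psi^{\s}_{\geq f(n)+1}))|\geq|\s^{\A}|$, and since $\A$ satisfies $\Exists{\overarrow{x}}\wit(\psi^{\s}_{\geq f(n)+1})$ and hence $\psi^{\s}_{\geq f(n)+1}$, we get $|\s^{\A}|\geq f(n)+1>f(n)\geq\bb(n)$. But $\A$ being finite in $\s$ forces $|\s^{\A}|=\bb(m)$ for some $m$, and $\bb(m)>\bb(n)$ together with $\bb$ increasing gives $m\geq n+1$, so $f(n+1)\geq|\s^{\A}|=\bb(m)\geq\bb(n+1)$. This completes the induction, showing $f$ dominates $\bb$, contradicting \Cref{propertiesofbb} (which says no computable function dominates $\bb$ eventually). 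Hence $\TBBtwocube$ is not finitely witnessable, and by \Cref{SFW=>SF}'s contrapositive — or rather, directly from the definitions since strong finite witnessability implies finite witnessability — it is not strongly finitely witnessable either.

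I do not expect a serious obstacle here: the only point requiring minor care is verifying that the third sort $\s_3$ with $|\s_3^{\A}|=1$ does not interfere — formulas of the form $\psi^{\s}_{\geq f(n)+1}$ and their witnesses can always be satisfied in the infinite-$\s$ class of $\TBBtwocube$, and the finite models used in the contradiction are exactly the diagonal models with $|\s_3^{\A}|=1$. So the argument transfers verbatim from $\TBBtwo$, and indeed the paper will likely just say "The proof is essentially the same as that of \Cref{not FW of TBBtwo}," perhaps noting that one uses $\psi^{\s}_{\geq n}$ and the diagonal finite models, and that $\s_3$ is irrelevant throughout.
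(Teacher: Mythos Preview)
Your proposal is correct and follows exactly the paper's approach: the paper's proof is the single sentence ``See \Cref{not FW of TBBtwo} for an essentially equal proof,'' and you have spelled out precisely that transfer, including the correct definition of $f$ and the observation that the extra sort $\s_3$ is irrelevant. Your prediction of what the paper would say is spot on.
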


\begin{proof}
    See \Cref{not FW of TBBtwo} for an essentially equal proof.
\end{proof}

\begin{lemma}
    The $\Sigma_{3}$-theory $\TBBtwocube$ is convex with respect to $\{\s, \s_{2}, \s_{3}\}$.
\end{lemma}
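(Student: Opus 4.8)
The plan is to reduce convexity of $\TBBtwocube$ to convexity of $\TBBtwo$, which was already established in \Cref{TBBtwo is CV}. First I would record the two structural facts about $\TBBtwocube$ that make this reduction work. Since every $\TBBtwocube$-interpretation $\A$ satisfies $\psi^{\s_{3}}_{=1}$, we have $|\s_{3}^{\A}|=1$, so for any variables $u,v$ of sort $\s_{3}$ the literal $u=v$ is $\TBBtwocube$-valid, while $\neg(u=v)$ is $\TBBtwocube$-unsatisfiable; note that, because $\Sigma_{3}$ is empty, the only $\Sigma_{3}$-terms of sort $\s_{3}$ are variables, so these are the only $\s_{3}$-literals that can occur. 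Moreover, deleting the sort $\s_{3}$ from a $\TBBtwocube$-interpretation yields a $\TBBtwo$-interpretation, and conversely any $\TBBtwo$-interpretation expands to a $\TBBtwocube$-interpretation by adjoining a one-element domain for $\s_{3}$; this correspondence preserves satisfaction of all $\Sigma_{2}$-formulas, because the truth value of a $\Sigma_{2}$-formula in $\A$ depends only on the $\s$- and $\s_{2}$-components of $\A$. This last point is exactly where the axiomatization of $\TBBtwocube$ is used: its axioms are those of $\TBBtwo$ together with $\psi^{\s_{3}}_{=1}$.

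Next, given a conjunction of $\Sigma_{3}$-literals $\phi$ and variables $\{u_{1},v_{1},\ldots,u_{n},v_{n}\}$ with $\tdash\phi\rightarrow\bigvee_{i=1}^{n}u_{i}=v_{i}$, I would argue by cases. If $\phi$ contains a literal $\neg(u=v)$ with $u,v$ of sort $\s_{3}$, then $\phi$ is $\TBBtwocube$-unsatisfiable and convexity holds vacuously. If some pair $(u_{i},v_{i})$ consists of variables of sort $\s_{3}$, then $\tdash\phi\rightarrow u_{i}=v_{i}$ trivially, since $u_{i}=v_{i}$ is $\TBBtwocube$-valid. Otherwise, let $\phi'$ be obtained from $\phi$ by deleting every (necessarily $\TBBtwocube$-valid) literal of the form $u=v$ with $u,v$ of sort $\s_{3}$; then $\phi'$ is a conjunction of $\Sigma_{2}$-literals, $\phi$ and $\phi'$ are $\TBBtwocube$-equivalent, and via the correspondence above $\dash_{\TBBtwo}\phi'\rightarrow\bigvee_{i=1}^{n}u_{i}=v_{i}$ (here all the $u_{i},v_{i}$ are of sort $\s$ or $\s_{2}$). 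By \Cref{TBBtwo is CV}, $\dash_{\TBBtwo}\phi'\rightarrow u_{i}=v_{i}$ for some $i$, and translating back gives $\tdash\phi\rightarrow u_{i}=v_{i}$, as required.

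The main obstacle is not conceptual but a matter of carefully justifying the passage between $\TBBtwocube$ and $\TBBtwo$: one must verify precisely that the class of $\{\s,\s_{2}\}$-reducts of $\TBBtwocube$-interpretations coincides with the class of $\TBBtwo$-interpretations, and that $\Sigma_{2}$-satisfaction is invariant under adding or removing a one-element $\s_{3}$-domain. Both are immediate from the emptiness of $\Sigma_{3}$ and the form of the axiomatization, so once they are stated cleanly the argument runs exactly parallel to the sketch given for $\Ttwocube$, with $\Ttwo$ replaced by $\TBBtwo$.
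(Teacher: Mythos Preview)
Your proposal is correct and follows essentially the same approach as the paper's own proof: both handle $\s_{3}$-literals and $\s_{3}$-disjuncts separately using $|\s_{3}^{\A}|=1$, then reduce to the convexity of $\TBBtwo$ via the correspondence between $\TBBtwocube$-interpretations and $\TBBtwo$-interpretations obtained by dropping or adjoining a one-element $\s_{3}$-domain. Your case analysis is in fact slightly cleaner than the paper's---you explicitly separate the negative $\s_{3}$-literal case (making $\phi$ unsatisfiable) from the positive $\s_{3}$-literal case (deletable), whereas the paper's phrasing ``$\phi$ is either a tautology or a contradiction'' is a bit imprecise when $\phi$ mixes $\s_{3}$-literals with others.
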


\begin{proof} 
    Take a conjunction of literals $\phi$, and suppose $\dash_{\TBBtwocube}\phi\rightarrow\bigvee_{i=1}^{n}x_{i}=y_{i}$: if any pair $(x_{i}, y_{i})$ is of sort $\s_{3}$ we 
    are done, since then $\dash_{\TBBtwocube}\phi\rightarrow x_{i}=y_{i}$. So we may assume that some 
    all $x_{i}$ and $y_{i}$ are of sorts in $\{\s, \s_{2}\}$. We can also assume no literal of $\phi$ has variables of sort $\s_{3}$, since then $\phi$ is either a 
    tautology or a contradiction, cases very easy to consider.

    So $\phi$, under these assumptions, is actually a $\Sigma_{2}$-formula, as well as $\bigvee_{i=1}^{n}x_{i}=y_{i}$: not 
    only that, but $\dash_{\TBBtwo}\phi\rightarrow\bigvee_{i=1}^{n}x_{i}=y_{i}$; indeed, if there is 
    a $\TBBtwo$-interpretation $\A$ that satisfies $\phi$ but not 
    $\bigvee_{i=1}^{n}x_{i}=y_{i}$, the $\TBBtwocube$-interpretation $\B$ with $\s^{\B}=\s^{\A}$ and 
    $\s_{2}^{\B}=\s_{2}^{\A}$ would satisfy $\phi$ but not $\bigvee_{i=1}^{n}x_{i}=y_{i}$. But $\TBBtwo$ is convex, as proven in 
    \Cref{TBBtwo is CV}, and so $\dash_{\TBBtwo}\phi\rightarrow x_{i}=y_{i}$ for some $1\leq i\leq n$. 
    Of course it follows that $\dash_{\TBBtwocube}\phi\rightarrow x_{i}=y_{i}$, and so $\TBBtwocube$ is convex.
\end{proof}

\end{document}